\documentclass[11pt]{article}
\usepackage[sort&compress,numbers]{natbib}
\usepackage{fullpage,bm,times}
\usepackage{amsmath,amssymb,latexsym, amsopn, amsfonts}
\usepackage[boxed,ruled,vlined,algosection,linesnumbered]{algorithm2e}
\usepackage{titlesec}
\usepackage{caption}
\usepackage{graphicx}
\usepackage{wrapfig,frame}
\usepackage{framed}\usepackage{xcolor}
\usepackage{enumitem}

\newcommand{\remove}[1]{}
\newcommand{\rep}{state}
\newcommand{\config}{config}
\newcommand{\assert}{lemma}
\newcommand{\noReconfig}{noReco}
\newcommand{\configEstab}{estab}
\newcommand{\notif}{prp}
\newcommand{\notifSet}{notifSet}

\newtheorem{theorem}{Theorem}[section]
\newtheorem{lemma}[theorem]{Lemma}
\newtheorem{claim}[theorem]{Claim}

\newtheorem{remark}{Remark}
\newtheorem{corollary}[theorem]{Corollary}
\newtheorem{definition}{Definition}[section]
\newenvironment{proof}{\noindent{\bf Proof.}}{\hfill$\Box$}

\newenvironment{proofsketch}{\noindent{\bf Proof Sketch.}}{\hfill$\Box$}

\newcommand{\boldsubparagraph}[1]{\subparagraph{\bf #1}}
\titleformat*{\subparagraph}{\it \raggedleft}

%


\begin{document}
\begin{titlepage}
\renewcommand{\thefootnote}{\fnsymbol{footnote}}


\title{Self-Stabilizing Reconfiguration\\ \large{(Technical Report)}
}


\author{Shlomi Dolev~\footnote{Department of Computer Science, Ben-Gurion University of the Negev, Beer-Sheva, Israel. Email: {\tt dolev@cs.bgu.ac.il} 
}
\and Chryssis Georgiou~\footnote{ Department of Computer Science, University of Cyprus, Nicosia, Cyprus. Email: {\tt {\char '173}chryssis, imarco01{\char '175} @cs.ucy.ac.cy} . Supported by the University of Cyprus.} \and~Ioannis Marcoullis$^\dag$ \and Elad M.\ Schiller~\footnote{Department of Computer Science and Engineering, Chalmers University of Technology, Gothenburg, SE-412 96, Sweden. Email: {\tt elad@chalmers.se}.}}

\date{}

\maketitle

\thispagestyle{empty}

\begin{abstract}
Current reconfiguration techniques are based on starting the system in a consistent configuration, in which all participating entities are in a predefined state. Starting from that state, the system must preserve consistency as long as a predefined churn rate of processors joins and leaves is not violated, and unbounded storage is available.
Many working systems cannot control this churn rate and do not have access to unbounded storage. System designers that neglect the outcome of violating the above assumptions may doom the system to exhibit illegal behaviors. We present the first automatically recovering reconfiguration scheme that recovers from transient faults, such as temporal violations of the above assumptions. Our self-stabilizing solutions regain safety automatically by assuming temporal access to reliable failure detectors. 
Once safety is re-established, the failure detector reliability is no longer needed. Still, liveness is conditioned by the failure detector's unreliable signals. We show
that our self-stabilizing reconfiguration techniques 
can serve as the basis for the
implementation of several dynamic services over message passing systems. 
Examples include  
self-stabilizing reconfigurable virtual synchrony, 
which, in turn, can be used for implementing a self-stabilizing  
reconfigurable state-machine replication and self-stabilizing reconfigurable emulation of shared memory.
\vspace{2em}

\noindent {\bf Keywords:} Self-stabilization, Dynamic Participation, Reconfiguration, Virtual Synchrony, State Machine Replication.\vspace{2em}





\end{abstract}

\end{titlepage}

\renewcommand{\thefootnote}{\arabic{footnote}}


\section{Introduction} 
\paragraph{Motivation.} 
We consider distributed systems that work in dynamic asynchronous environments, such as a shared storage system~\cite{DBLP:journals/cacm/MusialNS14}. Quorum configurations~\cite{DBLP:journals/dc/PelegW97,DBLP:series/synthesis/2012Vukolic}, 
i.e., set of active processors (servers or replicas), are typically used to provide services to the participants of the system. Since over time, the (quorum) configuration may gradually lose active participants due to voluntary leaves and stop failures, there is a need to allow the participation of newly arrived processors and from time to time to \emph{reconfigure} so that the new configuration is built on a more recent participation group. Over the last years, 
a number of reconfiguration techniques have been proposed, mainly for state machine replication and emulation of atomic memory 
(e.g.,~\cite{DynaStore,birmanMR2010,DBLP:journals/corr/BortnikovCPRSS15,RAMBO,DBLP:journals/eatcs/AguileraKMMS10,spiegelmandynamic,DBLP:journals/sigact/LamportMZ10,DBLP:conf/wdag/AttiyaCEKW15,DBLP:conf/wdag/GafniM15,DBLP:conf/wdag/JehlVM15,Baldoni09,DBLP:journals/jpdc/ChocklerGGMS09}). These reconfiguration techniques are based on starting the system in a consistent configuration, in which all processors are in their initial state. Starting from that state, the system must preserve consistency as long as a predefined churn rate of processors' joins and leaves is not violated and unbounded storage is available. Furthermore, they do not tolerate {\em transient faults} that can cause an arbitrary corruption of the system's state.

Many working systems cannot control their churn rate and do not have access to unbounded storage. System designers that neglect the outcome of violating the above assumptions may doom the system to forever exhibit a behavior that does not satisfy the system requirements. Furthermore, the dynamic and difficult-to-predict nature of distributed systems gives rise to many fault-tolerance issues and requires efficient solutions. Large-scale message passing networks are asynchronous and they are subject to transient faults due to hardware or software temporal malfunctions, short-lived violations of the assumed failure rates or violation of correctness invariants, such as the uniform agreement among all current participants about the current configuration. Fault tolerant systems that are {\em self-stabilizing}~\cite{D2K} can recover after the occurrence of transient faults (as long as the program's code is still intact).
  
\remove{
[[[Leslie Lamport once said: ``A distributed system is one in which the failure of a computer you did not even know existed can render your own computer unusable.''~\cite{Lamport83} This is proven in many scenarios in particular in the scope of consensus and the impossibility that consider asynchronous systems by Fischer, Lynch, and Paterson~\cite{FLP}. Fortunately, working distributed systems can overcome a single point of failure using their inherent redundancies with respect to the number of connected computing entities. Therefore, a properly programmed distributed system can exhibit a more dependable behavior than a system with a single computer. We consider distributed systems that work in dynamically changing asynchronous environments, such as a shared storage system~\cite{DBLP:journals/cacm/MusialNS14}.  Our blueprint for self-stabilizing reconfigurable distributed systems can withstand a temporal violation of such assumptions, and recover once conditions are resumed. Temporal violations of the assumption made for preserving safety can be the outcome of many reasons that we cannot anticipate upfront, for example, severe environment, and input load conditions or even a cyber-security attacks that had occurred. Our self-stabilizing solutions regain safety automatically by assuming temporal access to failure detectors that are reliable. Once safety is re-established the failure detector reliability is no longer of need; still liveness is conditioned by the failure detector unreliable signals.  
We show that our self-stabilizing reconfiguration techniques are the basis for the implementation of many dynamic services over message passing systems, such as self-stabilizing reconfigurable emulation of shared memory and self-stabilizing virtually synchrony, which can be the basis for self-stabilizing state-machine replication.
We consider a dynamic system of entities named \emph{processors} that join and leave, say, via fail-stop, at a bounded churn rate. The system task is to establish a (quorum) configuration, which is a set of processors that provides (quorum) services to the system's participants, which is a set of active processors that are aware of the presence that they each have in the system (or had before they failed and stopped). Since over time, the (quorum) configuration may gradually lose active participants due to voluntary leaves and stop failures, there is a need to allow the participation of newly arrived processor and from time to time to \emph{reconfigure (the quorum)} so that the new configuration is built on a more recent participation group.
 This work considers the self-stabilization design criteria by showing that, starting from an arbitrary system state and within a bounded recovery period, the system always satisfies the tasks requirements. We note that the challenge here is to \emph{always} recover and then to \emph{forever} exhibit a legal behavior. Therefore, unlike the literature that considers the (quorum) reconfiguration problem, we cannot assume that the current the (quorum) configuration includes the majority of the system participants. This paper takes into account several practical details, such as inconsistent information about the current configuration and participant set, which are difficult to detect and avoid using bounded amount of local storage and message size. The concept of random starting state of the system allows us to overcome the inherent difficulties
In the presence of {\em transient failures}, the system may be brought to an {\em arbitrary state} where each processor may have corrupt and inconsistent local information about the current system configuration. 
Our self-stabilizing (quorum) reconfiguration algorithm liberates the application designer from dealing with low-level complications, such as the possible violation of the assumption about the stop failure rate, and provide an important level of abstraction. Consequently, the application design can easily focus on its task and knowledge-driven aspects.]]]
}

\paragraph{Our contributions and approach.}
We present the first automatically recovering reconfiguration scheme that recovers from transient faults, such as temporary violations of the predefined churn rate or the unexpected activities of processors and communication channels. Our blueprint for self-stabilizing reconfigurable distributed systems can withstand a temporal violation of such assumptions, and recover once conditions are resumed. It achieves this with only a bounded amount of local storage and message size. Our self-stabilizing solutions regain safety automatically by assuming temporal access to reliable failure detectors. Once safety is re-established, the failure detector reliability is no longer needed; still, liveness is conditioned by the failure detector's unreliable signals.  
We now overview our approach.\vspace{.3em} 

\noindent{\em Reconfiguration scheme:} Our scheme comprises of two layers that appear as a single ``black-box" module to an application that uses the reconfiguration service. The objective is to provide to the application a {\em conflict-free} configuration,
such that no two alive processors consider different configurations. The first layer, called {\em Reconfiguration Stability Assurance} or {\em recSA} for short (detailed in Section~\ref{sec:RSA}), is mainly responsible for detecting configuration conflicts (that could be a result of transient faults). It deploys a {\em brute-force} technique for converging to a conflict-free new configuration. 
It also employs another technique for {\em delicate} configuration replacement when a processor notifies that it wishes to replace the current configuration with a new set of participants. 
For both techniques, processors use a failure detector (detailed in Section~\ref{sec:settings}) to obtain membership information, and configuration convergence is reached when failure detectors have temporal reliability. 
Once a uniform configuration is installed, the failure detectors' reliability is no longer needed and from then on our liveness conditions consider unreliable failure detectors.  
The decision for requesting a delicate reconfiguration is controlled by the other layer, called {\em Reconfiguration Management} or {\em recMA} for short (detailed in Section~\ref{sec:reconMan}).

Specifically, if a processor suspects that the dependability of the current configuration is under jeopardy, it seeks to obtain a majority approval from the alive {\em members} of the current configuration, and request a (delicate) reconfiguration from \emph{recSA}. Moreover, in the absence of such a majority (e.g., configuration replacement was not activated ``on time'' or the churn assumptions were violated), the {\em recMA} can aim to control the recovery via an \emph{recSA} reconfiguration request. Note that the current participant set can, over time, become different than the configuration member set. As new members arrive and others leave, changing the configuration based on system membership would imply a high frequency of (delicate) reconfigurations, especially in the presence of high churn. We avoid unnecessary reconfiguration requests by requiring a weak liveness condition:  if a majority of the configuration set has not collapsed, then there exists at least one processor that is known to trust this majority in the failure detector of each alive processor.
%
%
%
Such active configuration members can aim to replace the current configuration with a newer one (that would provide an approving majority for prospective reconfigurations) without the use of the brute-force stabilization technique. \vspace{.3em}
%
%
%
\remove{
Specifically, if a processor suspects that the dependability of the current configuration is under jeopardy, it seeks to obtain a majority approval from the alive {\em members} of the current configuration, and request a (delicate) reconfiguration from \emph{recSA}. Moreover, in the absence of such a majority (e.g., configuration replacement was not activated ``on time'' or the churn assumptions were violated), the {\em recMA} can aim to control the recovery via an \emph{recSA} reconfiguration request. Note that the current participant set can, over time, become different than the configuration member set. As new members arrive and other go, changing the configuration based on system membership would imply a high frequency of (delicate) reconfiguration, especially in the presence of high churn. Note that we avoid unnecessary reconfiguration requests by requiring a weak liveness condition: if a majority of the configuration set has not collapsed, then there exists at least one processor that is known to trust this majority in the failure detector of each alive processor.  
}

\noindent{\em Joining mechanism:} We complement our reconfiguration scheme with a self-stabilizing joining mechanism (detailed in Section~\ref{sec:join}) that manages and controls the inclusion of new processors into the system. Here extra care needs to be taken so that newly joining processors do not ``contaminate" the system state with stale information (due to arbitrary faults). For this, together with other techniques, we follow a snap-stabilizing data link protocol (see Section~\ref{sec:settings}). We have designed our joining mechanism so that the decision of whether new members should be included in the system or not is {\em application-controlled}. In this way, the churn (regarding new arrivals) can be ``fine-tuned" based on the application requirements; we have modeled this
by having joining processors obtaining approval from a majority of the members of the current configuration (if no reconfiguration is taking place). These, in turn, provide such approval if the application's (among other) criteria are met. 
%
%
We note that in the event of transient faults, such as an unavailable approving majority, {\em recSA} ensures recovery via brute-force stabilization that includes all alive processors.\vspace{.3em}

\noindent {\em Applications:} We demonstrate the usability and modularity of our self-stabilizing reconfiguration scheme and joining mechanism by using them to develop self-stabilizing dynamic participation versions of several algorithms: a label
algorithm for providing a bounded self-stabilizing labeling scheme (Section~\ref{sec:label}); a self-stabilizing counter increment algorithm (Section~\ref{sec:counter}); a self-stabilizing virtual synchrony algorithm that leads to self-stabilizing state machine replication and a self-stabilizing MWMR emulation of shared memory (Section~\ref{sec:VS}). 
These algorithms are derived by combining our reconfiguration scheme and joining mechanism with the corresponding self-stabilizing algorithms developed for static membership systems in~\cite{SSVS}.
       
\remove{
provides support for two types of reconfiguration: {\em brute-force} and {\em delicate}. The objective is to maintain a consistent configuration set (a set of processors). If at least two different 
configuration sets are detected in the system, a result of a transient fault, then a brute-force reconfiguration is triggered (as we explain in Section~\ref{}, there are also other cases that can lead to such a reconfiguration).   
The system state is initialized and the procedure attempts to converge to a new configuration set, where all alive and connected processors are {\em members} of this configuration (belong in the configuration set). This procedure, besides providing self-stabilization, it also helps the system to recover when the churn assumption is temporarily violated. A delicate reconfiguration is triggered by a corresponding request from coming from the second procedure, called {\em Reconfiguration Management}. 
The delicate reconfiguration preserves the system state and attempts to converge to a configuration set proposed by the processor(s) triggering this type of reconfiguration (as we explain in Section~\ref{}, more than one processors may request such a configuration). 
A processor that suspects that the dependability of the current configuration is jeopardized, and after it obtains an approval of a majority of the members of the current configuration may trigger
a delicate reconfiguration; these checks are done at the Management procedure (this reconfiguration type essentially replaces
one configuration with another when the system is under ``normal'' operation). Note that the set of members of a configuration is not necessarily the same with the current set of system participants. As members may came and go, changing the configuration based on system membership would imply a high frequency of (delicate) reconfiguration, especially in the presence of high churn. Instead, the system uses a configuration (for decision making and coordination) despite the change of the system participants, and it proceeds to a new (delicate) reconfiguration only when this is believed to be necessary. For the Assurance procedure to converge (for both brute-force and delicate types) to a single configuration set we require a {\em temporal convergence} of the Failures Detectors used by the processors for estimating the set of system participants (the Failure Detector considered in this work is detail in Section~\ref{sec:settings}). Even under this liveness condition, the formulation and correctness of the Assurance procedure is non-trivial as...[[@@Elad, please add some main challenges @@]]. Once a configuration set is agreed and installed, then the reliability of the Failure Detectors can be relaxed. 
}

\paragraph{Related work.}
\sloppy{As mentioned, existing solutions for providing reconfiguration in dynamic systems, such as~\cite{RAMBO} and~\cite{DynaStore}, do not consider  transient faults and self-stabilization, because their correctness proofs (implicitly) depend on a coherent start~\cite{DBLP:journals/cacm/MusialNS14} and also assume that fail-stops can never prevent the (quorum) configuration to facilitate configuration updates.}
They also often use unbounded counters for ordering consensus messages (or for shared memory emulation) and by that facilitate configuration updates, e.g.,~\cite{RAMBO}. 
Our self-stabilizing solution recovers after the occurrence of transient faults, which we model as an arbitrary starting state, and guarantees a consistent configuration that provides (quorum) services, e.g., allowing reading from and writing to distributed shared memory objects, and at the same time managing the configuration providing these services.

Significant amount of research was dedicated in characterizing the fault-tolerance guarantees that can be provided by
difference quorum system designs; see~\cite{DBLP:series/synthesis/2012Vukolic} for an in depth discussion. 
In this paper we use majorities, which is regarded as the simplest form of a quorum system (each set composed of a majority of the processors is a quorum). 
Our reconfiguration scheme can be modified to support more complex, quorum systems, as long as processors have access to a mechanism (a function actually) that given a set of processors can generate the specific quorum system. 
Another important design decision is \emph{when} a reconfiguration (delicate in our case) must take place; see the related discussion
in~\cite{DBLP:journals/cacm/MusialNS14}. One simple decision would be to reconfigure when a fraction (e.g., 1/4th) of the members of a configuration appear to have failed. More complex decisions could use prediction mechanisms (possibly based on statistics). 
This issue is outside of the scope of this work; however, we have designed our reconfiguration scheme (specifically  the {\em recMA} layer) to be able to use any decision mechanism imposed by the application (via an application interface).

\remove{

}

\section{System Settings}
\label{sec:settings} \label{s:sys}
\paragraph{Processing entities.} We consider an asynchronous message-passing system of processors. 
Each processor $p_i$ has a unique identifier, $i$, taken from a totally-ordered 
set of identifiers $P$. 
The number of live and connected processors at any time of the computation is bounded by some $N$ such that $N \ll |P|$.  
We refer to such processors as \emph{active}. 
We assume that the processors have knowledge of the upper bound $N$, but not of the actual number of active processors.
Processors may stop-fail by crashing; a processor may crash at any point without warning, and in this event a crashed processor takes no further steps and never rejoins the computation. 
%
%
(For readability sake, we model rejoins as transient faults rather than considering them explicitly. 
Self-stabilization inherently deals with rejoins by regarding the past join information as possibly corrupted.)
New processors may join the system (using a joining procedure) at any point in time with an identifier drawn from $P$, such that this identifier is only used by this processor forever.
A \emph{participant} is an active processor that has joined the computation.
Note that $N$ accounts for all active processors, both participants and those that are still joining. 
\vspace{.3em}

%

\paragraph{Communication.} The network topology is that of a fully connected graph, and links have a bounded capacity $cap$.
Processors exchange low-level messages called \emph{packets} to enable a reliable delivery of high level
\emph{messages}. 
Packets sent may be lost, reordered, or duplicated but not arbitrarily created, although the channels may initially (after transient faults) contain stale packets, which due to the boundedness of the channels are also bounded in a number that is in $O(N^2cap)$.
We assume the availability of self-stabilizing protocols for reliable FIFO end-to-end message delivery over unreliable channels with bounded capacity, such as the ones of~\cite{DBLP:journals/ipl/DolevDPT11} or~\cite{DBLP:conf/sss/DolevHSS12}.

Specifically, when processor $p_i$ sends a packet, $pkt_1$, to processor $p_j$, the operation $send$ inserts a copy of $pkt_1$ into the FIFO queue representing the communication channel from $p_i$ to $p_j$. 
Since links are bounded in capacity, the new packet might be omitted or some already sent packet may be lost. 
While we assume that packets can spontaneously be omitted, i.e., lost from the channel, a packet that is sent infinitely often is received infinitely often. Namely, the communication channels provide \textit{fair communication}.
The policy of acknowledging is that acknowledgments are sent only when a packet arrives, and not spontaneously. 
Packet $pkt_1$ is retransmitted until more than the total capacity acknowledgments arrive, and then $pkt_2$ starts being transmitted. 
This forms an abstraction of token carrying messages between the two processors.
In this way the two processors (sender and receiver) can continuously exchange a ``token''.
%
%
We use this token exchange technique to implement a {\em heartbeat} for detecting  whether a processor is active or not; when a processor in no longer active, the token will not be returned back to the other processor.

Due to the possibility of arbitrary faults and of the dynamic nature of the network, we cannot assume that processors have knowledge of the identifier of the processor with which they are communicating.
We employ two anti-parallel data-link protocols, where every packet of one data-link is identified by the identifiers of the sender and receiver of the data link it participates in. 
For example, if the communication link connects $p_i$ and $p_j$, packets of the data link in which $p_i$ ($p_j$) acts as the sender that traverse from $p_i$ to $p_j$ ($p_j$ to $p_i$) are identified by the label $p_i$ ($p_j$), while the label of packets traversing from $p_j$ ($p_i$) are extended by adding $p_j$ ($p_i$) to the label to form the label $p_x,p_j$ ($p_x,p_i$, respectively).  Any packet $p_x,p_y$ arriving to $p_i$ ($p_j$) where $x \neq i$ ($x \neq j$) is ignored. Thus, eventually the data link in which $p_i$ is the sender is implemented by packets with label $p_i$ ($p_j$) traversing from $p_i$ to $p_j$ ($p_j$ to $p_i$). The analogous holds for the packets implementing the data link in which $p_j$ serves as the sender. Thus, both parties will eventually know the identifier of the other party and regard the token of the data link in which the sender has the greater identifier among them, to be the used token. 

Using the underlying packet exchange protocol described, 
a processor $p_i$ that has received a packet from some processor $p_j$ which did not belong to $p_i$'s failure detector, engages in a two phase protocol with $p_j$ in order to ``clean'' their intermediate link. 
This is done before any messages are delivered to the algorithms that handle reconfiguration, joining and applications.
We follow the snap-stabilizing data link protocol detailed in \cite{DBLP:journals/tcs/DolevT09}.
A \emph{snap-stabilizing} protocol is one which allows the system (after faults cease) to behave according to its specification upon its first invocation.
We require that every data-link established between two processors is initialized and cleaned straight after it is established.
In contrast to~\cite{DBLP:journals/tcs/DolevT09} where the protocol is run on a tree and initiated from the root, our case requires that each pair of processors takes the responsibility of cleaning their intermediate link.
Snap-stabilizing data links do not ignore signals indicating the existence of new connections, possibly some physical carrier signal from the port. 
In fact, when such a connection signal is received by the newly connected parties, they start a communication procedure that uses the bound on the packet in transit and possibly in buffers too, to clean all unknown packets in transit, by repeatedly sending the same packet until more than the round trip capacity acknowledgments arrive.\vspace{.3em}
%

%

\paragraph{$(N,\Theta)$-failure detector.} We consider the $(N,\Theta)$-failure detector that uses the token exchange and heartbeat detailed above.
This is an extension of the $\Theta$-failure detector used in \cite{Blanchard2013SSPaxos}. 
It allows each processor $p_i$ to order other processors according to how recently they have communicated.
Each processor $p_i$ maintains an ordered heartbeat count vector $nonCrashed$, with an entry corresponding to each processor $p_k$ that exchanges the token (i.e., sends a heartbeat) with $p_i$.
Specifically, whenever $p_i$ receives the token from $p_j$, it sets the count corresponding to $p_j$ to $0$ and increments the count of every other processor by one.
In this way, $p_i$ manages to rank every processor $p_k$ according to the token exchanges that it has performed with $p_i$ in relation to the token exchanges that it has performed with some other processor $p_j$.
So the processor that has most recently contacted $p_i$ is the first in $p_i$'s vector.

The technique enables $p_i$ to obtain an estimate on the number of processors $n_i$ that are active in the system; $n_i \leq N$. 
Assuming that $p_c$ is the most recently crashed processor, then every processor other than $p_c$ will eventually exchange the token with $p_i$ many times, and their heartbeat count will be set to zero, while $p_c$'s will be increasing continuously.
Eventually, every other processor's count (given they remain alive and communicating) will become lower than $p_c$'s and $p_c$ will be ranked last in $nonCrashed$.
Moreover, while difference between heartbeat counts of non-crashed processors does not become large, the difference of these counts and that of $p_c$ increases to form a significant ever-expanding ``gap''.
The last processor before the gap is the ${n_i}^{th}$ processor and this provides an estimate on the number of active processors.
Since there are at most $N$ processors in the computation at any given time, we can ignore any processors that rank below the $N^{th}$ vector entry.
If, for example, the first 30 processors in the vector have corresponding counters of up to 30, then the $31^{st}$ will have a count much greater than that, say 100; so $n_i$ will be estimated at 30.
This estimation mechanism is suggested in \cite{DBLP:journals/cjtcs/DolevH97} and in \cite{DBLP:journals/tmc/DolevSW06}. \\


\paragraph{The interleaving model and self-stabilization.}
A program is a sequence of {\em (atomic) steps}. Each atomic step starts with local computations and ends with a communication operation, i.e., packet $send$ or $receive$. We assume the standard interleaving model where at most one step is executed in every given moment.  An input event can either be the arrival of a  packet or a periodic timer triggering $p_i$ to (re)send. Note that the system is asynchronous and the rate of the timer is totally unknown. The {\em state}, $c_i$, consists of $p_i$'s variable values and the content of $p_i$'s incoming communication channels. A step executed by $p_i$ can change the state of $p_i$. The tuple of the form $(c_1, c_2, \cdots, c_n)$ is used to denote the {\em system state}. An {\em execution (or run)} $R={c_0,a_0,c_1,a_1,\ldots}$ is an alternating sequence of system states $c_x$ and steps $a_x$, such that each $c_{x+1}$, except the initial system state $c_0$, is obtained from $c_x$ by the execution of $a_x$. A practically infinite execution is an execution with many steps, where many is defined to be proportional to the time it takes to execute a step and the life-span time of a system. The system's task is a set of executions called {\em legal executions} ($LE$) in which the task's requirements hold. An algorithm is {\em self-stabilizing} with respect to $LE$ when every (unbounded) execution of the algorithm has a suffix that is in $LE$.

%

\remove{
\noindent{\bf The interleaving model and self-stabilization.}
Every processor, $p_i$, executes a program that is a sequence of {\em (atomic) steps}, where a step starts with local computations and ends with a single communication operation, which is either $send$ or $receive$ of a packet. For ease of description, we assume the interleaving model, where steps are executed atomically, a single step at any given time. An input event can be either the receipt of a packet or a periodic timer triggering $p_i$ to (re)send. Note that the system is asynchronous and the rate of the timer is totally unknown. 
The {\em state}, $s_i$, of a node $p_i$ consists of the value of all the variables of the node including the set of all incoming communication channels. The execution of an algorithm step can change the node's state. The term {\em system state} is used for a tuple of the form $(s_1, s_2, \cdots, s_n)$, where each $s_i$ is the state of node $p_i$ (including messages in transit for $p_i$). We define an {\em execution (or run)} $R={c_0,a_0,c_1,a_1,\ldots}$ as an alternating sequence of system states $c_x$ and steps $a_x$, such that each system state $c_{x+1}$, except the initial system state $c_0$, is obtained from the preceding system state $c_x$ by the execution of the steps $a_x$.
A practically infinite execution is an execution with many steps (and iterations), where many is defined to be proportional to
the time it takes to execute a step and the life-span time of a system. 
We define the system's task by a set of executions called {\em legal executions} ($LE$) in which the task's requirements hold,
we use the term {\em safe system state} for any system state in $LE$.
An algorithm is {\em self-stabilizing} with relation to the task $LE$ when every (unbounded) execution of the algorithm reaches a safe system state with relation to the algorithm and the task. An algorithm is {\em practically stabilizing} with relation to the task
$LE$ if in any practically infinite execution a safe system state is reached.
}
\section{Self-stabilizing Reconfiguration Scheme}
\label{sec:reconf}

\begin{figure}[t!]
\center
\captionsetup{margin=10pt,font=small,labelfont=bf}
\setlength{\unitlength}{4144sp}%
\begingroup\makeatletter\ifx\SetFigFont\undefined%
\gdef\SetFigFont#1#2#3#4#5{%
  \reset@font\fontsize{#1}{#2pt}%
  \fontfamily{#3}\fontseries{#4}\fontshape{#5}%
  \selectfont}%
\fi\endgroup%
\begin{picture}(5457,3489)(2227,-5923)
\thinlines
{\color[rgb]{0,0,0}\put(2476,-5506){\framebox(4950,521){}}
}%
{\color[rgb]{0,0,0}\put(3241,-3166){\vector( 0,-1){720}}
}%
{\color[rgb]{0,0,0}\put(2471,-3166){\framebox(4955,720){}}
}%
{\color[rgb]{0,0,0}\put(6031,-4435){\framebox(1260,537){}}
}%
{\color[rgb]{0,0,0}\put(2611,-4421){\framebox(1260,523){}}
}%
{\color[rgb]{0,0,0}\put(4951,-4983){\vector( 0, 1){1817}}
}%
{\color[rgb]{0,0,0}\put(6661,-3166){\vector( 0,-1){720}}
}%
{\color[rgb]{0,0,0}\put(6661,-4434){\vector( 0,-1){540}}
}%
{\color[rgb]{0,0,0}\put(4951,-4156){\vector( 1, 0){1082}}
}%
{\color[rgb]{0,0,0}\put(4951,-4156){\vector(-1, 0){1075}}
}%
{\color[rgb]{0,0,0}\put(3241,-4421){\vector( 0,-1){565}}
}%
{\color[rgb]{0,0,0}\put(2239,-5911){\dashbox{57}(5433,2250){}}
}%
\put(4951,-5298){\makebox(0,0)[b]{\smash{{\SetFigFont{9}{10.8}{\rmdefault}{\mddefault}{\updefault}{\color[rgb]{0,0,0}Reconfiguration Stability Assurance}%
}}}}
\put(4948,-2858){\makebox(0,0)[b]{\smash{{\SetFigFont{9}{10.8}{\rmdefault}{\mddefault}{\updefault}{\color[rgb]{0,0,0}Application}%
}}}}
\put(3241,-4336){\makebox(0,0)[b]{\smash{{\SetFigFont{9}{10.8}{\rmdefault}{\mddefault}{\updefault}{\color[rgb]{0,0,0}Management}%
}}}}
\put(6661,-4111){\makebox(0,0)[b]{\smash{{\SetFigFont{9}{10.8}{\rmdefault}{\mddefault}{\updefault}{\color[rgb]{0,0,0}Joining}%
}}}}
\put(6661,-4336){\makebox(0,0)[b]{\smash{{\SetFigFont{9}{10.8}{\rmdefault}{\mddefault}{\updefault}{\color[rgb]{0,0,0}Mechanism}%
}}}}
\put(3241,-4111){\makebox(0,0)[b]{\smash{{\SetFigFont{9}{10.8}{\rmdefault}{\mddefault}{\updefault}{\color[rgb]{0,0,0}Reconfiguration}%
}}}}
\put(2791,-3526){\makebox(0,0)[b]{\smash{{\SetFigFont{8}{9.6}{\rmdefault}{\mddefault}{\itdefault}{\color[rgb]{0,0,0}$evalConfig()$}%
}}}}
\put(7111,-3526){\makebox(0,0)[b]{\smash{{\SetFigFont{8}{9.6}{\rmdefault}{\mddefault}{\itdefault}{\color[rgb]{0,0,0}$passQuery()$}%
}}}}
\put(7156,-4741){\makebox(0,0)[b]{\smash{{\SetFigFont{8}{9.6}{\rmdefault}{\mddefault}{\itdefault}{\color[rgb]{0,0,0}$participate()$}%
}}}}
\put(2881,-4741){\makebox(0,0)[b]{\smash{{\SetFigFont{8}{9.6}{\rmdefault}{\mddefault}{\itdefault}{\color[rgb]{0,0,0}$estab()$}%
}}}}
\put(4501,-4561){\makebox(0,0)[b]{\smash{{\SetFigFont{8}{9.6}{\rmdefault}{\mddefault}{\itdefault}{\color[rgb]{0,0,0}$getConfig()$}%
}}}}
\put(4501,-4786){\makebox(0,0)[b]{\smash{{\SetFigFont{8}{9.6}{\rmdefault}{\mddefault}{\itdefault}{\color[rgb]{0,0,0}$noReco()$}%
}}}}
\put(4951,-5776){\makebox(0,0)[b]{\smash{{\SetFigFont{10}{12.0}{\rmdefault}{\mddefault}{\updefault}{\color[rgb]{0,0,0}Self-stabilizing Reconfiguration Scheme}%
}}}}
\end{picture}%

\caption{
The reconfiguration scheme modules internal interaction and the interaction with the application. 
The Reconfiguration Stability Assurance ($recSA$) layer provides information on the current configuration and on whether a reconfiguration is not taking place using the $getConfig()$ and $\noReconfig()$ interfaces.
This is based of local information.
The Reconfiguration Management ($recMA$) layer uses the prediction mechanism $evalConfig()$ which is application based to evaluate whether a reconfiguration is required.
If a reconfiguration is required, $recMA$ initiates it with $\configEstab()$.
Joining only proceeds if a configuration is in place and if no reconfiguration is taking place.
When the joining mechanism has received a permission to access the application (using  $passQuery()$) it can then join via $participate()$.
The direction of an arrow from a module $A$ to a module $B$ illustrates the transfer of the specific information from $A$ to $B$.
}

\label{fig:modules}
\end{figure}

The reconfiguration scheme is composed of the Reconfiguration Stability Assurance ($recSA$) layer (Section~\ref{sec:RSA}), the Reconfiguration Management ($recMA$) layer (Section~\ref{sec:reconMan}), and is accompanied by the Joining Mechanism (Section~\ref{sec:join}).
Figure~\ref{fig:modules} depicts the interaction between the modules and with the application. 
The Reconfiguration Stability Assurance ($recSA$) layer ensures that participants eventually have a common configuration set.
It also introduces processors that want to join the computation and provides information on the current configuration and on whether a reconfiguration is not taking place using the $getConfig()$ and $\noReconfig()$ interfaces, respectively.

The Reconfiguration Management ($recMA$) layer strives to maintain a majority of active processors of the configuration set, to this end, and may also request a reconfiguration from $recSA$ via the $estab()$ interface. 
This is done when a configuration majority is suspected as collapsed or if a majority of active processor configuration members appears to require a reconfiguration based on some application-defined prediction function ($evalConf()$).
A joining mechanism gives the application the leverage required to control participation and ensure that processors enter the computation with the most recent state.
A joiner becomes a participant via $participate()$ only if $passQuery()$ of a majority of configuration members is reported as $\sf True$. 
We now proceed with the details. 



\remove{

\subsection{The system reconfiguration task}
We refer to the \emph{(quorum) configuration} as a bounded size set of processors, which we name $\textnormal{config}$.~\footnote{In the context of self-stabilization, the term (quorum) configuration must not be confused with the term (system) configuration, which we, therefore, call system state in this paper.} We say that the system has a valid configuration, i.e., \emph{conflict-free} when no two processors that are active in the system store different values in their $\textnormal{config}$ variables. Note that the system also prohibits from $\textnormal{config}$ to have the empty set for a value. The system assigns the symbol $\bot$ to $\textnormal{config}$ whenever it detects a configuration conflict and is thus in the process of \emph{configuration reset}. By the end of the reset process, the system shall store in all $\textnormal{config}$ variables identical and valid configuration values. Note that the reset configuration process is recovery strategy from transient faults specifies that the above requirement is to hold \emph{eventually}. Namely, the configuration reset process might need several rounds until a single value is selected. We allow additional (temporary) synchrony conditions. These conditions refer to system states in which all active processors have identical views on the sets of trusted processors, i.e., the ones that they do not suspect to be inactive in the system. These views shall include only active processors and every active processor shall not be suspected to be inactive. Moreover, the views shall not change during the system run (until the recovery period is over). The distributed computer literature calls these sets failure detectors. We note that our requirements consider failure detectors that are \emph{eventually} and \emph{temporarily} reliable, because once the system is configuration-conflict-free, there is no need for the above recovery strategy that is designed to recover automatically after the occurrence of transient faults.
 
Once the system is conflict-free, only \emph{(system) participants} can call for the establishment of new configurations, using the $\textnormal{\configEstab}(\textnormal{set})$ interface, where $\textnormal{set}$ is a non-empty participant set that one of the existing participants proposes to replace with the current configuration. We note that even though $\textnormal{set}$ is a non-empty participant set when calling $\textnormal{\configEstab}(\textnormal{set})$, it can be that at a later time, some (or perhaps all) of $\textnormal{set}$'s members are no longer active in the system. Newly arrived processors can become a participant via the $\textnormal{participant}()$ interface. However, we do not require the system to allow this during reconfiguration periods. Namely, newly arrived processors can join the participant set as long as no reconfiguration occurs. While reconfiguration is in progress, the system may block changes to the participation set as well as stop considering any additional reconfiguration requests. Thus, when there are no configuration conflicts, and all participating processors have the same view on the participation set, the system ability to replace the existing configuration with a proposed one depends on the participant crash rate. Note that the above recovery strategy implies that the system is always able to converge to a valid configuration even when the churn rate had been too high with respect to crashing participants. In other words, we consider system that their failure detectors are \emph{eventually} and \emph{temporarily} reliable, in order to allow the recovery strategy to attain a conflict-free configuration, and after that, we merely require that the crash rate of participating processors to be such that \emph{unreliable} failure detectors can \emph{eventually} allow the system to replace the current configuration with a new one. Note that the violation of the latter assumption is a transient fault from which the system shall recover using the above assumptions about \emph{eventually} and \emph{temporarily} reliable failure detectors, in order to allow the above recovery strategy to attain a conflict-free configuration. Moreover, after the system reaches such a conflict-free configuration state, we merely require that (1) the fail-stop rate $rho_{fail}$ of (participating) processors shall be such that \emph{unreliable} failure detectors can \emph{eventually} allow the system to replace the current configuration with a new one (without the need to use the recovery strategy), (2) the rate $rho_{join}$ in which new processors arrive to the system is sufficiently high \emph{eventually} and \emph{temporarily}, such that active processors are always available to become participants and join the next quorum configuration, and (3) the rate $rho_{join}$ in which new processors arrive to the system is sufficiently slow \emph{eventually} and \emph{temporarily}, such that the recovery strategy can terminate \emph{eventually} after the occurrence of transient faults. We note that these three requirements are feasible, for example, in system that for a very small number of active processors make sure that $rho_{join}>rho_{fail}$ \emph{eventually} and \emph{temporarily}, and for very high number of active processors make sure that $rho_{join}<rho_{fail}$ \emph{eventually} and \emph{temporarily}.

} 


\subsection{Reconfiguration Stability Assurance}
\label{sec:RSA}
We present the Reconfiguration Stability Assurance layer ($recSA$), a self-stabilizing algorithm for assuring correct configuration while allowing the updates from the Reconfiguration Management layer (Section~\ref{sec:reconMan}). 
We first describe the algorithm (Section~\ref{sec:recSAdescr}) and then we prove its correctness (Section~\ref{sec:recSAproof}).


\remove{ 

\subsubsection{The algorithm in a nutshell}
\label{sec:RSAnut}
The  layer uses a self-stabilizing algorithm for assuring correct configuration while allowing the updates from the reconfiguration management layer (Section~\ref{sec:reconMan}). 
Algorithm~\ref{alg:SdisCongif} guarantees that (1) all active processors have eventually identical copies of a single configuration, (2) when participants notify the system that they wish to replace the current configuration with another, the algorithm selects one proposal and replaces the current configuration with it, and (3) joining processors can become participants eventually. 

%
\begin{algorithm}[t!]

\caption{Stabilizing Reconfiguration Stability Assurance; $p_i$'s code}
\label{alg:SdisCongif}
%
{\bf Variables}:
Each field is held in an array that stores $p_i$'s own values and $p_j$'s most recently received ones.  
For example, in the case of the $\textnormal{config}[]$ field, $\textnormal{config}[i]$ is $p_i$'s view on the current configuration and $\textnormal{config}[j]$ stores the most recently received one. Note that $p_i$ assigns $\bot$ (the \emph{empty configuration}) after receiving a conflicting (different) non-empty configuration value. 
$\textnormal{FD}[i]$ and $\textnormal{FD}[i].part$ represent $p_i$'s failure detector, and respectively, an alias to $\{ p_j \in \textnormal{FD}[i] : \textnormal{config}[j] \neq \sharp \}$. Note that we consider only the trusted (unsuspected) processors. Namely, crashed processors eventually suspected and the $\textnormal{FD}$ field of every message encodes also this participation info. 
The field $\textnormal{prp}[i]=\langle phase \in \{0,1,2\}, set \subseteq P \rangle$, where $\textnormal{prp}[i]$ refers to $p_i$'s configuration replacement proposal. The case of no proposal is denoted by $\langle 0, \bot \rangle$. 
The field $\textnormal{all}[i]$ is true when $p_i$ observes that all trusted nodes notice its current (max) proposal and they hold the same value. The variable $\textnormal{allSeen}$ stores the set of nodes  $p_k$ for which $p_i$ received the $\textnormal{all}=true$ indication.\label{ln:Sdef}

\vspace{0.35em}

{\bf Interfaces}:
{\bf function} \label{ln:Sparticipate} $participate()$ replaces $p_i$'s configuration (which could be set to $\sharp$) with $chsConfig()$. Note that this could be done only when no reconfiguration is taking place. 

{\bf function} \label{ln:SchsConfig} 
$chsConfig()$ is the current $\textnormal{config}$ value, or $\bot$ when there is no single non-$\sharp$ value.

{\bf function} \label{ln:SgetConfig} $getConfig()$ \{{\bf if} $noReco()$ {\bf then return}$(chsConfig())$ {\bf else return(}$\textnormal{config}[i])\}$;\

{\bf function} \label{ln:SnoReco} $noReco()$ test (locally) whether $p_i$ runs a reconfiguration process.

{\bf function}  \label{ln:Sestab} $estab(\textnormal{set})=$  \{\lIf{$(noReco() \land (set \notin \{ \textnormal{config}[i], \emptyset \} )  )$}{$\textnormal{prp}[i] \gets \langle 1, \textnormal{set} \rangle\}$} 

\vspace{0.35em}

{\bf do forever} \label{ln:SdoForever} \Begin{

\lIf{stale info. is present, e.g., different (non-$\bot$ or-$\sharp$) $\textnormal{config}$ values or empty intersection between $\textnormal{config}$ and participant set}{reset, i.e., call $configSet(\bot)$\label{ln:Sclean}\label{ln:Sstale}}

\If{there is no proposal for configuration replacement\label{ln:SnoMaxNotif}}{

\lIf{$| \{ \textnormal{config}[k] \}_{p_k \in \textnormal{FD}[i]} \setminus \{ \bot, \sharp \} | > 1$}{$configSet(\bot)$ $//$ {once a trusted processor has sent a different (non-$\bot$ or $\sharp$) configuration, $\bot$-nullify the stored one}}\label{ln:SnullConfig} 

\lIf{$ (\textnormal{config}[i] = \bot \land | \{  \textnormal{FD}[j]: p_j \in \textnormal{FD}[i] \} | = 1)$}{$configSet(\textnormal{FD}[i])$ $//$ { once all trusted nodes trust the same nodes, use this node set as the new configuration}\label{ln:SrestartConfig}}}

\Else{ 

\lIf{all trusted participants report the same proposals and participation sets and they echo back the sent values of these fields}{$\textnormal{all}[i] \gets true$\label{ln:SadoptAll}}

 \ElseIf{trusted participant $p_k$ reports $\textnormal{all}[i] = true$}{{\bf add} $p_k$ {\bf to} $\textnormal{allSeen}$\label{ln:SaddSaw}\;
\lIf{$\textnormal{allSeen}$ include all trusted participants}{run the automaton (Figure~\ref{fig:auto}) and empty $\textnormal{allSeen}\gets \emptyset$\label{ln:Sauto}}
}}

\lIf{$\textnormal{config}[i]\neq\sharp$}{send to $p_j$ the state of $p_i$ (including $p_j$'s recently received info.)\label{ln:Ssend}}}

{\bf upon receive} $m$ {\bf from} $p_j$ {\bf do}  store $m$'s fields as the recently received values from $p_j$\label{ln:Sreceive}\;

{\bf upon interrupt} $\bm{p_i}${\bf 's booting} {\bf do} 
\lForEach{$p_k$}{$(\textnormal{config}[k], \textnormal{prp}[k], \textnormal{all}[k]) \gets $ $ (\sharp, \text{dfltNtf}, false)$\label{ln:Sjoin} $//$ during boot, nullify the stored fields and disable message transmissions}

\end{algorithm}
\setlength{\textfloatsep}{5pt}

\noindent {\bf The algorithm structure.}
%
%
The algorithm combines two techniques: one for \emph{brute force stabilization} that recovers from stale information and a complementary technique for \emph{delicate (configuration) replacement}, where participants jointly select a single new configuration that replaces the current one. We sketch the structure of Algorithm~\ref{alg:SdisCongif} before adding the details.

\noindent{\em Combining the two techniques.~~~}
As long as a given processor is not aware of ongoing configuration replacements, Algorithm~\ref{alg:SdisCongif} merely monitors the system for stale information, e.g., it makes sure that all $\textnormal{config}$ fields hold the same non-$\bot$ value. During these periods the algorithm allows the invocation of configuration replacement processes (via the $estab()$ interface) as well as the acceptance of joining processors as participants (via the $\textnormal{participant}()$ interface). During the process of configuration replacement, the algorithm selects a single configuration proposal and replaces the current one with that proposal before returning to monitor for configuration disagreements.


\noindent{\em Blocking joins to the participant sets during reconfiguration periods.~~~}
While the system reconfigures, there is no immediate need to allow joining processors to become participants. By temporarily disabling this functionality, the algorithm can focus on completing the configuration replacement using the current participant set. To that end, 
%
%
%
only participants broadcast their states at the end of the do forever loop (line~\ref{ln:Ssend}) and only their messages arrive to the other active processors (line~\ref{ln:Sreceive}). Moreover, we assume that the only way for a joining processor to start executing Algorithm~\ref{alg:SdisCongif} is by responding to an interrupt call (line~\ref{ln:Sjoin}), where  the assignment of $\sharp$ to \textnormal{config} nullifies the configuration. Thus, joining processors cannot broadcast (line~\ref{ln:Ssend}) before their safe entry to participant set via the function $participate()$ (line~\ref{ln:Sparticipate}), which enables $p_i$'s broadcasting. 
%
%
Note that non-participants monitor the intersection between the current configuration and the set of active participants (line~\ref{ln:Sstale}). In case it is empty, the processors (participant or not) call $configSet(\bot)$ and starts a reset process that ends with a brute-force stabilization, which we explain below. Thus, the $\sharp$ values are removed from $\textnormal{config}$ and there is no more blocking of joining processors to become participants.



\begin{figure*}[t!] 
\center
\includegraphics[scale=0.45,bb=96 401 486 755]{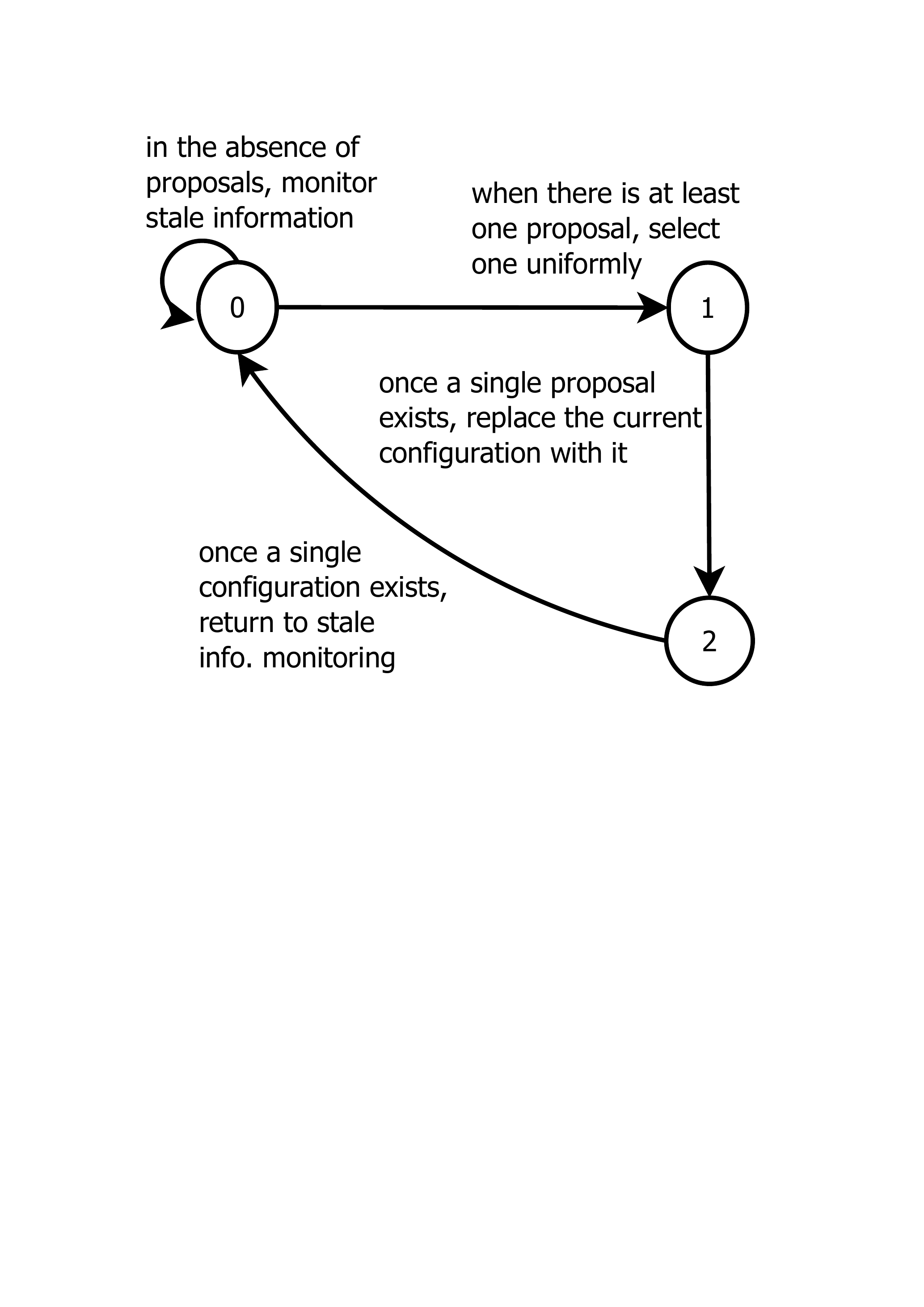}
    \caption{The configuration replacement automaton}
    \label{fig:auto}
\end{figure*}

\noindent {\bf Brute-force stabilization.}
The proposed self-stabilizing algorithm detects the presence of stale information and recovers from these transient faults. {\em Configuration conflicts} are one of several kinds of such stale information and they refer to differences in the field $\textnormal{config}$, which stores the (quorum) configuration values. Processor $p_i$ can signal to all processors that it had detected stale information by assigning $\bot$ to $\textnormal{config}_i$ and by that start a reset process that nullifies all $\textnormal{config}$ fields in the system (lines~\ref{ln:Sstale} and~\ref{ln:SnullConfig}). 
Algorithm~\ref{alg:SdisCongif} uses the brute-force technique for letting processor $p_i$ to assign to $\textnormal{config}_i$ its set of trusted processors (line~\ref{ln:SrestartConfig}), which the failure detector $\textnormal{FD}_i$ provides. Note that brute-force stabilization removes any $\sharp$ value from $\textnormal{config}$ and all processors (joining or participant) to become a participant at the end of the brute-force process. Theorem~\ref{thm:staleFreeExecutionThm} shows that eventually all active processors share identical (non-$\bot$) $\textnormal{config}$ values.

%
%
%
\noindent {\bf Delicate (configuration) replacement.}
Participants can propose to replace the current configuration with a new one, $\textnormal{set}$, via the $estab(\textnormal{set})$ interface. This replacement uses the {\em configuration replacement} automaton (Figure~\ref{fig:auto}) that a self-stabilizing mechanism for {\em (phase transition) coordination} emulates.

\noindent{\em The configuration replacement automaton.~~~}
When the system is free from stale information, the configuration uniformity invariant (of the $\textnormal{config}$ field values) holds. Then, any number of calls to the $estab(\textnormal{set})$ interface starts the configuration replacement automaton (Figure~\ref{fig:auto}), which controls the configuration replacement using the following three phases: (1) selecting (deterministically and uniformly) a single proposal (while verifying the eventual absence of ``unselected'' proposals), (2) replacing (deterministically and uniformly) all $\textnormal{config}$ fields with the jointly selected proposal, and (3) bringing back the system to a state in which it merely tests for stale information. 

\noindent{\em A self-stabilizing mechanism for phase transition coordination.~~~}
The configuration replacement automaton (Figure~\ref{fig:auto}) requires coordinated phase transition.
%
%
Algorithm~\ref{alg:SdisCongif} lets processor $p_i$ to represent proposals as $\textnormal{prp}_i[j]=(phase, set)$, where $p_j$ is the processor from which $p_i$ received the proposal, $phase \in \{0,1, 2\}$ and $set$ is a processor set or the null value, $\bot$. The \emph{default proposal}, $\langle 0, \bot \rangle$, refers to the case in which $\textnormal{prp}$ encodes ``no proposal'' (line~\ref{ln:Sdef}). 
When $p_i$ calls the function $estab(\textnormal{set})$, it changes $\textnormal{prp}$ to $\langle 1, set \rangle$ (line~\ref{ln:Sestab}) as long as $p_i$ is not aware of an ongoing configuration replacement process, i.e., $noReco()$ returns true. Upon this change, the algorithm disseminates $\textnormal{prp}_i[i]$ and by that guarantees eventually that $noReco()$ returns false for any processor that calls it. Once that happens, no call to $estab(\textnormal{set})$ adds a new proposal for configuration replacement and no call to $participate()$ lets a joining processor to become a participant (line~\ref{ln:Sparticipate}). Algorithm~\ref{alg:SdisCongif} can then use the lexical value of the $\textnormal{prp}_i[]$'s tuples for selecting one of them deterministically (Figure~\ref{fig:auto}). To that end, each participant makes sure that all other participants report the same tuples by waiting until they ``echo'' back the same values as the ones it had sent to them. Once that happen, the participant $p_i$ makes sure that the communication channels do not include other ``unselected'' proposals by raising a flag $\textnormal{all}_i=true$ (line~\ref{ln:SadoptAll}) and waiting for the echoed values of these three fields, i.e., participant set, $\textnormal{prp}_i[i]$ and $\textnormal{all}_i$. This waiting continues until the echoed values match the values of any other active participant in the system (while monitoring their well-being). Before this participant proceeds, it makes sure that all active participants have noticed its phase completion (line~\ref{ln:SaddSaw}). Each processor maintains the $\textnormal{allSeen}$ variable; a set of participants that have noticed its phase completion and has thus added them to the set $\textnormal{allSeen}$.

The above self-stabilizing mechanism for phase transition coordination allows progression in a unison fashion. Namely, no processor starts a new phase before it has seen that all other active participants have completed the current phase and have noticed that all other have done so (because they have identical participant set, $\textnormal{prp}$ and $\textnormal{all}$ values). This is the basis for emulating every step of the configuration replacement automaton (line~\ref{ln:Sauto}) and making sure that the phase 2 replacement occurs correctly before returning to phase 0, in which the system simply tests for stale information. The proof of Theorem~\ref{thm:closureThm} shows that since the failure detectors monitor the participants' well-being, this process terminates.

\newpage
}

\remove{ 
	
\subsubsection{The NEW proof sketch of Algorithm~\ref{alg:SdisCongif}.}
Theorem~\ref{thm:staleFreeExecutionThm} shows that the invariant of no stale information holds eventually in admissible executions. 
We say that execution $R$ is \textit{admissible} when  throughout $R$ the failure detector values of active processors are identical, do not change and consist of only (the set of active processors) themselves. 
I.e., $\forall c \in R$, $p_i, p_j \in P$ that are active in $R$, we have $\textnormal{FD}_i[i] = \textnormal{FD}_j[j]$ and $p_k \in \textnormal{FD}_i[i]$ $\iff$ $p_k$ is active. The proof considers system states, $c$, that have no stale information when (1) all (quorum) configuration proposals are valid, e.g., the proposal $\langle 0, set \rangle$ is not valid when $set \neq \bot$, (2) all $\textnormal{config}$ values are non-$\bot$ and the same, (3) the phase information (including $allSeen$) is in synch, and (4) the $\textnormal{config}$ set includes active participants. 


%

\begin{theorem}[Lemma~\ref{thm:noConflict} in the Appendix] 
\label{thm:staleFreeExecutionThm}
Admissible executions have no stale information eventually.
\end{theorem}

\begin{proofsketch}
Lines~\ref{ln:Sclean} and~\ref{ln:SnullConfig} detect stale information and start the configuration reset. 
%
%
The proof use Claim~\ref{thm:2ConfigShort} and Lemma~\ref{thm:convDegShort} to imply this lemma's correctness by assumption that $R$ does not include, and respectively, include (notifications about) replacement proposals.

\begin{lemma}[Lemma~\ref{thm:noConflict} in the Appendix]
During admissible executions $R$, reset processes terminate. 
\end{lemma}
\begin{proofsketch}
Suppose that $R$'s starting system state does include a conflict, i.e., $\exists p_i, p_j \in P: (\textnormal{config}_i[i] = \bot)  \lor  (\textnormal{config}_i[i] \neq \textnormal{config}_i[j]) \lor (\textnormal{config}_i[i] \neq \textnormal{config}_j[j])$ or there is a message, $m_{i,j}$, in the communication channel from $p_i$ to $p_j$, such that the field $(m_{i,j}.\textnormal{config}[k] = \bot) : p_k \in \textnormal{FD}_i[i] \lor (m_{i,j}.\textnormal{config}\neq\textnormal{config}_i[i])$, where both $p_i$ and $p_j$ are active processors. 
The proof uses claims~\ref{thm:thereBotSimShort} and~\ref{thm:onceBotShort} to show that in all of these cases, eventually $\forall p_i \in P: \textnormal{config}_i[i] \in \{ \bot, \textnormal{FD}_i[i] \}$ before using Claim~\ref{thm:2Config} to show that eventfully there are no configuration conflicts.
Claims~\ref{thm:thereBotSimShort} and~\ref{thm:onceBotShort} consider the values in the field $\textnormal{config}$ that are either held by an active processor $p_i \in P$ or in its outgoing communication channel to another active processor $p_j \in P$. We define the set $S= \{ S_i \cup S\_out_{i} \}_{p_i \in P}$ to be the set of all these values, where $S_i = \{ \textnormal{config}_i[j] \}_{p_j \in \textnormal{FD}_i[i]}$ and $S\_out_{i} = \{ m_{i,j}.\textnormal{config} \}_{p_j \in \textnormal{FD}_i[i]}$. 

%

\begin{claim}[Claims~\ref{thm:thereBotSim} and~\ref{thm:thereBot} in the Appendix]
\label{thm:thereBotSimShort}
Suppose that in $R$'s starting system state, there are processors $p_i, p_j \in P$ that are active in $R$, for which $|S  \setminus \{ \bot, \sharp \} |>1$. (1) $\exists S' \subseteq S : S' \in \{ \{ \textnormal{config}_i[i], \textnormal{config}_i[j] \}, \{ \textnormal{config}_i[i], m_{i,j}.\textnormal{config} \} \}$ implies that eventually the system reaches a state in which $\textnormal{config}_i[i] \in \{ \bot, \textnormal{FD}_i[i] \}$ holds. (2) $\exists S' \subseteq S : S' \in \{ \{ \textnormal{config}_i[i], \textnormal{config}_j[j] \} \}$ implies that eventually the system reaches a state in which $\textnormal{config}_i[i] \in \{ \bot, \textnormal{FD}_i[i] \}$ or $\textnormal{config}_j[j] \in \{ \bot, \textnormal{FD}_i[i] \}$ holds. 
\end{claim} 

\begin{proofsketch}
\noindent \textbf{Part (1).~} The proof is based on the assumption that messages from $p_j$ arrive eventually, the updating of $\textnormal{config}_i[j]$ occurs (line~\ref{ln:receive}) and the do-forever loop includes the if-statement in line~\ref{ln:nullConfig}.

\noindent \textbf{Part (2).~} 
Suppose that this part of the claim is false. The assumption messages eventually arrive the  implies that the case of part (1) of this claim holds eventually. Thus, a contradiction and the claim is true.
\end{proofsketch}

\remove{
\begin{proofsketch}
Suppose that $S' = \{ \textnormal{config}_i[i], \textnormal{config}_i[j] \}$ holds. Immediately after $R$'s starting state, processor $p_i$ has an applicable step that includes the execution of the do forever loop (line~\ref{ln:doForever} to~\ref{ln:send}). In that step, the if-statement condition 
%
%
$(|  \{ \textnormal{config}_i[k] : p_k \in \textnormal{FD}_i[i] \}  \setminus \{ \bot, \sharp \} | > 1)$
(line~\ref{ln:nullConfig}'s if-statement) holds, $p_i$ assigns $\bot$ to $\textnormal{config}_i[i]$ and the proof is done. Suppose that $S' = \{ \textnormal{config}_i[i], m_{i,j}.\textnormal{config} \}$ holds. Upon $m_{i,j}$'s arrival, processor $p_i$ assigns $m_{i,j}.\textnormal{config}$ to $\textnormal{config}_i[j]$ (line~\ref{ln:receive}) and the case of $S' = \{ \textnormal{config}_i[i], \textnormal{config}_i[j] \}$ holds.
\end{proofsketch}

\begin{claim}
\label{thm:thereBot}
Suppose that in $R$'s starting system state, there are processors $p_i, p_j \in P$ that are active in $R$, for which $|S  \setminus \{ \bot, \sharp \} |>1$, where $\exists S' \subseteq S : S' \in \{ \{ \textnormal{config}_i[i], \textnormal{config}_j[j] \} \}$. Eventually the system reaches a state in which  $\textnormal{config}_i[i] \in \{ \bot, \textnormal{FD}_i[i] \}$ or $\textnormal{config}_j[j] \in \{ \bot, \textnormal{FD}_i[i] \}$ holds. 
\end{claim}

\begin{proofsketch}
%
%
Suppose, towards a contradiction, for any system state $c \in R$ neither $\textnormal{config}_i[i] \in \{ \bot, \textnormal{FD}_i[i] \}$ nor $\textnormal{config}_j[j] \in \{ \bot, \textnormal{FD}_i[i] \}$. Note that $p_i$ and $p_j$ exchange messages eventually, because whenever processor $p_i$ repeatedly sends the same message to processor $p_j$, it holds that $p_j$ receives that message eventually (the fair communication assumption, Section~\ref{s:sys}) and vice versa. Such message exchange implies that the case of $|S  \setminus \{ \bot, \sharp \} |>1$ (Claim~\ref{thm:thereBotSim}) holds eventually, where $\exists S' \subseteq S : S' \in \{ \{ \textnormal{config}_i[i], m_{i,j}.\textnormal{config} \}, \{ \textnormal{config}_j[j], m_{i,j}.\textnormal{config} \}\}$. Thus, we reach a contradiction and therefore eventually the system reaches a state in which 
 $\textnormal{config}_i[i] \in \{ \bot, \textnormal{FD}_i[i] \}$ or $\textnormal{config}_j[j] \in \{ \bot, \textnormal{FD}_i[i] \}$ hold.
\end{proofsketch}


} 

\begin{claim}[Claim~\ref{thm:onceBot} in the Appendix]
\label{thm:onceBotShort}
Suppose that $\textnormal{config}_i[i]  \in \{ \bot, \textnormal{FD}_i[i] \}: p_i \in P$ in $R$'s starting system state. (1) For any system state $c \in R:\textnormal{config}_i[i] \in \{ \bot, \textnormal{FD}_i[i] \}$, and (2) $R=R'\circ R''$ has a suffix, $R''$, such that $\forall c'' \in R'':\forall p_i, p_j$ that are active in $R : (\{ m_{i,j}.\textnormal{config}, \textnormal{config}_j[i], \textnormal{config}_j[j] \} \setminus \{ \bot, \textnormal{FD}_i[i] \}) = \emptyset$.
\end{claim}

\begin{proofsketch}
%
\noindent {\bf Part (1).~~~}
Since $R$ is admissible, $\textnormal{FD}_i[i]$'s value does not change and that $\textnormal{FD}_i[i] = \textnormal{FD}_j[j]$.
Any step $a_i \in R$ in which $p_i$ changes $\textnormal{config}_i[i]$'s value includes the execution of line~\ref{ln:nullConfig} or line~\ref{ln:restartConfig} (note that $a_i$ does not include lines~\ref{ln:adoptAll} to~\ref{ln:automatonStep}), which assign to $\textnormal{config}_i[i]$ the values $\bot$, and respectively, $\textnormal{FD}_i[i]$. 

\noindent {\bf Part (2).~~~} First consider the values in $m_{i,j}.\textnormal{config}$ and $\textnormal{config}_j[i]$ before the ones in $\textnormal{config}_j[j]$.

\noindent {\bf Part (2.1).~~~} 
Message $m_{i,j}$ includes $\textnormal{config}_i[i]$'s value (line~\ref{ln:send}). Eventually $m_{i,j}.\textnormal{config}  \in \{ \bot, \textnormal{FD}_i[i] \}$ and therefore $\textnormal{config}_j[i] \in \{ \bot, \textnormal{FD}_i[i] \}$ records correctly $m_{i,j}$'s most recent value in $c'' \in R''$  (line~\ref{ln:receive}).

\noindent {\bf Part (2.2).~~~}
Once $p_j$ changes the value of $\textnormal{config}_j[j]$, it holds that  $\textnormal{config}_j[j] \in \{ \bot, \textnormal{FD}_i[i] \}$ thereafter (only lines~\ref{ln:nullConfig} and~\ref{ln:restartConfig} change $\textnormal{config}_j[j]$, and part (1) of this claim while replacing the index $i$ with $j$). Suppose, towards a contradiction, that $p_j$ does not change that value of $\textnormal{config}_j[j]$ throughout $R$ and yet $\textnormal{config}_j[j] \notin \{ \bot, \textnormal{FD}_i[i] \}$. 
Note that $(| \{  \textnormal{FD}[j]: p_j \in \textnormal{FD}[i] \} | = 1)$ (if-statement condition in line~\ref{ln:restartConfig}, second clause) holds throughout (admissible) $R$. Therefore, whenever $p_i$ takes a step that includes the execution of the do forever loop, its message $m_{i,j}$, such that $m_{i,j}.\textnormal{config}=\textnormal{config}_i[i]$ (line~\ref{ln:send}) and $\textnormal{config}_i[i]=\textnormal{FD}_i[i]$ (this proof, part (2.2)). Since $p_i$ sends $m_{i,j}$ repeatedly, $p_j$ receives eventually $m_{i,j}$ (fair communication, Section~\ref{s:sys}) and $\textnormal{config}_j[i] = m_{i,j}.\textnormal{config} = \textnormal{config}_i[i] = \textnormal{FD}_i[i] \neq \bot$. Immediately after that step, the system state allows $p_j$ to take a step in which the condition in line~\ref{ln:nullConfig}'s if-statement holds.
%
%
Thus, a contradiction.
\end{proofsketch}

\begin{claim}[Claim~\ref{thm:2Config} in the Appendix]
\label{thm:2ConfigShort}
Suppose for any two active $p_i, p_j \in P$, we have that $(\{ \textnormal{config}_i[i], \textnormal{config}_j[i], m_{i,j}.\textnormal{config} \} \setminus \{ \bot, \textnormal{FD}_i[i] \}) = \emptyset$. Eventually $\textnormal{config}_i[i]=\textnormal{FD}_i[i]$.  
\end{claim} 

\begin{proofsketch}
%
Note that this claim assumptions w.r.t. $R$'s starting states hold for any $c \in R$, because only lines~\ref{ln:nullConfig} and~\ref{ln:restartConfig} can change the value of $\textnormal{config}_i[i] \in \{ \bot, \textnormal{FD}_i[i] \}$ (but $(|  \{ \textnormal{config}_i[k] : p_k \in \textnormal{FD}_i[k] \}  \setminus \{ \bot, \sharp \} | > 1)$ (line~\ref{ln:nullConfig}) does not hold, $\textnormal{FD}_i[i] = \textnormal{FD}_j[j]$ and $\textnormal{FD}_i[i]$'s does not change), which later $p_i$ uses for sending the message $m_{i,j}$ (line~\ref{ln:send}), and thus also $m_{i,j}.\textnormal{config}  \in \{ \bot, \textnormal{FD}_i[i] \}$ as well as $\textnormal{config}_j[i] \in \{ \bot, \textnormal{FD}_i[i] \}$ records correctly the most recent $m_{i,j}$'s that $p_j$ receives from $p_i$ (line~\ref{ln:receive}).    
In step $a_i \in R$, processor $p_i$ executes line~\ref{ln:restartConfig} (by similar arguments to the above) and does not include the execution of line~\ref{ln:nullConfig} (the if-statement  condition of line~\ref{ln:nullConfig} does not hold). Immediately after $a_i$, $\textnormal{config}_i[i] = \textnormal{FD}_i[i]$ holds. 
\end{proofsketch}
\end{proofsketch}

\begin{lemma}[Lemma~\ref{thm:convDeg} and Claim~\ref{thm:virtuallyNotExplicit} in the Appendix]
\label{thm:convDegShort}
Let $R$ be an execution of Algorithm~\ref{alg:disCongif} that is admissible with respect to the participant sets. 
Let $\textnormal{n}$ be a notification in $R$. Eventually $\textnormal{n}$ leaves the system.
\end{lemma}
\begin{proofsketch}
The proof assume, towards a contradiction, that notification $\textnormal{n}$ never leaves the system and it has a maximal lexical value among all the notifications in $R$. The proof start by assuming that all of $R$ notifications appear in its starting state before removing this assumption. It uses the fact that only lines~\ref{ln:readyToReplace} to~\ref{ln:automatonStep} (Claim~\ref{thm:notDec}) change the notifications and by that show non-decrease property of their lexical values. A contradiction is archived by showing that the following invariants hold.  
Suppose that $\textnormal{\notif}_i[i] = \textnormal{n}$ holds in  every system state $c' \in R$. Eventually the system reaches a state $c'' \in R$, such that for any $p_j \in P$ that is an active participant in $R$, it holds that:
(1) $\textnormal{\notif}_j[i] = \textnormal{n}$ and $\textnormal{FD}_j[i] = \textnormal{FD}_i$. Moreover, $\textnormal{\notif}_j[j]=\textnormal{n}$ and $\textnormal{FD}_j[j] = \textnormal{FD}_i$ in $c''$ eventually,
(2) $\textnormal{echo}_i[j].\textnormal{\notif} = \textnormal{n}$, $\textnormal{echo}_i[j].\textnormal{part} = \textnormal{FD}_i[i].part$ and $\textnormal{\notif}_i[j]=\textnormal{n}$  in $c''$,
(3) $\textnormal{all}_i[i] = true$  in $c''$.
(4) $\textnormal{all}_j[i] = true$  in $c''$. 
(5) $\textnormal{echo}_i[j] = (\textnormal{FD}_i[i].part, \textnormal{\notif}_i[i], \textnormal{all}_i[i])$ in $c''$. 
(6) $p_i \in allSeen_j$  in $c''$. 
(7) the if-statement condition of line~\ref{ln:readyToReplace} holds in $c''$.
Note that there exists a system state
$c_{\exists \textnormal{n}} \in R$ in which there are no notifications, because invariant (7) that there is a step $a_i$ that immediately follows $c''$ and in which $p_i$ for any $\textnormal{n}.phase$ value contradict the assumption that $\textnormal{n}$ is of maximal value or that it never leave the system. We complete the proof by showing that even in executions in which not all of $R$ notifications appear in its starting state, the above eventually holds. To that end, the proof consider all notifications that appeared in $R$'s starting state and show that they must leave the system eventually because their (continuous) presence causes $\noReconfig()$ to return false and by that disable the effect of the function $\configEstab(\textnormal{set})$ (line~\ref{ln:configEstab}). Once this is true for every active processor in the system, the conditions for invariants (1) to (7) hold and all notifications leave the system eventually.   
\hfill\end{proofsketch}
\end{proofsketch}

}
\begin{algorithm*}[t!]

\caption{Self-stabilizing Reconfiguration Stability Assurance; code for processor $p_i$}
\label{alg:disCongif}
\begin{footnotesize}
{\bf Variables}:
The following arrays consider both $p_i$'s own value (entry $i$) and $p_j$'s most recently received value (entry $j$).  

$\textnormal{config}[]$: an array in which $\textnormal{config}[i]$ is $p_i$'s view on the current configuration.
%
%
Note that $p_i$ assigns the \emph{empty (configuration)} value $\bot$ after receiving a conflicting (different) non-empty configuration value. 

$\textnormal{FD}[]$: an array in which $\textnormal{FD}[i]$ represents $p_i$'s failure detector. 
%
Note that we consider only the trusted processors rather than the suspected ones. Namely, crashed processors are eventually suspected. 

$\textnormal{FD}[].part$ is the participant set, where $\textnormal{FD}[i].part$ is an alias for $\{ p_j \in \textnormal{FD}[i] : \textnormal{config}[j] \neq \sharp \}$ and $\textnormal{FD}[j].part$ refers to the last value received from $p_j$. Namely, the $\textnormal{FD}$ field of every message encodes also this participation information.

$\textnormal{\notif}[]$ is an array of pairs $\langle phase \in \{0,1,2\}, set \subseteq P \rangle$, where $\textnormal{\notif}[i]$ refers to $p_i$'s configuration replacement notifications.
%
%
In the pair $\textnormal{\notif}[i]$, the field $set$ can either be $\bot$ (`no value') or the proposed processor set.

$\textnormal{echo}[]$ is an array in which $\textnormal{echo}[i]$ is $(\textnormal{FD}[i].part, \textnormal{\notif}[i], \textnormal{all}[i])$'s alias and $\textnormal{echo}[j]$ refers to the most recent value that $p_i$ received from $p_j$ after $p_j$ had responded to $p_i$ with the most recent values it got from $p_i$.

$\textnormal{all}[]$ is an array of Booleans, where $\textnormal{all}[i]$ refers to the case in which $p_i$ observes that all trusted processors had noticed its current (maximal) notification and they hold the same notification. 
%

$\textnormal{allSeen}$: a list of processors $p_k$ for which $p_i$ received the $\textnormal{all}[k]$ indication. 

\vspace{0.35em}

{\bf Interface functions}:

{\bf function} \label{ln:chsConfig} 
$chsConfig()=$ 
{\bf return(choose}$(\{ \textnormal{config}[k] \}_{p_k \in \textnormal{FD}[i]} \setminus \{ \sharp \}))$, where {\bf choose}$(\emptyset)=\bot$ else {\bf choose}$(set)\in set$;\

{\bf function} \label{ln:getConfig} $getConfig()=$ \{{\bf if} $\noReconfig()$ {\bf then return}$(chsConfig())$ {\bf else return(}$\textnormal{config}[i])\}$;\



{\bf function} \label{ln:\noReconfig} $\noReconfig() = ((p_i \notin (\cap_{p_j \in \textnormal{FD}[i]  \setminus \{ p_i \} } \textnormal{FD}[i])) \lor 
(|  \{ \textnormal{config}[k] \}_{p_k \in \textnormal{FD}[i]} \setminus \{ \sharp \} | > 1) \lor (\{ \textnormal{FD}[i].part \} \neq \{ \textnormal{FD}[k].part, \textnormal{echo}[k].part \}_{p_k \in \textnormal{FD}[i]}) \lor
(\nexists p_k \in \textnormal{FD}[i] :   config[k] = \bot) \lor (\notif[k] \neq dfltNtf))$ \hfill$/*$ invariant tests $*/$


{\bf function}  \label{ln:configEstab} $\configEstab(\textnormal{set})=$  \{\lIf{$(\noReconfig() \land (set \notin \{ \textnormal{config}[i], \emptyset \} )  )$}{$\textnormal{\notif}[i] \gets \langle 1, \textnormal{set} \rangle\}$ 
} 



{\bf function}  \label{ln:participate} $participate()=$  \{\lIf{$(\noReconfig())$}{$\textnormal{config}[i] \gets chsConfig()\}$ 
}

\vspace{0.35em}

{\bf Constants and macros}: $dfltNtf = \langle 0,  \bot \rangle$ \label{ln:dfltNtf} \hfill$/*$ {the default notification tuple} $*/$


{\bf macro} $degree(k)$ $=$ {$(2 \cdot \textnormal{\notif}[k].phase + |\{1:\textnormal{all}[k]\}|)$} \hfill$/*$ {$p_k$'s most-recently-received \notif ~degree} $*/$\label{ln:degree}

{\bf macro} \label{ln:eachoNoAll} $echoNoAll(k)$ $= $ \Return{$( \{( \textnormal{FD}[i].part, \textnormal{\notif}[i]) \} = \{ (\textnormal{echo}[j].part, \textnormal{echo}[j].\textnormal{\notif}) \}_{p_j \in \textnormal{FD}[i].part})$ }  

{\bf macro} \label{ln:echoDef} $echo()$ $=$ \Return{$( \{( \textnormal{FD}[i].part, \textnormal{\notif}[i], \textnormal{all}[i]) \} = \{ \textnormal{echo}[j] \}_{p_j \in \textnormal{FD}[i].part})$ } 

{\bf macro} \label{ln:sameK} $same(k)$ $=$ \Return{$( \{( \textnormal{FD}[i].part, \textnormal{\notif}[i]) \} = \{ (\textnormal{FD}[k].part, \textnormal{\notif}[k]) \})$ }

{\bf macro} \label{ln:maxNotif}
 $maxNtf()=$ 
 \{\lIf{$\{ \textnormal{\notif}[k]  \}_{ p_k \in FD[i].part}=\{dfltNtf\}$}{\Return{$\bot$} {\bf else return }$\textnormal{max}_{\textnormal{lex}} \{ \textnormal{\notif}[k]\}_{  p_k \in FD[i].part}\}$}


{\bf macro} \label{ln:confSetVal} $configSet(val)$ $=\{${\bf foreach} {$p_k$} {\bf do} {$(\textnormal{config}[k], \textnormal{\notif}[k]) \gets (val, dfltNtf)\}$ \hfill$/*$ {access to $p_i$'s $\textnormal{config}$}   $*/$}

{\bf macro} \label{ln:incrementphs} $increment(phs)$ $=\{${\bf select}$(phs)$ {\bf case $0$:}  {\bf return} $0$; {\bf case $1$:}  {\bf return} $2$;   {\bf case $2$:}  {\bf return} $0$;$\}$ 

{\bf macro} \label{ln:allSaw} $allSeen() = (\textnormal{FD}[i].part \subseteq (\textnormal{allSeen} \cup \{ p_i :\textnormal{all}[i]\}))$\;

\vspace{0.35em}

{\bf do forever} \label{ln:doForever} \Begin{
\lForEach{$p_k \notin \textnormal{FD}[i].part$}{$(\textnormal{config}[k],\textnormal{\notif}[k]) \gets (\sharp, dfltNtf);$ \hfill$/*$ {clean after crashes} $*/$\label{ln:clean}\DontPrintSemicolon}
%
\PrintSemicolon
\lIf{$((\exists p_k${\em :}$((\textnormal{\notif}[k]=\langle 0, s \rangle) \land (s \neq \bot)) \lor (\textnormal{config}[k] \in \{ \bot, \emptyset \})))   \lor (\nexists p_k, p_{k'} \in \textnormal{FD}[i].part${\em :}$ |degree(k)-degree(k')|> 1) \lor (\exists p_k \in \textnormal{FD}[i].part ${\em :}$ ((\textnormal{\notif}[i].phase =x) \land (\textnormal{\notif}[k].phase =(x+1)(\bmod ~3)) \land (p_k \notin allSeen) \land (x \in \{1, 2 \} )) \lor $ $ (|\notifSet|>1) \land ((\{(\textnormal{FD}[i],\textnormal{FD}[i].part)\}=\{(\textnormal{FD}[k],\textnormal{FD}[k].part)\}_{p_k \in \textnormal{FD}[i].part}) \land ((\textnormal{config}[i]\cap\textnormal{FD}[i].part) = \emptyset)))$ {\bf where} $\notifSet=\{ \textnormal{\notif}[k].set ${\em :}$ \exists p_{k'} \in \textnormal{FD}[i].part${\em :}$\textnormal{\notif}[k']$ $ = \langle 2, \bullet \rangle \}_{p_k \in \textnormal{FD}[i].part}$}{\label{ln:stale}$configSet(\bot)$} 
%
%
%
%
\If(\hfill $/*$ {when no notification arrived} $*/$){$(maxNtf() = \bot)$\label{ln:noMaxNotif}}{
%
\lIf{$|  \{ \textnormal{config}[k] \}_{p_k \in \textnormal{FD}[i]} \setminus \{ \bot, \sharp \} | > 1$}{$configSet(\bot);$ \hfill$/*$ {nullify the configuration upon conflict} $*/$\DontPrintSemicolon }\label{ln:nullConfig} 
\lIf{$ (\textnormal{config}[i] = \bot \land | \{  \textnormal{FD}[j]: p_j \in \textnormal{FD}[i] \} | = 1)$}{$configSet(\textnormal{FD}[i]);$ \hfill$/*$ {reset during admissible runs} $*/$\label{ln:restartConfig}}}
\Else{
%
$\textnormal{all}[i] \gets \bigwedge_{p_k \in \textnormal{FD}[i].part}(echoNoAll(k) \land same(k));$\label{ln:adoptAll} 
\hfill $/*$ {test all-the-same reports} $*/$
\lForEach{$p_k \in \textnormal{FD}[i].part: (echoNoAll(k) \land same(k))$}{$\textnormal{allSeen} \gets \textnormal{allSeen} \cup \{p_k\}$\label{ln:addSaw}}
%
\lIf{$echo() \land allSeen()$}{$( \textnormal{\notif}[i].phase, \textnormal{allSeen}) \gets (increment(\textnormal{\notif}[i].phase),\emptyset)$\label{ln:readyToReplace}}
%
%
%
%
\{{\bf select}$(\textnormal{\notif}[i].phase)$ {\bf case $0$:} {$\textnormal{\notif}[i] \gets dfltNtf$}, {\bf case $1$:} {$\textnormal{\notif}[i] \gets maxNtf()$}, {\bf case $2$:} \label{ln:automatonStep}
 {$\textnormal{config}[i] \gets \textnormal{\notif}[i].set$}\};
%
%
}
\lIf{$\textnormal{config}[i]\neq\sharp$}{{\bf foreach} {$p_j \in \textnormal{FD}[i]$}{ {\bf do send}$(\langle \textnormal{FD}[i], \textnormal{config}[i], \textnormal{\notif}[i], \textnormal{all}[i], (\textnormal{FD}[j].part, \textnormal{\notif}[j], \textnormal{all}[j]) \rangle)$\label{ln:send}}}}
{\bf upon receive} $m=\langle \textnormal{FD}, \textnormal{config}, \textnormal{\textnormal{\notif}},  \textnormal{all}, \textnormal{echo} \rangle $ {\bf from} $p_j$ {\bf do} 
$(\textnormal{FD}[j], \textnormal{config}[j], \textnormal{\notif}[j], \textnormal{all}[j], \textnormal{echo}[j]) \gets m$ \label{ln:receive}\;
%
{\bf upon interrupt} $\bm{p_i}${\bf 's booting} {\bf do} 
\lForEach{$p_k$}{$\textnormal{echo}[k] \gets (\textnormal{config}[k], \textnormal{\notif}[k], \textnormal{all}[k]) \gets (\sharp, dfltNtf, false)$\label{ln:join}}
\end{footnotesize}
\end{algorithm*}

\subsubsection{Algorithm Description}
\label{sec:recSAdescr}


We first present an overview of the algorithm and then proceed to a line-by-line description. 

\subsubsection*{Overview}
The $recSA$ layer uses a self-stabilizing algorithm (Algorithm~\ref{alg:disCongif})  for assuring correct configuration while allowing the updates from the reconfiguration management layer.
Algorithm~\ref{alg:disCongif} guarantees that (1) all active processors have eventually identical copies of a single configuration, (2) when participants notify the system that they wish to replace the current configuration with another, the algorithm selects one proposal and replaces the current configuration with it, and (3) joining processors can become participants eventually. 

The algorithm combines two techniques: one for \emph{brute force stabilization} that recovers from stale information and a complementary technique for \emph{delicate (configuration) replacement}, where participants jointly select a single new configuration that replaces the current one. 
As long as a given processor is not aware of ongoing configuration replacements, Algorithm~\ref{alg:disCongif} merely monitors the system for stale information, e.g., it makes sure that all participants have a single (non-empty) configuration. During these periods the algorithm allows the invocation of configuration replacement processes (via the $estab(\textnormal{set})$ interface, triggered by the Reconfiguration Management layer) as well as the acceptance of joining processors as participants (via the $participate()$ interface, triggered by the Joining layer). During the process of configuration replacement, the algorithm selects a single configuration proposal and replaces the current one with that proposal before returning to monitor for configuration disagreements.

While the system reconfigures, there is no immediate need to allow joining processors to become participants. By temporarily disabling this functionality, the algorithm can focus on completing the configuration replacement using the current participant set. To that end, 
only participants broadcast their states at the end of the do forever loop (line~\ref{ln:send}), and only their messages arrive to the other active processors (line~\ref{ln:receive}). Joining processors receive such messages, but cannot broadcast before their safe entry to the participants' set via the function $participate()$ (line~\ref{ln:participate}), which enables $p_i$'s broadcasting. 
Note that non-participants monitor the intersection between the current configuration and the set of active participants (line~\ref{ln:stale}). In case it is empty, the processors (participants or not) essentially begin a brute-force stabilization (outlined below) where there is no more blocking of joining processors to become participants.



\begin{figure*}[t!] 
	\center
	\includegraphics[scale=0.45,bb=96 401 486 755]{automaton.pdf}
	\caption{The configuration replacement automaton}
	\label{fig:auto}
\end{figure*}

\noindent {\bf\em Brute-force stabilization.}
The algorithm detects the presence of stale information and recovers from these transient faults. {\em Configuration conflicts} are one of several kinds of such stale information and they refer to differences in the field $\textnormal{config}$, which stores the (quorum) configuration values. Processor $p_i$ can signal to all processors that it had detected stale information by assigning $\bot$ to $\textnormal{config}_i$ and by that start a reset process that nullifies all $\textnormal{config}$ fields in the system (lines~\ref{ln:stale} and~\ref{ln:nullConfig}). 
Algorithm~\ref{alg:disCongif} uses the brute-force technique for letting processor $p_i$ to assign to $\textnormal{config}_i$ its set of trusted processors (line~\ref{ln:restartConfig}), which the failure detector $\textnormal{FD}_i$ at processor $p_i$ provides. Note that by the end of the brute-force process, all active processors (joining or participant) become  participants. We show that 
eventually all active processors share identical (non-$\bot$) $\textnormal{config}$ values by the end of this process.

%
%
%
\noindent {\bf\em Delicate (configuration) replacement.}
Participants can propose to replace the current configuration with a new one, $\textnormal{set}$, via the $estab(\textnormal{set})$ interface. This replacement uses the {\em configuration replacement} process, which for the purposes of the overview, we abstract it as the automaton depicted in Figure~\ref{fig:auto}. 
When the system is free from stale information, the configuration uniformity invariant (of the $\textnormal{config}$ field values) holds. Then, any number of calls to the $estab(\textnormal{set})$ interface starts the configuration replacement process, which controls the configuration replacement using the following three phases: (1) selecting (deterministically and uniformly) a single proposal (while verifying the eventual absence of ``unselected'' proposals), (2) replacing (deterministically and uniformly) all $\textnormal{config}$ fields with the jointly selected proposal, and (3) bringing back the system to a state in which it merely tests for stale information. 
%

The configuration replacement process 
requires coordinated phase transition.
%
%
Algorithm~\ref{alg:disCongif} lets processor $p_i$ to represent proposals as $\textnormal{prp}_i[j]=(phase, set)$, where $p_j$ is the processor from which $p_i$ received the proposal, $phase \in \{0,1, 2\}$ and $set$ is a processor set or the null value, $\bot$. The \emph{default proposal}, $\langle 0, \bot \rangle$, refers to the case in which $\textnormal{prp}$ encodes ``no proposal''. 
When $p_i$ calls the function $estab(\textnormal{set})$, it changes $\textnormal{prp}$ to $\langle 1, set \rangle$ (line~\ref{ln:configEstab}) as long as $p_i$ is not aware of an ongoing configuration replacement process, i.e., $noReco()$ returns true. Upon this change, the algorithm disseminates $\textnormal{prp}_i[i]$ and by that guarantees eventually that $noReco()$ returns false for any processor that calls it. Once that happens, no call to $estab(\textnormal{set})$ adds a new proposal for configuration replacement and no call to $participate()$ lets a joining processor to become a participant (line~\ref{ln:participate}). The algorithm can then use the lexical value of the $\textnormal{prp}_i[]$'s tuples for selecting one of them deterministically (Figure~\ref{fig:auto}). To that end, each participant makes sure that all other participants report the same tuples by waiting until they ``echo'' back the same values as the ones it had sent to them. Once that happens, the participant $p_i$ makes sure that the communication channels do not include other ``unselected'' proposals by raising a flag ($\textnormal{all}_i=true$) and waiting for the echoed values of these three fields, i.e., participant set, $\textnormal{prp}_i[i]$ and $\textnormal{all}_i$. This waiting continues until the echoed values match the values of any other active participant in the system (while monitoring their well-being). Before this participant proceeds, it makes sure that all active participants have noticed its phase completion (line~\ref{ln:addSaw}). Each processor $p$ maintains the $\textnormal{allSeen}$ variable; a set of participants that have noticed $p$'s phase completion (line~\ref{ln:addSaw}) and are thus added to $p$'s $\textnormal{allSeen}$ set.

The above mechanism for phase transition coordination allows progression in a unison fashion. Namely, no processor starts a new phase before it has seen that all other active participants have completed the current phase and have noticed that all other have done so (because they have identical participants' set, $\textnormal{prp}$ and $\textnormal{all}[]$ values). This is the basis for emulating every step of the configuration replacement process (line~\ref{ln:automatonStep}) and making sure that the phase 2 replacement occurs correctly before returning to phase 0, in which the system simply tests for stale information. We show that since the failure detectors monitor the participants' well-being, this process terminates.

\subsubsection*{Detailed Descripion}

We now proceed to a detailed, line-by-line description of Algorithm~\ref{alg:disCongif}.

\paragraph{Variables.}
The algorithm uses a number of fields
that each active participant broadcasts to all other system processors. The processor stores the values that they receive in arrays. Namely, we consider both $p_i$'s own value (the $i$-th entry) and $p_j$'s most recently received value (the $j$-th entry).  

\begin{itemize}

\item The field $\textnormal{config}[]$: an array in which $\textnormal{config}[i]$ is $p_i$'s view on the current configuration. Note that $p_i$ assigns the \emph{empty (configuration)} value $\bot$ after receiving a conflicting (not the same) non-empty configuration value, i.e., the received configuration is different than $p_i$'s configuration. We use the symbol $\sharp$ for denoting that processor $p_i$ is not a participant, i.e., $\textnormal{config}_i[i]=\sharp$.

\item The field $\textnormal{FD}[]$ is  an array in which $\textnormal{FD}[i]$ represents $p_i$'s failure detector of trusted processors (without the list of processors that are suspected to be crashed).

\item $\textnormal{FD}[].part$ is the participants' set, where $\textnormal{FD}[i].part$ is an alias for $\{ p_j \in \textnormal{FD}[i] : \textnormal{config}[j] \neq \sharp \}$ and $\textnormal{FD}[j].part$ refers to the last value received from $p_j$. Namely, the $\textnormal{FD}$ field of every message encodes also this participation information.

\item The field $\textnormal{\notif}[]$ is an array of pairs $\langle phase \in \{0,1,2\}, set \subseteq P \rangle$, where $\textnormal{\notif}[i]$ refers to $p_i$'s configuration replacement notifications. In the pair $\textnormal{\notif}[i]$, the field $set$ can either be $\bot$ (`no value') or the proposed processor set. 

\item The field $\textnormal{echo}[]$ is an array in which $\textnormal{echo}[i]$ is an alias of  $(\textnormal{FD}[i].part, \textnormal{\notif}[i], \textnormal{all}[i])$ and $\textnormal{echo}[j]$ refers to the most recent value that $p_i$ has received from $p_j$ after $p_j$ had responded to $p_i$ with the most recent values it got from $p_i$.

\item The field $\textnormal{all}[]$ is an array of Booleans, where $\textnormal{all}[i]$ refers to the case in which $p_i$ observes that all trusted processors have noticed its current (maximal) notification and they hold the same notification. 
%

\item The variable $\textnormal{allSeen}$ is a set that includes the processors $p_k$ for which $p_i$ received the $\textnormal{all}[k]$ indication. 

\end{itemize}

\paragraph{Constants, functions and macros.}
The constant $dfltNtf$ (line~\ref{ln:dfltNtf}) denotes the default notification tuple $\langle 0,  \bot \rangle$.
The following functions define the interface between Algorithms~\ref{alg:disCongif} and~\ref{alg:SSQR} (Reconfiguration Management layer) and the Joining Mechanism (Algorithm~\ref{alg:join}). Note that the behavior which we specify below considers legal executions.  

\begin{itemize}

\item The function $chsConfig()$ (line~\ref{ln:chsConfig}) returns $\textnormal{config}$ whenever there is a single such non-$\sharp$ value. Otherwise, $\bot$ is returned.

\item The function $getConfig()$ (line~\ref{ln:getConfig}) allows Algorithms~\ref{alg:SSQR}~and~\ref{alg:join} to retrieve the value of the current quorum configuration, i.e., $\textnormal{config}_i[i]$. We note that during legal executions, this value is a set of processors whenever $p_i$ is a participant. However, this value can be $\sharp$ whenever $p_i$ is not a participant and $\bot$ during the process of configuration reset.

\item The function $\noReconfig()$ (line~\ref{ln:\noReconfig}) returns $true$ whenever (1) $p_i$ was not recognized as a trusted processor by a processor that $p_i$ trusts, (2) there are configuration conflicts, (3) the participant sets have yet to stabilize, (4) there is an on-going process of brute force stabilization, or (5) there is a delicate (configuration) replacement in progress. This part of the interface allows Algorithms~\ref{alg:SSQR}~and~\ref{alg:join} to test for the presence of local evidence according to which Algorithm~\ref{alg:disCongif} shall disable delicate (configuration) replacement and joining to the participant set.

\item The function $\configEstab(\textnormal{set})$ (line~\ref{ln:configEstab}) provides an interface that allows the \emph{recMA} layer to request from Algorithm~\ref{alg:disCongif} to replace the current quorum configuration with the proposed $\textnormal{set}$, which is a non-empty group of participants. Note that Algorithm~\ref{alg:disCongif} disables this functionality whenever $\noReconfig()=false$ or $set = \textnormal{config}[i]$.

\item The function $participate()$ (line~\ref{ln:participate}) provides an interface that allows the Joining Mechanism to request from Algorithm~\ref{alg:disCongif} to let $p_i$ join the participant set, which is the group that can participate in the configuration and request the replacement of the current configuration with another (via the $\configEstab(\textnormal{set})$ function). Note that Algorithm~\ref{alg:disCongif} disables this functionality whenever $\noReconfig()=false$ and thus there exists a single configuration in the system, i.e., the call to $chsConfig()$ (line~\ref{ln:chsConfig}) returns the single configuration that all active participants store as their current quorum configuration. This is except for case in which $(\{ \textnormal{config}_i[k] \}_{p_k \in \textnormal{FD}_i[i]} \setminus \{ \sharp \})=\emptyset$. Here, $chsConfig()$ returns $\bot$, which starts a reset process in order to deal with a complete collapse where the quorum system includes no active participants.

\end{itemize}

Algorithm~\ref{alg:disCongif} uses the following macros. 

\begin{itemize}

\item The macro $degree(k)$ (line~\ref{ln:degree}) calculates the degree of $p_k$'s most-recently-received notification degree, which is twice the notification phase plus one whenever all participants are using the same notification (and zero otherwise).

\item The macros $echoNoAll(k)$ and $echo(k)$ (lines~\ref{ln:eachoNoAll}, and~\ref{ln:echoDef} respectively) test whether $p_i$ was acknowledged by all participants for the values it has sent. The former function considers just the fields that are related to its own participants' set and notification, whereas the latter considers also the field $\textnormal{all}[]$. 

\item The macro $same(k)$ (line~\ref{ln:sameK}) performs a similar tests to the one of $echoNoAll(k)$, but considers only processor $p_k$'s most-recently-received values rather than all participants' echoed ones.

\item The macro $maxNtf()$ (line~\ref{ln:maxNotif}) selects a notification with the maximal lexicographical value or returns $\bot$ in the absence of notification that is not the default (phase 0) notification. We define our lexicographical order   between $\textnormal{\notif}_1$ and $\textnormal{\notif}_2$, as $\textnormal{\notif}_1 \leq_{lex}\textnormal{\notif}_2 \iff ((\textnormal{\notif}_1.phase <\textnormal{\notif}_2.phase) \lor ((\textnormal{\notif}_1.phase =\textnormal{\notif}_2.phase) \land (\textnormal{\notif}_1.set \leq_{lex}\textnormal{\notif}_2.set)))$, where $\textnormal{\notif}_1.set \leq_{lex}\textnormal{\notif}_2.set$ can be defined using a common lexical ordering and by considering sets of processors as ordered tuples that list processors in, say, an ascending order with respect to their identifiers.

\item The macro $configSet(val)$ (line~\ref{ln:confSetVal}) acts as a wrapper function for accessing $p_i$'s local copies of the field $\textnormal{config}$. This macro also makes sure that there are no (local) active notifications.

\item The macro $increment(phs)$ (line~\ref{ln:incrementphs}) performs the transition between the phases of the delicate configuration replacement technique.

\item The macro $allSeen()$ (line~\ref{ln:allSaw}) tests whether all active participants have noticed that all other participants have finished the current phase.

\end{itemize}

\boldsubparagraph{The do forever loop.} 
A line-by-line walkthrough of the pseudocode of Algorithm~\ref{alg:disCongif} follows.

\subparagraph{Cleaning up, removal of stale information and invariant testing.} The do forever loop starts by making sure that non-participant nodes cannot have an impact on $p_i$'s computation (line~\ref{ln:clean}) before testing that $p_i$'s state does not include any stale information (line~\ref{ln:stale}). Algorithm~\ref{alg:disCongif} tests for several types of stale information (c.f. Definition~\ref{def:type}): (type-1) notifications in phase 0 must not have a $set \neq \emptyset$, (type-2) configurations that refer to the empty set, execute configuration reset, or conflicting  configurations, (type-3) the degree gap between two notifications is greater than one, there are participants in different  phases but the one in the more advanced phase does not appear in the $\notifSet$ set, or the local set of notifications includes more than one notification and at least one of them is in phase 2, and (type-4)
the quorum configuration includes at least one active participant (to avoid false positive error by testing only when processor $p_i$ has a stable view of set of trusted processors and participants). In case any of these tests fails, the algorithm starts a configuration reset process by calling $configSet(\bot)$.

\subparagraph{The brute-force stabilization technique.}
As long as no active notifications are present locally (line~\ref{ln:noMaxNotif}), every processor performs this technique for transient fault recovery. In the presence of configuration conflicts, the algorithm starts the configuration reset process (line~\ref{ln:nullConfig}). Moreover, during the configuration reset process, the algorithm waits until all locally trusted processors report that they trust the same set of processors (line~\ref{ln:restartConfig}).

\subparagraph{The delicate replacement technique --- phase synchronization.}
This technique synchronizes the system phase transitions by making sure that all active participants work with the same notification.

Each active participant tests whether all other trusted participants have echoed their current participants' set and notifications and have the same values with respect to these two fields     
(line~\ref{ln:adoptAll}). The success of this test assigns $true$ to the field $\textnormal{all}_i[i]$. The algorithm then extends this test to include also the field $\textnormal{all}[]$ (line~\ref{ln:addSaw}). Upon the success of this test with respect to participant $p_k$, the algorithm adds $p_k$ to the set $\textnormal{allSeen}_i$.
Once processor $p_i$ receives reports from all participants that the current phase is over, it moves to the next phase (line~\ref{ln:readyToReplace}).

\subparagraph{The delicate replacement technique --- finite-state-machine emulation.}
Each of the three phases represent an automaton state (line~\ref{ln:automatonStep}); recall Figure~\ref{fig:auto} and the configuration replacement process discussed
in the Overview.  
During phase $1$, the system converges to a single notification. During phase $2$, the system replaces the current configuration with the proposed one. Next, the system returns to its ideal state, i.e., phase $0$, which allows new participants to join, as well as further reconfigurations.

\subparagraph{Message exchange and the control of newly arrived processors.}
When a participant finishes executing the do forever loop, it broadcasts its entire state (line~\ref{ln:send}). Once these messages arrive, processor $p_i$ stores them (line~\ref{ln:receive}). The only way for a newly arrived processor to start executing Algorithm~\ref{alg:disCongif} is by responding to an interrupt call (line~\ref{ln:join}). This procedure notifies the processor state and makes sure that it cannot broadcast messages (line~\ref{ln:send}). The safe entry of this newly arrived processor to the participants' set, is via the function $participate()$ (line~\ref{ln:participate}), which enables $p_i$'s broadcasting. Thus, non-participants merely follow the system messages until the function $\noReconfig()$ returns true and allows its join to the participant set by a call (from the Joining Mechanism) to the function $participate()$ (line~\ref{ln:participate}).

\subsubsection{Correctness}
\label{sec:recSAproof}

We first provide an outline of the proof and then proceed in steps to establish the correctness of the algorithm.
\subsubsection*{Outline} 
We say that system state $c$ has no stale information when (1) all the notifications are valid, (2) there are no configuration  conflicts or active reset process, (3) the phase information (including the set $allSeen$) are not out of synch, and (4) there are active participants  in $\textnormal{config}$. The correctness proof of Algorithm~\ref{alg:disCongif} shows that eventually there is no stale information, because they are all cleaned (line~\ref{ln:clean}) or detected and cause configuration reset by calling $configSet(\bot)$ (line~\ref{ln:stale}).
 

In the absence of notifications (line~\ref{ln:noMaxNotif}), 
Algorithm~\ref{alg:disCongif} merely tests for configuration conflicts (line~\ref{ln:nullConfig}) and during a configuration reset, it waits until all locally trusted processors report that they trust the same processors (line~\ref{ln:restartConfig}) before these processors become the configuration. 
I.e., it applies the brute force stabilization technique. The proof here shows that during admissible executions, the $\bot$ symbol propagates to all of the $\textnormal{config}$ fields in the system until all active processors assign to $\textnormal{config}$ either $\bot$ or the entire set of trusted processors. 
This reset process ends when all processors assign merely the latter value to $\textnormal{config}$.

In the presence of notifications, Algorithm~\ref{alg:disCongif}  synchronizes the system phase transitions by making sure that all active participants work with the same notification (lines~\ref{ln:adoptAll} to~\ref{ln:addSaw}) before moving to the next phase (line~\ref{ln:readyToReplace}). The algorithm's three phases represent an automaton state (line~\ref{ln:automatonStep}). 
During phase $1$, the system converges to a single notification. While $phase=2$ holds, the system replaces the current configuration with the proposed one. Next, the system returns to its no-notifications state, i.e., $phase=0$, which allows new participants as well as further reconfigurations. 
When a participant finishes the do forever loop, it broadcasts its entire state (line~\ref{ln:send}).
Once these messages arrive, the receiving processor stores them (line~\ref{ln:receive}). 

The only way for a newly arrived processor to start executing Algorithm~\ref{alg:disCongif} is by responding to an interrupt call (line~\ref{ln:join}). This procedure nullifies the state of the newly arrived processor with $\textnormal{config}=\sharp$ and by that makes sure that it cannot broadcast (line~\ref{ln:send}). The safe entry of this processor to the participating set, is via the function $participate()$ (line~\ref{ln:participate}), which enables $p_i$'s broadcasting. Thus, non-participants merely follow the system messages until the function $\noReconfig()$ returns true and allows its join to the participant set by a call to $participate()$ (line~\ref{ln:participate}). 
By controlling the new joins, the three phase structure is the basis for 
proving the final theorem stating that eventually a common configuration set is adopted by every active processor. We now proceed to the detailed proof.


%
%
\subsubsection*{Configuration conflicts and stale information}
%

We begin by classifying the stale information into four types.

\begin{definition}
\label{def:type}
We say that processor $p_i$ in system state $c$ has a stale information in $c$ of:
\begin{itemize}[topsep=2pt,itemsep=-.5ex,partopsep=.5ex,parsep=1ex,leftmargin=.5cm]
	\item[]{\bf type-1:} when 
$(\exists p_k : ((\textnormal{prp}[k]=\langle 0, s \rangle) \land (s \notin \{ \emptyset, \bot\})))$ (cf. line~\ref{ln:stale}). 
	\item[]{\bf type-2:} when $(\exists p_k : (\textnormal{config}[k] \in \{ \bot, \emptyset\}))$ (cf. line~\ref{ln:stale}) or $c$ encodes a \emph{(configuration) conflict}, i.e., there are two active processors $p_i$ and $p_j$ for which $\textnormal{config}_i[i] \neq \textnormal{config}_j[j]$, $\textnormal{config}_i[i] \neq \textnormal{config}_i[j]$, or $m_{j,i}.\textnormal{config}\neq\textnormal{config}_i[i]$ in any message in the communication channel from $p_i$ to $p_j$.
	\item[]{\bf type-3:} when $(\nexists p_k, p_{k'} \in \textnormal{FD}_i[i].part: |degree_i(k)-degree_i(k')|> 1) \lor (\exists p_k \in \textnormal{FD}_i[i] : ((\textnormal{\notif}_i[i].phase =x) \land (\textnormal{\notif}_i[k].phase =(x+1)(\bmod ~3)) \land (p_k \notin allSeen_i) \land (x \in \{1, 2 \} ))\lor (|\notifSet_i|>1))$, where $\notifSet_i=\{ \textnormal{\notif}_i[k].set : \exists p_{k'} \in \textnormal{FD}_i[i].part:\textnormal{\notif}_i[k']$ $ = \langle 2, \bullet \rangle \}_{p_k \in \textnormal{FD}_i[i]}$. 
	\item[]{\bf type-4:} when $((\{(\textnormal{FD}_i[i],\textnormal{FD}_i[i].part)\}=\{(\textnormal{FD}_i[k],\textnormal{FD}_i[k].part)\}_{p_k \in \textnormal{FD}_i[i].part}) \land ((\textnormal{config}_i[i]\cap\textnormal{FD}_i[i].part) = \emptyset))$.
%
\end{itemize}
\end{definition}


\begin{claim}[Eventually there is no type-1 stale information]
\label{thm:noStale}
Eventually the system reaches a state $c \in R$ in which the invariant of no type-1 stale information holds thereafter.
\end{claim}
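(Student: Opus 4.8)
The plan is to treat the stale-information guard of line~\ref{ln:stale} as a self-correcting filter and then show that, once it has fired, type-1 information can be neither regenerated locally nor re-supplied through the channels. Recall that the type-1 disjunct $(\exists p_k : \textnormal{\notif}_i[k] = \langle 0, s \rangle \land s \notin \{\emptyset, \bot\})$ is one of the conditions tested in line~\ref{ln:stale}, and that when any condition there holds, $p_i$ executes $configSet(\bot)$ (line~\ref{ln:confSetVal}), which overwrites every entry $\textnormal{\notif}_i[k]$ with $dfltNtf = \langle 0, \bot\rangle$. Hence after any single pass of the do-forever loop in which $p_i$ held a type-1 notification, it holds none. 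It remains to forbid regeneration.

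First I would prove the own-entry invariant: eventually $\textnormal{\notif}_i[i]$ is never type-1. The only writers of $\textnormal{\notif}_i[i]$ are the booting interrupt (line~\ref{ln:join}), the clean-up of line~\ref{ln:clean} and the $configSet$/$dfltNtf$ resets (all yielding phase $0$ with $\bot$), the interface $\configEstab$ (yielding phase $1$), and the automaton step of line~\ref{ln:automatonStep} in case $1$, which assigns $maxNtf()$. The last is the only delicate case. Within a single atomic pass, line~\ref{ln:stale} is evaluated before the $maxNtf()=\bot$ test of line~\ref{ln:noMaxNotif}: if a type-1 notification is present, $configSet(\bot)$ has already reset all entries to default, forcing $maxNtf()=\bot$ and the brute-force branch, so case $1$ is not reached; if no type-1 notification is present, every stored notification is well-formed and $maxNtf()$ --- the lexicographic maximum of those notifications --- has phase at least $1$ whenever it is not $\bot$, hence is not type-1. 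Either way $\textnormal{\notif}_i[i]$ ends the iteration non-type-1, and since the receive event of line~\ref{ln:receive} never writes the own entry, once $\textnormal{\notif}_i[i]$ is clean it remains clean.

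Next I would drain the channels. By line~\ref{ln:send} the notification field a participant transmits is its own $\textnormal{\notif}_j[j]$, which by the previous paragraph is eventually always non-type-1; moreover only participants broadcast. Messages already in transit at the onset of $R$ may still carry type-1 notifications, but their total number is bounded by the channel capacity, which is in $O(N^2cap)$, so by the fair-communication and FIFO guarantees of the underlying data link they are all delivered and flushed in finite time. After that point every message $p_i$ receives carries a non-type-1 notification, so each assignment $\textnormal{\notif}_i[j] \gets m.\textnormal{\notif}$ stores a clean value, while entries of crashed or non-participating $p_k$ are reset by line~\ref{ln:clean}. Combining the own-entry invariant, the emptied channels, and one final pass of line~\ref{ln:stale} to erase any residue, the system reaches a state after which no active processor holds a type-1 notification, which is exactly the asserted invariant.

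I expect the main obstacle to be the regeneration argument through the delicate-replacement path, namely confirming that $maxNtf()$ cannot emit a phase-$0$ notification with a non-$\bot$ set in precisely those iterations where case $1$ of line~\ref{ln:automatonStep} executes. This rests on the intra-iteration ordering of the guard (line~\ref{ln:stale}) ahead of the $maxNtf()$ test (line~\ref{ln:noMaxNotif}) within one atomic pass of the loop, so the care needed is in tracking control flow rather than in any nontrivial calculation.
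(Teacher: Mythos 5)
Your proof is correct and follows essentially the same route as the paper's: line~\ref{ln:stale} erases type-1 information in every iteration of the do-forever loop before the send of line~\ref{ln:send}, the in-transit residue is bounded and eventually flushed, and no subsequent assignment regenerates a type-1 notification. The only difference is one of detail: where the paper simply asserts that the algorithm never assigns a value violating the invariant, you explicitly enumerate the writers of $\textnormal{\notif}[i]$ and check that $maxNtf()$ cannot emit a phase-$0$ notification with a non-$\bot$ set once the guard of line~\ref{ln:stale} has run in the same atomic pass.
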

\begin{proof}
Let $c \in R$ be a system state in which processor $p_i$ has an applicable step $a_i$ that includes the execution of the do forever loop (line~\ref{ln:doForever} to~\ref{ln:send}).
We note that immediately after $a_i$, processor $p_i$ has no type-1 stale information (line~\ref{ln:stale} removes it). Moreover, that removal always occurs before $a_i$ sends any message $m$ (line~\ref{ln:send}). Therefore, eventually for every active processor $p_j$ that receives from $p_i$ message $m$ (line~\ref{ln:receive}), it holds that $(m.phase = 0) \iff (m.\textnormal{\notif} = (0, \bot))$ as well as for every item $\textnormal{\notif}_i[k]: p_k \in P$. Once this invariant holds for every pair of active processors $p_i$ and $p_j$, the system reaches state $c$. We conclude the proof by noting that Algorithm~\ref{alg:disCongif} never assigns to $\textnormal{\notif}_i[j]$ a values that violates this claim invariant.    
\end{proof}

\subsubsection*{Replacement state and message; explicit and spontaneous replacements}
We say that a processor $p_i$'s state encodes a \emph{(delicate) replacement} when $\textnormal{\notif}_i[j] \neq \langle 0, \bot \rangle$ and we say that a message $m_{i,j}$ in the channel from $p_i$ to $p_j$ encodes a \emph{(delicate) replacement} when its $\textnormal{\notif} \neq\langle 0, \bot  \rangle$. 
Given a system execution $R$, we say that $R$ \emph{does not include an explicit (delicate) replacement} when throughout $R$ no node $p_i$ calls $\configEstab()$. Suppose that execution $R$ does not include an explicit (delicate) replacement and yet there is a system state $c \in R$ in which a processor state or a message in the communication channels encodes a (delicate) replacement. In this case, we say that $R$ includes a \emph{spontaneous (delicate) replacement}.



%

\begin{lemma}[Eventually there is no type-2 stale information]
\label{thm:noConflict}
Let $R$ be an admissible execution that does not include explicit (delicate) replacements nor spontaneous ones. The system eventually reaches  a state $c \in R$ in which the invariant of no type-2 stale information holds thereafter. 
\end{lemma}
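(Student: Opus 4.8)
\begin{proofsketch}
The plan is to use the two hypotheses to collapse the algorithm to its brute-force core and then argue that foreign configuration values and the $\bot$ symbol are both drained out of the system. Concretely, I would first cash in the hypotheses. By admissibility there is a fixed set $A$ of active processors with $\textnormal{FD}_i[i]=A$ for every active $p_i$ throughout $R$; since the $\textnormal{FD}$ field rides on every message (line~\ref{ln:send}), eventually every stored entry $\textnormal{FD}_i[j]$ equals $A$ as well, so the second conjunct of line~\ref{ln:restartConfig}, namely $|\{\textnormal{FD}[j]\}_{p_j\in\textnormal{FD}[i]}|=1$, holds permanently from some state on. Because $R$ contains neither explicit nor spontaneous replacements, every $\textnormal{\notif}$ field (local or in transit) equals the default $\langle 0,\bot\rangle$ at all times, hence $maxNtf()$ always returns $\bot$, the guard of line~\ref{ln:noMaxNotif} is always taken, and the else-branch --- in particular the automaton step (line~\ref{ln:automatonStep}), the only place that could write a proposed set into $\textnormal{config}$ --- is never executed. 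Consequently the only values ever \emph{written} into a $\textnormal{config}$ entry are $\bot$, via $configSet(\bot)$ in lines~\ref{ln:stale} and~\ref{ln:nullConfig}, and $A$, via $configSet(\textnormal{FD}[i])$ in line~\ref{ln:restartConfig}; no new value is ever created, and I call any configuration value other than $\bot$, $A$ and $\sharp$ \emph{foreign}. Type-1 staleness is vacuous here, and Claim~\ref{thm:noStale} is available if required.

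Next I would establish convergence. Since an own entry $\textnormal{config}_i[i]$ is only ever overwritten by $\bot$ or $A$, the set of active processors whose own entry is foreign is non-increasing, and once an own entry leaves the foreign values it never returns. A foreign, empty, or $\bot$ own entry is eliminated as follows: if $p_i$'s own entry disagrees with the own entry of some active $p_j$, or with a value in transit, then fair communication over the bounded FIFO data link of Section~\ref{s:sys} delivers the differing value into some $\textnormal{config}_i[j]$ (line~\ref{ln:receive}), after which the conflict test of line~\ref{ln:nullConfig} fires and $configSet(\bot)$ nullifies all of $p_i$'s entries at once; the empty set and any value disjoint from $A$ are removed analogously by the $\textnormal{config}[k]\in\{\bot,\emptyset\}$ and type-4 clauses of line~\ref{ln:stale}; and, the line~\ref{ln:restartConfig} guard being permanently enabled, the resulting $\bot$ is overwritten by $A$ within the same loop iteration. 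Hence, unless the system already sits in the exceptional fixed point in which every own entry and every message equals one common valid value $w$ with $w\cap A\neq\emptyset$ (a state that is itself type-2--free), the foreign and $\bot$ own entries are driven to extinction; once no own entry is foreign, no processor broadcasts a foreign value (line~\ref{ln:send}), so the finitely many stale foreign messages drain out of the bounded channels, and every own entry, every remote entry and every message equals $A$.

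It remains to verify closure. At a state in which every $\textnormal{config}$ value equals $A$: line~\ref{ln:nullConfig} is disabled because $\{A\}\setminus\{\bot,\sharp\}$ is a singleton; line~\ref{ln:stale} is disabled because no entry is $\bot$ or $\emptyset$, the notifications are default, and $A\cap\textnormal{FD}[i].part=A\neq\emptyset$ falsifies the type-4 clause; and line~\ref{ln:restartConfig} is disabled because $\textnormal{config}[i]\neq\bot$. Thus each $A$ is a fixed point of the loop, every subsequent message carries $A$, and no entry ever returns to $\bot$; by Definition~\ref{def:type} no active processor then exhibits type-2 staleness, and this persists, as claimed.

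I expect the main obstacle to be exactly this convergence step: ruling out perpetual oscillation, in which a processor that has reset to $A$ keeps receiving a stale foreign or $\bot$ value out of a bounded channel and flips back to $\bot$. The delicate point is that, the else-branch being excluded, no rule ever manufactures a foreign value or (after the line~\ref{ln:restartConfig} guard stabilises) leaves a lasting $\bot$, so the bounded FIFO channels can only drain their finite stock of stale values; together with the non-increasing foreign own-entry measure this forbids oscillation and forces termination at the all-$A$ fixed point. This is precisely where admissibility (the permanently enabled guard of line~\ref{ln:restartConfig}) and the absence of spontaneous replacements (no writes other than $\bot$ and $A$) are indispensable.
\end{proofsketch}
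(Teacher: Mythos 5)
Your proposal is correct and follows essentially the same route as the paper's proof: the hypotheses confine all writes to $\textnormal{config}$ to $\bot$ (lines~\ref{ln:stale},~\ref{ln:nullConfig}) and $\textnormal{FD}[i]$ (line~\ref{ln:restartConfig}), fair communication forces any conflicting pair to be detected and nullified, closure of $\{\bot,\textnormal{FD}_i[i]\}$ under the remaining writes prevents oscillation, and line~\ref{ln:restartConfig} then drives every entry to the common value --- this is exactly the content of the paper's Claims~\ref{thm:thereBotSim},~\ref{thm:thereBot},~\ref{thm:onceBot} and~\ref{thm:2Config}. Your explicit non-increasing ``foreign own-entry'' measure and channel-draining remark are a slightly more detailed packaging of what the paper handles in Claim~\ref{thm:onceBot}, part (2).
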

\begin{proof}
Note that any of $R$'s steps that includes the do forever loop (line~\ref{ln:doForever} to~\ref{ln:send}) does not run lines~\ref{ln:adoptAll} to~\ref{ln:automatonStep} (since $R$ does not include an explicit nor spontaneous replacement). If $R$'s starting system state does not include any configuration conflicts, we are done. Suppose that $R$'s starting system state does include a conflict, i.e., $\exists p_i, p_j \in P: (\textnormal{config}_i[i] = \bot)  \lor  (\textnormal{config}_i[i] \neq \textnormal{config}_i[j]) \lor (\textnormal{config}_i[i] \neq \textnormal{config}_j[j])$ or there is a message, $m_{i,j}$, in the communication channel from $p_i$ to $p_j$, such that the field $(m_{i,j}.\textnormal{config}[k] = \bot) : p_k \in \textnormal{FD}_i[i] \lor (m_{i,j}.\textnormal{config}\neq\textnormal{config}_i[i])$, where both $p_i$ and $p_j$ are active processors. 
In Claims~\ref{thm:thereBotSim},~\ref{thm:thereBot} and~\ref{thm:onceBot} we show that in all of these cases, eventually $\forall p_i \in P: \textnormal{config}_i[i] \in \{ \bot, \textnormal{FD}_i[i] \}$ before showing that eventually there are no configuration conflicts  (Claim~\ref{thm:2Config}).


Claims~\ref{thm:thereBotSim},~\ref{thm:thereBot} and~\ref{thm:onceBot} consider the values in the field $\textnormal{config}$ that are either held by an active processor $p_i \in P$ or in its outgoing communication channel to another active processor $p_j \in P$. We define the set $S= \{ S_i \cup S\_out_{i} \}_{p_i \in P}$ to be the set of all these values, where $S_i = \{ \textnormal{config}_i[j] \}_{p_j \in \textnormal{FD}_i[i]}$ and $S\_out_{i} = \{ m_{i,j}.\textnormal{config} \}_{p_j \in \textnormal{FD}_i[i]}$. 

%

\begin{claim}
\label{thm:thereBotSim}
Suppose that in $R$'s starting system state, there are  processors $p_i, p_j \in P$ that are active in $R$, for which $|S  \setminus \{ \bot, \sharp \} |>1$, where $\exists S' \subseteq S : S' \in \{ \{ \textnormal{config}_i[i], \textnormal{config}_i[j] \}, \{ \textnormal{config}_i[i], m_{i,j}.\textnormal{config} \} \}$. Eventually the system reaches a state in which $\textnormal{config}_i[i] \in \{ \bot, \textnormal{FD}_i[i] \}$ holds. 
\end{claim}

\begin{proof}
Suppose that $S' = \{ \textnormal{config}_i[i], \textnormal{config}_i[j] \}$ holds. Immediately after $R$'s starting state, processor $p_i$ has an applicable step that includes the execution of the do forever loop (line~\ref{ln:doForever} to~\ref{ln:send}). In that step, the if-statement condition 
%
%
$(|  \{ \textnormal{config}_i[k] : p_k \in \textnormal{FD}_i[i] \}  \setminus \{ \bot, \sharp \} | > 1)$
(line~\ref{ln:nullConfig}'s if-statement) holds, $p_i$ assigns $\bot$ to $\textnormal{config}_i[i]$ and the proof is done. Suppose that $S' = \{ \textnormal{config}_i[i], m_{i,j}.\textnormal{config} \}$ holds. Upon $m_{i,j}$'s arrival, processor $p_i$ assigns $m_{i,j}.\textnormal{config}$ to $\textnormal{config}_i[j]$ (line~\ref{ln:receive}) and the case of $S' = \{ \textnormal{config}_i[i], \textnormal{config}_i[j] \}$ holds.
\end{proof}

\begin{claim}
\label{thm:thereBot}
Suppose that in $R$'s starting system state, there are processors $p_i, p_j \in P$ that are active in $R$, for which $|S  \setminus \{ \bot, \sharp \} |>1$, where $\exists S' \subseteq S : S' \in \{ \{ \textnormal{config}_i[i], \textnormal{config}_j[j] \} \}$. Eventually the system reaches a state in which  $\textnormal{config}_i[i] \in \{ \bot, \textnormal{FD}_i[i] \}$ or $\textnormal{config}_j[j] \in \{ \bot, \textnormal{FD}_i[i] \}$ holds. 
\end{claim}

\begin{proof}
%
%
Suppose, towards a contradiction, for any system state $c \in R$ that neither $\textnormal{config}_i[i] \in \{ \bot, \textnormal{FD}_i[i] \}$ nor $\textnormal{config}_j[j] \in \{ \bot, \textnormal{FD}_i[i] \}$. Note that $p_i$ and $p_j$ exchange messages eventually, because whenever processor $p_i$ repeatedly sends the same message to processor $p_j$, it holds that $p_j$ receives that message eventually (the fair communication assumption, Section~\ref{s:sys}) and vice versa. Such message exchange implies that the case of $|S  \setminus \{ \bot, \sharp \} |>1$ (Claim~\ref{thm:thereBotSim}) holds eventually, where $\exists S' \subseteq S : S' \in \{ \{ \textnormal{config}_i[i], m_{i,j}.\textnormal{config} \}, \{ \textnormal{config}_j[j], m_{i,j}.\textnormal{config} \}\}$. Thus, we reach a contradiction and therefore eventually the system reaches a state in which 
 $\textnormal{config}_i[i] \in \{ \bot, \textnormal{FD}_i[i] \}$ or $\textnormal{config}_j[j] \in \{ \bot, \textnormal{FD}_i[i] \}$ hold.
\end{proof}


\begin{claim}
\label{thm:onceBot}
Suppose that in $R$'s starting system state, there is a processor $p_i \in P$ that is active in $R$, for which $\textnormal{config}_i[i]  \in \{ \bot, \textnormal{FD}_i[i] \}$. (1) For any system state $c \in R$, it holds that $\textnormal{config}_i[i] \in \{ \bot, \textnormal{FD}_i[i] \}$. Moreover, (2) $R=R'\circ R''$ has a suffix, $R''$, for which $\forall c'' \in R'':\forall p_i, p_j : (\{ m_{i,j}.\textnormal{config}, \textnormal{config}_j[i], \textnormal{config}_j[j] \} \setminus \{ \bot, \textnormal{FD}_i[i] \}) = \emptyset$.
\end{claim}

\begin{proof}
%
We prove each part of the statement separately. 

\noindent {\bf Part (1).~}
We start the proof by noting that $\forall p_i, p_j \in P$, it holds that, throughout $R$, we have that $\textnormal{FD}_i[i]$'s value does not change and that $\textnormal{FD}_i[i] = \textnormal{FD}_j[j]$, because this lemma assumes that $R$ is admissible.
To show that $\textnormal{config}_i[i] \in \{ \bot, \textnormal{FD}_i[i] \}$ holds in any $c \in R$, we argue that any step $a_i \in R$ in which $p_i$ changes $\textnormal{config}_i[i]$'s value includes the execution of line~\ref{ln:nullConfig} or line~\ref{ln:restartConfig} (see the remark at the beginning of this lemma's proof about $a_i \in R$ not including the execution of lines~\ref{ln:adoptAll} to~\ref{ln:automatonStep}), which assign to $\textnormal{config}_i[i]$ the values $\bot$, and respectively, $\textnormal{FD}_i[i]$. 

\noindent {\bf Part (2).~} In this part of the proof, we first consider the values in $m_{i,j}.\textnormal{config}$ and $\textnormal{config}_j[i]$ before considering the one in $\textnormal{config}_j[j]$.

\noindent {\bf Part (2.1).~} 
To show that in $c'' \in R''$ it holds that $\forall p_i, p_j : \{ m_{i,j}.\textnormal{config}, \textnormal{config}_j[i]  \} \setminus \{ \bot, \textnormal{FD}_i[i] \} = \emptyset$, we note that when $p_i$ loads a message $m_{i,j}$ (line~\ref{ln:send}) before sending to processor $p_j$, it uses $\textnormal{config}_i[i]$'s value for the field $\textnormal{config}$. Thus, eventually $m_{i,j}.\textnormal{config}  \in \{ \bot, \textnormal{FD}_i[i] \}$ and therefore $\textnormal{config}_j[i] \in \{ \bot, \textnormal{FD}_i[i] \}$ records correctly in $c'' \in R''$ the most recent $m_{i,j}$'s value that $p_j$ receives from $p_i$ (line~\ref{ln:receive}).

\noindent {\bf Part (2.2).~}
To show that eventually $\textnormal{config}_j[j] \in \{ \bot, \textnormal{FD}_i[i] \}$, we note that once $p_j$ changes the value of $\textnormal{config}_j[j]$, it holds that  $\textnormal{config}_j[j] \in \{ \bot, \textnormal{FD}_i[i] \}$ thereafter (due to the remark in the beginning of this lemma, which implies that only lines~\ref{ln:nullConfig} and~\ref{ln:restartConfig} can change $\textnormal{config}_j[j]$, and by the part (1) of this claim's proof while replacing the index $i$ with $j$). Suppose, towards a contradiction, that $p_j$ does not change that value of $\textnormal{config}_j[j]$ throughout $R$ and yet $\textnormal{config}_j[j] \notin \{ \bot, \textnormal{FD}_i[i] \}$. 
%
%
Note that $(| \{  \textnormal{FD}[j]: p_j \in \textnormal{FD}[i] \} | = 1)$ (see the second clause of the if-statement condition in line~\ref{ln:restartConfig}) holds throughout $R$, because $R$ is admissible. Therefore, whenever $p_i$ takes a step that includes the execution of the do forever loop (line~\ref{ln:doForever} to~\ref{ln:send}), processor $p_i$ sends to $p_j$ the message $m_{i,j}$, such that $m_{i,j}.\textnormal{config}=\textnormal{config}_i[i]$ (line~\ref{ln:send}) and $\textnormal{config}_i[i]=\textnormal{FD}_i[i]$ (see part (2.1) of this proof). Since $p_i$ sends $m_{i,j}$ repeatedly, processor $p_j$ receives eventually $m_{i,j}$ (the fair communication assumption, Section~\ref{s:sys}) and stores in $\textnormal{config}_j[i] = m_{i,j}.\textnormal{config} = \textnormal{config}_i[i] = \textnormal{FD}_i[i] \neq \bot$. Immediately after that step, the system state allows $p_j$ to take a step in which the condition 
%
%
$(|  \{ \textnormal{config}_j[k] : p_j' \in \textnormal{FD}_j[k] \}  \setminus \{ \bot, \sharp \} | > 1)$
(line~\ref{ln:nullConfig}'s if-statement) holds and $p_j$ changes $\textnormal{config}_j[j]$'s value to $\bot$. Thus, this part of the proof end with a contradiction, which implies that the system reaches a state in which $\textnormal{config}_j[j] \in \{ \bot, \textnormal{FD}_i[i] \}$.
\end{proof}

\begin{claim}
\label{thm:2Config}
Suppose that in $R$'s starting system state, it holds that for every two  processors $p_i, p_j \in P$ that are active in $R$, we have that $(\{ \textnormal{config}_i[i], \textnormal{config}_j[i], m_{i,j}.\textnormal{config} \} \setminus \{ \bot, \textnormal{FD}_i[i] \}) = \emptyset$, where $m_{i,j}$ is a message in the channel from $p_i$ to $p_j$. Eventually the system reaches a state in which $\textnormal{config}_i[i]=\textnormal{FD}_i[i]$.  
\end{claim} 

\begin{proof}
%
\sloppy{By this claim assumptions, we have that in $R$'s starting system state, the if-statement condition 
%
%
$(|  \{ \textnormal{config}_i[k] : p_k \in \textnormal{FD}_i[k] \}  \setminus \{ \bot, \sharp \} | > 1)$
(line~\ref{ln:nullConfig}) does not hold.} Moreover, $|\{  \textnormal{FD}
_i[j]: p_j \in \textnormal{FD}_i[i] \}|=1$ (line~\ref{ln:restartConfig}) holds throughout $R$, because $R$ is admissible. Therefore, this claim assumptions with respect to $R$'s starting states actually hold for any system state $c \in R$, because only lines~\ref{ln:nullConfig} and~\ref{ln:restartConfig} can change the value of $\textnormal{config}_i[i] \in \{ \bot, \textnormal{FD}_i[i] \}$, which later $p_i$ uses for sending the message $m_{i,j}$ (line~\ref{ln:send}), and thus also $m_{i,j}.\textnormal{config}  \in \{ \bot, \textnormal{FD}_i[i] \}$ as well as $\textnormal{config}_j[i] \in \{ \bot, \textnormal{FD}_i[i] \}$ records correctly the most recent $m_{i,j}$'s that $p_j$ receives from $p_i$ (line~\ref{ln:receive}).    

To conclude this proof, we note that immediately after any system state $c \in R$, processor $p_i$ has an applicable step $a_i \in R$ that includes the execution of line~\ref{ln:restartConfig} (by similar arguments to the ones used by the first part of this proof). Moreover, $a_i$ does not include the execution of line~\ref{ln:nullConfig}, because by the first part of this proof the condition of the if-statement of line~\ref{ln:nullConfig} does not hold. In the system state that immediately follow $a_i$, the invariant $\textnormal{config}_i[i] = \textnormal{FD}_i[i]$ holds. 
\end{proof}

By this lemma's assumption, there is no configuration $c \in R$ replacement state nor replacement message. Claim~\ref{thm:2Config} implies that the system reaches a state $c_{noConf} \in R$ that has no configuration conflict eventually. Thus, $c_{noConf}$ is safe.   
\end{proof}

\begin{claim}[Eventually there is no type-4 stale information]
\label{thm:noStale4}
Let $R$ be an admissible execution of Algorithm~\ref{alg:disCongif}.
Eventually the system reaches a state $c \in R$ in which the invariant of no type-4 stale information holds thereafter.
\end{claim}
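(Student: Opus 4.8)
The plan is to reduce the claim to showing that the second conjunct of the type-4 predicate eventually fails forever. Recall from Definition~\ref{def:type} that $p_i$ has type-4 stale information exactly when the FD-agreement clause holds \emph{and} $\textnormal{config}_i[i]\cap\textnormal{FD}_i[i].part=\emptyset$. Hence it suffices to prove that eventually $\textnormal{config}_i[i]\cap\textnormal{FD}_i[i].part\neq\emptyset$ holds thereafter at every active $p_i$, independently of the FD-agreement clause. The first step is to observe that whenever $p_i$'s state satisfies the type-4 predicate, the next iteration of the do forever loop evaluates the disjunction of line~\ref{ln:stale} to true and executes $configSet(\bot)$, thereby setting $\textnormal{config}_i[i]\gets\bot$ and clearing $p_i$'s notifications.

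Next I would reuse the brute force analysis developed for type-2 stale information. Once $\textnormal{config}_i[i]=\bot$, Claim~\ref{thm:onceBot} keeps $\textnormal{config}_i[i]\in\{\bot,\textnormal{FD}_i[i]\}$, and the argument of Claim~\ref{thm:2Config} (inside Lemma~\ref{thm:noConflict}) shows that in an admissible $R$ the $\bot$ value propagates until every active $p_i$ executes line~\ref{ln:restartConfig} and assigns $\textnormal{config}_i[i]\gets\textnormal{FD}_i[i]$. Since this assignment is non-$\sharp$ and $p_i$ trusts itself, $p_i\in\textnormal{FD}_i[i].part$; together with $\textnormal{FD}_i[i].part\subseteq\textnormal{FD}_i[i]=\textnormal{config}_i[i]$ this gives $\textnormal{config}_i[i]\cap\textnormal{FD}_i[i].part=\textnormal{FD}_i[i].part\neq\emptyset$, so the type-4 predicate is false at $p_i$.

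It remains to establish persistence, which is where the real work lies. As long as no notification is present, the brute force branch (line~\ref{ln:noMaxNotif}) leaves $\textnormal{config}_i[i]=\textnormal{FD}_i[i]$ untouched, so the intersection stays non-empty. The only mechanism that can re-introduce an empty intersection is the phase-2 assignment $\textnormal{config}_i[i]\gets\textnormal{\notif}_i[i].set$ (line~\ref{ln:automatonStep}; cf.\ Figure~\ref{fig:auto}) installing a configuration disjoint from the live participants. Any such installation is itself detected as type-4 on the following loop and reset away by the brute force above, so the task is to rule out that bad installations recur indefinitely. I would argue this by combining admissibility, which forbids further crashes so that a proposed set, which by the $\configEstab$ interface (line~\ref{ln:configEstab}) is a non-empty set of participants, stays a subset of the \emph{live} participants, with the convergence of the delicate replacement process, which drains the spurious notifications inherited from the arbitrary initial state. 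Once those spurious notifications are gone and only $\configEstab$-originated proposals survive, every installed configuration is a non-empty subset of $\textnormal{FD}_i[i].part$; hence $\textnormal{config}_i[i]\cap\textnormal{FD}_i[i].part\neq\emptyset$ holds thereafter and no type-4 stale information recurs. The main obstacle is precisely this last step: coupling the replacement-convergence argument with admissibility to exclude recurrence.
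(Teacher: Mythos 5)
Your proposal is correct and follows essentially the same route as the paper: admissibility makes the FD-agreement clause of the type-4 predicate hold, so an empty intersection triggers $configSet(\bot)$ at line~\ref{ln:stale}, after which the brute-force reset machinery of Lemma~\ref{thm:noConflict} (Claims~\ref{thm:onceBot} and~\ref{thm:2Config}) re-establishes $\textnormal{config}_i[i]=\textnormal{FD}_i[i]$ and hence a non-empty intersection. The persistence concern you raise at the end --- recurrence via phase-2 installations --- is legitimate, but the paper handles it outside this claim by delegating the draining of spurious notifications to Lemmas~\ref{thm:convDeg} and~\ref{thm:virtuallyNotExplicit} in the overall convergence theorem, so your sketch is consistent with, and somewhat more explicit than, the paper's own terse treatment.
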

\begin{proof}
Without the loss of generality, suppose that there is no system state in $R$ that encodes a configuration conflict. (We can make this assumption without losing generality because Lemma~\ref{thm:noConflict} implies that this claim is true whenever this assumption is false.) Moreover, let $c \in R$ be a system state in which processor $p_i$ has an applicable step $a_i$ that includes the execution of the do forever loop (line~\ref{ln:doForever} to~\ref{ln:send}). 

Since $R$ is admissible, 
$(\{(\textnormal{FD}_i[i],\textnormal{FD}_i[i].part)\}=\{(\textnormal{FD}_i[k],\textnormal{FD}_i[k].part)\}_{p_k \in \textnormal{FD}_i[i].part})$ holds in $c$. Therefore, the case in which $((\textnormal{config}_i[i]\cap\textnormal{FD}_i[i].part) = \emptyset)$ in $c$ implies a call  to the function $configSet(\bot)$ (line~\ref{ln:stale}) in the step that immediately follows. By using Lemma~\ref{thm:noConflict}, we have that this lemma is true.
\end{proof}

\subsubsection*{Phase and degree progressions}
Let $R$ be an execution of Algorithm~\ref{alg:disCongif} that is admissible with respect to the participant sets. Suppose that $p_i$ is a processor that is active in $R$ and that $p_i \in \textnormal{FD}[i].part$. We say that processor $p_i$ is an active participant in $R$.
%
%
Let $p_i,p_j,p_k \in P$ be processors that are active participants in $R$ and $c \in R$ a system state. 
\sloppy{We denote by $\textnormal{NA}(c) = \{ (\textnormal{\notif}_j[k],\textnormal{all}_j[k]) \}_{ p_j, p_k \in \textnormal{FD}_i[i]} \cup \{ (\textnormal{\notif}_j[k], \textnormal{all}_j[k])  : m_{j,k}= \langle \bullet, \textnormal{\notif}_j[k], \textnormal{all}_j[k], (\bullet) \rangle \in channel_{k,j} \}_{ p_j, p_k \in \textnormal{FD}_i[i]} \cup \{ (\textnormal{echo}_{j}[k].\textnormal{\notif}, \textnormal{echo}_{j}[k].\textnormal{all})  : m_{j,k}=  \langle \bullet, (\bullet, \textnormal{echo}_{j}[k].\textnormal{\notif}, \textnormal{echo}_{j}[k].\textnormal{all}) \rangle \in channel_{k,j} \}_{ p_j, p_k \in \textnormal{FD}_i[i]} \setminus \{ (\langle 0, \bot \rangle, \bullet) \}$ the set of all pairs that includes the notification and all fields that appear in $c$ (after excluding the default notification, $\langle 0, \bot \rangle$, while including all the information in the processors' states and communication channels as well as their replications, e.g., the echo field).}
We denote by $N(c) = \{ \textnormal{n} : 
(\textnormal{n}, \textnormal{a}) \in \textnormal{NA}(c) \}$ the set of notifications that appear in $c$.
We denote the \emph{degree set} of notification $\textnormal{n} \in N(c)$ by $D(c,\textnormal{n}) = \{ 2 \cdot \textnormal{n}.phase + |\{1 : \textnormal{a} \}| :
 (\textnormal{n}, \textnormal{a}) \in \textnormal{NA}(c) \}$. For a given system state $c \in R$, notification $\textnormal{n} \in N(c)$ that appears in $c$, we denote by $S(c,\textnormal{n})=\{ \textnormal{n}' : ((\textnormal{n}' \in N(c)) \land (\textnormal{n}.set = \textnormal{n}'.set)) \}$ the set of all notifications $\textnormal{n}' \in N(c)$ that have the same set filed as the one of $\textnormal{n}$ and $S(c)=\{ \textnormal{n}.set : \textnormal{n} \in N(c)\}$ is the set of all notification sets in $c$.

%

\begin{lemma}
\label{thm:convDeg}
Let $R$ be an execution of Algorithm~\ref{alg:disCongif} that is admissible with respect to the participant sets and that does not include an explicit (delicate) replacement. 
%
%
Suppose that in $R$'s starting system state, $c$, there are notifications, i.e., $N(c) \neq \emptyset$. Let $\textnormal{n} \in N(c)$ be a notification for which it holds that $\forall \textnormal{n}' \in N(c) : \textnormal{n}' \leq_{\textnormal{lex}} \textnormal{n}$ in $c$ (recall the definition of $\leq_{\textnormal{lex}}$ in the description of the $maxNtf()$ macro).
%
%
The system reaches eventually a state $c_{\nexists \textnormal{n}} \in R$ in which $S(c_{\nexists \textnormal{n}},\textnormal{n})=\emptyset$.
%
%
\end{lemma}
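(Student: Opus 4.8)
The plan is to argue by contradiction: suppose that $S(c',\textnormal{n}) \neq \emptyset$ in \emph{every} state $c' \in R$, i.e.\ some copy carrying the set $\textnormal{n}.set$ never leaves the system, and derive a contradiction by forcing the (lexically maximal) notification $\textnormal{n}$ through a complete cycle of phases. First I would establish a monotonicity property: since $R$ contains no explicit replacement, no call to $\configEstab(\cdot)$ (line~\ref{ln:configEstab}) ever injects a fresh notification, so by inspection of the code the only lines that can alter a notification are lines~\ref{ln:readyToReplace}--\ref{ln:automatonStep}. Consequently no set outside $S(c)$ is ever created and the lexical maximum of $N(\cdot)$ can change only through the synchronized phase increment the automaton performs; in particular $\textnormal{n}$ remains a lexical maximum of $N(\cdot)$ throughout $R$, and whenever a participant is in phase $1$ the macro $maxNtf()$ returns exactly $\textnormal{n}$. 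Notifications of the form $\langle 0, s\rangle$ with $s\neq\bot$ are type-1 stale and are removed by line~\ref{ln:stale} (Claim~\ref{thm:noStale}), so I may assume every surviving copy carrying $\textnormal{n}.set$ is in phase $1$ or $2$.

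Next I would show that $\textnormal{n}$ propagates to all active participants and that the phase-synchronization invariants of the overview become true. Using fair communication (Section~\ref{s:sys}) and admissibility (every active participant has the same, unchanging $\textnormal{FD}$ and participant set), each active participant $p_j$ eventually stores $\textnormal{n}$ in $\textnormal{\notif}_j[i]$ and holds $\textnormal{n}$ as $\textnormal{\notif}_j[j]$ (adopting it via case $1$ of the automaton when in phase $1$). Once all participants hold $\textnormal{n}$ and report the same participant set, the echo fields round-trip correctly so that $echoNoAll(k)$, $same(k)$ and finally $echo()$ hold at $p_i$; line~\ref{ln:adoptAll} then raises $\textnormal{all}_i[i]=\mathsf{true}$, line~\ref{ln:addSaw} inserts every participant into every $\textnormal{allSeen}$ set, and hence $allSeen()$ holds. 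This is exactly the set of invariants under which the guard $echo()\wedge allSeen()$ of line~\ref{ln:readyToReplace} is satisfied at $p_i$.

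Once that guard holds, the next applicable step of $p_i$ advances the phase through $increment(\cdot)$ and immediately fires the automaton (line~\ref{ln:automatonStep}). A phase-$2$ copy advances to phase $0$ and is reset to $dfltNtf$; a phase-$1$ copy first advances to phase $2$ (installing $\textnormal{n}.set$ as the configuration) and, on the next synchronized round, to phase $0$, where it too is reset. Because the transitions happen in unison --- no participant leaves a phase until all have echoed it and been recorded in the others' $\textnormal{allSeen}$, and the type-3 staleness test (line~\ref{ln:stale}) keeps all degrees within one of each other --- every copy carrying $\textnormal{n}.set$ is reset together. Thus the system reaches a state $c_{\nexists \textnormal{n}}$ with $S(c_{\nexists \textnormal{n}},\textnormal{n})=\emptyset$, contradicting the standing assumption.

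Finally I would discharge the simplifying assumption that all of $R$'s notifications already appear in $c$: the channels are bounded (Section~\ref{s:sys}), so only finitely many copies carrying $\textnormal{n}.set$ can be in transit, and by fair communication each is delivered and folded into some $\textnormal{\notif}_j[\cdot]$, after which it is captured by the argument above; moreover, as long as any such copy is present, $\noReconfig()$ returns $\mathsf{false}$ at every active processor, disabling the effect of $\configEstab(\cdot)$ (line~\ref{ln:configEstab}) and so preventing any new competing notification from being injected before the cycle completes. I expect the main obstacle to be the synchronization step of the second paragraph: the $\textnormal{echo}$ field requires a full round-trip ($p_i$ sends, $p_j$ records and reflects, $p_i$ verifies), and any change of a participant's notification, participant set, or $\textnormal{all}$ flag inside that window resets the match, so one must argue that under admissibility and the monotonicity above these fields quiesce \emph{simultaneously} across all participants, making $echo()$ and $allSeen()$ hold in a common state and the unison advance well defined.
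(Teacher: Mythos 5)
Your proposal is correct and follows essentially the same route as the paper's proof: a proof by contradiction that uses phase monotonicity, propagation of the lexically maximal notification under fair communication and admissibility, the chain of echo/$\textnormal{all}$/$\textnormal{allSeen}$ invariants making the guard of line~\ref{ln:readyToReplace} hold, and the resulting automaton step eliminating the notification. The only cosmetic difference is that for a phase-$1$ copy the paper concludes immediately because the increment would produce a lexically larger notification (contradicting the WLOG maximality over a suffix), whereas you trace the copy through phase $2$ and back to $0$, which just re-applies the same invariant chain once more.
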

\begin{proof}
%
%
%
%
Let $a \in R$ be a step that immediately precedes the system state $c \in R$ and $c' \in R$ be the system state that immediately follows $a$.
Note that when the system reaches a system state $c' \in R$ in which $\exists \textnormal{n}: (S(c,\textnormal{n}) \cap S(c',\textnormal{n})) \neq \emptyset$, the proof is done.
Suppose, in the way of a proof by contradiction, that this lemma is false, i.e., $\nexists c_{\nexists \textnormal{n}} \in R:  (S(c,\textnormal{n}) \cap S(c_{\nexists \textnormal{n}},\textnormal{n})) = \emptyset$, i.e., the notification $\textnormal{n}$ (and additional  notifications that have the same proposed set but different phase number) does not ``disappear'' from the system after any step and ``become'' the default notification $\langle 0, \bot\rangle$. 
%
%
Claims~\ref{thm:general},~\ref{thm:notDec} and~\ref{thm:invar} are needed for the proof of Claim~\ref{thm:contra}, which completes this proof by contradiction.

\begin{claim}
\label{thm:general}
Suppose that $\nexists c_{\nexists \textnormal{n}} \in R:  (S(c,\textnormal{n}) \cap S(c_{\nexists \textnormal{n}},\textnormal{n})) = \emptyset$.
Without the loss of generality, we can assume that (1) no step in $R$ includes a call to $configSet(\bot)$, (2) $\textnormal{n}$'s lexical value in $R$'s starting system state, $c$, is the greater or equal to the lexical value of any notification in any system state $c'' \in  R$, (3) $\textnormal{\notif}_{i}[i] = \textnormal{n}$ in $R$'s starting system state, where $p_i \in P$ is an active participant in $R$, and (4) step $a \in R$ that immediately follows $R$'s starting system state includes the execution of the do forever loop (line~\ref{ln:doForever} to~\ref{ln:send}). 
\end{claim}
\begin{proof}
	We prove each part of the claim separetly. 
	
\noindent {\bf Part (1).~}
This claim's assumption that $\nexists c_{\nexists \textnormal{n}} \in R:  (S(c,\textnormal{n}) \cap S(c_{\nexists \textnormal{n}},\textnormal{n})) = \emptyset$ implies that no step in $R$ includes a call to $configSet(\bot)$, say, due to the if-statement condition of lines~\ref{ln:stale} or~\ref{ln:nullConfig} hold in any system state $c'' \in R$. The reason is that Claim~\ref{thm:thereBot} implies that a call to $configSet(\bot)$ brings the system to a state  $c''' \in R$ in which $S(c''',\textnormal{n})=\emptyset$. Namely, $c''' = c_{\nexists \textnormal{n}}$ and the proof of this  lemma is done.

\noindent {\bf Part (2).~}
We can choose a suffix of $R$, such that $\textnormal{n}$'s  lexical value in $R$'s starting system state, $c$, is the greater or equal to the lexical value of any notification in any system state $c'' \in  R$. We can do that because there is a bounded number of possible lexical values and our assumption that $\nexists c_{\nexists \textnormal{n}} \!\in\! R\!:\!  (S(c,\textnormal{n}) \cap S(c_{\nexists \textnormal{n}},\textnormal{n})) = \emptyset$. 

\noindent {\bf Part (3).~}
We note that when considering the case in which $c$ encodes a message $m_{i,j}$ that has $\textnormal{n}$ in one of its fields, message $m_{i,j}$ reaches eventually from $p_j$ to $p_i$ (the fair communication assumption, Section~\ref{s:sys}). Therefore, we suppose that notification $\textnormal{n}$ is encoded in $p_i$'s state, i.e., $\textnormal{\notif}_{i}[j] = \textnormal{n}$ in $c$, where $p_i, p_j \in P$ are active participants in $R$. We can make this assumption, without the loss of generality, because we can take a suffix of $R$ (that system reaches eventually) in which this assumption holds for its starting configuration and then we take that suffix to be the execution that this lemma considers, i.e., $R$.
Using similar arguments about generality, we also assume that step $a$ includes the execution of the do forever loop (line~\ref{ln:doForever} to~\ref{ln:send}), and thus $\textnormal{\notif}_{i}[i] = \textnormal{n}$ in $c'$. 

\noindent {\bf Part (4).~}
Use the same arguments as for part (3) of this proof.
\end{proof}

\begin{claim}
\label{thm:notDec}
Suppose that step $a$ includes the execution of the do forever loop (line~\ref{ln:doForever} to~\ref{ln:send}) immediately after and before the system states $c$, and respectively, $c'$.  When $\textnormal{\notif}_{i}[i].set = \textnormal{n}.set$ holds in $c$, we have that $\textnormal{\notif}_{i}[i].phase \geq \textnormal{n}.phase$ does not decrease in $c'$. 
\end{claim}
\begin{proof}
Since $\textnormal{\notif}_{i}[i].phase$ changes only in lines~\ref{ln:readyToReplace} and~\ref{ln:automatonStep} (case 1), the assumption that $\nexists c_{\nexists \textnormal{n}} \in R:  (S(c,\textnormal{n}) \cap S(c_{\nexists \textnormal{n}},\textnormal{n})) = \emptyset$ in any $c'' \in R$ implies that $\textnormal{\notif}_{i}[i] = \textnormal{n}$ in any $c'' \in R$ that is after $c'$. The reason is that line~\ref{ln:automatonStep} indeed does not decrease $\textnormal{\notif}_{i}[i].phase$ (part (2) of Claim~\ref{thm:general}) and line~\ref{ln:readyToReplace} only decreases $\textnormal{\notif}_{i}[i].phase$ when assigning $0$ (cf. case $0$ of the function $increment()$). However, the latter cannot occur for any $p_i \in R$ that is an active participant in $R$ and for which $\textnormal{\notif}_{i}[i] = \textnormal{n}$ holds in $c'$ (part (3) of Claim~\ref{thm:general}).~\end{proof}

\begin{claim}
\label{thm:alwaysN}
For every system state $c'' \in R$, it holds that $\textnormal{\notif}_i[i] = \textnormal{n}$.
\end{claim}
\begin{proof}
Since $\textnormal{\notif}_{i}[i] = \textnormal{n}$ holds in $c$ (Claim~\ref{thm:general}), since $\textnormal{\notif}_{i}[i].phase$ is non-decreasing during step $a$ (Claim~\ref{thm:notDec}), since $\textnormal{n}$ is lexically greater or equal than any other notification in any system state of $R$ (Claim~\ref{thm:general}) and by this proof's assumption that $\nexists c_{\nexists \textnormal{n}} \in R:  (S(c,\textnormal{n}) \cap S(c_{\nexists \textnormal{n}},\textnormal{n})) = \emptyset$ holds, it is true that $\textnormal{\notif}_i[i] = \textnormal{n}$ holds in every system state $c'' \in R$.
\end{proof}

\begin{claim}
\label{thm:invar}
%
The following sequence of invariants is true.
\begin{itemize}[topsep=2pt,itemsep=-.5ex,partopsep=.5ex,parsep=1ex,leftmargin=2mm]
\item []{\bf (1)} Suppose that $\textnormal{\notif}_i[i] = \textnormal{n}$ holds in  every system state $c' \in R$. Eventually the system reaches a state $c'' \in R$, such that for any $p_j \in P$ that is an active participant in $R$, it holds that $\textnormal{\notif}_j[i] = \textnormal{n}$ and $\textnormal{FD}_j[i] = \textnormal{FD}_i$. Moreover, 
%
%
$\textnormal{\notif}_j[j]=\textnormal{n}$ and $\textnormal{FD}_j[j] = \textnormal{FD}_i$ in $c''$ eventually. 

\item[]{\bf (2)} Suppose that invariant (1) holds in  every system state $c' \in R$. Eventually the system reaches a state $c'' \in R$, such that for any $p_i \in P$ that is an active participant in $R$, it holds that $\textnormal{echo}_i[j].\textnormal{\notif} = \textnormal{n}$, $\textnormal{echo}_i[j].\textnormal{part} = \textnormal{FD}_i[i].part$ and $\textnormal{\notif}_i[j]=\textnormal{n}$  in $c''$.

\item[]{\bf (3)} Suppose that invariants (1) and (2) hold in every system state $c \in R$. Eventually the system reaches a state $c'' \in R$, such that for any $p_i \in P$ that is an active participant in $R$, it holds that $\textnormal{all}_i[i] = true$  in $c''$.

\item[]{\bf (4)} Suppose that invariants (1), (2) and (3) hold in every system state $c' \in R$. Eventually the system reaches a state $c'' \in R$, such that for any $p_j \in P$ that is an active participant in $R$, it holds that $\textnormal{all}_j[i] = true$  in $c''$. 

\item[]{\bf (5)} Suppose that invariants (1) to (4) hold in every system state $c' \in R$. Eventually the system reaches a state $c'' \in R$, such that for any $p_j \in P$ that is an active participant in $R$, it holds that $\textnormal{echo}_i[j] = (\textnormal{FD}_i[i].part, \textnormal{\notif}_i[i], \textnormal{all}_i[i])$ in $c''$. 

\item[]{\bf (6)} Suppose that invariants (1) to (5) hold in every system state $c' \in R$. Eventually the system reaches a state $c'' \in R$, such that for any $p_j \in P$ that is an active participant in $R$, it holds that $p_i \in allSeen_j$  in $c''$. 
%



\item[]{\bf (7)} Suppose that invariants (1) to (6) hold in every system state $c' \in R$. Eventually the system reaches a state $c'' \in R$, such that for all $p_i \in P$ that is an active participant in $R$, it holds that the if-statement condition of line~\ref{ln:readyToReplace} holds in $c''$.
\end{itemize}
\end{claim}
\begin{proof} We prove items (1) to (7). 
	
\noindent {\bf (1)} Since $p_i$ repeatedly sends message $m_{i,j}$ to every active processor $p_j \in \textnormal{FD}_i[j]$ (line~\ref{ln:send}), where $m_{i,j} = \langle \bullet, \textnormal{\notif}=\textnormal{n}, \bullet \rangle$, message $m_{i,j}$ arrives eventually to $p_j$ (line~\ref{ln:receive} and the fair communication assumption, Section~\ref{s:sys}). This causes $p_j$ to store $\textnormal{n}$ in $\textnormal{\notif}_j[i]$ (at least) as long as $\textnormal{\notif}_i[j] = \textnormal{n}$, i.e., in every system state $c''$ that follows $m_{i,j}$ arrival to $p_j$. For the case of $\textnormal{n}.phase=1$, we have that $\textnormal{\notif}_j[j]=\textnormal{n}$ (Claim~\ref{thm:notDec} and line~\ref{ln:automatonStep} (case 1) as well as the fact that $\textnormal{n}$ has the greatest lexicographic value in every $c' \in R$, cf. part (2) Claim~\ref{thm:general}).
%
%
The same arguments as above imply that $\textnormal{FD}_j[i] = \textnormal{FD}_i$. Moreover, $\textnormal{FD}_j[j] = \textnormal{FD}_i$ is implied by the assumption that $R$ is admissible with respect to the participant sets.

We show that $\textnormal{\notif}_j[j]=\textnormal{n}$ for any $\textnormal{n}.phase \in \{0,1,2\}$. By the assumptions made in the beginning of the proof of this lemma, we do not need to consider the case of $\textnormal{n}.phase=0$. Thus, we start by considering the case of $\textnormal{n}.phase=1$, we have that $\textnormal{\notif}_j[j]=\textnormal{n}$ (Claim~\ref{thm:notDec} and line~\ref{ln:automatonStep} (case 1) as well as the fact that $\textnormal{n}$ has the greatest lexicographic value in every $c' \in R$, cf. part (2) Claim~\ref{thm:general}).
We now turn to consider the case of $\textnormal{n}.phase=2$. I.e.,  while keeping in mind that $\textnormal{\notif}_j[i]=\textnormal{n}$, we need to show that $\textnormal{\notif}_j[j]=\textnormal{n}$.
We do that by showing that whenever $\textnormal{\notif}_j[j] \neq \textnormal{\notif}_j[i]$, the if-statement condition of line~\ref{ln:stale} holds for any value of $\textnormal{\notif}_j[j].phase\in \{0,1,2\}$ and we focus on the case of $((\textnormal{\notif}_j[j].set \neq \textnormal{\notif}_j[i].set) \land (\textnormal{\notif}_j[j].phase = \textnormal{\notif}_j[i].phase =2))$. Note that by showing that the if-statement condition of line~\ref{ln:stale} does hold in $c''$, we have reached  a contradiction (part (1) of Claim~\ref{thm:general}'s statement). 
%
%
For the case of $\textnormal{\notif}_j[j].phase\in \{0,1\}$, the if-statement condition of line~\ref{ln:stale} holds in $c''$ because $\textnormal{\notif}_j[j] \neq \textnormal{\notif}_j[i]$ implies $\textnormal{all}_j[j]=false$ (line~\ref{ln:adoptAll}) and the condition $|degree_j(j)-degree_j(i)|>1$ (line~\ref{ln:stale} does hold due to the fact that  $degree_j(j)\leq 2$ and $degree_j(i)\geq 4$). Moreover, for the case of $((\textnormal{\notif}_j[j].set \neq \textnormal{\notif}_j[i].set) \land (\textnormal{\notif}_j[j].phase = \textnormal{\notif}_j[i].phase =2))$, the condition $(|\notifSet_i|>1)$  (line~\ref{ln:stale}) does not hold, because $\{ \textnormal{\notif}_j[j].set, \textnormal{\notif}_j[i].set \} \subseteq  \notifSet$ and $|\{ \textnormal{\notif}_j[i].set, \textnormal{\notif}_j[j].set \} |=2$, where $\notifSet=\{ \textnormal{\notif}_j[k].set : \exists p_k \in \textnormal{FD}_j[j]:\textnormal{\notif}_j[i] = \langle 2, \bullet \rangle \}_{p_k \in \textnormal{FD}_j[j]}$.

\noindent {\bf (2)} By similar arguments to the ones that show the arrival of $m_{i,j}$ in part (1) of this proof, processor $p_j$  repeatedly sends the message $m_{j,i} = \langle \bullet, \textnormal{\notif}=\textnormal{n}, \bullet, \textnormal{echo}=(\bullet, \textnormal{n}, \bullet) \rangle$ to $p_i$, which indefinitely stores $\textnormal{n}$ in $\textnormal{\notif}_i[j]$ and $\textnormal{echo}_i[j]$ (line~\ref{ln:receive}). The same arguments as above imply that $\textnormal{echo}_i[j].\textnormal{part} = \textnormal{FD}_i[i].part$.

\noindent {\bf (3)} The system reaches a state in which the invariants of parts (1) and (2) hold for every $p_k \in \textnormal{FD}_i[i]$. In that system state, it holds that $\textnormal{\notif}_i[i]=\textnormal{n}$ (this claim assumption), as well as the fact that  $\textnormal{echo}_i[j].\textnormal{\notif} = \textnormal{n}$ and $\textnormal{\notif}_i[j]=\textnormal{n}$ (part (2) of this proof for showing that $\textnormal{\notif}_j[j]=\textnormal{n}$ eventually and then using part (1) for showing that 
$\textnormal{FD}_i[j] = \textnormal{FD}_j[j]$ and $\textnormal{\notif}_i[j] = \textnormal{n}$). In other words, $(\forall {p_k \in \textnormal{FD}[i].part}:(echoNoAll(k) \land same(k))$
%
%
(cf. line~\ref{ln:addSaw}). 
Let us look at the first step after that state in which processor $p_i$ executes the do forever loop (lines~\ref{ln:doForever} to~\ref{ln:send}). Immediately after that step, the system reaches a state in which $\textnormal{all}_i[i]=true$ holds (line~\ref{ln:adoptAll}).

\noindent {\bf (4)} By similar arguments that appear in parts (1) and (2) of this proof, processor $p_i$ sends repeatedly the message $m_{i,j} = \langle \bullet, \textnormal{\notif}=\textnormal{n}, \textnormal{all}=true, \bullet) \rangle$ to processor $p_j$. Once message $m_{i,j}$ arrives eventually to $p_j$ (line~\ref{ln:receive} and the fair communication assumption, Section~\ref{s:sys}), processor $p_j$ stores $true$ in $\textnormal{all}_j[i]$ (as well as the notification $\textnormal{n}$ in $\textnormal{\notif}_j[i]$). This holds for every system state $c''$ that follows $m_{i,j}$'s arrival to $p_j$. 

\noindent {\bf (5)} By similar arguments that appear in parts (1) and (2) of this proof, processor $p_j$ sends repeatedly the message $m_{i,j} = \langle \bullet, \textnormal{\notif}=\textnormal{n}, \textnormal{all}=true, \textnormal{echo}=(\textnormal{FD}_i[i].part,\textnormal{n},true) \rangle$ to processor $p_i$. Once message $m_{j,i}$ arrives eventually to $p_i$ (line~\ref{ln:receive} and the fair communication assumption, Section~\ref{s:sys}), processor $p_i$ stores $(\textnormal{n},true)$ in $\textnormal{echo}_i[j]$. This holds for every system state $c''$ that follows $m_{j,i}$'s arrival to $p_j$. 

\noindent {\bf (6)} We show that, once the invariants of parts (1) to (5) of this proof hold, the for-each condition of line~\ref{ln:addSaw} holds as well. Moreover, we show that assuming, towards a contradiction, that $p_i \notin \textnormal{allSeen}_j$ implies that there is a step in $R$ that includes a call to  $configSet(\bot)$ (line~\ref{ln:stale}), which is false according to part (1) of Claim~\ref{thm:general}'s statement. 
By showing both the necessity and sufficiency conditions, we can conclude that the assertion of this part in the claim even thought in Algorithm~\ref{alg:disCongif} processors $p_i$ and $p_j$ do not exchange  information about $\textnormal{allSeen}_j$ in order to validate that the set $\textnormal{allSeen}_j$ includes $p_i$. 

The for-each condition of line~\ref{ln:addSaw} condition requires that for any $p_j \in \textnormal{FD}_i[i].part$ to have $(echoNoAll(j) \land same(j))$, 
where the index $k$ is substituted here with $j$. The following is true: $\textnormal{\notif}_i[j] = \textnormal{\notif}_i[i] \land \textnormal{FD}_i[j].part = \textnormal{FD}_i[i].part \land \textnormal{echo}_i[j].part = \textnormal{FD}_i[i].part$  (part (1) of this proof and the assumption that $R$ is admissible with respect to participant sets), $\textnormal{echo}_i[j].\textnormal{\notif} = \textnormal{\notif}_i[i]$ (part (5) of this proof), and $\textnormal{all}_i[j] = true$ (part (1) of this proof and then apply (4) for the case of $\textnormal{\notif}_j[i]=\textnormal{n}$).

We now turn to show that $p_j \notin \textnormal{allSeen}_i$ implies that there is a step in $R$ that includes a call to  $configSet(\bot)$ (line~\ref{ln:stale}).
By the assumptions made in the beginning of the proof of this lemma, we do not need to consider the case of $\textnormal{n}.phase=0$. Note that for $\textnormal{n}.phase=x \in \{ 1, 2 \}$ and $p_i \notin allSeen_j$, when the following condition holds $(\exists p_i \in \textnormal{FD}_j[j] : ((\textnormal{\notif}_j[j].phase =x) \land (\textnormal{\notif}[k].phase =(x+1)(\bmod ~3)) \land (p_i \notin allSeen_j)))$, the condition $(p_i \notin allSeen_j)$ implies that the if-statement condition of line~\ref{ln:stale} holds in $c''$, and this contradicts part (1) of Claim~\ref{thm:general}'s statement. Thus, $p_i \in allSeen_j$ in $c''$. 

\noindent {\bf (7)} Recall that part (5) of this proof says that $\textnormal{echo}_i[j] = (\bullet, \textnormal{n}, true)$. By taking $\textnormal{\notif}_j[j]=\textnormal{n}$  from part (1) and applying it to the results of parts (1) to (6), we get that $p_j \in allSeen_i$ holds eventually, because this is true for every pair of processors $p_i,p_j \in P$ that are active participants in $R$. Let us consider a system state in which invariants (1) to (6) hold for any such pair of active participants $p_i$ and $p_j$.
This implies that $allSeen_i()$ holds (by the fact that part (4) eventually implies that $\forall p_k \in \textnormal{FD}_i[i]: p_k \in \textnormal{allSeen}_i$). Therefore, the if-statement condition of line~\ref{ln:readyToReplace} holds (by this lemma's assumption that $R$ is admissible with respect to the participant sets and part (4) of this proof that shows $\forall p_k \in \textnormal{FD}_j[i]: \textnormal{all}_j[k]=true$).
\end{proof}

\medskip

Claim~\ref{thm:contra} shows a contradiction to the assumption made in the beginning of the proof of Lemma~\ref{thm:convDeg}. 
This contradiction completes the proof of this lemma.

\begin{claim}
\label{thm:contra}
%
$\exists ~c_{\exists \textnormal{n}} \in R$ in which $S(c_{\nexists \textnormal{n}},\textnormal{n})=\emptyset$.
\end{claim}
\begin{proof}
Part (7) of Claim~\ref{thm:invar} says that the if-statement condition of line~\ref{ln:readyToReplace} holds in $c'' \in R$ eventually. Let $a_i$ a step that immediately follows $c''$ and in which $p_i$ executes the do forever loop (line~\ref{ln:doForever} to~\ref{ln:send}).
Note that the case of $\textnormal{n}.phase\in \{ 0\}$ contradicts this lemma's assumption that $c_{\nexists \textnormal{n}} \in R$ in which $S(c_{\nexists \textnormal{n}},\textnormal{n})=\emptyset$.
Suppose that $\textnormal{n}.phase\in \{ 1 \}$ in $c''$. The step $a_i$ includes an execution of line~\ref{ln:readyToReplace} that increases the lexical value of $\textnormal{n}$ (contradicting part (2) of Claim~\ref{thm:general}).
Suppose that $\textnormal{n}.phase\in \{ 2 \}$ in $c''$. The step $a_i$ includes an execution of line~\ref{ln:readyToReplace} that changes $\textnormal{n}$'s value to $\langle 0, \bot \rangle$. Claim~\ref{thm:invar} implies that this holds for any $p_i \in P$ that is an active participant in $R$. This is a contradiction with the assumption made in the beginning of the proof of this lemma, i.e.,  
$\exists ~c_{\nexists \textnormal{n}} \in R$ in which $S(c_{\nexists \textnormal{n}},\textnormal{n})=\emptyset$.
\end{proof}

\medskip

\noindent This completes the proof of Lemma~\ref{thm:convDeg}.\hfill\end{proof}


\subsubsection*{Algorithm~\ref{alg:disCongif} is self-stabilizing}
Let $a_i \in R$ be a step in which processor $p_i$ calls the function $\configEstab(\textnormal{set})$ (line~\ref{ln:configEstab}), and in which the if-statement condition $(\noReconfig() \land (set \notin \{ \textnormal{config}[i], \emptyset \} ))$ does hold in the system state that immediately precedes $a_i$. We say that $a_i$ is an \emph{effective  (configuration establishment)} step in $R$. Similarly, we consider $a_i \in R$ to be a step in which processor $p_i$ calls the function $participate()$ (line~\ref{ln:participate}), and in which the if-statement condition $\noReconfig()$ does hold in the system state that immediately precedes $a_i$.
Let $R = R' \circ R_{\textnormal{VNER}} \circ R'''$ be an execution that does include explicit (delicate) replacements, where $R'$ and $R'''$ are a prefix, and respectively, a suffix of $R$. Let us consider $R_{\textnormal{VNER}}$, which is a part of execution $R$. We say that $R_{\textnormal{VNER}}$ \emph{virtually does not include explicit (delicate) replacements} (VNER) when for any step $a \in R_{\textnormal{VNER}}$ that includes a call the function $\configEstab(\textnormal{set})$ (line~\ref{ln:configEstab}) or $participate()$ (line~\ref{ln:participate}) is ineffective. Given a system state $c \in R$, we say that $c$ includes no notification if none of its active processors stores a notification and there are no notifications in transit between any two active processors.

\begin{lemma}[Eventually there is an VNER part]
\label{thm:virtuallyNotExplicit}
Let $R$ be an execution of Algorithm~\ref{alg:disCongif} (which may include explicit delicate replacements). Eventually the system reaches a state (1) $c \in R$ after which an VNER part, $R_{\textnormal{VNER}}$, starts. Moreover, after that, the system reaches eventually a state (2) $c_{\textnormal{goodNtf}} \in R_{\textnormal{VNER}} :\textnormal{NA}(c_{\textnormal{goodNtf}})=\emptyset$ that has either (2.1) no notification or (2.2) at most one notification in the system and this notification becomes the system quorum eventually (after reaching a state in which invariants (1) to (7) of Claim~\ref{thm:invar} hold in $c_{\textnormal{goodNtf}}$). 
\end{lemma}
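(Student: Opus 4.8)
The plan is to reduce the general case---where $R$ may contain effective calls to $\configEstab$ and $participate$---to the notification dynamics already controlled by Lemma~\ref{thm:convDeg}, exploiting the fact that an \emph{ineffective} call leaves the $\textnormal{\notif}$ fields untouched and is therefore invisible to that lemma's argument (this is exactly why the part is called \emph{virtually} free of explicit replacements). First I would dispose of the degenerate situation: if no notification is ever present in $R$---no notification left over from the arbitrary initial state and no effective $\configEstab$---then every state satisfies $\textnormal{NA}=\emptyset$, the whole of $R$ is vacuously VNER, and case (2.1) holds with $c_{\textnormal{goodNtf}}$ equal to the starting state. Hence I may assume that some notification appears.

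For part (1), the key observation is that an explicit call is effective only when $\noReconfig()$ reports the absence of a reconfiguration at the caller, whereas the presence of any received non-default notification forces $\noReconfig()$ to report an ongoing reconfiguration (the $\textnormal{\notif}\neq dfltNtf$ clause of line~\ref{ln:\noReconfig}). I would therefore follow the lexicographically maximal notification $\textnormal{n}$ and use the repeated-send/fair-communication propagation argument---the same mechanism used for invariant (1) of Claim~\ref{thm:invar}---to show that $\textnormal{n}$ reaches every active participant. Let $c$ be the first state at which this propagation is complete; from $c$ onward, and for as long as some notification remains, every active participant has $\noReconfig()$ evaluate to false, so each call to $\configEstab$ or $participate$ fails its guard. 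Thus the part $R_{\textnormal{VNER}}$ starting at $c$ is VNER.

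For part (2) I would work inside $R_{\textnormal{VNER}}$, where the hypotheses of Lemma~\ref{thm:convDeg} are met (ineffective calls do not alter $\textnormal{\notif}$), so the maximal notification disappears in the sense $S(c_{\nexists \textnormal{n}},\textnormal{n})=\emptyset$. The type-3 staleness tests of line~\ref{ln:stale}---the degree-gap bound together with the $\notifSet$ test that forbids two distinct phase-$2$ sets---keep all participants phase-synchronized and force the surviving notifications to collapse to a single set, which in phase~$1$ is adopted by every participant through $maxNtf()$; this is the ``at most one notification'' of case (2.2). Driving that notification to completion is precisely what invariants (1)--(7) of Claim~\ref{thm:invar} achieve: once the line~\ref{ln:readyToReplace} condition holds everywhere with $\textnormal{n}.phase=2$, line~\ref{ln:automatonStep} installs $\textnormal{n}.set$ into $\textnormal{config}$ (so $\textnormal{n}$ ``becomes the system quorum'') and resets the notification to $dfltNtf$. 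A final application of Lemma~\ref{thm:convDeg} to any residual lower-value notification removes it as well, and the first subsequent state with $\textnormal{NA}=\emptyset$ is the required $c_{\textnormal{goodNtf}}$.

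The hard part will be part (1): making rigorous that no effective call sneaks in during the processing window. One must rule out a transient instant at which some participant's received notifications momentarily revert to the default---for example while $\textnormal{n}$ is still mid-propagation, or immediately after a phase completes---since such an instant would re-enable an effective call and destroy the VNER property. The remedy is to anchor the start of $R_{\textnormal{VNER}}$ only after full propagation, so that $\noReconfig()$ is uniformly false, and then to lean on the type-3 synchronization checks to guarantee that the surviving notification (or its immediate successor phase) is continuously present at every active participant until it completes at $c_{\textnormal{goodNtf}}$. Interleaving this continuity argument with the spontaneous notifications that the arbitrary initial state may inject is the delicate bookkeeping the proof ultimately demands.
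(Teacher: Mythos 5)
Your plan matches the paper's proof in all essentials: the case with no effective calls is dispatched by iterating Lemma~\ref{thm:convDeg} until no notification remains, and the case with calls is handled by propagating the lexically maximal notification so that $\noReconfig()$ becomes uniformly false at every active participant (disabling further effective calls and marking the start of $R_{\textnormal{VNER}}$), after which invariants (1)--(7) of Claim~\ref{thm:invar} drive the single surviving notification to install itself as the configuration. The paper packages this same argument as a case analysis on the phase of the maximal notification (and on whether it ever leaves the system) together with an induction over the bounded set of notifications present in the initial state --- exactly the ``delicate bookkeeping'' you flag at the end --- and the finiteness of possible lexical values (Claim~\ref{thm:general}, part (2)) is the ingredient that closes the propagation-window concern you raise.
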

\begin{proof}
Note that the proof of part (1) is done whenever $R$ has a suffix $R'$ in which no processor takes a step that includes a call to the function $\configEstab(\textnormal{set})$ (line~\ref{ln:configEstab}). Let us suppose that suffix $R'$ exists and show that part (2.1) is true. The reason is that $R'$ satisfies the conditions of Lemma~\ref{thm:convDeg}, which implies that eventually $R'$ reaches $c_{\textnormal{goodNtf}}$. This is because Lemma~\ref{thm:convDeg} says that the lexical largest notification is removed eventually from the system. Once that happen, either the system reaches $c_{\textnormal{goodNtf}}$, or we can look at the suffix $R''$ that starts after that removal (of that lexical largest notification) and then we apply Lemma~\ref{thm:convDeg} again. Since $R''$ does not include steps with a call to the function $\configEstab(\textnormal{set})$, no new notifications are ``added'' to the system, and we can continue to apply Lemma~\ref{thm:convDeg} to the suffix of $R''$ until the system includes no notifications, i.e., it reaches $c_{\textnormal{goodNtf}}$.

For the complementary case, we show that $R$ includes an $R_{\textnormal{VNER}}$ part, i.e., part (1) of this proof. We consider the different cases of the phase of the notification with the maximal lexical value. We show that notification must reach the phase value of $2$ and then leaves the system eventually. For this case, we show that parts (1) and (2.1) of this lemma hold.    

%
Let us consider the case in which $R$ includes an unbounded number of steps in which active processors call the function $\configEstab(\textnormal{set})$ (line~\ref{ln:configEstab}).
%
%
(Note that this is the  complementary case to the first part of this proof because here $R'$ does not exists.) Let $S = \{ set :  a_i \in R \textnormal{ includes a step that calls }  \configEstab(\textnormal{set})\}$.
(Note that the case of $S = \emptyset$ is just another way to say that there are no effective steps in $R$ and thus $R'$ exists.)
Suppose that $S \neq \emptyset$ and let $\textnormal{n} = \langle \bullet, set \rangle$ be a notification with the largest lexical value in $R$, where $\textnormal{set} \in S$. (Note that the proof needs not consider the case of $\textnormal{n} = \langle 0, set \rangle$, see line~\ref{ln:configEstab}.)

In order to complete the proof of parts (1) and (2) for the complementary case, we show that, for any choice of phase, it cannot be that $\textnormal{n}=\langle phase, set \rangle$ and execution $R$ does not have an VNER part, $R_{\textnormal{VNER}}$. We do that by showing that the system reaches eventually a state $c$ in which $noReco()$ (line~\ref{ln:noReco}) returns false for any active processor, $p_i \in P$. Moreover, $c$ marks the beginning of $R_{\textnormal{VNER}}$ for which we show in part (2) of the proof that $c_{\textnormal{goodNtf}} \in R_{\textnormal{VNER}}$ follows.

\noindent \textbf{Suppose that $\textnormal{n} = \langle 1, set \rangle$.~~} This part of the proof considers two cases. One in which $\textnormal{n}$ is a proposal that never leaves the system and the another case in which it does leave eventually. 

\noindent $\bullet$ \textit{Suppose that $\textnormal{n}$ never leaves the system.}
Suppose, towards a contradiction, that (i) $\textnormal{n}$ never leaves the system and yet (ii) the system never reaches $R_{\textnormal{VNER}}$. By the same arguments that appear in the proof of part (1) in Claim~\ref{thm:invar} we get that invariant   
$\textnormal{\notif}_i[j] = \textnormal{n}$ holds, where $p_i, p_j \in P$ are any two active processors. Thus, $noReco()$ (line~\ref{ln:noReco}) returns false for any active processor, $p_i \in P$, and the system reaches $R_{\textnormal{VNER}}$'s first configuration, $c$. Thus, a contradiction because the assumption that (i) and (ii) can hold together is false.

\noindent$\bullet$ \textit{Suppose that $\textnormal{n}$ does leave the system.}
We note that the case in which $\textnormal{n}$ leaves the system eventually contradicts our assumption that $\textnormal{n}$ has the highest lexical values in $S$. The reason is that only active processors can make a notification leave the system and they do so by changing their values in lines~\ref{ln:readyToReplace} to~\ref{ln:automatonStep} (Claim~\ref{thm:notDec}). But then, for the case of $\textnormal{n} = \langle 1, set \rangle$, the lexical value of $\textnormal{n}$ increases when Algorithm~\ref{alg:disCongif} 
takes such a step.

We note that there is no need to show part (2) of this lemma for the above two cases, because none of them is possible.

\noindent \textbf{Suppose that $\textnormal{n} = \langle 2, set \rangle$.~~}
Similar to the previous case, we consider both the case in which $\textnormal{n}$ is a proposal that never leaves the system and the other case in which it does leave eventually. 

\noindent$\bullet$ \textit{Suppose that $\textnormal{n}$ never leaves the system.}
As in the proof of the first case of $\textnormal{n} = \langle 1, set \rangle$  and in which $\textnormal{n}$ never leaves the system, we can use the same arguments as in the proof of part (1) of Claim~\ref{thm:invar} for showing that invariant   
$\textnormal{\notif}_i[j] = \textnormal{n}$ holds, where $p_i, p_j \in P$ are any two active processors. Thus, $noReco()$ (line~\ref{ln:noReco}) returns false for any active processor, $p_i \in P$, and the system reaches $R_{\textnormal{VNER}}$'s first configuration, $c$. 
Note that once the system enters the VNER part of the execution, $R_{\textnormal{VNER}}$, we can use similar arguments to the ones used in the proof of Lemma~\ref{thm:convDeg} to show that the system removes $\textnormal{n}$. Thus, a contradiction with our assumption (of this case) that $\textnormal{n}$ never leaves the system. Moreover, there is no need to prove part (2) of this lemma (for this impossible case). 

\noindent$\bullet$ \textit{Suppose that $\textnormal{n}$ does leave the system.}
We show that the only possible case is that $\textnormal{n} = \langle 2, set \rangle$ does leave the system. Let us  consider the set $S'=\textnormal{NA}(c_{\textnormal{start}})$ of notifications that were present in $R$'s starting state, $c_{\textnormal{start}}$. We note that $|S'| \in {\cal O}(cap \cdot n^2)$, where $cap$ is the bound on the link capacity (Section~\ref{sec:settings}). Thus, even though the number of notifications in $S$ is unbounded, the number of the notifications in $S'$ is bounded. Since this case assumes that $\textnormal{n}$ eventually leaves the system, and all other cases above are impossible, it must be the case that $R$ has a suffix $R''$ that starts in a system state $c'' :\textnormal{NA}(c'') \cap S' \neq \emptyset$ the encodes no notification that appeared in $R$'s starting state. This can be shown by induction on $|S'|$. Moreover, the proof can consider the case in which a given notification $\textnormal{n}=\langle \bullet, \textnormal{set} \rangle: \textnormal{set} \in S'$ leaves the system and then returns via a call to $estab()$ (line~\ref{ln:configEstab}). This is done by letting the proof to decorate the notification sets in $R$'s starting system state, say, using the color `red'. Moreover, each set in a notification that is the result of a call to $estab(\textnormal{set})$ (line~\ref{ln:configEstab}) also has  decoration, but this time with another color, say `blue'.

Let $\textnormal{n}',\textnormal{n}'' \in S \setminus S'$ be two notifications (with decorated sets) that become present in $R$ but they are not present in $R$'s starting state. The only way in which  $\textnormal{n}'=\langle 1, \bullet \rangle$ can become present in the system is via a step $a_1 \in R$ that calls $\configEstab(\textnormal{set})$ (line~\ref{ln:configEstab}). Moreover, the only way that $\textnormal{n}''=\langle 2, \bullet \rangle$ can become present in $R$ is via step $a_2 \in R$ that changes $\textnormal{n}'$ to $\textnormal{n}'' = \langle 2, set \rangle$ (lines~\ref{ln:readyToReplace} to~\ref{ln:automatonStep}), such that $\exists \textnormal{n}' \in S \setminus S' : \textnormal{n}'.set=\textnormal{n}''.set$. 
Note that after step $a_1$, other steps $a_0 \in R$ occur in which $\textnormal{n}''$ disappears from the system (lines~\ref{ln:readyToReplace} to~\ref{ln:automatonStep}). This sequence of steps between $a_2$ and $a_0$ and the system state that immediately precedes them are depicted by invariants (1) to (7) of Claim~\ref{thm:invar} since these invariants hold in these states. Thus, $c \in R$ holds immediately before $a_2$ and $R$ must include an $R_{\textnormal{VNER}}$ part, because invariant (1) holds eventually in $R$. Moreover, invariants (1) to (7) of Claim~\ref{thm:invar} imply that $\textnormal{n}''$ is the only notification in the system, which eventually becomes the quorum configuration after all steps $a_0$ are taken. Thus, part (2) of this lemma is shown.  
\end{proof}

Theorem~\ref{thm:staleFreeExecution} demonstrates the eventual absence of stale information, which implies that Algorithm~\ref{alg:disCongif} convergences eventually, i.e., it is practically-stabilizing. 

\begin{theorem}[Convergence] 
\label{thm:staleFreeExecution}
Let $R$ be an admissible execution of Algorithm~\ref{alg:disCongif}.
Eventually the system reaches a state $c \in R$ in which the invariants of no type-1, type-2, type-3 and type-4 stale information hold thereafter.  
\end{theorem}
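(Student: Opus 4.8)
The plan is to prove the theorem by composing the four per-type results already established, and then taking the latest of their four convergence points as the single state $c$ after which all invariants hold. The type-1 invariant is immediate: Claim~\ref{thm:noStale} already yields a state after which no type-1 stale information is ever present. The remaining three types are entangled through the presence of notifications, so the core of the argument is to first drive $R$ into a regime where the hypotheses of Lemma~\ref{thm:noConflict} (no explicit and no spontaneous delicate replacements) and Claim~\ref{thm:noStale4} are actually met, since a general admissible $R$ may contain many calls to $\configEstab$.

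To reach that regime I would invoke Lemma~\ref{thm:virtuallyNotExplicit} on $R$, obtaining a state after which a VNER part $R_{\textnormal{VNER}}$ begins, and then a state $c_{\textnormal{goodNtf}}$ that either carries no notification (case 2.1) or exactly one well-formed notification that is driven to completion and installed as the quorum (case 2.2). In case 2.2, once the single notification reaches phase $2$ and is installed (line~\ref{ln:automatonStep}), the coordinated transition of line~\ref{ln:readyToReplace} returns every active participant's phase to $0$, so the system again carries no notification. Hence in both cases the execution reaches a suffix $R_\emptyset$ of $R_{\textnormal{VNER}}$ throughout which no active processor stores a notification and none is in transit; because $R_{\textnormal{VNER}}$ contains no effective $\configEstab$ step (any such call is ineffective, as $\noReconfig$ returns false), no new notification is generated in $R_\emptyset$, so $R_\emptyset$ has neither explicit nor spontaneous replacements.

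With $R_\emptyset$ in hand, types 2 and 4 follow directly: $R_\emptyset$ satisfies the hypotheses of Lemma~\ref{thm:noConflict}, giving a state after which no type-2 stale information (configuration conflict, empty or $\bot$ configuration) is present, and Claim~\ref{thm:noStale4} gives a state after which no type-4 stale information (empty intersection of the configuration with the participant set) is present. Type-3 is handled by observing that while there is no notification every entry satisfies $\textnormal{\notif}[k]=dfltNtf$, so each $degree(k)=0$, the set $\notifSet$ is empty, and no two participants sit in adjacent phases; thus the type-3 test of line~\ref{ln:stale} cannot fire. For the closure (``thereafter'') I would argue that any further legitimate reconfiguration started after this clean state is governed by the coordinated phase-transition invariants of Claim~\ref{thm:invar} (no participant advances a phase before all others have completed the current one), which keep the degree gap bounded by one and drive the system to a single notification (Lemma~\ref{thm:convDeg}); such reconfigurations therefore never reintroduce stale information of any type, and all four invariants persist. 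Taking $c$ to be the latest of the four per-type convergence points then completes the proof.

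The main obstacle I expect is precisely this closure claim in the presence of unboundedly many legitimate $\configEstab$ requests: Lemmas~\ref{thm:noConflict} and~\ref{thm:noStale4} are stated only for executions without explicit or spontaneous replacements, whereas a general admissible $R$ may keep reconfiguring forever. Bridging this gap requires showing that a legitimate reconfiguration launched from a stale-information-free state preserves all four invariants throughout its three phases — in particular that the unison progression certified by Claim~\ref{thm:invar} never momentarily produces a degree gap exceeding one (type-3), never leaves two installed configurations differing (type-2, which holds because the phase-$2$ installation is reached in unison), and always installs a set containing an active participant (type-4). This is where the admissibility of $R$ (stable, accurate failure detectors) is essential, and where the bulk of the careful case analysis over $\textnormal{\notif}[i].phase \in \{0,1,2\}$ resides.
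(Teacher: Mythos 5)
Your proposal is correct and follows essentially the same route as the paper: the paper's proof is precisely the composition of Claim~\ref{thm:noStale} (type-1), Lemma~\ref{thm:noConflict} (type-2), Lemmas~\ref{thm:convDeg} and~\ref{thm:virtuallyNotExplicit} (type-3), and Claim~\ref{thm:noStale4} (type-4). Your extra step of first passing to the notification-free VNER suffix before invoking Lemma~\ref{thm:noConflict} and Claim~\ref{thm:noStale4} makes explicit a hypothesis-matching detail that the paper's own four-line proof leaves implicit, but it is the same argument.
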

\begin{proof}
Claim~\ref{thm:noStale} shows that there is no type-1 stale information  eventually. 
Lemma~\ref{thm:noConflict} shows that there is no type-2 stale information  eventually. 
Lemmas~\ref{thm:convDeg} and~\ref{thm:virtuallyNotExplicit} say that the system either reaches a state in which there are no notifications or there is at most one notification (for which invariants (1) to (7) of Claim~\ref{thm:invar} hold) that later  becomes the system configuration. Note that both cases imply that there is no type-3 stale information  eventually. 
Claims~\ref{thm:noStale4} shows that there is no type-4 stale information  eventually. 
\end{proof}

\begin{theorem}[Closure]
%
%
\label{thm:closureThm}
%
%
Let $R$ be an execution of Algorithm~\ref{alg:disCongif}.
Suppose that execution $R$ starts from a system state that includes no stale information. (1) for any system state $c \in R$, it holds that 
$c$ includes no stale information. Suppose that the step that immediately follows $c$ includes a call to $\configEstab()$. (2) The only way that $\textnormal{set}$ becomes a notification is via a call to $\configEstab(\textnormal{set})$ (line~\ref{ln:configEstab}) and the only way that a processor becomes a participant is via a call to $participate()$ (line~\ref{ln:participate}). (3) The presence of notifications in $R$ implies that one of them replaces the system configuration eventually. 
%
\end{theorem}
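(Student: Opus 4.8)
The plan is to prove the three parts in order, treating Part~(1) as a closure (invariance) argument by induction on the steps of $R$, Part~(2) as a syntactic inspection of the lines that write to $\textnormal{\notif}[]$ and $\textnormal{config}[]$, and Part~(3) as a corollary of the liveness machinery already established in Lemmas~\ref{thm:convDeg} and~\ref{thm:virtuallyNotExplicit}.

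For Part~(1), I would induct on the prefix length of $R$. The base case holds by hypothesis, since the starting state has no stale information of any of the four types of Definition~\ref{def:type}. For the inductive step, assuming $c_x$ has no stale information, I would examine each possible step $a_x$: a receive step (line~\ref{ln:receive}), a booting interrupt (line~\ref{ln:join}), or a do-forever step (lines~\ref{ln:doForever} to~\ref{ln:send}). For the receive step, the sender broadcast its fields from a no-stale state and messages in transit already belong to $c_x$ (hence are consistent with it), so storing them cannot create a conflict. For the booting interrupt, it only sets a joining processor's $\textnormal{config}$ to $\sharp$, a value that the clean-up (line~\ref{ln:clean}) and participation logic already tolerate without creating stale information. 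The crux is the do-forever step: here I would verify that the stale test of line~\ref{ln:stale} does not fire (so no spurious $configSet(\bot)$ reset is started), using the hypothesis that rules out type-1 through type-4 conditions, and that the phase-transition code preserves synchrony. Concretely, a processor advances its phase in line~\ref{ln:readyToReplace} only when $echo() \land allSeen()$ holds, which certifies that every active participant has completed the current phase; this keeps the per-processor degree gap within one and thereby maintains the type-3 invariant across the step.

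For Part~(2), I would inspect every line that writes to $\textnormal{\notif}[i]$. The reset macro $configSet$ (line~\ref{ln:confSetVal}) and the clean-up (line~\ref{ln:clean}) install only $dfltNtf$; the automaton step (line~\ref{ln:automatonStep}) copies an existing notification's set via $maxNtf()$ in case~1 and leaves the set unchanged in case~2; and line~\ref{ln:readyToReplace} alters only the phase. Hence the sole line that can introduce a fresh $\textnormal{set}$ as a notification is $\configEstab(\textnormal{set})$ (line~\ref{ln:configEstab}). For participation, becoming a participant is a transition of $\textnormal{config}[i]$ from $\sharp$ to a non-$\sharp$ value; by Part~(1) no reset process is active, so the guard $\textnormal{config}[i]=\bot$ of the brute-force assignment in line~\ref{ln:restartConfig} never holds, leaving $participate()$ (line~\ref{ln:participate}) as the only line effecting such a transition.

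For Part~(3), I would observe that an execution starting in a no-stale state satisfies the hypotheses required to invoke the liveness results. By Lemma~\ref{thm:virtuallyNotExplicit} the execution eventually enters a part that virtually includes no explicit replacements, and by Lemma~\ref{thm:convDeg} the lexically maximal notification progresses in a coordinated fashion. Invariants (1) to (7) of Claim~\ref{thm:invar} then carry this notification to $phase=2$, where line~\ref{ln:automatonStep} (case~2) installs its $set$ into $\textnormal{config}[i]$ at every active participant---precisely the replacement of the system configuration---before the notification is removed in the transition to $phase=0$; admissibility ensures the failure detectors monitor participant well-being so this progression does not stall. The main obstacle is the do-forever case of Part~(1): it requires showing that the delicate-replacement bookkeeping (the $\textnormal{all}[]$, $\textnormal{echo}[]$, and $\textnormal{allSeen}$ fields) keeps the per-processor phases within a degree gap of one at \emph{every} single step, so that type-3 is never violated even transiently during a legitimate phase transition. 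I expect this to reduce to the same unison argument underlying Claim~\ref{thm:invar}, reused here in the closure direction rather than the convergence direction.
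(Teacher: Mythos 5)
Your proposal is correct and follows essentially the same route as the paper's proof: Part~(1) as a closure/invariance argument over the lines that could introduce stale information, Part~(2) by direct inspection of lines~\ref{ln:configEstab} and~\ref{ln:participate}, and Part~(3) by invoking Lemma~\ref{thm:virtuallyNotExplicit} together with the invariants of Claim~\ref{thm:invar} to drive the selected notification through phase~2. Your version is in fact more explicit than the paper's (which compresses Part~(1) into a one-line appeal to ``close investigation of the lines''), but the decomposition and the key lemmas used are the same.
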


\begin{proof} The proof essentially follows from established results above.
	
\noindent \textbf{Part (1).~}~%
Since there is no stale information in the system state that immediately proceeds $c$, there is no stale information in $c$. 
This follows from a close investigation of the 
lines that can change the system state in a way that might introduce stale information; the most  
relevant lines are the ones that deal with notifications (lines~\ref{ln:adoptAll} and~\ref{ln:automatonStep}) and new participants (line~\ref{ln:join}). 
Thus, the proof is completed via parts (2) and (3). 

\noindent \textbf{Part (2).~}~%
This is immediate from lines~\ref{ln:configEstab} and~\ref{ln:participate}.

\noindent \textbf{Part (3).~}~%
%
It is not difficult to see that $R$ includes an $R_{\textnormal{VNER}}$ part (Lemma~\ref{thm:virtuallyNotExplicit}, Part (1)). 
Then, the proof completes by applying part (7) of Claim~\ref{thm:invar} twice: Once for showing the selection of a single notification that has phase $2$, 
and the second time for showing that the selected notification replaces the 
configuration.~\end{proof}


\subsection{Reconfiguration Management}
\label{sec:reconMan}
The Reconfiguration Management $recMA$ layer shown in Algorithm~\ref{alg:SSQR} bears the weight of initiating or \emph{triggering} a reconfiguration when (i) the configuration majority has been lost, or (ii) when the prediction function $evalConf()$ indicates to a majority of members that a reconfiguration is needed to preserve the configuration.
To trigger a reconfiguration, Algorithm~\ref{alg:SSQR} uses the $\configEstab(set)$ interface with the $recSA$ layer.
In this perspective, the two algorithms display their modularity as to their workings.
Namely, $recMA$ controls \emph{when} a reconfiguration should take place, but the reconfiguration replacement process is left to $recSA$, which will install a new configuration also trying to satisfy $recMA$'s proposal of the new configuration's set.
Several processors may trigger reconfiguration simultaneously, but by the correctness of Algorithm~\ref{alg:disCongif} this does not affect the delicate reconfiguration, and by the correctness of Algorithm~\ref{alg:SSQR}, each processor can only trigger once when this is needed.

In spite of using majorities, the algorithm is generalizable to other (more complex) quorum systems, while the prediction function $evalConf()$ can be either very simple, e.g., asking for reconfiguration once $1/4^{th}$ of the members are not trusted, or more complex, based on application criteria or network considerations.
More elaborate methods may also be used to define the set of processors that Algorithm~\ref{alg:SSQR} proposes as the new configuration. Our current implementation, aiming at simplicity of presentation, defines the set of trusted participants of the proposer as the proposed set for the new configuration.

\begin{algorithm*}[t]

   \caption{Self-stabilizing Reconfiguration Management; code for processor $p_i$}
\label{alg:SSQR}
\begin{footnotesize}


\noindent {\bf Interfaces:}
$evalConf()$ returns $\sf True/False$ on whether a reconfiguration is required or not by using some (possibly application-based) prediction function.
The rest of the interfaces are specified in Algorithm~\ref{alg:disCongif}.
$\noReconfig()$ returns $\sf True$ if a reconfiguration is not taking place, or $\sf False$ otherwise.
$\configEstab(set)$ initiates the creation of a new configuration based on the $set$.
$getConfig()$ returns the current local configuration.
\label{SSQR:interfaces}

\noindent {\bf Variables:} \label{SSQR:var}
$needReconf[\,]$ is an array of size at most $N$, composed of booleans $\{\sf True, False \}$, where $needReconf_i[j]$  holds the last value of $needReconf_j[j]$ that $p_i$ received from $p_j$ as a result of exchange (lines~\ref{SSQR:send} and \ref{SSQR:receive}) and $needReconf$ is an alias to $needReconf_i[i]$ i.e., of $p_i$'s last reading of $evalConf()$.
Similarly, $noMaj_i[\,]$ is an array of booleans of size at most $N$ on whether some trusted processor of $p_i$ detects a majority of members that are active per the reading of line~\ref{SSQR:testMaj}.
$noMaj_i[j]$ (for $i \neq j$) holds the last value of $noMaj_j[j]$ that $p_i$ received from $p_j$. 
Finally $prevConfig$ holds $p_i$'s believed previous $config$.
\\

\noindent {\bf Macros:} \\
$core() =$ $\bigcap_{p_j \in FD_i[i].part}FD[j].part$\;\label{SSQR:defCore}
$flushFlags():$ \lForEach{$p_j\in FD[i]$}{$needReconf[j]  \gets (noMaj[j] \gets {\sf False})$}
\noindent {\bf Do forever} \Begin{
\If{$p_i \in FD[i].part$\label{SSQR:isPartpnt}}{
$curConf = getConfig()$\; \label{SSQR:readConfig}

$needReconf[i]  \gets (noMaj[i] \gets {\sf False})$\;\label{SSQR:ownFlagReset}
\lIf{$prevConfig$ $\not \in$ $\{curConf, \bot\}$}{\label{SSQR:configChanged}
$flushFlags()$}\label{SSQR:flagsResetAll}

\If{$\noReconfig() = {\sf True}$}{ \label{SSQR:configChanging}

$prevConfig$ $\gets$ $curConf$\; \label{SSQR:updatePrev}

\lIf{$|\{p_j \in  curConf \cap FD[i]\}| < (\frac{|curConf|}{2}+1)$}{$noMaj[i] \gets {\sf True}$} \label{SSQR:testMaj}
\uIf{$(noMaj[i]$ $=\sf True)$ $\land$ $(|core()|>1)$ $\land$ $({\forall p_k \in core(): noMaj[k] = {\sf True}})$ \label{SSQR:checkNoMajCore}}{
$\configEstab(FD[i].part)$\; \label{SSQR:noMajTrigger}
$flushFlags()$\;\label{SSQR:noMajResetAll}
}

\ElseIf{$(needReconf[i] \gets evalConf(curConf))$ $\land$ $|\{p_j \in curConf \cap FD[i]: needReconf[j] = {\sf True}\}| >\frac{|curConf|}{2}$\label{SSQR:gracefulQreconf}}{
$\configEstab(FD[i].part)$\; \label{SSQR:needReconfTrigger}
$flushFlags()$\;\label{SSQR:needReconfResetAll}
}  

}

\lForEach{$p_j$ $\in$ $FD[i].part$}{$send(\langle noMaj[i], needReconf[i] \rangle)$} \label{SSQR:send}
}
}

\noindent {\bf Upon receive} $m$ 
{\bf from} $p_j$ \textbf{do} \label{SSQR:receive} 
\lIf{$p_i \in FD[i].part$}{$\langle noMaj[j], needReconf[j] \rangle \gets m$}\label{SSQR:receiveStore}


\end{footnotesize}
\end{algorithm*}
\setlength{\textfloatsep}{5pt}

\remove{
\begin{lemma}[Lemma~\ref{thQ:triggeredWhenNeeded}]
Starting from an $R_{safe}$ execution, Algorithm~\ref{alg:SSQR} guarantees that (1) if a majority of $config$ members collapse or (2) if a majority of members require a reconfiguration as per the prediction function, a reconfiguration takes place.\vspace{-.5em}
\end{lemma}

\begin{lemma}[Lemma~\ref{thQ:triggersControlled}]
Starting from an $R_{safe}$ execution, any call to  $\configEstab()$ (lines~\ref{SSQR:noMajTrigger} and~\ref{SSQR:needReconfTrigger}) related to a specific event (majority collapse or agreement of majority to change $config$), can only cause a one per participant 
trigger. After the $config$ has been established, no triggering that relate to this event take place.\vspace{-.5em}
\end{lemma}

\begin{theorem}[\textbf{Theorem~\ref{thQ:corrUpperApp}}]
\label{thQ:corrUpper}
Let $R$ be an execution of Algorithm~\ref{alg:SSQR} starting from an arbitrary system state. $R$ has a suffix in which is a legal execution.\vspace{-1em}
\end{theorem}
}

\subsubsection{Algorithm Description}
\paragraph{Preserving a majority.}
The algorithm strives to ensure that a majority of the configuration is active. 
Although majority is a special case of a quorum, the solution in extensible to host other quorum systems that can be built on top of the $config$ set, in which case, the algorithm aims at keeping a robust quorum system where robustness criteria are subject to the system's dynamics and application requirements. 
In this vein, the presented algorithm employs a configuration evaluation function $evalConf()$ used as a black box, which predicts the quality of the current $config$ and advises any participant whether a reconfiguration of $config$ needs to take place.
Given that local information is possibly inaccurate, we prevent unilateral reconfiguration requests --that may be the result of inaccurate failure detection-- by demanding that a processor must first be informed of a majority of processors in the current $config$ that also require a reconfiguration (lines~\ref{SSQR:gracefulQreconf}--\ref{SSQR:needReconfResetAll}). 

\paragraph{Majority failure.}
On the other hand, we ensure liveness by handling the case were either the prediction function does not manage to sustain a majority, or an initial arbitrary state lacks a majority but there are no $config$ inconsistencies that can trigger a delicate reconfiguration (via the $\configEstab()$ interface).
Lines~\ref{SSQR:checkNoMajCore}--\ref{SSQR:noMajResetAll} tackle this case by defining the \emph{core} of a processor $p_i$ to be the intersection of the failure detector readings that $p_i$ has for the processors in its own failure detector, i.e., $\cap_{p_j \in FD_i[i]}FD_i[j]$.
If this local core agrees that there is no majority, i.e. that $noMaj= {\sf True}$, then $p_i$ can request a new $config$.
As a liveness condition to avoid triggering a new $config$ due to FD inconsistencies when there actually exists a majority of active configuration members, we place the \emph{majority-supportive core} assumption on the failure detectors, as seen in Definition~\ref{def:majSupCore} below.
Simply put, the assumption requires that if a majority of the current configuration is active, then the core of every processor $p_t$ that is a participant, contains at least one processor $p_s$ with a failure detector supporting that a majority of $config$ is trusted.
Furthermore, $p_t$ has knowledge that $p_s$ can detect a majority of trusted members.

\paragraph{Detailed description.}
The algorithm is essentially executed only by participants as the condition of line~\ref{SSQR:isPartpnt} suggests.
Line~\ref{SSQR:readConfig} reads the current configuration, while line~\ref{SSQR:ownFlagReset} initiates the local $noMaj_i[i]$ and $needReconf_i[i]$ variables to $\sf False$. 
If a change from the previous configuration has taken place, the arrays $noMaj[\,]$ and $needReconf[\,]$ are reset to $\sf False$ (line~\ref{SSQR:flagsResetAll}).
The algorithm proceeds to evaluate whether a reconfiguration is required by first checking whether a reconfiguration is already taking place (line~\ref{SSQR:configChanging}) through the $\noReconfig()$ interface of $recSA$.
If this is not the case, then it checks whether it can see a trusted majority of configuration members, and updates the local $noMaj_i[i]$ boolean accordingly (line~\ref{SSQR:testMaj}).
If $noMaj_i[i] = \sf True$, i.e., no majority of members is active, and line~\ref{SSQR:checkNoMajCore} finds that all the processors in its core also fail to find a majority of members, then $p_i$ can trigger reconfiguration using $\configEstab()$ with the current local set of participants as the proposed new $config$ set (lines~\ref{SSQR:noMajTrigger}--\ref{SSQR:noMajResetAll}).
The $needReconf_i[\,]$ and $noMaj_i[\,]$ arrays are again reset to $\sf False$ to prevent other processors that will receive these to trigger.
Line~\ref{SSQR:gracefulQreconf} checks whether the prediction function $evalConfig()$ suggests a reconfiguration, and if a majority of members appears to agree on this, then the triggering proceeds as above.
Participants continuously exchange their $noMaj$ and $needReconf$ variables (lines~\ref{SSQR:send}--\ref{SSQR:receiveStore}).


\subsubsection{Correctness}
The Reconfiguration Management algorithm is responsible for triggering a reconfiguration when either a majority of the members crash or whenever the (application-based) $config$ evaluation mechanism $evalConfig()$ suggests to a members' majority that a reconfiguration is required. 
The correctness proof ensures that, given the assumption of majority-supportive core holds, Algorithm~\ref{alg:SSQR} can converge from a transient initial state to a safe state, namely, that after $recMA$ has triggered a reconfiguration, it will never trigger a new one before the previous one is completed and only if a new event makes it necessary.

\paragraph{Terminology.} We use the term \emph{steady $config$ state} to indicate a system state in an execution were a $config$ has been installed by Algorithm~\ref{alg:disCongif} at least once, and 
the system state is conflict-free.
A \emph{legal execution} $R$ for Algorithm~\ref{alg:SSQR}, refers to an execution that converges to a steady $config$ state.
Moreover, a reconfiguration in $R$ takes place only when a majority of the configuration members fails, or when a majority of the members require a reconfiguration.
The system remains conflict-free and moves to a new steady $config$ state with a new configuration.

\begin{definition} [Majority-supportive core]
\label{def:majSupCore}
Consider a steady $config$ state in an execution $R$ where the majority of members of the established $config$ never crashes.
The \emph{majority-supportive core} assumption requires that every participant $p_i$ with a local core  $\cap_{p_j \in FD_i[i]}FD_i[j]$ containing more than one processor, must have a core with at least one active participant $p_r$ whose failure detector trusts a majority of the $config$, and for such a processor $noMaj_i[r] = \sf False$ throughout $R$. 
\end{definition}

\begin{remark}
\label{rem:reconfThroughInterface}
We say that Algorithm~\ref{alg:disCongif} is \emph{triggered} when a reconfiguration is initialized. 
By Algorithm~\ref{alg:SSQR}, the only way that Algorithm~\ref{alg:SSQR} can cause a triggering of Algorithm~\ref{alg:disCongif} is through a call to the $\configEstab()$ interface with Algorithm~\ref{alg:disCongif} on lines~\ref{SSQR:noMajTrigger} and~\ref{SSQR:needReconfTrigger}.
\end{remark}

\begin{lemma}
\label{thQ:reachSteady}
Starting from an arbitrary initial state in an execution $R$, where stale information exists, Algorithm~\ref{alg:SSQR} converges to a steady $config$ state where local stale information are removed.
\end{lemma}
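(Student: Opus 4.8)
The plan is to reduce the claim to the already-established convergence of the underlying $recSA$ layer and then argue that the few local variables of Algorithm~\ref{alg:SSQR} are refreshed once the lower layer is quiescent. First I would invoke Theorem~\ref{thm:staleFreeExecution}, together with Lemma~\ref{thm:virtuallyNotExplicit} and Theorem~\ref{thm:closureThm}: these guarantee that, in an admissible execution (i.e., under the temporal failure-detector reliability assumed for recovery), Algorithm~\ref{alg:disCongif} eventually reaches a state free of stale information in which a single non-$\bot$ configuration is installed and the system is conflict-free, even when $recMA$ issues (possibly spurious) $\configEstab()$ requests during the transient. By Remark~\ref{rem:reconfThroughInterface}, the only coupling from $recMA$ back into $recSA$ is through $\configEstab()$, so once $recSA$ has stabilized, $getConfig()$ returns one common configuration at every active participant and $\noReconfig()$ is evaluated consistently. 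By the definition of a steady $config$ state this is precisely the target state, and it remains to show that the $recMA$-local variables $noMaj[\,]$, $needReconf[\,]$ and $prevConfig$ carry no residual stale values.

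Next I would trace how each local field is cleaned in the do forever loop once $recSA$ is quiescent. Every active participant overwrites its own entries $noMaj_i[i]$ and $needReconf_i[i]$ with $\sf False$ at the start of each iteration (line~\ref{SSQR:ownFlagReset}), so any corrupt self-value is discarded within one step. For an entry $noMaj_i[j]$ or $needReconf_i[j]$ with $p_j$ active, the fresh value produced by $p_j$ is broadcast repeatedly (line~\ref{SSQR:send}) and, by fair communication, is received and stored at $p_i$ (lines~\ref{SSQR:receive} and~\ref{SSQR:receiveStore}), overwriting the stale contents; the finite channel capacity of Section~\ref{sec:settings} bounds the number of stale in-transit packets and the snap-stabilizing data link flushes them. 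For an entry indexed by a processor that is no longer active, the value is never consulted and hence is not harmful stale information: every guard that reads these arrays restricts attention to $curConf \cap FD[i]$ or to $core()$ (lines~\ref{SSQR:testMaj}, \ref{SSQR:checkNoMajCore} and~\ref{SSQR:gracefulQreconf}), i.e., to trusted, active processors, while $flushFlags()$ touches only indices in $FD[i]$. Finally, $prevConfig$ is rewritten with the current $curConf$ (line~\ref{SSQR:updatePrev}); should its arbitrary initial value differ from $curConf$, the mismatch merely triggers one harmless $flushFlags()$ (line~\ref{SSQR:flagsResetAll}) that resets all flags, after which $prevConfig$ equals the installed configuration.

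The hard part will be ruling out a pathological feedback loop in which stale $noMaj$ or $needReconf$ values keep firing $\configEstab()$ during the transient and thereby prevent $recSA$ from ever settling. I would close this gap in two ways. First, the convergence of $recSA$ is robust to such requests: Lemma~\ref{thm:virtuallyNotExplicit} shows that even an execution containing an unbounded number of $\configEstab()$ calls eventually admits a part that virtually does not include explicit replacements, after which at most one notification survives and becomes the configuration; hence sporadic triggers cannot block stabilization. Second, the triggers are self-limiting: each firing is immediately followed by $flushFlags()$ (lines~\ref{SSQR:noMajResetAll} and~\ref{SSQR:needReconfResetAll}), and the guard $\noReconfig()$ (line~\ref{SSQR:configChanging}) suppresses new triggers while a replacement is in progress, so a burst of stale flags produces at most a bounded cascade. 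Combining these with the majority-supportive core assumption (Definition~\ref{def:majSupCore}), which prevents a spurious ``no majority'' verdict once the failure detectors are reliable and a majority is in fact active, I conclude that triggering ceases, $recSA$ reaches its stable single-configuration state, and all $recMA$-local stale information has been removed; that is, the system has converged to a steady $config$ state.
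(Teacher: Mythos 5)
Your proposal is correct and follows essentially the same route as the paper: enumerate the local sources of stale information in the $recMA$ state ($noMaj[\,]$, $needReconf[\,]$ and $prevConfig$), observe that each is overwritten within a complete iteration or two (line~\ref{SSQR:ownFlagReset} for the self-entries, repeated exchange plus fair communication for the remote entries, line~\ref{SSQR:updatePrev} for $prevConfig$), and argue that the residual spurious calls to $\configEstab()$ are bounded and cannot prevent convergence. Two differences are worth noting. First, you discharge the anti-feedback-loop concern inline by appealing to Lemma~\ref{thm:virtuallyNotExplicit} and Theorem~\ref{thm:staleFreeExecution}, whereas the paper keeps this lemma purely local and defers the quantitative bounding of spurious triggerings (the $O(N^2 cap)$ count) to Lemma~\ref{thQ:noAbruptConfigTriggered}; your version is more self-contained but slightly blurs the division of labour between the two lemmas. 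Second, you omit the paper's Case~1: in the self-stabilization model the program counter is part of the arbitrary initial state, so a processor may resume execution directly at line~\ref{SSQR:noMajTrigger} or~\ref{SSQR:needReconfTrigger} and call $\configEstab()$ once without ever passing the $\noReconfig()$ guard of line~\ref{SSQR:configChanging}. Your claim that this guard ``suppresses new triggers while a replacement is in progress'' holds only for complete iterations, so this sub-case needs a sentence; it is harmless (one extra trigger per processor, absorbed by your bounded-cascade argument), but it belongs in the case analysis. A final nit: stale packets already in transit between established participants are handled by the $O(N^2 cap)$ channel-capacity bound, not by the snap-stabilizing data-link cleaning, which the paper invokes only for newly established connections and joining processors.
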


\begin{proof}
Consider a processor $p_i$ with an arbitrary initial local state where stale information exists  (1) in the program counter, (2) in $noMaj_i[\bullet]$ and $needReconf_i[\bullet]$ and (3) in $prevConf$. \\
\textbf{Case 1} -- Stale information may initiate the algorithm in a line other than the first of the pseudocode.
If $p_i$'s program counter starts after line~\ref{SSQR:configChanging} and if a reconfiguration is taking place, then Algorithm~\ref{alg:SSQR} may force a second reconfiguration while Algorithm~\ref{alg:disCongif} is already reconfiguring.
The counter for example could start on lines~\ref{SSQR:noMajTrigger} and~\ref{SSQR:needReconfTrigger}.
This would force a brute reconfiguration.
This triggering cannot be prevented in such a transient state, but we note that any subsequent iteration of the algorithm is prevented from accessing $\configEstab()$ lines (as in  Remark~\ref{rem:reconfThroughInterface}) before the reconfiguration is finished.

%

\noindent\textbf{Case 2} -- We note that after a triggering as the one described above, the fields of arrays $noMaj_i[\bullet]$ and $needReconf_i[\bullet]$ are set to $\sf False$. 
Moreover, in every iteration $noMaj_i[i]$ and $needReconf_i[i]$ are set to $\sf False$.
During a reconfiguration these values are propagated to other processors and $p_i$ receives their corresponding values.
Therefore, $p_i$ must eventually receive $noMaj_j[j]$ ($needReconf_j[j]$) from some participant $p_j$, and overwrite any transient values.
Lemma~\ref{thQ:noAbruptConfigTriggered} bounds the number of reconfigurations that may be triggered by corruption that is not local, i.e., that emerges from corrupt $noMaj$ ($needReconf$) values that arrive from the communication links.

\noindent\textbf{Case 3} -- We anticipate that any reconfiguration returns a different configuration than the previous one.
In a transient state though, the previous configuration ($prevConfig$) and the current configuration $curConf$ may coincide.
This ignores the check of line~\ref{SSQR:flagsResetAll} that sets $noMaj_i[\bullet]$ and $needReconf_i[\bullet]$ to $\sf False$ upon the detection of a reconfiguration change. 
This forms a source of a potential unneeded reconfiguration.
Nevertheless, $prepConfig$ receives the most recent configuration value on every iteration of line~\ref{SSQR:updatePrev} and thus the above may only take place once throughout $R$ per processor.

\noindent Eliminating these sources of corruption, we reach a steady $config$ state without local stale information.
\end{proof}
\vspace{.5em}

\begin{lemma}
\label{thQ:noAbruptConfigTriggered}
Consider a steady $config$ state $c$ in an execution $R$ where the majority-supportive core assumption holds throughout, the majority of $config$ processors never crash and there is never a majority of members supporting $evalConf() = \sf True$. 
There is a bounded number of triggerings of the Algorithm~\ref{alg:disCongif} that are a result of stale information, namely  $O(N^2 cap)$.
\end{lemma}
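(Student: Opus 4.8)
The plan is to show that, under the stated assumptions, any triggering of Algorithm~\ref{alg:disCongif} in the steady $config$ state must be ``powered'' by a stale ${\sf True}$ value travelling in a communication channel, and then to charge each such triggering to a distinct stale packet, of which there are only $O(N^2 cap)$.

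First I would argue that \emph{honest} (up-to-date, non-corrupted) values never satisfy either triggering guard. By Remark~\ref{rem:reconfThroughInterface} the only triggers come from lines~\ref{SSQR:noMajTrigger} and~\ref{SSQR:needReconfTrigger}. For line~\ref{SSQR:needReconfTrigger}, the guard on line~\ref{SSQR:gracefulQreconf} requires a majority of members with $needReconf = {\sf True}$; since a fresh $needReconf_j[j]$ equals $evalConf()$ and, by assumption, there is never a majority of members supporting $evalConf() = {\sf True}$, no majority can be formed from honest values alone. For line~\ref{SSQR:noMajTrigger}, the guard on line~\ref{SSQR:checkNoMajCore} requires $noMaj[k] = {\sf True}$ for \emph{every} $p_k$ in the core; but the majority-supportive core assumption (Definition~\ref{def:majSupCore}), together with the hypothesis that the majority of $config$ never crashes, guarantees in each core a participant $p_r$ whose honest value satisfies $noMaj[r] = {\sf False}$ throughout $R$. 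Hence the universally quantified condition fails whenever $p_i$ holds the honest value for $p_r$.

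Next I would conclude that every triggering requires $p_i$ to store at least one stale ${\sf True}$ entry that masks an honest ${\sf False}$: for line~\ref{SSQR:noMajTrigger} it must store $noMaj_i[r] = {\sf True}$ for the protective $p_r$, and for line~\ref{SSQR:needReconfTrigger} it must store $needReconf_i[s] = {\sf True}$ for at least one member $p_s$ whose honest value is ${\sf False}$. Local corruption of these entries has already been removed by Lemma~\ref{thQ:reachSteady}, so such a masking entry can only have been written by the receipt of a stale packet (line~\ref{SSQR:receiveStore}). Crucially, no new masking packet is ever created after the initial state: the protective $p_r$ (and every member not supporting reconfiguration) recomputes its own flag to ${\sf False}$ on every iteration (lines~\ref{SSQR:ownFlagReset},~\ref{SSQR:testMaj}) and forwards only that value (line~\ref{SSQR:send}); fresh ${\sf True}$ flags originating from \emph{other} processors are harmless, since they cannot masquerade for $p_r$, nor complete a majority that already excludes the honest-${\sf False}$ members. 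The flush operations (lines~\ref{SSQR:noMajResetAll},~\ref{SSQR:needReconfResetAll}) additionally prevent a cleared entry from being reused without a fresh receipt. I would then carry out the counting: the network is fully connected over at most $N$ active processors, giving $O(N^2)$ directed channels, each of capacity $cap$, so the total number of stale packets present initially is $O(N^2 cap)$. Each masking entry is installed by receiving a stale packet, which is thereby consumed (line~\ref{SSQR:receive}), and after the ensuing triggering the flush resets the entry, so a further triggering at $p_i$ demands yet another stale packet. Charging each triggering to the stale packet that installed its masking entry yields an injection from triggerings into the initial stale packets, bounding the stale-information-induced triggerings by $O(N^2 cap)$.

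I expect the main obstacle to be the second step: cleanly separating \emph{fresh} ${\sf True}$ values (which a legitimately under-informed failure detector may produce, e.g.\ $noMaj_j[j] = {\sf True}$ when $p_j$'s detector momentarily misses the live majority) from genuinely \emph{stale} ones, and proving that fresh ${\sf True}$ values can never, on their own, complete a triggering guard. This is exactly where the majority-supportive core assumption carries the argument, since it is what forces every potential triggering to depend on masking the honest ${\sf False}$ of a protected processor, and hence on consuming a packet from the bounded channel budget.
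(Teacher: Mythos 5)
Your proposal is correct and follows essentially the same route as the paper's proof: both reduce every triggering to the two guard lines via Remark~\ref{rem:reconfThroughInterface}, use the majority-supportive core assumption (resp.\ the no-majority-for-$evalConf()$ assumption) to show that fresh values can never complete either guard, and then bound the supply of masking stale ${\sf True}$ values by the local state plus the $O(N^2 cap)$ channel capacity. The only cosmetic difference is that you delegate local-state corruption to Lemma~\ref{thQ:reachSteady} and charge triggerings injectively to initial stale packets, whereas the paper counts $1+cap\cdot N$ instances per processor directly; both yield the same bound.
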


\begin{proof}
By Remark~\ref{rem:reconfThroughInterface}, the only way that the algorithm may interrupt a steady $config$ state, is by reaching lines~\ref{SSQR:noMajTrigger} and~\ref{SSQR:needReconfTrigger} that have a call to $\configEstab()$.
We assume that some member $p_t \in config$ has triggered Algorithm~\ref{alg:disCongif} at some system state $c_t \in R$, and we examine whether and when this state is attainable from $c$.
We note that \emph{in a complete iteration} of Algorithm~\ref{alg:SSQR}, $p_t$ must have no reconfiguration taking place while triggering, since this is a condition to reach the above mentioned lines imposed by line~\ref{SSQR:configChanging}.
We first prove that if there is a triggering it must be due to initial corrupt information and then argue that this can take place a bounded number of times.

\noindent \textbf{Case 1 -- The reconfiguration was initiated by line~\ref{SSQR:noMajTrigger}}. 
\sloppy{This implies that the condition of line~\ref{SSQR:checkNoMajCore} is satisfied, i.e.,
at some system state $c_t \in R$, $p_t$ has local information that satisfies $(noMaj_t[t]$ $=\sf True)$ $\land$ $(|core_t()|>1)$ $\land$ $({\forall p_k \in core_t(): noMaj_t[k] = {\sf True}})$.}
Condition $(noMaj_t[t]$ $=\sf True)$ may be true locally for $p_t$, only due to failure detector inaccuracy, because, by the \assert, the majority of processors in the $config$ never fails throughout $R$. 
Condition $|core_t()|>1$ suggests that $p_t$ has at least two participant processors in its core (without requiring $p_t \in core_t()$). 
By the majority-supportive core assumption and the above, we are guaranteed that $\exists p_s \in core_t(): |FD_s[s] \cap config_s[s]| > \frac{|config_s[s]|}{2}$ throughout $R$ and $noMaj_t[s] = {\sf True} \iff noMaj_s[s] = {\sf True}$.
But in this case, $noMaj_s[s] = \sf False$ and $noMaj_t[s] = \sf True$ which contradicts the majority supportive assumption. 
We thus reach to the result.
 
Note that $noMaj_t[s] = \sf True$ can reside in $p_t$'s local state or in the communication links that may carry stale information. 
Because of the boundedness of our system, we can have one instance of corrupt $noMaj_t[s] = \sf True$ in $p_t$'s local state, and $cap$ instances in the communication link.
I.e., such information may cause a maximum of $1+cap\cdot N$ triggerings per processor.
Any processor that enters the system cannot introduce corrupt information to the system due to the data-links protocols and the joining mechanism.
Thus majority supportive assumption is also attainable even when starting from arbitrary states.

\noindent \textbf{Case 2 -- The reconfiguration procedure was triggered by line~\ref{SSQR:gracefulQreconf}}.
This implies that for $p_t$, both conditions were true, i.e., (a) $(needReconf_t \gets evalConf_t(config_t))$ and (b) $|\{p_j \in config_t\cap FD_t: needReconf_t[j] = {\sf True}\}| >\frac{|config_t|}{2}$.
We note that the $needReconf_t[t]$ variable is always set to $\sf False$ upon the beginning of every iteration.
Thus the local function $evalConf_t()$ due to $p_t$'s failure detector and other application criteria explicitly suggested a reconfiguration \emph{in the specific iteration} in which $p_t$ triggered the reconfiguration.
From the \assert, there is no majority of processors in the $config$ that supports a reconfiguration, even at the time when $p_t$ triggered the reconfiguration.

Thus $needReconf_t[s] = \sf True$ must reside in $p_t$'s local state and in the communication links. 
We can have one instance of corrupt $needReconf_t[s] = \sf True$ in $p_t$'s local state, and $cap$ instances in the communication links.
I.e. such information may cause a maximum of $1+cap\cdot N$ triggerings per processor.
Note that after every such triggering, the source of triggering is eliminated by reseting $needReconfig[\,]$ to $\sf False$ (lines~\ref{SSQR:flagsResetAll}, \ref{SSQR:noMajResetAll} and~\ref{SSQR:needReconfResetAll}).
From this point onwards any processor that enters the system cannot by the data-links and the joining mechanism introduce corrupt information to the system.

So the possible triggerings in the system attributed to stale information are confined to $O(N^2cap)$ and by Algorithm~\ref{alg:disCongif} guarantees we always reach a steady $config$ state.
\end{proof}

\vspace{.5em}
Let $c_{safe}$ denote a \emph{safe system state} where all possible sources of triggerings  attributed to the arbitrary initial state have been eliminated.
We denote an execution starting from $c_{safe}$ as $R_{safe}$.

\begin{lemma}
\label{thQ:steadyRemainsSteady}
Consider an execution $R_{safe}$ where the majority-supportive core assumption holds throughout, the majority of $config$ processors never crash and there is never a majority of the $config$ with local $evalConf() = \sf True$. 
This execution is composed of only steady $config$ states. 
\end{lemma}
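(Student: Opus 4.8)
The plan is to show that throughout $R_{safe}$ no reconfiguration is ever triggered, so that the already-installed configuration is never disturbed and the conflict-freedom of the starting safe state is preserved by the closure of Algorithm~\ref{alg:disCongif}. By Remark~\ref{rem:reconfThroughInterface}, the only way Algorithm~\ref{alg:SSQR} can cause Algorithm~\ref{alg:disCongif} to reconfigure is through a call to $\configEstab()$ on line~\ref{SSQR:noMajTrigger} or line~\ref{SSQR:needReconfTrigger}; hence it suffices to rule out both calls under the three hypotheses. Since $R_{safe}$ begins at a safe state $c_{safe}$, Lemmas~\ref{thQ:reachSteady} and~\ref{thQ:noAbruptConfigTriggered} guarantee that every $noMaj_i[\bullet]$ and $needReconf_i[\bullet]$ entry that $p_i$ reads reflects a genuine report originating from an active participant rather than a stale link or local value; I would invoke this fact wherever a stored flag is compared against its owner's actual reading.

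First I would eliminate line~\ref{SSQR:noMajTrigger}. Reaching it requires the guard of line~\ref{SSQR:checkNoMajCore}, namely $(noMaj_i[i] = {\sf True}) \land (|core_i()|>1) \land (\forall p_k \in core_i() : noMaj_i[k] = {\sf True})$. Because the majority of the $config$ never crashes (second hypothesis), the majority-supportive core assumption (Definition~\ref{def:majSupCore}) applies to any participant $p_i$ with $|core_i()|>1$: there is an active participant $p_r \in core_i()$ whose failure detector trusts a majority of the $config$ and for which $noMaj_i[r] = {\sf False}$ holds throughout $R$. This directly falsifies the clause $\forall p_k \in core_i() : noMaj_i[k] = {\sf True}$, so the guard of line~\ref{SSQR:checkNoMajCore} can never hold and line~\ref{SSQR:noMajTrigger} is never executed.

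Next I would eliminate line~\ref{SSQR:needReconfTrigger}, whose guard (line~\ref{SSQR:gracefulQreconf}) requires $|\{p_j \in curConf \cap FD_i[i] : needReconf_i[j] = {\sf True}\}| > \frac{|curConf|}{2}$. Each $needReconf_i[j]$ is $p_i$'s record of $p_j$'s latest reading of $evalConf_j()$, and in $R_{safe}$ these records are genuine (by the two preceding lemmas), so a stored ${\sf True}$ witnesses that config member $p_j$ actually reported $evalConf_j() = {\sf True}$. The counted set is therefore a subset of the config members whose $evalConf()$ is ${\sf True}$, which by the third hypothesis is never a majority; hence its size is at most $\lfloor |curConf|/2 \rfloor \le \frac{|curConf|}{2}$ and the strict inequality fails, so line~\ref{SSQR:needReconfTrigger} is never executed either.

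Having ruled out both triggering lines, no call to $\configEstab()$ occurs anywhere in $R_{safe}$, and since joining uses $participate()$ rather than $\configEstab()$, no new notification is introduced into Algorithm~\ref{alg:disCongif}. I would then close the argument using the closure property of Algorithm~\ref{alg:disCongif}: by Theorem~\ref{thm:closureThm}, Part (1), a conflict-free state with no incoming notification yields only conflict-free states, and by Part (2) no set becomes a notification absent a $\configEstab()$ call. Consequently every state of $R_{safe}$ remains conflict-free and retains the already-installed configuration, i.e., is a steady $config$ state. I expect the main obstacle to be the careful justification that the stored flags compared in the two guards coincide with their owners' true readings despite asynchrony; this is exactly where the safety of $c_{safe}$ (the removal of stale link and local values established in Lemmas~\ref{thQ:reachSteady} and~\ref{thQ:noAbruptConfigTriggered}) must be combined with the ``throughout $R$'' strength of the majority-supportive core assumption.
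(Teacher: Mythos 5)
Your proposal is correct and follows essentially the same route as the paper's proof: both rule out line~\ref{SSQR:noMajTrigger} via the majority-supportive core assumption (which supplies a core member $p_r$ with $noMaj[r]={\sf False}$ throughout, falsifying the universal clause of line~\ref{SSQR:checkNoMajCore}) and rule out line~\ref{SSQR:needReconfTrigger} via the hypothesis that no majority ever reports $evalConf()={\sf True}$, relying on the elimination of stale flags established for $R_{safe}$. Your explicit appeal to Theorem~\ref{thm:closureThm} at the end is a slightly more detailed justification of the final step than the paper gives, but it is not a different argument.
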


\begin{proof}
By Lemma~\ref{thQ:noAbruptConfigTriggered} there is a bounded number of triggerings due to initial arbitrary information.
Given that we have reached a safe system state, these triggerings do not occur.
The last $config$ change, has by line~\ref{SSQR:flagsResetAll} reset all the fields in $noMaj[\,]$ and $needReconf[\,]$ to $\sf False$ and this holds for all participants (even if they are not members of the $config$).
By our assumption a majority of processors does not crash. 
The majority-supportive core assumption states that throughout $R_{safe}$ there exists at least one processor $p_i$ in the core of $p_t$ that always has $noMaj_i[i] = {\sf False}$ and $p_t$ has $noMaj_t[i] = {\sf False}$ .
Thus the condition of line~\ref{SSQR:checkNoMajCore} can never be true, and thus there is no iteration of the algorithm that can reach line~\ref{SSQR:noMajTrigger}.
Similarly, since no majority of processors in the $config$ change to $needReconf = {\sf True}$ in this execution, and the local states are exchanged continuously over the token-based data-link, line~\ref{SSQR:needReconfTrigger} cannot be true.
Thus any system state in $R$ is a steady $config$ state.
\end{proof}

\begin{lemma}
\label{thQ:triggeredWhenNeeded}
Starting from an $R_{safe}$ execution, Algorithm~\ref{alg:SSQR} guarantees that if (1) a majority of $config$ members collapse or if (2) a majority of members require a reconfiguration as per the prediction function, a reconfiguration takes place.
\end{lemma}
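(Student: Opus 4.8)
The plan is to treat the two events separately, in both cases reducing ``a reconfiguration takes place'' to ``some active member eventually executes a call to $\configEstab()$,'' since by Theorem~\ref{thm:closureThm}(3) any notification introduced into $recSA$ eventually replaces the configuration. Throughout I would rely on two facts already available. First, in $R_{safe}$ all stale sources of triggering have been removed (Lemmas~\ref{thQ:reachSteady} and~\ref{thQ:noAbruptConfigTriggered}), so the entries of the flag arrays $noMaj[\,]$ and $needReconf[\,]$ carry only values that were genuinely (re)computed by active participants. Second, the $(N,\Theta)$-failure detector of Section~\ref{sec:settings} is eventually accurate in the relevant sense: every active participant's $FD[i]$ eventually excludes each crashed processor (whose heartbeat count falls beyond the gap) and includes every processor that stays active and communicating, so the $core()$ of line~\ref{SSQR:defCore} stabilizes to the set of active participants.

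For event~(1), the first step is to observe that once the failure detectors stabilize after the majority collapse, every active participant $p_i$ satisfies $|\{p_j \in curConf \cap FD[i]\}| < |curConf|/2 + 1$, so line~\ref{SSQR:testMaj} assigns $noMaj[i] \gets {\sf True}$ on every subsequent iteration (line~\ref{SSQR:ownFlagReset} first resets it, but line~\ref{SSQR:testMaj} re-establishes it while the collapse persists). The second step is a propagation argument: since each active participant repeatedly sends its $noMaj[i]$ (line~\ref{SSQR:send}) and messages are eventually delivered by fair communication, every active $p_t$ eventually records $noMaj[k] = {\sf True}$ for each active $p_k$, and in particular for every $p_k \in core_t()$. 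The third step notes that, provided at least two active participants remain, $|core_t()| > 1$ holds, so the guard of line~\ref{SSQR:checkNoMajCore} is satisfied and $p_t$ reaches the $\configEstab()$ call of line~\ref{SSQR:noMajTrigger}; if fewer than two active participants remain, recovery instead proceeds through the brute-force path of $recSA$ (type-4 detection on line~\ref{ln:stale}), which likewise yields a fresh configuration.

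For event~(2), I would argue analogously but through line~\ref{SSQR:gracefulQreconf}. If a majority of the (active) members have $evalConf() = {\sf True}$ persistently, then each such member $p_j$ sets $needReconf[j] \gets {\sf True}$ when it reads its prediction function and disseminates this value (lines~\ref{SSQR:gracefulQreconf} and~\ref{SSQR:send}). By the same fair-communication propagation argument, some active member $p_i$ eventually accumulates $needReconf[j] = {\sf True}$ from all members of this majority that it trusts, so that $|\{p_j \in curConf \cap FD[i] : needReconf[j] = {\sf True}\}| > |curConf|/2$; combined with $p_i$'s own reading of $evalConf()$, the guard of line~\ref{SSQR:gracefulQreconf} holds and $p_i$ executes $\configEstab()$ on line~\ref{SSQR:needReconfTrigger}.

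The delicate point, and the step I expect to be the main obstacle, is the simultaneity inherent in the two guards: because every processor resets its own flag at the top of each iteration (line~\ref{SSQR:ownFlagReset}) and recomputes it, I must show that the \emph{received} entries $noMaj[k]$ (resp. $needReconf[j]$) stay ${\sf True}$ long enough to be observed together within a single iteration of some $p_t$, rather than flickering. This requires arguing that, once the triggering condition (the collapse, or the persistent majority agreement) holds, every relevant sender keeps recomputing ${\sf True}$ on each iteration, so a received array entry can only be overwritten by ${\sf True}$, making the conjunction over the core (resp. the count over the trusted majority) eventually and stably satisfied. Care is also needed to confirm that no spurious $flushFlags()$ (line~\ref{SSQR:flagsResetAll}) intervenes before the trigger; this follows because $prevConfig$ tracks $curConf$ and, in $R_{safe}$, the configuration does not change until the very reconfiguration we are establishing.
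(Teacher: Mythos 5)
Your proposal is correct and follows essentially the same route as the paper's own proof: a case split on the two events, eventual accuracy of the failure detector, each relevant participant recomputing and disseminating its $noMaj$ (resp.\ $needReconf$) flag, and fair communication ensuring the guard of line~\ref{SSQR:checkNoMajCore} (resp.\ line~\ref{SSQR:gracefulQreconf}) is eventually satisfied so that $\configEstab()$ is called. The extra care you take --- stability of the received flags against the per-iteration resets, the $|core()|>1$ edge case, and the absence of a spurious $flushFlags()$ --- fills in details the paper's terser argument leaves implicit, but does not constitute a different approach.
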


\begin{proof} We consider the two cases separately.

\noindent \textbf{Case 1 --} 
If a majority of the members collapses, then based on the failure detector's correctness, a non-crashed participant $p_t$ will eventually stop including a majority of $config$ members in its failure detector and participants ($FD.part$) set.
We remind that rejoins are not permitted.
Since the majority-supporting core assumption does not apply in this case, any processor in $p_t$'s core must eventually reach to the same conclusion as $p_t$.
Every such participating processor $p_s \in core_t()$ propagates $noMaj_s[s] = \sf True$ in every iteration. 
By the assumption that a packet sent infinitely often arrives infinitely often (the fair communication assumption, Section~\ref{s:sys}), any processor such as $p_t$ must eventually collect a $noMaj = \sf True$ from every member like $p_s$ core and thus enable a reconfiguration.

\noindent \textbf{Case 2 --} 
The arguments are similar to Case 1. 
The difference lies in that the processor $p_t$ must eventually receive $needReconfig = {\sf True}$ from a majority of $config$ members (rather than the local core processors) before it moves to  trigger a reconfiguration.
\end{proof}

\begin{lemma}
\label{thQ:triggersControlled}
Starting from an $R_{safe}$ execution, any triggering of Algorithm~\ref{alg:disCongif} (lines~\ref{SSQR:noMajTrigger} and~\ref{SSQR:needReconfTrigger}) related to a specific event (majority collapse or agreement of majority to change $config$), can only cause a one per participant concurrent trigger. After the $config$ has been established, no triggerings that relate to this event take place.
\end{lemma}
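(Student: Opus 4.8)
The plan is to prove the two assertions separately, both resting on the interplay between the $\noReconfig()$ gate on line~\ref{SSQR:configChanging} and the $flushFlags()$ resets. First I would record the only entry points: by Remark~\ref{rem:reconfThroughInterface} a trigger can happen solely through the $\configEstab()$ calls on lines~\ref{SSQR:noMajTrigger} and~\ref{SSQR:needReconfTrigger}, and both are guarded by the condition $\noReconfig()={\sf True}$ tested on line~\ref{SSQR:configChanging}. Thus whenever a participant $p_i$ triggers at some state $c_t$, no reconfiguration was in progress at $c_t$, and its trigger is effective (the proposed set $FD_i[i].part$ is the non-empty current participant set, which in the collapse/agreement case differs from the $config_i[i]$ that the event flags as inadequate).

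For the ``at most one trigger per participant'' part, I would argue that $p_i$ cannot reach the trigger lines a second time while the reconfiguration it started is still running. By line~\ref{ln:configEstab}, the effective call $\configEstab(FD_i[i].part)$ sets $\textnormal{\notif}_i[i]\gets\langle 1, FD_i[i].part\rangle$, so from that moment $p_i$'s own state encodes a notification and $\noReconfig()$ returns ${\sf False}$ for $p_i$ on every subsequent iteration until the replacement completes; hence the guard on line~\ref{SSQR:configChanging} fails and $p_i$ skips both trigger lines. By Theorem~\ref{thm:closureThm}(3) the notification eventually installs a new configuration, so this blocking window is finite yet covers the whole reconfiguration. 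Because the same argument applies verbatim to every participant, the event produces at most one trigger from each participant; and since each trigger occurs only while that participant still reads $\noReconfig()={\sf True}$, i.e.\ before the notification has propagated and flipped the gate system-wide, all triggers for a single event fall in one common pre-installation window, which is what ``concurrent'' means here.

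For the second assertion I would track $prevConfig$ across the reconfiguration. At the triggering iteration line~\ref{SSQR:updatePrev} stores $prevConfig\gets curConf = C_{old}$, and during the replacement the $\noReconfig()$ gate keeps $p_i$ out of that block, so $prevConfig$ stays $C_{old}$. Once the replacement completes, $getConfig()$ returns the new configuration $C_{new}$, which equals the participant set proposed at trigger time and is in particular non-$\bot$ and distinct from $C_{old}$. On the first iteration after installation the test on line~\ref{SSQR:configChanged}, $prevConfig\notin\{curConf,\bot\}$, holds, so $flushFlags()$ (line~\ref{SSQR:flagsResetAll}) clears every $noMaj_i[\cdot]$ and $needReconf_i[\cdot]$ entry. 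Combined with the local resets already performed at trigger time (lines~\ref{SSQR:noMajResetAll} and~\ref{SSQR:needReconfResetAll}), this removes all indicators carried over from the old event, so the conjunctions of lines~\ref{SSQR:checkNoMajCore} and~\ref{SSQR:gracefulQreconf} are re-evaluated against $C_{new}$; since $C_{new}$ has an active majority by construction, $noMaj_i[i]$ recomputes to ${\sf False}$ on line~\ref{SSQR:testMaj} and a $needReconf$ majority must be earned afresh. Hence any later trigger can only be attributed to a new event.

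The main obstacle I anticipate is ruling out a spurious re-trigger from stale ${\sf True}$ flags still in transit on the data links when $C_{new}$ is installed, since a freshly received $noMaj_j[j]={\sf True}$ or $needReconf_j[j]={\sf True}$ could momentarily repopulate an array after the flush. I would close this gap by appealing to the $R_{safe}$ hypothesis (Lemmas~\ref{thQ:reachSteady} and~\ref{thQ:noAbruptConfigTriggered}), under which no corrupt flag values remain, together with the observation that any genuine ${\sf True}$ flag still in the channels pertains to the already-resolved $C_{old}$ event and, being re-checked against $C_{new}$ after each $flushFlags()$, cannot on its own satisfy the core/majority conjunctions of lines~\ref{SSQR:checkNoMajCore} or~\ref{SSQR:gracefulQreconf} without a fresh majority forming, which would constitute a new event rather than the one under consideration.
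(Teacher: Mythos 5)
Your proof is correct and follows essentially the same route as the paper's: the $\noReconfig()$ gate on line~\ref{SSQR:configChanging} blocks any re-trigger until the new configuration is installed, and the $flushFlags()$ resets (at trigger time and again when line~\ref{SSQR:configChanged} detects the configuration change) ensure the old event's indicators cannot cause further triggers. Your added detail on how the notification $\langle 1, FD[i].part\rangle$ forces $\noReconfig()={\sf False}$, and your explicit treatment of stale flags still in transit (which the paper defers to Lemma~\ref{thQ:noAbruptConfigTriggered}), are consistent elaborations rather than a different argument.
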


\begin{proof}
We consider the two cases that can trigger a reconfiguration (Remark~\ref{rem:reconfThroughInterface}), and assume that $p_t$ is the first processor to trigger $\configEstab()$.
Assume first that $p_t$ has called Algorithm~\ref{alg:disCongif} two consecutive times, without a $config$ being \emph{completely} established between the two calls. 
Note that a processor can access $\configEstab()$ in either of lines~\ref{SSQR:noMajTrigger} or~\ref{SSQR:needReconfTrigger} but not both in a single iteration.
A call to $\configEstab()$ initiates a reconfiguration and thus any subsequent check of $p_t$ in line~\ref{SSQR:configChanging} returns ${\sf False}$ from $p_t$'s $recSA$ layer. 
Thus $p_t$ cannot access lines~\ref{SSQR:noMajTrigger} or~\ref{SSQR:needReconfTrigger} until the reconfiguration has been completed. 
This implies that  $p_t$ can never trigger for a second time unless the new $config$ has been established.
Note that if $p_t$ triggers, another processor satisfying the conditions of line~\ref{SSQR:configChanging} may trigger concurrently, but is also subject to the trigger-once limitation.
On the other hand, due to the exchange of information in Algorithm~\ref{alg:disCongif}, when one processor triggers other processors eventually find their proposals and join the reconfiguration. 
So not every single processor's Reconfiguration Management module needs to trigger.
Convergence to a single $config$ is guaranteed by Algorithm~\ref{alg:disCongif}.

We conclude by indicating that lines~\ref{SSQR:noMajResetAll} and~\ref{SSQR:needReconfResetAll} reset both arrays $noMaj_t[\,]$ and $needReconf_t[\,]$ immediately after $\configEstab()$.
Thus the triggering data used for this event are not used again.
Moreover, upon configuration change, the same arrays are again set to $\sf False$ for the processors that have not triggered Algorithm~\ref{alg:disCongif}  themselves through Algorithm~\ref{alg:SSQR}.
We thus reach a new steady $config$ state, and no more triggerings can take place due to the same event that had caused the reconfiguration.
\end{proof}

\begin{theorem}
\label{thQ:corrUpperApp}
Let $R$ be an execution starting from an arbitrary system state. Algorithm~\ref{alg:SSQR} guarantees that $R$ eventually reaches an execution suffix which is a legal execution.
\end{theorem}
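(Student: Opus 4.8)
The plan is to assemble the theorem directly from the five lemmas already established for the $recMA$ layer, organizing the argument around the three requirements in the definition of a legal execution for Algorithm~\ref{alg:SSQR}: convergence to a steady $config$ state, reconfiguration occurring \emph{only} when a majority of members fails or a majority genuinely requires one, and preservation of conflict-freedom with progress to a new steady $config$ state after each legitimate reconfiguration. First I would invoke Lemma~\ref{thQ:reachSteady} to argue that, starting from the arbitrary initial state of $R$, every \emph{local} source of stale information (a corrupt program counter, corrupt $noMaj_i[\bullet]$/$needReconf_i[\bullet]$ entries, and a stale $prevConfig$) is cleansed within a bounded prefix, so the system reaches a steady $config$ state. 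At that point the only remaining source of spurious behaviour is the stale information resting in the bounded-capacity communication links.

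Next I would use Lemma~\ref{thQ:noAbruptConfigTriggered} to bound the number of reconfigurations attributable to this residual link corruption. Under the majority-supportive core assumption (Definition~\ref{def:majSupCore}), together with the hypotheses that the $config$ majority never collapses and no genuine majority supports $evalConf()$, every spurious trigger must originate from a corrupt $noMaj$ or $needReconf$ packet, of which there are at most $O(N^2 cap)$ in the system; each triggering flushes the relevant flags (lines~\ref{SSQR:flagsResetAll}, \ref{SSQR:noMajResetAll}, \ref{SSQR:needReconfResetAll}), and the data-link and joining mechanisms prevent fresh corruption from entering. Hence after finitely many steps the system reaches a safe state $c_{safe}$ beyond which no reconfiguration is caused by the initial transient fault; let $R_{safe}$ be the corresponding suffix.

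On $R_{safe}$ I would then combine the remaining three lemmas to obtain both safety and liveness of the legal-execution conditions. Lemma~\ref{thQ:steadyRemainsSteady} supplies the safety side: while neither reconfiguration condition is genuinely met, $R_{safe}$ consists only of steady (hence conflict-free) $config$ states, so no unnecessary reconfiguration occurs. Lemma~\ref{thQ:triggeredWhenNeeded} supplies the liveness side: whenever a majority of members truly collapses or a genuine majority reports $needReconf = {\sf True}$, some participant eventually collects the supporting evidence and calls $\configEstab(\cdot)$. Lemma~\ref{thQ:triggersControlled} guarantees that each such event yields at most one trigger per participant and that, once the new configuration is installed, the flags are reset so the same event cannot trigger again. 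Finally, appealing to the correctness of the lower layer---Theorems~\ref{thm:staleFreeExecution} and~\ref{thm:closureThm} for Algorithm~\ref{alg:disCongif}---the triggered replacement installs a single conflict-free configuration and returns the system to a steady $config$ state with a \emph{new} $config$ value, exactly as the definition requires. Stitching these facts together event-by-event over $R_{safe}$ shows the suffix starting at $c_{safe}$ is legal.

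The main obstacle I anticipate is the clean interfacing of the two asynchronous layers across a reconfiguration. One must verify that the $O(N^2 cap)$ bound on spurious triggers is genuinely exhausted before $c_{safe}$, and, more delicately, that every legitimate trigger is matched by a \emph{completed} lower-layer reconfiguration: the $\noReconfig()$ gate of line~\ref{SSQR:configChanging} must keep $recMA$ from re-triggering while Algorithm~\ref{alg:disCongif} is still converging, so that the $recMA$-level safety (no double trigger) and the $recSA$-level closure compose with no window in which a second, conflicting configuration could slip in. Making this composition precise---rather than the individual lemmas, each of which is already proved---is where the real work lies, and the theorem will follow by their orderly combination.
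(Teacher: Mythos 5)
Your proposal is correct and follows essentially the same route as the paper's proof: reach $c_{safe}$ via Lemmas~\ref{thQ:reachSteady} and~\ref{thQ:noAbruptConfigTriggered}, then use Lemma~\ref{thQ:steadyRemainsSteady} for the no-unnecessary-reconfiguration direction and Lemma~\ref{thQ:triggersControlled} for the return to a steady $config$ state after a legitimate trigger. Your additional explicit appeals to Lemma~\ref{thQ:triggeredWhenNeeded} and to the $recSA$-layer theorems only make the composition more explicit than the paper's own (quite terse) argument.
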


\begin{proof}
By Lemmas~\ref{thQ:reachSteady} and~\ref{thQ:noAbruptConfigTriggered}, we are guaranteed that we reach a safe system state $c_{safe}$ where stale information from the arbitrary initial state cannot force a triggering of new $config$.
This is the suffix $R_{safe}$.
Lemma~\ref{thQ:steadyRemainsSteady} ensures that after we have reached $c_{safe}$, and until a new triggering takes place that is caused by a loss of majority or a majority of the $config$ deciding to reconfigure, the current $config$ will not be changed for any other reason.
Lemma~\ref{thQ:triggersControlled} guarantees that after a change, we return to a steady $config$ state.
Hence, $R_{safe}$ is a legal execution.
\end{proof}

\remove{
A \emph{legal execution} $R$ of Algorithm~\ref{alg:SSQR}, refers to an execution composed by steady $config$ states and delicate configurations triggered due to loss of majority of configuration members, or due to the need of the majority of the members to reconfigure. 
The following lemmas give the outline of the proof, leading to the proof of Theorem~\ref{thm:corrupper}. The omitted details and proofs can be found in Appendix~\ref{app:upper}.\vspace{-.5em}

%

\begin{theorem}
\label{thm:corrupper}
Let $R$ be an execution of Algorithm~\ref{alg:SSQR} starting from an arbitrary system state. $R$ reaches a legal execution.\vspace{-1em}
\end{theorem}

}

%
%
%
%
%
%
%
%


\subsection{Joining Mechanism}
\label{sec:join}
Every processor that wants to become a participant, uses the snap stabilizing data-link protocol (see Section~\ref{s:sys}) so as to avoid introducing stale information after it establishes a connection with the system's processors. 
Algorithm~\ref{alg:disCongif} enables a joiner to obtain the agreed $config$ when no reconfiguration is taking place.
Note that, in spite of having knowledge of this $config$, a processor should only be able to participate in the computation if the application allows it.
In order to sustain the self-stabilization property, it is also important that a new processor initializes its application-related local variables to either default values or to the latest values that a majority of the configuration members suggest.
The joining protocol, Algorithm~\ref{alg:join}, illustrates the above and introduces joiners to the system, but only as \emph{participants} and not as $config$ \emph{members}.

The critical difference between a participant and a joiner is that the first is allowed to send configuration information via the $recSA$ layer, whereas the latter may only receive.

\setlength{\intextsep}{0pt}
%
\begin{algorithm}[t]

\caption{Self-stabilizing Joining Mechanism; code for processor $p_i$}
\label{alg:join} 
\begin{footnotesize}

\noindent {\bf Interfaces.}
The algorithm uses following interfaces from Algorithm~\ref{alg:disCongif}.
$\noReconfig()$ returns $\sf True$ if a reconfiguration is not taking place.
$participate()$ makes $p_i$ a participant. 
$getConfig()$ returns the agreed configuration from Algorithm~\ref{alg:disCongif} or $\bot$ if reconfiguration is taking place.
The $passQuery()$ interface to the application, returns a $\sf True/False$ in response to granting a permission to a joining processor.

{\bf Variables.}
$FD[]$ as defined in Algorithm~\ref{alg:disCongif}.
$state[]$ is array of containing application states, where $state[i]$ represents $p_i$'s local variables and $state[j]$ the state that $p_i$ most recently received by $p_j$. 
$pass[]$ collects all the passes that $p_i$ receives from configuration members.

{\bf Functions.} $resetVars()$ initializes all variables related to the application based on default values.
$initVars()$ initializes all variables related to the application based on states exchanged with the configuration members.\\

{\bf procedure} $join()$ \Begin{\label{JOIN:start}
\lForEach{$p_j \in FD$}{$pass[j] \gets {\sf False}$} \label{JOIN:resetPass}
{\bf do forever} \Begin{\label{JOIN:doForever}
\If{$p_i \not \in FD[i].part$\label{JOIN:checkPart}}{
$resetVars()$\; \label{JOIN:resetVars}
\Repeat{$p_i \in FD[i].part$}{
\label{JOIN:repeatStart}
\textbf{let} $comConf = getConfig()$\; \label{JOIN:readConfig}
\If{$\noReconfig()$ $\land$ $(|\{p_j: p_j\in comConf \cap FD[i] \land pass[j] = {\sf True}\}| > \frac{|comConf|}{2})$ \label{JOIN:checkConfig}}{
$initVars()$\; \label{JOIN:initVars}
$participate()$\; \label{JOIN:bePartpnt}
} 

\lForEach{$p_j \in FD[i]$}{\textbf{send}$({\sf ``Join"})$} \label{JOIN:send}
}
\label{JOIN:repeatEnd}
}
}
}

{\bf upon receive} $({\sf ``Join"})$ \textbf{from} $p_j \in FD \setminus FD[i].part$ {\bf do} \Begin{
\lIf{$p_i \in config$ $\land$ $\noReconfig() = {\sf True}$\label{JOIN:assessPass}}
{\textbf{send}$(\langle passQuery(), state_i\rangle)$\label{JOIN:sendPass}}
}
{\bf upon receive} $m =\langle pass, state \rangle$ \textbf{from} $p_j \in FD$ {\bf do} \Begin{
\lIf{$p_i \not \in FD[i].part$}{
$\langle pass[j], state[j] \rangle \gets m$} \label{JOIN:receiveJoiner} 
}

\end{footnotesize}
\end{algorithm}



\subsubsection{Algorithm description}
The algorithm is executed by non-participants and participants alike.

\paragraph{The joiner's side.}
Upon a call to the $join()$ procedure, a joiner sets all the entries of its $pass[\,]$ array to $\sf False$ (line~\ref{JOIN:resetPass}) and resets application-related variables to default values, (lines~\ref{JOIN:resetVars}).
The processor then enters a do-forever loop, the contents of which it executes only while it is not a participant (line~\ref{JOIN:checkPart}).
A joiner then enters a loop in which it tries to gather enough support from a majority of configuration members.
In every iteration, the joiner sends a $\sf ``Join"$ request (line~\ref{JOIN:send}) and stores the $\sf True/False$ responses by any configuration member $p_j$ in $pass[j]$, along with the latest application $state$ that $p_j$ has send. 
If a majority of active members has granted a $pass = \sf True$ and there is no reconfiguration taking place, then $participate()$ is called to allow the joining processor to become a participant.

\paragraph{The participant's side.}
A participant only executes the do--forever loop (line~\ref{JOIN:doForever}), but none of its contents since it always fails the condition of line~\ref{JOIN:checkPart}. 
Participants however respond to join requests (line~\ref{JOIN:assessPass}) by checking whether a joining processor has the correct configuration, and whether a reconfiguration is not taking place, as well as if the application can accept a new processor.
If the above are satisfied then the participant sends a $pass = \sf True$ and its applications' $state$, otherwise it responds with $\sf False$. 

\subsubsection{Correctness}
\label{app:join}
The term \emph{legal join initiation} indicates a processor's attempt to become a participant by initiating Algorithm~\ref{alg:join} on line~\ref{JOIN:start}, and not on any other line of the $join()$ procedure. If the latter case occurred it would indicate a corruption to the program counter.

\begin{lemma}
\label{thJ:boundedCorruptJoins}
Consider an arbitrary initial state in an execution $R$. 
There are up to $N$ possible instances of processors introducing corruption to the system.
\end{lemma}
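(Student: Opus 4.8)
The plan is to bound the number of corruption-introducing processors by the global cap $N$ on active processors, by showing that the only way a processor can inject stale information is through its own arbitrary initial local state, and that every processor entering the computation after the start of $R$ is harmless. First I would recall from Section~\ref{s:sys} that at most $N$ processors are live and connected at any time, and that every newly established data link is cleaned by the snap-stabilizing data-link protocol before any join request or pass value is delivered to Algorithm~\ref{alg:join}. Thus a processor arriving after the initial state necessarily performs a \emph{legal join initiation} at line~\ref{JOIN:start}, where it resets its $pass[]$ array (line~\ref{JOIN:resetPass}) and, while it is a non-participant (line~\ref{JOIN:checkPart}), reinitializes all application variables via $resetVars()$ (line~\ref{JOIN:resetVars}); it therefore carries no stale data and its subsequent messages are link-cleaned, so it cannot contaminate the system.

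Next I would characterize the corruption that remains as originating solely in the arbitrary initial state of a processor already present at the start of $R$. Two flavors are relevant: a corrupt program counter that begins $join()$ at a line other than line~\ref{JOIN:start}, thereby skipping the reset of $pass[]$, and corrupt contents of $pass[]$, $state[]$, or $FD[]$ that could satisfy the guard of line~\ref{JOIN:checkConfig} and trigger a spurious $participate()$ (line~\ref{JOIN:bePartpnt}). In both cases the stale information is attributable to exactly one processor's initial state, and only the (at most $N$) processors active at the beginning of $R$ can hold such state.

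Combining the two observations gives the bound: since no post-start arrival contributes corruption and every corruption-introducing processor must be one of the at most $N$ processors active in the initial state, there are at most $N$ such instances. The main obstacle I anticipate is the corrupt-program-counter case: I must verify that a processor which begins mid-procedure and skips line~\ref{JOIN:resetPass} can exploit its stale $pass[]$ only within a bounded window, since the do-forever loop (line~\ref{JOIN:doForever}) forces it back through line~\ref{JOIN:checkPart} and the resets, while incoming link-cleaned passes overwrite the corrupt entries via line~\ref{JOIN:receiveJoiner}. Establishing that this confines each initially-corrupt processor to a single corruption instance---rather than an unbounded sequence---is what secures the count of $N$.
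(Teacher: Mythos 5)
Your proposal is correct and follows essentially the same route as the paper's (much terser) proof: only the at most $N$ processors present in the arbitrary initial state can carry a corrupt program counter or local state that lets them reach $initVars()$/$participate()$ without a legal join initiation, while every later arrival is forced to start $join()$ at line~\ref{JOIN:start} over a snap-stabilized link and is therefore harmless. Your closing worry about a single initially-corrupt processor generating repeated corruption is a reasonable extra check, but the paper counts one instance per such processor exactly as you do, so no gap arises.
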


\begin{proof}
Processors may be found with an uninitialized or falsely initialized local state due to a transient fault in their program counter which allowed them to reach line~\ref{JOIN:initVars} without a legal join initiation.
In an arbitrary initial state, any processor with stale information may manage to become a participant.
There can be up to $N$ such processors, i.e., the maximal number of active processors.
Nevertheless, a processor trying to access the system after this, is forced to start the execution of $join()$ from line~\ref{JOIN:start}.

\end{proof}

\begin{claim}
\label{thJ:noReconfJoin}
Consider any processor $p_i$ performing a legal join initiation. 
In the existence of other participants in the system, this processor never becomes a participant through the $join()$ procedure during reconfiguration.
\end{claim}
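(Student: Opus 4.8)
The plan is to trace the unique control-flow path along which a legally initiated join can promote $p_i$ to a participant, and to show that this path is gated---in fact twice---by the predicate $\noReconfig()$. First I would note that, because $p_i$ performs a legal join initiation, it enters $join()$ at line~\ref{JOIN:start} and thereafter follows the control flow faithfully, so the only statement that can change $p_i$'s status to participant is the call to $participate()$ at line~\ref{JOIN:bePartpnt}. This call lies inside the conditional of line~\ref{JOIN:checkConfig}, whose guard contains the conjunct $\noReconfig()$; hence $p_i$ reaches line~\ref{JOIN:bePartpnt} only in a step where $\noReconfig()$ evaluated at $p_i$ returns ${\sf True}$. I would then invoke the definition of $participate()$ in Algorithm~\ref{alg:disCongif} (line~\ref{ln:participate}), which assigns a non-$\sharp$ value to $\textnormal{config}_i[i]$ only when $\noReconfig()$ holds. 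Since lines~\ref{JOIN:checkConfig} and~\ref{ln:participate} are evaluated within the same atomic step (the step terminates only at the send of line~\ref{JOIN:send}), the two evaluations of $\noReconfig()$ agree, so the promotion takes effect exactly in a step where $\noReconfig()$ at $p_i$ is ${\sf True}$, i.e., where $p_i$'s local view reports no reconfiguration.

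The second part of the argument ties this local guard to the global presence of a reconfiguration, and here I would lean on two mutually reinforcing facts made available precisely by the hypothesis that other participants exist. On the one hand, while a delicate replacement is in progress the configuration members carry active notifications ($\textnormal{\notif}_j[j]\neq dfltNtf$) and broadcast their state through Algorithm~\ref{alg:disCongif} (line~\ref{ln:send}); since $p_i$ receives these broadcasts (line~\ref{ln:receive}) and records the notifications, the notification-related disjunct of $\noReconfig()$ fires at $p_i$ and forces the guard of line~\ref{JOIN:checkConfig} to fail. On the other hand, a configuration member $p_j$ emits a pass to $p_i$ only while its own $\noReconfig()$ is ${\sf True}$ (line~\ref{JOIN:sendPass}); during the reconfiguration the members hold notifications, so no member supplies a fresh ${\sf True}$ pass. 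Because a legal join initiation first resets $pass[\,]$ to ${\sf False}$ (line~\ref{JOIN:resetPass}) and no new ${\sf True}$ passes are delivered while the members reconfigure, $p_i$ cannot assemble the majority of ${\sf True}$ entries that the guard of line~\ref{JOIN:checkConfig} also demands. Either observation alone falsifies the guard.

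The step I expect to be the main obstacle is the asynchronous window between the instant a reconfiguration begins and the instant its notifications reach $p_i$: during that interval $p_i$ might still be holding stale ${\sf True}$ passes granted before the reconfiguration and might not yet have recorded any notification, so a naive reading of the guard of line~\ref{JOIN:checkConfig} could appear to hold. Closing this window is where I would concentrate, arguing that the two gates seal it from both sides---a reconfiguring member stops answering $p_i$'s repeated join requests (line~\ref{JOIN:send}) the moment its $\noReconfig()$ turns ${\sf False}$, so $p_i$ can rely only on the already-stored passes, while the same member's Algorithm~\ref{alg:disCongif} broadcasts, which by fair communication (Section~\ref{s:sys}) are eventually delivered, carry the notification that turns $\noReconfig()$ at $p_i$ to ${\sf False}$. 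Combined with the in-step re-evaluation of $\noReconfig()$ inside $participate()$, there is no step in which the assignment of line~\ref{ln:participate} simultaneously takes effect and coincides with an ongoing reconfiguration, which is exactly the assertion of Claim~\ref{thJ:noReconfJoin}.
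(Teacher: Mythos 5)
Your first part—that the only route to participanthood is the $participate()$ call at line~\ref{JOIN:bePartpnt}, doubly guarded by $\noReconfig()$ there and again inside line~\ref{ln:participate}, and that reconfiguring members neither grant fresh passes nor let the guard of line~\ref{JOIN:checkConfig} hold once their notifications reach $p_i$—matches the paper's reasoning. The gap is in how you close the asynchronous window you yourself single out as the main obstacle. Your closure rests on the claim that the members' notifications, ``by fair communication, are eventually delivered'' and therefore turn $\noReconfig()$ at $p_i$ to ${\sf False}$. Eventual delivery does not exclude the race: in the interval after some member begins a delicate replacement but before any notification-bearing message reaches $p_i$, processor $p_i$'s local $\noReconfig()$ still returns ${\sf True}$ and $p_i$ may still hold a majority of ${\sf True}$ passes granted earlier; in that step both guards are satisfied and $p_i$ does become a participant while a reconfiguration is globally in progress. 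The ``in-step re-evaluation of $\noReconfig()$ inside $participate()$'' does not help, since it re-evaluates the same unchanged local state. So your concluding sentence—that no step exists in which line~\ref{ln:participate} takes effect concurrently with an ongoing reconfiguration—is asserted but not established.

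The paper resolves this differently: it concedes that gathering the majority of passes \emph{can} coincide with the initiation of a reconfiguration, and argues that in this case $p_i$ is to be regarded as a participant of the \emph{previous} configuration—it holds the pre-reconfiguration state and is already known to the previous configuration's members—so the join does not occur ``during reconfiguration'' in any harmful sense. (The paper also notes that passes are continuously propagated and can be retracted once a reconfiguration starts, which narrows but does not eliminate the window.) To repair your proof you would either need to weaken your conclusion to the paper's interpretation of the race, or supply a genuinely new synchronization argument showing the window is empty—which the fair-communication assumption alone cannot give you.
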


\begin{proof}
We consider the situation where participants exist and reconfiguration is taking place, thus $\noReconfig()$ is $\sf False$.
In order for $p_i$ to become a participant, it needs to gather a pass from at least a majority of the configuration members.
This can only happen if a configuration is in place, and if each of these members is not reconfiguring.
Thus if a pass is granted, it must be that during the execution more than a majority of $\sf True$ passes have arrived at $p_i$.
Note that since the propagation of passes is continuous if a reconfiguration starts, then passes can also be retracted.
Finally, since getting a majority of passes can coincide with the initialization of a reconfiguration, we note that due to asynchrony this processor is considered a participant of the previous configuration, since it has full knowledge of the system's state and is also known by the previous configuration members.
\end{proof}


\begin{lemma}
\label{thQ:noReconfByJoiner}
Consider an execution $R$ where Lemma~\ref{thJ:noReconfJoin} holds, such that during $R$, a processor $p$ becomes a participant. 
Then $p$ cannot cause a reconfiguration, unless there exists a majority of the configuration set, or if there is no majority of the $config$ that requires a reconfiguration.
\end{lemma}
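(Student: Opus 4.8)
The plan is to reduce everything to the structure of the Reconfiguration Management layer. By Remark~\ref{rem:reconfThroughInterface}, the only way any participant --- and in particular the newly joined $p$ --- can cause a reconfiguration is through a call to $\configEstab()$ on line~\ref{SSQR:noMajTrigger} or line~\ref{SSQR:needReconfTrigger} of Algorithm~\ref{alg:SSQR}; the joining code of Algorithm~\ref{alg:join} never calls $\configEstab()$ itself. Lemma~\ref{thJ:noReconfJoin} guarantees that $p$ becomes a participant only while $\noReconfig()$ holds and only after collecting passes from a majority of members, so $p$ enters the computation holding the agreed configuration (hence $curConf = getConfig()$ equals the installed $config$ at line~\ref{SSQR:readConfig}) and with an application state set by $initVars()$ (line~\ref{JOIN:initVars}); moreover, $p$'s own flags $noMaj_p[p]$ and $needReconf_p[p]$ are reset to ${\sf False}$ at the start of every iteration (line~\ref{SSQR:ownFlagReset}). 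It therefore suffices to inspect each of the two trigger sites and show that reaching it forces one of the two legitimate events --- the collapse of a majority of $config$, or a majority of $config$ members requesting a reconfiguration.

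First I would treat the trigger on line~\ref{SSQR:noMajTrigger}. Reaching it requires the guard of line~\ref{SSQR:checkNoMajCore}, namely $noMaj_p[p] = {\sf True}$, $|core()| > 1$, and $noMaj_p[k] = {\sf True}$ for every $p_k \in core()$. I would argue by contraposition: if a majority of the configuration set is still active, then the majority-supportive core assumption (Definition~\ref{def:majSupCore}) supplies an active participant $p_s \in core()$ whose failure detector trusts a majority of $config$ and for which $noMaj_p[s] = {\sf False}$ holds throughout $R$. This single witness falsifies the universal conjunct of line~\ref{SSQR:checkNoMajCore}, so $p$ cannot reach line~\ref{SSQR:noMajTrigger}. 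Hence this trigger is attainable only when no majority of $config$ is active, i.e. when a majority has collapsed.

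Second I would treat the trigger on line~\ref{SSQR:needReconfTrigger}, reached only when the guard of line~\ref{SSQR:gracefulQreconf} holds, i.e. when a strict majority of the configuration members that $p$ trusts report $needReconf = {\sf True}$. Here $p$ merely tallies the $needReconf$ values delivered to it by members over the data link (lines~\ref{SSQR:send}--\ref{SSQR:receiveStore}); its own flag is reset each iteration, and by the snap-stabilizing links and the joining discipline (Section~\ref{s:sys} and Lemma~\ref{thJ:boundedCorruptJoins}) a legitimately joined $p$ injects no spurious $needReconf = {\sf True}$ into the system. Thus the guard can hold only when a genuine majority of $config$ members report $needReconf = {\sf True}$, which is exactly the second legitimate event. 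Combining the two cases gives the claim.

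The main obstacle I expect is the Case~1 argument, where I must exclude that the freshly joined $p$ holds or receives a corrupt $noMaj_p[s] = {\sf True}$ for a core member whose true reading is ${\sf False}$, which would let line~\ref{SSQR:checkNoMajCore} succeed even though a majority survives. The leverage is precisely Definition~\ref{def:majSupCore}, which pins some $p_s \in core()$ to $noMaj_p[s] = {\sf False}$ for the whole of $R$; the careful part is verifying that this assumption genuinely applies to $p$'s core once $p$ has joined legitimately --- that $p$'s failure detector and participant set have settled on the true active set, so that the guaranteed witness $p_s$ actually lies in $core()$ and keeps delivering $noMaj = {\sf False}$, thereby blocking the trigger as long as a majority of $config$ remains active.
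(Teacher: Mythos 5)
Your proof is correct, but it takes a genuinely different route from the paper's. The paper argues at the level of \emph{corruption}: if $p$ triggered a reconfiguration outside the two legitimate events, it would have to carry corrupt information either in its local state or in its links; the snap-stabilizing data-link protocol (run before $join()$) rules out the latter, and the reset on line~\ref{JOIN:resetVars} together with the majority acknowledgement of $p$'s state rules out the former, so no spurious trigger is possible. You instead descend to the two trigger sites of Algorithm~\ref{alg:SSQR} and show that each guard can be satisfied only under the corresponding legitimate event --- the guard of line~\ref{SSQR:checkNoMajCore} is blocked by the majority-supportive-core witness $p_s$ with $noMaj_p[s]={\sf False}$ whenever a majority of $config$ survives, and the guard of line~\ref{SSQR:gracefulQreconf} can hold only if a genuine majority of members report $needReconf={\sf True}$, since $p$'s tallies come from clean links and its own flag is reset each iteration. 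Your version is more explicit about \emph{why} a non-corrupt $p$ cannot trigger: the paper's ``no corrupt information'' argument does not by itself exclude a trigger caused by an inaccurate (but uncorrupted) failure detector at $p$, which is precisely what Definition~\ref{def:majSupCore} is needed to rule out; the paper invokes that assumption for the analogous Case~1 of Lemma~\ref{thQ:noAbruptConfigTriggered} but leaves it implicit here. The one caveat is that the majority-supportive-core assumption is not among the stated hypotheses of this lemma, so your Case~1 quietly imports a hypothesis the statement does not announce --- you flag this yourself in your closing paragraph, and it is the right thing to worry about; it should be stated as an explicit assumption if your route is adopted.
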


\begin{proof}
We assume that $p$ enters the computation with a legal join initiation. 
If $p$ triggers a reconfiguration in the absence of the above two cases, then this implies that $p$ has managed to become a participant while carrying corrupt information which have triggered a reconfiguration either directly or indirectly (through Algorithm~\ref{alg:SSQR}).
Corruption can either be local or in the communication links.
Since the snap-stabilizing data-link protocol runs before the processor calls $join()$, this removes data-link corruption for newly joining participants. 
We turn to the case of a corrupt local state.
By the legal join initiation assumption, $p$ must have reset its state on line~\ref{JOIN:resetVars}.
Before joining, the majority of members must acknowledge the latest state of $p$ and $p$ initiates its variables to legal values.
It is therefore impossible that $p$ can become a participant while it carries a corrupt state.
Therefore, $p$ cannot cause a reconfiguration.
\end{proof}

\begin{theorem}
\label{thJ:finalApp}
Consider an arbitrary initial state of an execution $R$ of Algorithm~\ref{alg:join}. 
We eventually reach an execution suffix in which every joining processor $p$ will continue trying to join a participant if the application allows it. 
Additionally, this new processor cannot trigger a delicate reconfiguration before becoming a participant and cannot trigger a delicate reconfiguration without majority loss or majority agreement after it becomes a participant.
\end{theorem}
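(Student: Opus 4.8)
The plan is to establish the three assertions of Theorem~\ref{thJ:finalApp} separately, using the three preceding results as building blocks: Lemma~\ref{thJ:boundedCorruptJoins} to absorb the finitely many corrupt joins inherited from the arbitrary initial state, Claim~\ref{thJ:noReconfJoin} for the pre-participation guarantee, and Lemma~\ref{thQ:noReconfByJoiner} for the post-participation guarantee. The first step is to pin down the suffix. By Lemma~\ref{thJ:boundedCorruptJoins}, at most $N$ processors can carry stale information past line~\ref{JOIN:initVars} without a legal join initiation, and every processor that subsequently tries to enter the system is forced to begin $join()$ at line~\ref{JOIN:start}. Since the snap-stabilizing data-link protocol (Section~\ref{s:sys}) runs before $join()$ is invoked and cleans the intermediate links, and since the do-forever loop resets $pass[\,]$ (line~\ref{JOIN:resetPass}) and the application variables (line~\ref{JOIN:resetVars}) on each pass through the non-participant branch, once these $N$ corrupt instances are consumed we reach an execution suffix in which every join attempt is a legal join initiation. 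This is the suffix claimed by the theorem.

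For the liveness assertion (a joiner keeps trying if the application allows), I would argue directly from the loop structure. A non-participant $p$ satisfies the guard of line~\ref{JOIN:checkPart} and therefore repeatedly sends $\sf ``Join"$ requests (line~\ref{JOIN:send}); each configuration member that is not reconfiguring answers with $\langle passQuery(), state_i \rangle$ (lines~\ref{JOIN:assessPass}--\ref{JOIN:sendPass}), and $p$ records these in $pass[\,]$ (line~\ref{JOIN:receiveJoiner}). The only exit from the loop is the call to $participate()$ (line~\ref{JOIN:bePartpnt}), which the guard of line~\ref{JOIN:checkConfig} enables precisely when $\noReconfig()$ holds and a strict majority of the members reported by $getConfig()$ have returned $pass = \sf True$. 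Hence, conditioned on the application granting passes via $passQuery()$ and on a reachable majority, $p$ eventually becomes a participant; otherwise it perpetually re-sends its request, which is exactly the ``continues trying to join if the application allows it'' behaviour.

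The two no-spurious-reconfiguration assertions follow next. Before $p$ becomes a participant, $p \notin FD[i].part$, so the body of the Reconfiguration Management algorithm (guarded by line~\ref{SSQR:isPartpnt}) never executes for $p$ and $p$ can reach no call to $\configEstab()$; moreover, a joiner never broadcasts $recSA$ state (line~\ref{ln:send} of Algorithm~\ref{alg:disCongif} fires only when $\textnormal{config}[i]\neq\sharp$), so it cannot pollute the message flow either. Combined with Claim~\ref{thJ:noReconfJoin}, which forbids a legal joiner from becoming a participant while $\noReconfig()=\sf False$, this yields the pre-participation claim. For the post-participation claim I would invoke Lemma~\ref{thQ:noReconfByJoiner} directly: a processor entering with a legal join initiation reset its state at line~\ref{JOIN:resetVars} and, immediately before joining, loaded majority-acknowledged values via $initVars()$ (line~\ref{JOIN:initVars}); with the data-links carrying no residual corruption, it holds no stale information capable of forcing $recMA$ to trigger, so any reconfiguration it takes part in must stem from a genuine majority loss or a majority agreement on $evalConf()$.

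I expect the main obstacle to be the asynchronous boundary case already flagged in the proof of Claim~\ref{thJ:noReconfJoin}: a joiner may collect its majority of passes at the very instant a reconfiguration is initiated elsewhere, so the locally evaluated $\noReconfig()$ need not agree with the global state. The delicate point is to argue that even in this race $p$ introduces no conflict --- it is safely regarded as a participant of the configuration into which it was admitted, since it carries the full, majority-acknowledged state and is itself known to those members --- and hence the closure guarantees of Algorithm~\ref{alg:disCongif} (Theorem~\ref{thm:closureThm}) are preserved. Reconciling the local-view reasoning of the joining layer with the global closure property of $recSA$ is where the argument will require the most care.
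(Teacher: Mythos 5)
Your proposal is correct and follows essentially the same route as the paper's proof: Lemma~\ref{thJ:boundedCorruptJoins} to reach the suffix of legal join initiations, a direct liveness argument from the loop structure of $join()$ showing the joiner keeps re-sending requests until the application grants a majority of passes, and Lemma~\ref{thQ:noReconfByJoiner} for the no-spurious-reconfiguration guarantee. Your added observations (that a non-participant fails the guard of line~\ref{SSQR:isPartpnt} and cannot broadcast $recSA$ state, and the race flagged in Claim~\ref{thJ:noReconfJoin}) are sound elaborations of points the paper's proof treats more tersely.
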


\begin{proof}
By Lemma~\ref{thJ:boundedCorruptJoins}, we eventually reach an execution suffix where all joining processors enter the computation with a legal join initiation.
We assume that a reconfiguration is not taking place, that messages sent infinitely often are eventually received infinitely often, and that the application interface invoked by the participating processors allows $p$ to join. 
Then $p$ will eventually have a failure detector including a majority of member processors and will send its $\sf ``Join"$ request to a majority (line~\ref{JOIN:send}).
Since there is no reconfiguration taking place, $p$ must learn the current configuration from Algorithm~\ref{alg:disCongif}, which should agree with the $config$ held by other processors.
Thus each member must grant a pass to $p$ by sending $\sf True$ through line~\ref{JOIN:sendPass}.
Therefore, $p$ will gather a majority supporting its entrance and will eventually satisfy line~\ref{JOIN:checkConfig}.
This allows it to reach line~\ref{JOIN:bePartpnt} and thus $p$ becomes a participant.
Notice that if the application does not give permission of entry via $passQuery()$, then $p$ cannot become a participant unless this changes, but $p$ will continue sending requests.
Finally, Lemma~\ref{thQ:noReconfByJoiner} ensures that the new participant does not cause perturbations to the current configuration, and hence the result. 
\end{proof}


\section{Applications of the Reconfiguration Scheme}
\label{sec:labelCounter}
Using our self-stabilizing reconfiguration scheme, we present a collection of applications designed for more static settings (with a known fixed set of crash-prone processors) and adapt them to be able to run on the more dynamic setting that we describe here.
We first present a general purpose labeling and counter scheme (Sections~\ref{sec:label} and~\ref{sec:counter}) and then proceed to show how to build a  \emph{self-stabilizing} reconfigurable virtually synchronous replicated state machine (Section~\ref{sec:VS}).


\subsection{Labeling Scheme and Algorithm}
\label{sec:label}
Many distributed applications assume access to an unbounded counter, e.g., to provide ballot numbers for consensus in Paxos, tag numbers for distributed shared memory emulation or view identifiers for virtually synchronous reliable multicast~\cite{SSVS}.
An unbounded counter implemented as a 64-bit integer, for example, is practically inexhaustible when initiated at 0 for the lifetime of most known systems.
Transient failures, nevertheless, can immediately drive the counter (\emph{sequence number} or $seqn$) to its maximal value (e.g., $2^{64}$) causing it to wrap. 
Recently, we extended an existing labeling and counter increment scheme 
to enable any processor of a fixed processor set to increment a counter integer attached to an epoch $label$~\cite{SSVS}.
When the counter is exhausted, a new maximal label is used with a non-exhausted $seqn$. 
We now adjust that solution to benefit from our reconfiguration mechanism. 
In this scheme, configuration members are the ones that run the labeling algorithm and maintain a globally maximal label and counter.
The labeling and counter increment algorithms consider every new configuration as a new instance of the corresponding algorithms of~\cite{SSVS}.
To this end, Algorithm~\ref{alg:configLabeling} is a wrapper of the self-stabilizing labeling algorithm of~\cite{SSVS} that retains the initial algorithm as a module (Algorithm~\ref{alg:receiveLabels}) allowing it to cope with reconfiguration. 
We first provide the labeling algorithm, and then extend labels to counters, presenting how counter increments take place.

\begin{algorithm*}[t!]
  \caption{{Self-Stabilizing Labeling Algorithm for Reconfiguration; code for $p_i \in config$}}
%
%
\label{alg:configLabeling}

\begin{footnotesize}

{\bf Variables:} Let $v$ be the size of the configuration $config$ as returned by $getConfig()$.\\
$max[v]$ of label pairs $\langle ml$, $cl \rangle$: $max[i]$ is $p_i$'s largest label pair, $max[j]$ refers to $p_j$'s $max_j[j]$ label pair that was last sent to $p_i$ (canceled when $max[\bullet].cl \neq \bot$).
$storedLabels[v]$: an array of queues of label pairs, where $storedLabels[j]$ holds the labels created by $p_j \in config$. For $p_j \in (config \setminus \{ p_i \})$, $storedLabels[j]$'s queue size is limited to $(v+m)$ w.r.t. label pairs, where $m$ is the maximum number of label pairs that can be in transit in the system. The $storedLabels[i]$'s queue size is limited to $(v(v^2+m))+v$ pairs.

{\bf Interfaces:}
$\noReconfig()$ returns $\sf False$ (from the reconfiguration module) when a reconfiguration is taking place, and $\sf True$ otherwise.
$getConfig()$ returns the current configuration if one exists.
$labelReceiptAction()$ maintains the label arrays by calling the receive function of Algorithm~\ref{alg:receiveLabels}.\\
{\bf Operators:}
$rebuild(v)$ rebuilds the $storedLabels[]$ array of queues and $max[]$ to have $v$ entries. It also adjusts the queue size for the new $v$.
$emptyAllQueues()$ clears all $storedLabels[]$ queues.
$confChange()$ returns $\sf True$ if a reconfiguration has taken place, and $\sf False$ otherwise, by comparing the current label structures with the result of $getConfig()$.

{\bf Macros:}\\ $cleanLP(x) =$ {\bf if} $(\exists \ell \in x.\langle ml,cl \rangle,$ $p_j \not \in curConf):$ $\ell.lCreator = p_j)$ {\bf then return} $\langle \bot, \bot \rangle$ {\bf else return $x$}; 
\label{LAB:cleanLP}


{\bf function} $cleanMax()$ \lForEach{$p_j \in curConf,  \ell \in max[j].\langle ml,cl \rangle :  \ell.lCreator = p_k  \not \in  curConf$}{$max[j] \gets \langle \bot, \bot \rangle$} \label{LAB:cleanMax}

{\bf do forever} \label{LAB:doForever}\Begin{

\If{$\noReconfig() = {\sf True} \land confChange() = {\sf True}$ \label{LAB:uponConfChange}}{
$curConf = getConfig()$\; \label{LAB:newSize}
$rebuild(|curConf|)$\; \label{LAB:rebuild}
$emptyAllQueues()$\;\label{LAB:emptyQs}
$cleanMax()$\;\label{LAB:cleanUponReconf}
$labelReceiptAction(\langle \bot, max[i], p_i \rangle)$\; \label{LAB:findNewMax}
}
}

{\bf upon} $transmitReady(p_k \in curConf \setminus \{ p_i \})$\label{LAB:beginTransmit} \Begin{
\If{$ \noReconfig() = {\sf True} \land confChange() = \sf False$}{{
\bf transmit}$(\langle max[i], max[k] \rangle \gets \langle cleanLP(max[i]) , cleanLP(max[k]) \rangle)$\label{LAB:transmit}}
}

{\bf upon} $receive(\langle sentMax, lastSent \rangle)$ {\bf from} $p_k \in curConf$ \Begin{  \label{LAB:uponReceive}
\If{$ \noReconfig() = {\sf True} \land confChange() = \sf False$}{
$cleanMax()$\; \label{LAB:receiveCleanMax}
$\langle  sentMax , lastSent  \rangle = \langle cleanLP(sentMax) , cleanLP(lastSent) \rangle$\;\label{LAB:cleanReceived}
$labelReceiptAction(\langle sentMax, lastSent, p_k \rangle)$\; \label{LAB:receiveAction}
} 
}

\end{footnotesize}
\end{algorithm*}

\begin{algorithm*}[t!]
%
  \caption{{Self-Stabilizing Labeling Algorithm receipt action; code for $p_i \in config$}}
%
%
\label{alg:receiveLabels}

\begin{small}

{\bf Variables:} For a configuration $config$ with $v = |config|$\\
$max[v]$ of $\langle ml$, $cl \rangle$: $max[i]$ is $p_i$'s largest label pair, $max[j]$ refers to $p_j$'s label pair (canceled when $max[j].cl \neq \bot$).\\

$storedLabels[v]$: an array of queues of the most-recently-used label pairs, where $storedLabels[j]$ holds the labels created by $p_j \in config$. For $p_j \in (config \setminus \{ p_i \})$, $storedLabels[j]$'s queue size is limited to $(v+m)$ w.r.t. label pairs, where $m$ is the maximum number of label pairs that can be in transit in the system. The $storedLabels[i]$'s queue size is limited to $(v(v^2+m))+v$ pairs. The operator $add(\ell)$ adds $lp$ to the front of the queue, and $emptyAllQueues()$ clears all $storedLabels[]$ queues. We use $lp.remove()$ for removing the record $lp \in storedLabels[]$. Note that an element is brought to the queue front every time this element is accessed in the queue.\\
$nextLabel()$ creates a label that is greater than any other label in $storedLabes_i[i]$.\\

%
%



{\bf Notation:} Let $y$ and $y'$ be two records that include the field $x$. We denote  $y$ $=_{x}$ $y'$ $\equiv$ $(y.x$ $=$ $y'.x)$\\

{\bf Macros:}\\
$legit(lp)$ $=$ $(lp$ $=$ $\langle \bullet, \bot \rangle)$~~~\\
$labels(lp)$ $:$ {\Return{$(storedLabels[lp.ml.lCreator])$}}\\
$double(j, lp) = (\exists lp' \in storedLabels[j] : ((lp \neq lp') \land ((lp =_{ml} lp') \lor (legit(lp) \land legit(lp')))))$~~~\\ \label{ln:double}
$staleInfo() = (\exists p_j \in P, lp \in storedLabels[j] : (lp \neq_{lCreator} j)
\lor double(j, lp))$~~~\\ \label{ln:staleInfo}
$recordDoesntExist(j) = (\langle max[j].ml, \bullet \rangle \notin labels(max[j]))$~~~\\
$notgeq(j, lp) = \mathbf{if~}(\exists lp' \in storedLabels[j]$ $:$ 
$(lp'.ml \not \preceq_{lb} lp.ml))$ $\mathbf{then~return}(lp'.ml)$ $\mathbf{else~return}(\bot)$~~~\\
$canceled(lp) = \mathbf{if~}(\exists lp' \in labels(lp)$ $:$ $((lp' =_{ml} lp)$ $\land$ $\neg legit(lp')))$ $\mathbf{then~return}(lp')$ $\mathbf{else~return}(\bot)$~~~\\
$needsUpdate(j)=(\neg legit(max[j]) \land \langle max[j].ml, \bot \rangle \in labels(max[j]))$~~\\
$legitLabels() = \{ max[j].ml : \exists p_j \in P \land legit(max[j]) \}$~\label{ln:legitLabels}~\\
$useOwnLabel()=\mathbf{if~}(\exists lp \in storedLabels[i] : legit(lp))$ $\mathbf{then~}max[i]$ $\gets$ $lp$ $\mathbf{else~}storedLabels[i].add(max[i]$ $\gets$ $\langle nextLabel(), \bot \rangle)$~\label{ln:useOwnLabelDef}
\tcp{For every $lp \in storedLabels[i]$, we pass in $nextLabel()$ both $lp.ml$ and $lp.cl$.}

{\bf function} $labelReceiptAction(\langle sentMax, lastSent, p_k \rangle)$
\Begin{
$max[k]$ $\gets$ $sentMax$\; \label{ln:exposeStore}
\lIf{$\neg legit(lastSent)$ $\land$ $max[i] =_{ml} lastSent$}{$max[i] \gets lastSent$} \label{ln:lastSentCancel}

    \lIf{$staleInfo()$}{$storedLabels.emptyAllQueues()$} \label{LBLln:clean}

     \lForEach{$p_j \in P : recordDoesntExist(j)$}{$labels(max[j]).add(max[j])$} \label{ln:add}

        \lForEach{$p_j \in P, lp \in storedLabels[j] : (legit(lp) \land (notgeq(j,lp)\neq \bot))$}{$lp.cl \gets notgeq(j,lp)$} \label{ln:cancelLabels}

        \lForEach{$p_j \in P, lp \in labels(max[j]) : (\neg legit(max[j]) \land (max[j] =_{ml} lp) \land legit(lp))$}{$lp \gets max[j]$} \label{ln:receivedCanceled}

        \lForEach{$p_j \in P, lp \in storedLabels[j] : double(j, lp)$}{$lp.remove()$} \label{ln:remove}

        \lForEach{$p_j \in P : (legit(max[j]) \land (canceled(max[j])\neq \bot))$}{$max[j] \gets canceled(max[j])$} \label{ln:cancelMax}

    \lIf{$legitLabels() \neq \emptyset$}{$max[i] \gets \langle \max_{\prec_{lb}}(legitLabels()), \bot \rangle$} \label{ln:adopt}

    \lElse{$useOwnLabel()$}  \label{ln:useOwnLabel}

}

\end{small}

\end{algorithm*}

\subsubsection{Description of Algorithm~\ref{alg:configLabeling}}

\noindent{\bf Outline.} 
The algorithm is run only by configuration members.
Each label is marked by its creator's identifier and any two labels are compared first as to their creator identifier and then as to a set of integers using the operator $\prec_{lb}$. 
Labels by the same processor can be \emph{incomparable}.
A processor that is aware of a set of labels with its own identifier, can always create a greater label.
The aim is for members to learn of any valid label in the system and finally result to the globally greatest one, and to this end, members exchange labels.
We refer the reader to~\cite{SSVS} for more details on the label structure. 
The algorithm ignores labels by non-member creators by setting them to $\bot$.
If Algorithm~\ref{alg:disCongif} reports that a reconfiguration is taking place (via $\noReconfig$), no actions are taken.
Upon the completion of a reconfiguration, every member's local label storage is rebuilt to reflect the new configuration set, and all the label queues are emptied. 
Newly joined processors are assumed to join with initialized links and empty label structures and thus cannot introduce corrupt information. 
If a reconfiguration is not reported, member $p$ of the configuration periodically sends and receives its locally maximal labels with all the other members.
Whenever it sends or receives a new label pair, it checks whether this has the identifier of one of the current members. 
The received label pairs are passed to the receive function (Algorithm~\ref{alg:receiveLabels}), which is exactly the same as the one in~\cite{SSVS}.
This always returns a local maximal label either by some other member or by the caller itself. 
%
We underline that for every configuration we can find a greatest label, but it cannot be guaranteed that the label of a configuration will continue being the greatest in a following configuration.

%

\paragraph{Variables.} Processor $p_i$ that belongs to a configuration $config$ with $v=|config|$, has an array $max_i[v]$, where $max_i[i]$ contains the local maximal label $p_i$ knows, and $max_i[j]$ the last value that $config$ member $p_j$ has send.
The array of label queues $storedLabels_i[]$ holds a queue of size $(v(v^2+m))+v$ in $storedLabels_i[i]$ for labels concerning processor $p_i$, and queues of size $v+m$ for all other configuration members in $storedLabels_i[j]$.

\paragraph{Interfaces and Operators.}
The $\noReconfig()$ interface of Algorithm~\ref{alg:disCongif}  returns $\sf False$ when a reconfiguration is taking place, and $\sf True$ otherwise. It also returns the current configuration if one exists via $getConfig()$.
$labelReceiptAction()$ maintains the label arrays by calling the receive function of Algorithm~\ref{alg:receiveLabels}.
$emptyAllQueues()$ clears all $storedLabels[]$ queues.
$confChange()$ returns $\sf True$ if a reconfiguration has taken place, and $\sf False$ otherwise, by comparing the current label structures with the result of $getConfig()$.

\paragraph{During reconfiguration.}  The conditions $ \noReconfig() = {\sf True}$ and  $confChange() = \sf False$ of lines~\ref{LAB:uponReceive} and~\ref{LAB:transmit} prevent transition and reception of labels during reconfiguration and before the label structures have been reset after reconfiguration has taken place.  

\paragraph{After reconfiguration.} Lines~\ref{LAB:uponConfChange}--\ref{LAB:findNewMax} are only executed upon a completed reconfiguration.
Line~\ref{LAB:newSize} gets the new configuration from Algorithm~\ref{alg:disCongif}, and line~\ref{LAB:rebuild} uses $rebuild(config)$ to adjust $max[]$ array so that it holds the entries of any processor that also belonged to the previous configuration, removing the ones by old removed members and adding new fields for the labels of new $config$ members.
Line~\ref{LAB:cleanUponReconf} removes labels from the new $max[]$ that were not created by configuration members. 
The effect is analogous for $storedLabels[]$, where $v+m$-sized label queues are added and removed to reflect the changes in the $config$ composition, but noting that $v$ is now the cardinality of the new configuration set.
These queues are emptied by line~\ref{LAB:emptyQs}.
Finally, the processor finds a new local maximal label either from the ones that it has in $max[]$, or by creating a new one with its one creator identifier (line~\ref{LAB:findNewMax}).

\paragraph{Label exchange and maintenance.} If a reconfiguration is not taking place, then a member periodically sends to every other configuration member $p_j$ its own local maximal label and the last label that $p_j$ sent for the local maximal label.
Note that messages from non-members are discarded before being sent, so they are not propagated (lines~\ref{LAB:beginTransmit}--\ref{LAB:transmit}).

Similarly when $p_i$ receives a message as the one described above, it cleans its $max[]$ array and the received two labels from non-member labels and passes them to the $labelReceptionAction()$ (Algorithm~\ref{alg:receiveLabels}).
The description of the inner workings of the $labelReceptionAction()$ is given in great detail in~\cite{SSVS}. 
As an overview, it first stores $p_j$'s value in $max_i[j]$ and processes it along with the other label that was received which reflected $p_i$'s maximal label that was received most recently by $p_j$.
In general terms, it performs housekeeping of labels in $storedLabels[]$ and $max[]$ such that only the greatest label per processor is considered before a local maximal label is chosen by $p_i$.

\subsubsection{Correctness}
\textbf{Outline.} We first establish that after a full iteration, of Algorithm~\ref{alg:receiveLabels} every configuration member does not sustain and does not introduce  any label that has a creator identifier by a non-member.
It is then possible to map every configuration of an execution to an instance of the fixed-set labeling algorithm of~\cite{SSVS} and thus induce the correctness proof therein.
This ensures that eventually a maximal label is found.
Since the algorithm is aware of configuration changes (that exist after reconfigurations), it can use this events to achieve more efficient convergence for a configuration that follows a reconfiguration (in contrast to one that is in place in an initial arbitrary state). 
We thus reach to the above bounds where $N$ is an upper bound on the system and thus a possible upper bound for the configuration size.

{\em Note about participants that are not members.} Algorithms~\ref{alg:configLabeling} and~\ref{alg:receiveLabels} are run by processors that are strictly \emph{members} of the current configuration and only if a reconfiguration is not taking place.
Members send, receive and take into account only labels that come from and concern processors that are members.
In this perspective, it should not be possible for a non-member processor to add labels to the system in a way that will affect the system.
If a processor for any reason stops being a member after a configuration but remains an active participant, then once reconfiguration takes place, the member processors stop considering labels from this processor and \textit{void} (set to $\langle \bot, \bot \rangle$) any label pair from this processor. 

\begin{lemma}
\label{thL:staleInfo}
Consider an arbitrary starting system state of an execution $R$ where a configuration does not change throughout $R$, a processor $p_i \in config$ reaches a local state where there are no labels from any processor $p_s \not \in config$, immediately after the first iteration of Algorithm~\ref{alg:configLabeling} that includes a receive event. 
Moreover, a label by $p_s \not \in config$ can never be introduced to the local state of $p_i$ at any point after the configuration is established.
\end{lemma}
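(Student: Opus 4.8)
The plan is to isolate a single local invariant and show it is both \emph{established} by the first receive iteration and then \emph{preserved} for the rest of $R$. Define the invariant $I(p_i)$: neither $max_i[\cdot]$ nor any queue in $storedLabels_i[\cdot]$ holds a label pair whose creator is some $p_s \notin config$. The lemma asserts exactly that $I(p_i)$ holds immediately after the first receive and persists thereafter, so the proof splits into these two matching parts. Throughout, I use that the configuration is fixed in $R$, so no reconfiguration occurs and the arrays keep their $config$-sized, member-indexed shape.

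\textbf{Establishment (first part).}
I would trace the receive handler (line~\ref{LAB:uponReceive}). Its first action, $cleanMax()$ (line~\ref{LAB:receiveCleanMax}), voids every entry $max_i[j]$ containing a label created by a non-member, so afterward $max_i[\cdot]$ satisfies the $max$-part of $I(p_i)$. The $cleanLP$ applied to the two received pairs (line~\ref{LAB:cleanReceived}) sets any received pair with a non-member creator to $\langle\bot,\bot\rangle$ before it is consumed. It then remains to clean the queues, which is exactly what the $staleInfo()$ test (line~\ref{ln:staleInfo}), invoked inside $labelReceiptAction$ at line~\ref{LAB:receiveAction}, accomplishes: since the queues are indexed only by members, any lingering label $lp$ created by $p_s \notin config$ that sits in a queue $storedLabels_i[j]$ with $p_j \in config$ necessarily has $lp.lCreator = p_s \neq p_j$, so $staleInfo()$ returns true and triggers $emptyAllQueues()$ (line~\ref{LBLln:clean}). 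I would note here \emph{why the receive event matters}: this queue cleaning lives inside $labelReceiptAction$, which is reached either through receive (line~\ref{LAB:receiveAction}) or through the $confChange()$ branch (line~\ref{LAB:findNewMax}); under a fixed configuration the latter need not fire, so receive is the guaranteed cleaner. The subsequent repopulation (line~\ref{ln:add}) only reinserts entries drawn from the already-cleaned $max_i[\cdot]$, and the final choice of $max_i[i]$ is taken from $legitLabels()$ (member labels) or is freshly created by $useOwnLabel()$ with creator $p_i \in config$. Hence $I(p_i)$ holds at the end of the first receive iteration.

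\textbf{Preservation (second part).}
I would argue $I(p_i)$ is stable under every later action. On the sending side, $cleanLP$ is applied to both components of the outgoing message before transmission (line~\ref{LAB:transmit}), so no member ever emits a non-member label; combined with the receive-side $cleanLP$ (line~\ref{LAB:cleanReceived}), this guarantees that from the first receive onward every pair fed into $labelReceiptAction$ carries a member creator or $\bot$. In particular the stale non-member labels that the bounded channels may initially hold are neutralized on delivery and never enter $p_i$'s stored state, so no separate channel-flushing argument is needed. On the internal side I would verify line by line that $labelReceiptAction$ never fabricates a non-member creator: the assignments at lines~\ref{ln:exposeStore}, \ref{ln:lastSentCancel}, \ref{ln:add}, \ref{ln:cancelLabels}, \ref{ln:receivedCanceled} and~\ref{ln:cancelMax} only copy, cancel, or remove pairs already present (hence member or $\bot$ by the induction hypothesis), while the sole genuine creation, $nextLabel()$ inside $useOwnLabel()$ (line~\ref{ln:useOwnLabelDef}), stamps the creator $p_i \in config$. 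Finally, the reconfiguration framework together with the snap-stabilizing data link ensures a freshly joined processor starts with empty label structures, so it cannot reintroduce a stale creator either. Thus $I(p_i)$ is maintained inductively throughout $R$.

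\textbf{Main obstacle.}
The delicate point is the exhaustiveness of establishment rather than preservation: one must be certain that the three cleaners ($cleanMax$, the received-pair $cleanLP$, and the $staleInfo$-driven $emptyAllQueues$) jointly cover \emph{every} place a non-member label can hide---both $max_i[i]$ and each $max_i[j]$, and every entry of every queue---without relying on the queues already being correctly member-indexed in the arbitrary start state. I would discharge this by observing that $staleInfo()$ keys on the creator/queue-index mismatch, which is forced for any non-member label once the array occupies $config$-sized member-indexed slots, so a single firing of $emptyAllQueues()$ removes all such labels at once; the remaining care is checking that none of the repopulation lines of Algorithm~\ref{alg:receiveLabels} can resurrect a removed non-member label from a cleaned input.
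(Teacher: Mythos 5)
Your overall decomposition (establish an invariant at the first receive, then preserve it) matches the paper's, and your establishment and preservation steps for $max[\cdot]$, for the incoming pairs via $cleanLP$, and for the outgoing messages via line~\ref{LAB:transmit} are all sound. The gap is exactly where you locate the ``main obstacle,'' and your discharge of it does not work. You assume throughout that ``the arrays keep their $config$-sized, member-indexed shape'' because the configuration never changes in $R$; but the lemma starts from an \emph{arbitrary} system state, so $storedLabels_i[\cdot]$ may initially contain a queue indexed by a non-member $p_s$. In that case a label $lp$ created by $p_s$ sitting in $storedLabels_i[s]$ has $lp.lCreator = s$ equal to its queue index, so the mismatch that $staleInfo()$ (line~\ref{ln:staleInfo}) keys on is \emph{not} forced, $emptyAllQueues()$ at line~\ref{LBLln:clean} need not fire, and your argument leaves that label in place. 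Your claim that the mismatch ``is forced for any non-member label once the array occupies $config$-sized member-indexed slots'' is circular: nothing in the receive path re-indexes a corrupted array into that shape.

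The paper closes this case with the mechanism you explicitly set aside: $confChange()$ compares the current label structures against $getConfig()$, so a structurally corrupted $storedLabels[\cdot]$ (one holding an entry for a non-member) makes $confChange()$ return $\sf True$ even though the configuration itself never changes. This both blocks the receive handler (the guard on line~\ref{LAB:uponReceive} requires $confChange() = \sf False$) and forces the do-forever branch at line~\ref{LAB:uponConfChange} to execute $rebuild$, $emptyAllQueues()$ and $cleanMax()$ (lines~\ref{LAB:newSize}--\ref{LAB:cleanUponReconf}), which is what actually restores the member-indexed shape your invariant needs. So you should split establishment into the paper's two cases --- non-member label in a member's queue (handled by $staleInfo()$, as you argue) versus non-member label in its own, illegitimately present queue (handled by $confChange()$ and the rebuild) --- rather than dismissing the $confChange()$ branch as one that ``need not fire.''
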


\begin{proof}
Notice that while the configuration does not change because the reconfiguration module finds that this configuration can serve the system, there could be corruption relating to the local variables of the labeling algorithm.
We first establish that at any point after a complete execution of lines~\ref{LAB:uponReceive}--\ref{LAB:receiveAction}, $p_i$'s local state never contains a label by a non-member $p_s$.
Assume for contradiction that such a label exists at some system state after the execution of these lines.
This label can exist either in (i) the $max[]$ array or (ii) the $storedLabels[]$ array.

Member $p_i \in config$ acknowledges that $p_s \not \in config$, since the configuration is agreed and any inconsistency will cause a reconfiguration in Algorithm~\ref{alg:disCongif}. 
We have already assumed for the purposes of the proof that a reconfiguration does not happen throughout $R$.
Labels arriving from $p_s$ cannot start a receive event by the conditions line~\ref{LAB:uponReceive}. 
If a label created by $p_s$ is received from some processor $p_j \in config$ then this is set to $\bot$ by line~\ref{LAB:cleanReceived}. 
So no incoming labels can enter the local state.

We now consider the system state immediately after line~\ref{LAB:receiveAction} returns from executing Algorithm~\ref{alg:receiveLabels}.
Line~\ref{LAB:receiveCleanMax} of Algorithm~\ref{alg:configLabeling} guarantees that $max[]$ is cleaned of labels that appear as created by $p_s$.
Assume that a label by $p_s \not \in config$ exists in the $storedLabels[]$ structure.
There are two cases depending on the structure of the $storedLabels[]$ array of queues.\\
\noindent {\bf Case 1 --} \textbf{Processor $p_i$ does not have an entry for $p_s$ in $storedLabels[]$.}
This must be true for a legal state, since a member processor should only keep queues in $storedLabels_i[]$ that relate to member processors.
So $p_s$'s label must reside in a queue of $storedLabels_i[]$ that is not intended for $p_s$'s labels. 
In such a case since $p_s \not \in config$, line~\ref{LBLln:clean} of Algorithm~\ref{alg:receiveLabels} will cause the flushing of the queues because $staleInfo() = \sf True$.

\noindent {\bf Case 2 --} \textbf{Processor $p_i$ has an entry for $p_s \not \in config$ in $storedLabels[]$.}
This is the result of transient fault and implies that the label structures queues where not prepared for the new configuration.
Nevertheless, in a complete iteration of lines~\ref{LAB:doForever}--\ref{LAB:receiveCleanMax}, $confChange()$ of line~\ref{LAB:uponConfChange} will return $\sf True$ because of this discrepancy between $config$ and the composition of the label structures.
But this causes the algorithm to move to line~\ref{LAB:uponConfChange} and so execute lines~\ref{LAB:newSize}--\ref{LAB:cleanUponReconf} thus emptying the $storedLabels[]$ and cleaning $max[]$ of $p_s$-created labels.

Therefore, immediately after a receive is completed for Case~1 or after the execution of line~\ref{LAB:cleanUponReconf} in Case~2, there cannot be a label created by $p_s \not \in config$ inside $p_i$'s local state.
Furthermore, we have established that such incoming labels cannot enter $p_i$'s state.
Hence we reach to the result.
It is evident that once $p_i$'s state is cleaned, it cannot transmit any such labels via line~\ref{LAB:transmit}.
\end{proof}

\begin{lemma} 
\label{thL:map2SSVS}
Consider an execution $R$ of Algorithm~\ref{alg:configLabeling} where Lemma~\ref{thL:staleInfo} holds and no reconfiguration takes place throughout $R$. 
It holds that this instance of the problem of providing a self-stabilizing labeling scheme in the dynamic setting can be reduced to the one of the problem of a fixed processor set self-stabilizing labeling scheme problem, where the fixed set is the common configuration. 
\end{lemma}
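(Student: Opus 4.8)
The plan is to exhibit an explicit state-projection (a step-for-step simulation) carrying executions of Algorithm~\ref{alg:configLabeling}, under the stated hypotheses, onto executions of the fixed-set labeling scheme of~\cite{SSVS}, and then to transfer correctness from the latter. Since no reconfiguration takes place throughout $R$, the set returned by $getConfig()$ is a fixed set $config$ of constant cardinality $v=|config|$, and the interface $\noReconfig()$ returns $\sf True$ for every member throughout $R$. I would first record this so that the guards on lines~\ref{LAB:transmit}, \ref{LAB:uponReceive} and~\ref{LAB:uponConfChange} are governed solely by the value of $confChange()$.

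First I would dispose of the reconfiguration-handling block (lines~\ref{LAB:uponConfChange}--\ref{LAB:findNewMax}). Because the configuration is fixed, once a member has executed $rebuild(|curConf|)$ a single time its $max[\,]$ and $storedLabels[\,]$ structures carry exactly $v$ entries indexed by $config$; thereafter $confChange()$ returns $\sf False$ and this block is never re-entered. Hence after a bounded prefix the only enabled transitions of each member are the transmit action (line~\ref{LAB:transmit}) and the receive action (lines~\ref{LAB:receiveCleanMax}--\ref{LAB:receiveAction}), both of which, under the guard $\noReconfig()=\sf True \land confChange()=\sf False$, delegate to $labelReceiptAction()$, i.e.\ to Algorithm~\ref{alg:receiveLabels} verbatim.

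Next I would argue that the remaining wrapper operations are semantically transparent. By Lemma~\ref{thL:staleInfo}, from the first receive event onward no member's local state contains, and no member can receive, a label created by a non-member; consequently the macro $cleanLP(\cdot)$ and the function $cleanMax()$ act as the identity on every label pair actually present (they would only rewrite pairs with a non-member creator, of which there are none), and the restriction ``$p_k \in curConf$'' on the communication partners (lines~\ref{LAB:beginTransmit}, \ref{LAB:uponReceive}) merely reproduces the all-to-all exchange among members assumed in the static model. Thus the projection of each member's configuration onto the fields $max[\,]$ and $storedLabels[\,]$ evolves under precisely the transition relation of~\cite{SSVS} instantiated on the fixed set $config$ with parameter $v$, yielding the desired step-for-step simulation and therefore the claimed reduction.

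The main obstacle will be handling the bounded transient prefix cleanly: before the events guaranteed by Lemma~\ref{thL:staleInfo} and before the single possible $rebuild$ have occurred, the wrapper may still perform a queue flush or a $confChange()$-triggered reset, so the simulation is exact only on a suffix $R''$ of $R$. I would therefore phrase the reduction as a mapping from $R''$ onto a fixed-set execution whose starting state is the (arbitrary, possibly corrupted) projection of $R''$'s first state, and then appeal to the self-stabilization of~\cite{SSVS}, which tolerates an arbitrary initial state. Since every correctness guarantee of the static scheme is a statement about the $max[\,]$ and $storedLabels[\,]$ fields over $config$ --- exactly the fields tracked by the projection --- the guarantees (in particular the eventual emergence of a single globally maximal label among the members) transfer to $R$, completing the reduction.
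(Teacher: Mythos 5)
Your proof is correct and follows essentially the same route as the paper: identify the fixed configuration with the static processor set, use Lemma~\ref{thL:staleInfo} to show the wrapper's cleaning operations and membership filters are transparent, and conclude that each execution projects onto an execution of the fixed-set scheme of~\cite{SSVS}. Your version is more explicit than the paper's (the step-for-step simulation, the one-time $rebuild$ argument, and the handling of the transient prefix via a suffix with arbitrary initial state), but the underlying reduction is the same.
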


\begin{proof}
Note that the fixed-set version allows for processors from a specific non-changing set to crash but not rejoin.
We identify the agreed configuration set to this fixed set of possibly active processors.
Processors of the configuration may crash but may not rejoin.
Algorithms~\ref{alg:configLabeling} and~\ref{alg:receiveLabels} are run only by member processors.
As established by Lemma~\ref{thL:staleInfo}, labels can have the identifier of any of the processors in this set but of no other processor.
The communication links between configuration members are of bounded capacity and no labels from a non-member can be added to the local state of the member processors (since they are not considered by line~\ref{LAB:uponReceive}) nor can be added to the communication links between the processors (line~\ref{LAB:transmit}).
Hence, the algorithm reduces every execution $R$ to an instance of providing self-stabilizing labels. 
\end{proof}
\vspace{.5em}

The corollary follows since we can use the solution to the fixed processor set problem to solve each instance of the problem whenever a reconfiguration is not taking place.
In particular Algorithm~\ref{alg:receiveLabels} and line~\ref{LAB:transmit} comprise of the solution given in~\cite{SSVS} to solve the fixed-set problem.

%
%
%

\begin{corollary}
\label{thL:maxReached}
Consider an execution $R$ of Algorithm~\ref{alg:configLabeling} starting in an arbitrary state.  
While a reconfiguration is not taking place, the solution provided by~\cite{SSVS} can be used to guarantee that a maximal label created by a member of the configuration will eventually be adopted by all active members.
\end{corollary}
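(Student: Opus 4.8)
The plan is to obtain the corollary as a direct reduction to the fixed processor-set labeling scheme of~\cite{SSVS}, stitching together the two preceding lemmas. First I would restrict attention to a suffix of $R$ in which $\noReconfig()$ holds throughout, i.e.\ a window during which no reconfiguration is taking place. Along such a window the value returned by $getConfig()$ is a single, stable configuration $config$ that is common to all active members (this stability is exactly what the $recSA$ correctness, Theorems~\ref{thm:staleFreeExecution} and~\ref{thm:closureThm}, guarantees: no delicate replacement is triggered, so $config$ does not change). This is precisely the hypothesis ``a configuration does not change throughout $R$'' required by Lemma~\ref{thL:staleInfo}.

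Next I would invoke Lemma~\ref{thL:staleInfo} to discharge the only way in which the dynamic wrapper could diverge from a static system: stale labels whose creator is not a current member. The lemma gives that, immediately after the first receive event, every member $p_i$ holds no label created by a non-member, neither in $max[]$ nor in $storedLabels[]$, and that no such label can subsequently re-enter $p_i$'s state; the filtering performed on line~\ref{LAB:transmit} (via $cleanLP$) likewise guarantees that no non-member label is ever placed on a channel between members. Hence, after a bounded prefix, the only labels present in local states and in the bounded channels have creators belonging to $config$.

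With this established I would apply Lemma~\ref{thL:map2SSVS}: the restricted execution is reducible to an execution of the fixed-set scheme of~\cite{SSVS} whose fixed (crash-prone, non-rejoining) processor set is exactly $config$, because the residual computation of every member consists solely of the receive action of Algorithm~\ref{alg:receiveLabels} together with the transmit of line~\ref{LAB:transmit}, which are verbatim the~\cite{SSVS} solution. Since the mapping carries each reachable state of the wrapper to a legitimate starting state of the~\cite{SSVS} system, the self-stabilization guarantee of that algorithm transfers: the system converges to a state in which a single globally maximal label, created by some member of $config$, is stored at every active member and is adopted by all of them, which is the assertion of the corollary.

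The step I expect to be the main obstacle is not the convergence argument itself---that is inherited wholesale from~\cite{SSVS}---but justifying that the reduction is \emph{faithful} in the presence of an arbitrary initial state and bounded channels. Concretely, one must argue that the cleaning operations introduced by the wrapper (lines~\ref{LAB:receiveCleanMax} and~\ref{LAB:cleanReceived}, and the post-reconfiguration resets on lines~\ref{LAB:newSize}--\ref{LAB:findNewMax}) only ever delete or nullify non-member labels, and therefore cannot destroy the~\cite{SSVS} invariants on member labels; otherwise a state the wrapper passes through would map to an over-sanitized state rather than a genuine state of the fixed-set system. The boundedness of the channels (at most $m$ in-transit label pairs) is what keeps the number of spurious labels finite and lets the bounded prefix before stabilization be absorbed into the~\cite{SSVS} recovery bound, yielding the eventual adoption of a member-created maximal label.
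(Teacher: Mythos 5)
Your proposal is correct and follows essentially the same route as the paper: it chains Lemma~\ref{thL:staleInfo} (elimination and non-reintroduction of non-member labels) into Lemma~\ref{thL:map2SSVS} (reduction of the stable-configuration window to the fixed-set scheme of~\cite{SSVS}) and then inherits that scheme's convergence to a member-created maximal label. Your added discussion of the reduction's faithfulness (that the wrapper's cleaning steps only remove non-member labels) is a reasonable elaboration of a point the paper leaves implicit, but it does not change the argument.
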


\noindent The theorem follows. 

\begin{theorem}
\label{thL:finalApp}
\label{thL:uniqueLabel}
Starting in an arbitrary state, Algorithm~\ref{alg:configLabeling} provides a maximal label. If a reconfiguration does not take place then there can be up to $O(N(N^2+m))$ label creations before a maximal label is established. If a reconfiguration takes place then there can be up to $O(N^2)$ label creations.
\end{theorem}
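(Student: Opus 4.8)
The plan is to reduce everything to the fixed-set analysis of~\cite{SSVS} and then account separately for the two starting regimes. First I would dispatch the correctness claim (that a maximal label is eventually provided). By the reconfiguration scheme's convergence (Theorems~\ref{thm:staleFreeExecution} and~\ref{thm:closureThm}), every execution has a suffix during which no reconfiguration is taking place and all members share a common configuration $config$. On such a suffix, Lemma~\ref{thL:staleInfo} guarantees that no label with a non-member creator survives in, or is ever reintroduced into, a member's local state, and Lemma~\ref{thL:map2SSVS} then reduces this reconfiguration-free segment to a fixed-set instance over $config$. Corollary~\ref{thL:maxReached} yields that the~\cite{SSVS} solution makes a maximal (member-created) label be adopted by all active members, which establishes the first sentence of the theorem.

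For the bound in the reconfiguration-free case, I would invoke the complexity analysis of the fixed-set algorithm of~\cite{SSVS} directly, since Lemma~\ref{thL:map2SSVS} exhibits the reduction. The argument is that a fresh label is created only inside $useOwnLabel()$ (line~\ref{ln:useOwnLabelDef}), and only when every label pair in $storedLabels[i]$ is canceled; each such creation consumes at least one previously-present (possibly adversarially planted) label, so the number of creations is bounded by the total amount of stored and in-transit label information that an arbitrary initial state can contain. With $|storedLabels[i]| = (v(v^2+m))+v$ and $v \leq N$, this is $O(N(N^2+m))$ as claimed.

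For the post-reconfiguration bound, the key observation is that a completed reconfiguration forces a clean restart: lines~\ref{LAB:rebuild}--\ref{LAB:findNewMax} rebuild the structures for the new $config$, call $emptyAllQueues()$, run $cleanMax()$, and recompute a maximal label, while the guards on lines~\ref{LAB:transmit} and~\ref{LAB:uponReceive} disable all label transmission and reception during reconfiguration. Hence the fixed-set instance produced by Lemma~\ref{thL:map2SSVS} starts from empty queues, clean links, and no planted labels. The plan is to show that in such a clean start neither the $O(v^2)$ nor the $m$ contribution of the queue bound is triggered, since no adversarial label must be canceled; each of the $v \leq N$ members can then be forced to create at most $O(N)$ labels (one per competing greater-or-incomparable member label it must eventually surpass), giving $O(N^2)$ total creations.

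I expect the last step to be the main obstacle: justifying that a clean (empty-queue, clean-link) start of the~\cite{SSVS} algorithm needs only $O(N^2)$ creations rather than the full arbitrary-start bound. This requires revisiting the convergence dynamics of~\cite{SSVS} to argue that, absent planted incomparable labels, each creator's label is canceled at most a linear number of times before the global maximum stabilizes; I would adapt the corresponding counting argument from~\cite{SSVS} to the specific clean state produced by lines~\ref{LAB:emptyQs}--\ref{LAB:findNewMax}, where the $m$ term drops out precisely because transmission is blocked throughout the reconfiguration.
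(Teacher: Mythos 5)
Your treatment of the correctness claim and of the $O(N(N^2+m))$ bound follows the paper's route (Lemma~\ref{thL:staleInfo}, Lemma~\ref{thL:map2SSVS}, Corollary~\ref{thL:maxReached}, then the complexity analysis of~\cite{SSVS} with $v\leq N$), and that part is fine. The gap is in the post-reconfiguration $O(N^2)$ argument, and it is exactly where you predicted the obstacle would be. Your premise that the reconfiguration produces a start with ``no planted labels'' is false: lines~\ref{LAB:emptyQs}--\ref{LAB:findNewMax} empty the $storedLabels[]$ queues and (together with the snap-stabilizing data links) clear the channels, but $cleanMax()$ (line~\ref{LAB:cleanMax}) voids an entry of $max[]$ only when its creator is \emph{not} a current member. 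Every $max[j]$ entry whose creator happens to be a member of the new configuration survives, possibly carrying arbitrary corrupt label values from the initial state, and line~\ref{LAB:findNewMax} immediately feeds $max[i]$ back into $labelReceiptAction()$. So the instance does not start clean; it starts with up to $N$ planted labels per processor, one per entry of its $max[]$ array.

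This also undercuts your counting step. Under a genuinely clean start the fixed-set scheme would need only $O(N)$ creations in total (fresh labels by distinct creators are totally ordered by creator identifier, so nothing forces repeated cancellation), so your ``$O(N)$ creations per member to surpass competing incomparable labels'' is not grounded in the state the algorithm actually produces. The paper's argument is both simpler and matches the code: after reconfiguration the queues and links contribute nothing, so the only labels that can cancel a processor's chosen maximum and force a fresh $nextLabel()$ call are the surviving $max[]$ entries; there are at most $N$ arrays of at most $N$ entries each, hence at most $N^2$ such labels and $O(N^2)$ creations. To repair your proof, replace the clean-start claim with this inventory of what $cleanMax()$ does \emph{not} erase, and charge each label creation to one surviving $max[]$ entry.
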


\begin{proof}
By Corollary~\ref{thL:maxReached} Algorithm~\ref{alg:configLabeling} reaches a maximal label.
By the results of Dolev et al~\cite{SSVS} the worse case takes place when starting in an arbitrary state which requires at most $O(N(N^2+m))$ label creations, where $m$ is the system's communication link capacity in labels and in our case $N \geq v$.
A processor may create a label with its own identifier the label it considers as maximal is canceled and it knows of no other labels.
Labels that are possibly unknown to some processor and may cancel its maximal label, are found either in the local state ($|max[\bullet]| \leq N$  and $|storedLabels[N]| \leq 2N(N^2 + m-1)$), or in transit between configuration members.
Note that in an arbitrary state, the configuration may happen to be valid and not require a change, although the labels in the system may be corrupt.
This is a worse case scenario that requires possibly $O(N(N^2+m))$ label creations until the maximal label is reached.

Nevertheless, if a reconfiguration takes place, then the snap stabilizing link will clear the $m$ labels found in the communication links and any active participant will empty their local queues in $storedLabels[]$. 
The only source of labels that could possibly force a processor to create new labels in order to find a maximal, are the $max[]$ arrays.
There are up to $N$ such arrays of size at most $N$, and hence $O(N^2)$ possible label creations.
\end{proof}

\subsection{Counter Increment Algorithms}
\label{sec:counter} 
Using the labeling algorithm, we implement a practically infinite self-stabilizing counter, inexhaustible for the lifetime of most known systems (e.g., with upper bound of $2^{64}$).
We first show how to move from labels to counters and then provide a description of the algorithm along with a correctness proof.

\subsubsection{Description}

\begin{algorithm}
%
  \caption{{Self-stabilizing Counter Management Algorithm for Reconfiguration; code for $p_i \in config$}}
%
%
\label{alg:configCounting}

\begin{footnotesize}
(Let $v$ be the size of the configuration $config$ as returned by $getConfig()$).\\
{\bf Variables:}
A counter is a triple $\langle lbl, seqn, wid \rangle$ where $lbl$ is a label of the labeling scheme, $seqn$ is the sequence number related to $lbl$, and $wid$ is the identifier of the creator of this $seqn$. 
A counter pair $\langle mct, cct\rangle$ extends a label pair. $cct$ is a canceling counter for $mct$, such that $cct.lbl \not \prec_{lb} mct.lbl$ or $cct.lbl = \bot$. 
We rename structures $max[]$ and $storedLabels[]$ of Alg.~\ref{alg:receiveLabels} to  $maxC[]$ and $storedCnts[]$ that hold counter pairs instead of label pairs.
Array $maxC[v]$ of counter pairs $\langle mct$, $cct \rangle$: $maxC[i]$ is $p_i$'s largest counter pair, $maxC[j]$ refers to $p_j$'s $maxC_j[j]$ counter pair that was last sent to $p_i$ (canceled when $maxC[\bullet].cct.lbl \neq \bot$).
$storedCnts[v]$: an array of queues of the counter pairs, where $storedCnts[j]$ holds the counters with labels created by $p_j \in config$. 
For $p_j \in (config \setminus \{ p_i \})$, $storedCnts[j]$'s queue size is limited to $(v+m)$ w.r.t. counter pairs, where $m$ is the maximum number of counter pairs that can be in transit in the system. 
The $storedCnts[i]$'s queue size is limited to $(v(v^2+m))+v$ pairs.
\label{CCT:var} \\

{\bf Interfaces:}
$\noReconfig()$ returns $\sf False$ (from the reconfiguration module) when a reconfiguration is taking place, and $\sf True$ otherwise.
$getConfig()$ returns the current configuration if one exists.
$counterReceiptAction((\langle cnt, cnt, id \rangle)$ - executes the function $counterReceiptAction((\langle lbl, lbl, id \rangle)$ of Algorithm~\ref{alg:receiveLabels} adjusted for counter structures and handling counters.
For counter pairs with the same $mct$ label, only the instance with the greatest counter w.r.t. $\prec_{ct}$ is retained. 
In case where one counter is canceled, we keep the canceled.
For ease of presentation we assume that a counter with a label created by $p_i$ in line~\ref{ln:useOwnLabel} of Algorithm~\ref{alg:receiveLabels}, is initiated with a $seqn = 0$ and $wid=i$. 
A call of $counterReceiptAction()$ (without arguments) essentially ignores lines~\ref{ln:exposeStore} and~\ref{ln:lastSentCancel} of Alg.~\ref{alg:receiveLabels}.

{\bf Operators:} 
$rebuild(v)$ rebuilds the $storedCnts[]$ array of queues and $maxC[]$ to have $v$ entries. It also adjusts the queue size for the new $v$.
$emptyAllQueues()$ clears all $storedCnts[]$ queues.
$confChange()$ returns $\sf True$ if a reconfiguration has taken place, and $\sf False$ otherwise, by comparing the current counter structures with the result of $getConfig()$.
$enqueue(ctp)$ - places a counter pair $ctp$ at the front of a queue. 
If $ctp.mct.lbl$ already exists in the queue, it only maintains the instance with the greatest counter w.r.t. $\prec_{ct}$, placing it at the front of the queue. 
If one counter pair is canceled then the canceled copy is the one retained. \label{CCT:operations}\\ 
{\bf Notation:} Let $y$ and $y'$ be two records that include the field $x$. We denote  $y$ $=_{x}$ $y'$ $\equiv$ $(y.x$ $=$ $y'.x)$.\\

{\bf Macros:}\\ $cleanCP(x) =$ {\bf if} $(\exists \ell \in x.\langle mct,cct \rangle.lbl,$ $p_j \not \in curConf):$ $\ell.lCreator = p_j)$ {\bf then return} $\langle \bot, \bot \rangle$ {\bf else return $x$}; \\
	$exhausted(ctp)$ $=$ $(ctp.mct.seqn$ $\geq$ $2^{64})$~~~\\
	$legit(ctp)=(ctp.cct = \bot \rangle )$\\	
	$cancelExhausted(ctp) :$ {$ctp.cct \gets ctp.mct$}\\  
	$cancelExhaustedMaxC() :$ 	\lForEach{$p_j\in config,\ c \in maxC[j]: exhausted(c)$}{$cancelExhausted(maxC[j])$}  
	 \label{CCT:cancExh}
	

{\bf function} $cleanMax()$ \lForEach{$p_j \in curConf,  \ell \in maxC[j].\langle mct,cct \rangle.lbl:  \ell.lCreator = p_k  \not \in  curConf$}{$maxC[j] \gets \langle \bot, \bot \rangle$} 

{\bf do forever} \Begin{

\If{$\noReconfig() = {\sf True} \land confChange() = {\sf True}$}{
$curConf = getConfig()$\; 
$rebuild(|curConf|)$\; 
$emptyAllQueues()$\;
$cleanMax()$\;
$counterReceiptAction(\langle \bot, maxC[i], p_i \rangle)$\; 
}
}

{\bf upon} $transmitReady(p_k \in curConf \setminus \{ p_i \})$ \Begin{
\If{$ \noReconfig() = {\sf True} \land confChange() = \sf False$}{{
\bf transmit}$(\langle maxC[i], maxC[k] \rangle \gets \langle cleanCP(maxC[i]) , cleanCP(maxC[k]) \rangle)$}
}

{\bf upon} $receive(\langle sentMax, lastSent \rangle)$ {\bf from} $p_k \in curConf$ \Begin{  
\If{$ \noReconfig() = {\sf True} \land confChange() = \sf False$}{
$cleanMax()$\; 
$\langle  sentMax , lastSent  \rangle = \langle cleanCP(sentMax) , cleanCP(lastSent) \rangle$\;
$counterReceiptAction(\langle sentMax, lastSent, p_k \rangle)$\; 
} 
}

\textbf{upon request to increment counter $inc()$} \Begin{ 
\lIf{$ \noReconfig() = {\sf True} \land confChange() = \sf False$}{\textbf{return} $incrementCounter(getConfig())$} \label{CCT:incr}}

\end{footnotesize}
\end{algorithm}

\paragraph{Counters.}
The labeling scheme used above, can be used to implement counters.  
The idea is to extend the labeling scheme to handle {\em counters}, where a counter is a triple $\langle label, seqn, wid \rangle$, where $seqn$ is an integer {\em sequence number}, ranging from $0$ to $2^{b}$, where $b$ is large enough, say $b=64$; and $wid$ the processor identifier of the $seqn$ creator (not necessarily the same as the $label$'s creator). 
Specifically, we say that counter $ct_1=\langle \ell_1, seqn_1, wid_1\rangle$ is {\em smaller} than counter $ct_2=\langle \ell_2, seqn_2, wid_2\rangle$, and write that $ct_1 \prec_{ct} ct_2$, if ($\ell_1 \prec \ell_2$), or (($\ell_1 = \ell_2$) and ($seqn_1<seqn_2$)), or $((\ell_1 = \ell_2$) and ($seqn_1 = seqn_2$) and ($wid_1<wid_2))$. 
Note that when processors have the same label, the above relation forms a total ordering and processors can increment a shared counter also when attempting to do so concurrently. 
Also, when the labels of the two counters from the same processor are incomparable, the counters are also incomparable.

\paragraph{Outline.} 
Algorithm~\ref{alg:configCounting} maintains counters as Algorithm~\ref{alg:configLabeling} maintains labels. 
Counter increment for a participant that is not a configuration member is seen in Algorithm~\ref{alg:cntrIncrNonMember}, and for a member, which also bears the responsibility to maintain the maximal counter, in Algorithm~\ref{alg:cntrIncrMember}. 
A participant that wants to increment the counter, first queries the configuration for the maximal counter.
It only needs to consider the responses of the majority, because the intersection property of majorities guarantees at least one member that holds the most recent value of a completed counter increment.
Having this maximal counter, $seqn$ is incremented and written back to the configuration, awaiting for acknowledgments from a majority.
This is, in spirit, similar to the two-phase write operation of MWMR register implementations, focusing on the sequence number rather than on an associated value. 

Configuration members have the extra task of retaining the maximal value and ensuring the convergence property.
To that end, they exchange their maximal counter and update their counter structures in the same way as in the labeling algorithm. 
The maximal counter needs to have the maximal label held by the configuration members and the highest sequence number, breaking symmetry with the writer identifier. 
If this maximal sequence number is {\em exhausted}, members proceed to find a new maximal label, using the maximal sequence number known for this epoch label (possibly 0 if it is a newly created or unused label). 

\paragraph{Variables.} The counter management algorithm (Algorithm~\ref{alg:configCounting}) uses the same structures and procedures as the labeling algorithm, but now with counters instead of labels. 
Specifically we name the $max[\bullet]$ array to $maxC[\bullet]$ and the $storedLabels\bullet]$ array of label queues to $storedCnts[\bullet]$ array of counter queues.
Processors only hold one copy of a counter with the same label, namely the one with the highest $seqn$ (breaking ties with $wid$). 
Processors send and receive pairs of counters $\langle mct, cct \rangle$ where the first is the believed \emph{maximal} counter, and the second a \emph{cancelling} counter.
While $cct=\bot$ then $mct$ is not-cancelled and can be considered as a valid candidate for the local maximal counter.
When a counter becomes exhausted, i.e., its $seqn$ exceeds $2^{b}$, $mct$ becomes canceled by assigning $cct \gets mct$.

\paragraph{Interfaces and Operators.}
$counterReceiptAction(\langle lbl, lbl, id \rangle)$ is an interface to Algorithm~\ref{alg:receiveLabels} which now acts on and maintains counter pairs rather than label pairs.
It concludes by naming a local maximal label.
We do not present this algorithm again as it is essentially the same, less naming ``counter'' in place of ``label".
The rest of the interfaces and operators are detailed at the beginning the algorithms in which they are being used for the first time.

\paragraph{Maintaining a maximal counter.} Algorithm~\ref{alg:configCounting} presents how Algorithm~\ref{alg:configLabeling} is modified to maintain a largest counter (rather than just the largest label), by exchanging the local maximal counters with other configuration members.
The conditions that prevent send/receive during reconfiguration are the same.
Similarly, the above actions do not take place after reconfiguration, until the structures are rebuild and reset in exactly the same way as described by the labeling algorithm.
In addition, the same conditions that prevent send/receive also prevent incrementing the counter (line~\ref{CCT:incr}).

\begin{algorithm*}[t!]
   \caption{Counter Increment for configuration member $p_i \in config$; code for $p_i$}

\label{alg:cntrIncrMember}
\begin{footnotesize}
{\bf Variables, Interfaces and Macros} are found in Algorithm~\ref{alg:configCounting}. \\
\textbf{Operator:} $abort()$ aborts the procedure and returns $\bot.$

{\bf Notation:} Let $y$ and $y'$ be two records that include the field $x$. We denote  $y$ $=_{x}$ $y'$ $\equiv$ $(y.x$ $=$ $y'.x)$.\\
{\bf Macros:}\\ 
	$retCntrQ(ct) :$ {\bf return} $(storedCnts[ct.lbl.lCreator])$\\
	$legitCounters()$ $=$ $\{maxC[j].mct: \exists p_j \in config \land legit(maxC[j])\} $\\
	$getMaxSeq():$ {\bf return} $max_{wid} (\{max_{seqn}(\{ctp:ctp.mct \in legitCounters() \land maxC[i] =_{mct.lbl} ctp\})\})$
	
\BlankLine


{\bf procedure} $incrementCounter(config)$ \Begin{
	$majRead()$\;
	\Repeat{$legit(maxC[i]) \land \neg exhausted(maxC[i])$}{
		$findMaxCounter()$; 
	}
	{\bf let} $newCntr = \langle maxC[i].mct.lbl, maxC[i].mct.seqn + 1, i\rangle$\label{CIM:incrCntr}\; 
	\lIf{$majWrite(newCntr)$} 
		{$maxC[i] \gets newCntr$\; $retCntrQ(maxC[i].mct).enqueue(maxC[i])$}
	\textbf{return} $maxC[i]$\;
}

{\bf procedure} $majRead()$ \Begin{ \label{CIM:majRead}
	\lForEach {$p_j \in config$}{{\bf send} $majMaxRead()$}
	\While{waiting for responses from majority of $config$}{
		{\bf upon receipt of} $m$ {\bf from} $p_j \in config$ {\bf do}\\
			\lIf{$m = \sf Abort$}{$abort()$ \textbf{else} $maxC[j] \gets cleanCP(m)$ \nllabel{CIM:readCntrs}} 
	}
}

{\bf upon request for} $majMaxRead()$ {\bf from} $p_j$ \Begin{
	\If{$\noReconfig()$}{$findMaxCounter()$\; 
	{\bf send} $maxC_i[i]$ {\bf to} $p_j$;}
	\lElse{\textbf{send} $\sf Abort$}
}
\vspace{.1em}
{\bf procedure} $findMaxCounter()$ \Begin{		
		$cancelExhaustedMaxC()$\;
		$counterReceiptAction()$\; 
		$maxC[i] \gets getMaxSeq()$;
}

{\bf procedure} $majWrite(maxC_i[i])$ 
\Begin{
	\lForEach{$p_j \in config$}{{\bf send} $majMaxWrite(maxC_i[i])$} 	
	{{\bf wait for $ACK$ from majority of $config$}\; 
	\lIf{$\sf Abort$ {\bf received}}{$abort()$}}
}

{\bf upon request for} $majMaxWrite(max^j)$ {\bf from} $p_j$ \Begin{ 
	\If{$\noReconfig() = \sf True$}{$maxC_i[j] \gets cleanCT(max_{ct}(max^j, maxC_i[j]))$\nllabel{CIM:getGreatest}\;
	\lIf{$max^j=_{lbl.lCreator}i$}{$storedCnts_i[i].enqueue(maxC[i])$}
	\lIf{$exhausted(maxC_i[j])$}{$cancelExhausted(maxC_i[j])$}
	{\bf send $ACK$ to} $p_j$\;
	}
	\lElse{\textbf{send} $\sf Abort$}
}

\end{footnotesize}
\end{algorithm*} 

\paragraph{Counter increment for configuration members.} Algorithm~\ref{alg:cntrIncrMember} shows how configuration members increment the counter. 
A member $p_i$ first sends a query to all other members requesting the counter that they consider as the global maximum and awaits for responses from a majority.
These counters are gathered and passed to the counter structures (line~\ref{CIM:readCntrs}). 
Using the $counterReceiptAction()$ the algorithm eventually finds the maximal epoch label and the maximal sequence number it knows for this label. 
In other words, it collects counters and finds the counter(s) with the largest global label; there can be more than one such counter, in which case it returns the one with the highest sequence number, breaking symmetry with the sequence number processor ids. 
Then it checks whether this maximal sequence number is {\em exhausted}.
When this is the case, it proceeds to find a new maximal label  until it finds one that is not exhausted and uses the maximal sequence number it knows for this epoch label. 
The processor then increments the sequence number by one, sets its identifier as the writer of the sequence number (line~\ref{CIM:incrCntr}) and sends the new counter to all members, awaiting for acknowledgments from a majority.
Note that read and write requests during reconfigurations are answered with $\sf Abort$.
When the processor requesting the read/write receives an $\sf Abort$ it immediately terminates the increment procedure returning $\bot$.


\paragraph{Counter increment for non-member participants.} 
As per Algorithm~\ref{alg:cntrIncrNonMember}, participants that do not  belong to the configuration, request counters from a majority of configuration members.
They request the greatest with respect to $\prec_{ct}$ counter that is non-exhausted and legit. They then increment and write this to a majority of the configuration.
If at any point during read or write they receive an $\sf Abort$, they stop the procedure and return $\bot$.
The same happens if they do not manage to find a maximal counter after the read.
They can expect though that because of the counter propagation and the correctness of the labeling algorithm, such a counter will eventually appear.

\begin{algorithm*}[t!]
   \caption{Counter Increment for non-member participant $p_i \not \in config$; code for $p_i$}

\label{alg:cntrIncrNonMember}
\begin{footnotesize}

{\bf procedure} $incrementCounter(config)$\label{CtNM:beginIncrement} \Begin{
	\textbf{let} $counters = majRead()$\label{CtNM:qRead}\;
	\If{ $\forall ct \in counter, \exists ct' \in counter: (\neg exhausted(ct')) \land (legit(ct')) \land (ct \preceq_{ct} ct')$}{\textbf{let} $maxCounter = ct'$ \textbf{else let} $maxCounter =  \bot$}
	\If{$maxCntr \neq \bot$}{
	{\bf let} $newCntr = \langle maxCntr.lbl, maxCntr.seqn + 1, i\rangle$\label{CtNM:cntIncr}\; 
	\lIf{$majWrite(newCntr)$\label{algCtNM:qWrite}}{\textbf{return} $maxCntr$ \textbf{else return} $\bot$}
	}
	\lElse{\textbf{return} $\bot$}
}\label{CtNM:end}

{\bf procedure} $majRead()$ \label{CtNM:qReadDef}\Begin{
	\textbf{let} $counters = \emptyset$\;
	\lForEach {$p_j \in config$}{{\bf send} $majMaxRead()$}
	\While{$|counters| \leq \frac{|config|}{2}$ }{\label{CtNM:qDataBookkeep}
		{\bf upon receipt of} $m$ {\bf from} $p_j \in config$ {\bf do}
			\lIf{$m = \sf Abort$}{$abort()$ \textbf{else} $counters.add(m)$} 
	}
\label{CtNM:majReadEnd}
	\textbf{return} $counters$;
}

{\bf procedure} $majWrite(maxCntr)$ \label{CtNM:qWriteSend}
\Begin{
	\lForEach{$p_j \in config$}{{\bf send} $majMaxWrite(maxCntr)$} 
	{\bf if $ACK$ from majority of $config$ received then wait\; else if $\sf Abort$ received then $abort()$\;}
\label{algCt:waitQWrite}
}
\end{footnotesize}
\end{algorithm*} 

\subsubsection{Correctness}
\begin{lemma}
\label{thCNT:convLab}
Starting from an arbitrary state, in an execution $R$ of Algorithm~\ref{alg:configCounting}, configuration members eventually converge to a global maximal label.
\end{lemma}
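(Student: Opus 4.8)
The plan is to prove this lemma by reduction to the correctness of the labeling algorithm (Algorithm~\ref{alg:configLabeling}), exactly in the spirit of Lemma~\ref{thL:map2SSVS}. The starting observation is that Algorithm~\ref{alg:configCounting} is structurally a verbatim copy of Algorithm~\ref{alg:configLabeling}: it renames $max[]$ to $maxC[]$ and $storedLabels[]$ to $storedCnts[]$, stores counter pairs $\langle mct, cct \rangle$ in place of label pairs $\langle ml, cl \rangle$, calls $counterReceiptAction()$ in place of $labelReceiptAction()$ (which by assumption executes the receipt logic of Algorithm~\ref{alg:receiveLabels} with ``counter'' substituted for ``label''), and guards every send, receive and rebuild with the identical $\noReconfig() \land confChange()$ conditions and the identical clean-up macros ($cleanMax()$ and $cleanCP()$ playing the role of $cleanLP()$). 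Hence it suffices to exhibit a label-level simulation and inherit convergence from Corollary~\ref{thL:maxReached}.

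First I would define the projection $\pi$ sending a counter $\langle lbl, seqn, wid \rangle$ to its label $lbl$ and a counter pair $\langle mct, cct \rangle$ to the label pair $\langle \pi(mct), \pi(cct) \rangle$, and extend it pointwise to the arrays $maxC[]$ and $storedCnts[]$ and to the messages in transit. The key order-theoretic fact, immediate from the definition of $\prec_{ct}$, is that $\pi$ refines the label order: the label field is the primary comparison key, so $\pi(ct) \prec_{lb} \pi(ct')$ implies $ct \prec_{ct} ct'$, while two counters with equal labels are separated only by $(seqn, wid)$. Consequently $\pi$ maps the $\prec_{ct}$-greatest counter in any set to (a counter whose projection is) the $\prec_{lb}$-greatest label in the projected set, a legit counter pair to a legit label pair, and the exhaustion cancellation $cct \gets mct$ to the label self-cancellation $cl \gets ml$. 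Every label-manipulating line of $counterReceiptAction()$ --- adding a record to the queue of its creator, canceling via $notgeq$, promoting a received canceled label, removing doubles, adopting $\max_{\prec_{lb}}$ of the legit labels, or minting a fresh label with $nextLabel()$ --- therefore coincides under $\pi$ with the corresponding line of $labelReceiptAction()$; the only operations $\pi$ collapses are those breaking ties among counters sharing a label (the $(seqn, wid)$ comparison and the $enqueue$ ``keep the greatest'' rule), and these never change which labels are present or which label is maximal.

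From this I would conclude that the projected execution $\pi(R)$ is a valid execution of Algorithm~\ref{alg:configLabeling} on the same configuration (the cleaning macros $cleanCP()$ and $cleanMax()$ void any initial counter whose label was created by a non-member, exactly as $cleanLP()$ and $cleanMax()$ do in the labeling algorithm, so the counter analogue of Lemma~\ref{thL:staleInfo} holds). Invoking Corollary~\ref{thL:maxReached} and Theorem~\ref{thL:finalApp} on $\pi(R)$, while no reconfiguration takes place the label components $\{\pi(maxC_j[j])\}_{p_j \in config}$ converge to a single globally maximal label, which is exactly the claim. The main obstacle, and the only place the counter layer is not transparent, is the exhaustion mechanism: a corrupted sequence number in the arbitrary initial state can force $exhausted(\cdot)$ and hence an extra label cancellation with no counterpart in a clean labeling run. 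I would dispose of this by noting that genuine exhaustion requires $2^{64}$ increments and thus does not recur within the execution horizon, so only the boundedly many counters present in the initial state (at most $O(N(N^2+m))$ across local storage and the $O(N^2cap)$-bounded links) can be spuriously exhausted; each such cancellation is precisely a label cancellation from which the self-stabilizing labeling algorithm is already proven to recover, after which a member mints a fresh maximal label via $useOwnLabel()$ and the labeling convergence argument applies unchanged.
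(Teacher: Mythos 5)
Your proposal is correct and takes essentially the same route as the paper's own proof: a reduction to the fixed-set labeling scheme of~\cite{SSVS} via the structural correspondence between Algorithm~\ref{alg:configCounting} and Algorithm~\ref{alg:configLabeling}, combined with the counter analogue of Lemma~\ref{thL:staleInfo} for excluding non-member labels and the observation that corrupted sequence numbers cause only boundedly many spurious exhaustions, each of which is just a label cancellation from which the labeling algorithm is already proven to recover. Your explicit projection $\pi$ makes the reduction more rigorous than the paper's sketch; the only point the paper adds that you omit is that the counter-increment procedures (lines~\ref{CIM:readCntrs} and~\ref{CIM:getGreatest} of Algorithm~\ref{alg:cntrIncrMember}) also write to $maxC[\,]$ and therefore must, and do, clean non-member labels from the counters they handle.
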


\begin{proof}
We note that the result asks that configuration members reach to a global maximal label, and thus it is not assume that they all hold the same $seqn$ and $wid$ corresponding to this label.
We prove in a step-by-step fashion, following the labeling algorithm lemmas and explaining how the new algorithms allow do not affect the correctness.\\
\noindent (i) Initially we note that labels by non-members are in an identical way as in the labeling algorithm excluded and cannot be reintroduced to the system (Lemma~\ref{thL:staleInfo}).
This is immediate for Algorithm~\ref{alg:configCounting}.
Also lines~\ref{CIM:readCntrs} and~\ref{CIM:getGreatest} of Algorithm~\ref{alg:cntrIncrMember} guarantee that counters read during reads or writes are set to $\bot$ if they have labels by processors not in the current configuration.\\
\noindent (ii) In the same way as the labeling algorithm here is reduced the algorithm of \cite{SSVS} we perform the same reduction on the counter algorithm, by running the algorithm on the configuration member set and having each instance of the counter algorithm correspond to the fixed-set case.
\noindent (iii) By this we conclude that we reach to a global maximal label for all active members.
The bounds given by Theorem~\ref{thL:uniqueLabel} are still relevant, since as we have explained, counters with a corrupted $seqn$ and counters with a corrupt label are bounded by the same numbers.
Thus a counter may be adopted but then be exhausted very quickly, because it was initialized near its maximum.
Thus eventually any corrupt counters will be removed.
\end{proof}

\begin{theorem}
\label{thCNT:finalApp}
Starting from an arbitrary state, Algorithms~\ref{alg:configCounting}, \ref{alg:cntrIncrMember} and~\ref{alg:cntrIncrNonMember} eventually establish a monotonically increasing counter within the configuration that they are being executed.
\end{theorem}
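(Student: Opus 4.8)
The plan is to leverage the labeling-layer convergence together with the majority-quorum structure of the increment protocol, reducing the counter monotonicity to the classical two-phase read/write argument over intersecting majorities. First I would isolate a suffix of the execution in which the reconfiguration scheme has stabilized: by Theorem~\ref{thm:staleFreeExecution} and Theorem~\ref{thm:closureThm} the system eventually reaches and stays in a state with a single conflict-free configuration and with $\noReconfig()$ returning $\sf True$ at every member, so that no read or write is answered with $\sf Abort$ and the send/receive guards of Algorithm~\ref{alg:configCounting} are enabled. Within this suffix I would apply Lemma~\ref{thCNT:convLab} to obtain that the configuration members converge to a common global maximal label $\ell^*$; the bounds of Theorem~\ref{thL:uniqueLabel} guarantee that this happens after a bounded number of label creations, after which every counter stored in $maxC[]$/$storedCnts[]$ or in transit between members carries only labels created by current members.

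Next I would argue that, once $\ell^*$ is fixed, the counters bearing label $\ell^*$ are totally ordered by $\prec_{ct}$, since equal labels make the relation a total order on the pair $(seqn, wid)$. I would then establish the invariant that every completed increment strictly advances the globally maximal counter value. A member increment (Algorithm~\ref{alg:cntrIncrMember}) performs $majRead()$ followed by $majWrite(newCntr)$ with $newCntr.seqn = maxC[i].mct.seqn+1$, while a non-member increment (Algorithm~\ref{alg:cntrIncrNonMember}) does the analogous read-then-write. The key step is the intersection property of majorities: any two majorities of the configuration share at least one member, so the counter written and acknowledged by a majority during one completed increment is read by the $majRead()$ of any later increment. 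Combined with $findMaxCounter()$ returning the $\prec_{ct}$-greatest non-exhausted legit counter (via $getMaxSeq()$ and $counterReceiptAction()$), this yields that the observed sequence number is monotonically non-decreasing across completed operations and is strictly increased by each increment, which is exactly the claimed monotonicity.

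To complete the argument I would handle epoch transitions caused by exhaustion. When a counter reaches $seqn \geq 2^{64}$ it satisfies $exhausted(\cdot)$ and is canceled by $cancelExhausted$ (setting $cct \gets mct$), so it can no longer be selected as a legit maximal counter; $findMaxCounter()$ then forces, through the labeling layer ($useOwnLabel()$ and $\max_{\prec_{lb}}(legitLabels())$ in Algorithm~\ref{alg:receiveLabels}), the adoption of a strictly greater epoch label with a fresh $seqn$. Since $\ell^* \prec_{lb} \ell^{**}$ implies every counter with the new label is $\prec_{ct}$-greater than every counter with the old one, monotonicity is preserved across label epochs, and the counter stays practically inexhaustible because only a bounded number of prematurely-advanced sequence numbers can be present initially.

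The main obstacle I expect is the transient-fault bookkeeping at the boundary between the arbitrary initial state and stabilization: a corrupt state may contain counters with inflated $seqn$ near $2^{64}$, counters whose $wid$ or $cct$ fields are inconsistent with their labels, and duplicates, each of which could momentarily violate monotonicity before convergence. The facts I would marshal against this are that Lemma~\ref{thCNT:convLab} (through Lemma~\ref{thL:staleInfo} and the reduction of Lemma~\ref{thL:map2SSVS}) bounds the number of such surviving corrupt counters exactly as in the label case, that $staleInfo()$-triggered flushing together with the exhaustion cancellation removes them after finitely many steps, and that the theorem requires monotonicity only \emph{eventually}; hence monotonicity need only be established on the post-convergence suffix, where the clean total order on $\ell^*$-counters and the quorum-intersection invariant both hold.
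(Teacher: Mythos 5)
Your proposal is correct and follows essentially the same route as the paper's proof: invoke Lemma~\ref{thCNT:convLab} for convergence to a global maximal label, then use the intersection property of majorities in the two-phase read/write to show each completed increment is seen by every later read, with $wid$ breaking ties between concurrent increments. You are somewhat more explicit than the paper about the stabilization suffix, exhaustion-driven epoch changes, and the bounded effect of corrupt initial counters, but these are refinements of the same argument rather than a different approach.
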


\begin{proof}
We first note that Lemma~\ref{thCNT:convLab} ensures that processors reach to a maximal label.
We establish that any two calls to $incrementCounter()$ return a strictly ordered counter value. 
Consider a processor $p_i$ performing a counter increment.
Given that no reconfiguration takes place and thus no aborts take place during counter reads or writes, then if a majority has received the global counter $cnt$ then by the intersection property of majorities, $p_i$ must receive at least one copy of $cnt$ when it reads (line~\ref{CIM:majRead} for Alg.~\ref{alg:cntrIncrMember} and line~\ref{CtNM:qReadDef} for Alg.~\ref{alg:cntrIncrNonMember}).
For Algorithm~\ref{alg:cntrIncrNonMember} if any of the counters collected is legit and not exhausted it is incremented, and written back to a majority of configuration members with the writer's $wid$.
In the members' algorithm, a member can always find a maximal counter (with Algorithm~\ref{alg:configCounting}, even if the majority did not manage to return a legit, non-exhausted counter.
Note that members always hold the greatest counter and discard of the smaller one. 
This ensures that after the write completes, any subsequent call to $incrementCounter()$ will return at least one copy of this new greatest value. 
Concurrent calls to $incrementCounter()$ may return to two processors $p_i$ and $p_j$ the same global maximal counter.
This will be incremented by both to the same $seqn$, but will will be written with different $wid$'s, so assuming $i<j$, any subsequent comparison will give the counter by $p_j$ as the maximal.
Hence the counter algorithm establishes a monotonically increasing counter.
Also note, that for non-members, the algorithm will return a greater counter than the last completed call at the time of the call, or will return a $\bot$, in cases where reconfiguration is taking place, or where it is impossible to find a maximal counter, since the labels have yet to converge (and therefore incomparable counters exist).
\end{proof}


\subsection{Reconfigurable Virtually Synchronous State Machine Replication} 
\label{sec:VS}

The self-stabilizing reconfiguration service together with the self-stabilizing labeling and counter scheme, can extend the capabilities of various applications to run on more dynamic settings.
Virtual synchrony is an established technique for achieving state machine replication (SMR).
We have recently presented a self-stabilizing virtually synchronous SMR algorithm for a fixed set of processors~\cite{SSVS}.
We now discuss how this virtually synchronous SMR solution can benefit from the reconfiguration service to run on more dynamically changing environments.
We note that applications can use the reconfiguration service to guarantee continuous service, only when a delicate reconfigurations take place between periods of steady configuration and not when brute reconfiguration takes place. 
Nevertheless, brute reconfiguration guarantees that, after a transient fault, the service will eventually return to the desired behavior. 

A \emph{view} is a the set of processors with a unique identifier, that allows to the view members to achieve reliable multicast within the view.
The self-stabilizing reconfigurable virtually synchronous SMR task is specified such that any two processors that are together in any two consecutive views will deliver the same messages in their respective views and preserve the same state.
Moreover, the two views can be belong to two different consecutive configurations when the second configuration was the result of delicate reconfiguration. 
A pseudocode is found in Algorithm~\ref{alg:rvs} and a description and correctness proof follows.

\subsubsection{Description} 

%
\paragraph{Obtaining a coordinator.}
The virtual synchrony algorithm of~\cite{SSVS} is coordinator-based and works on the primary component given the \emph{supportive majority} assumption on the failure detectors.
This assumption states that a majority of processors of the (fixed) processor set mutually never suspect some processor on their failure detectors throughout an infinite execution, given that this processor does not crash.
The proof of~\cite{SSVS} establishes that such a supported processor eventually becomes the coordinator throughout the execution.

We modify the definition of supportive majority so that a majority of the configuration members' set ($config$) needs to provide such support to the coordinator.
Out of the current $config$, any $config$ member 
with supporting majority can obtain a counter from the labeling and counter algorithms that are run on $config$. 
The processor with the greatest counter, becomes the coordinator and establishes a \emph{view} reflecting its own local failure detector, and with the counter as its view identifier{\footnote{We may occasionally refer to view members and this should not be confused with configuration members, although a majority of configuration members must belong to the view}}.
In doing so it also collects the states from view members as well as undelivered messages and synchronizes to create the most up-to-date state for the view members to replicate.
The coordinator changes the view when the view set does not reflect its failure detector set.
View members follow the view composition and state of the coordinator.
Upon coordinator collapse, the same process provides a new coordinator and preserves the state. 
Note that the correctness for this is immediate from the correctness proofs of~\cite{SSVS}.

\paragraph{Coordinator-controlled joins.}
By the joining protocol, before a joiner tries to become a participant, it has any application-related local data set to $\bot$. 
The coordinator is the one that controls whether the application may or may not allow joining processors to become participants.
Several approaches may be used to give permission to joiners.
The coordinator may allow joining whenever it detects that more participants are required, by raising a flag and warning $config$ members that they can allow access to the application. 
This can be easily applied to our current joining protocol (Algorithm~\ref{alg:join}) where configuration members implement the $passQuery()$ interface, simply by returning their most recent $\sf True/False$ value they have for the coordinator's flag.
Another approach would be for a configuration member to provide the coordinator's details to the joiner so that joiners may, by directly communicating with the coordinator, gain permission to become participants.

Note that in both cases, joiners may become participants but they have yet to gain access to the application.
This takes place if they are part of the coordinator's FD and are subsequently included in the next view.
In becoming part of the view they also acquire a copy of the most recent state and begin state replication.
This gives leverage to the application when controlling the number of processors that are allowed to access the application.

\begingroup
\LinesNumberedHidden
\begin{algorithm}[t!]

\caption{Coordinator-led delicate reconfiguration; code for processor $p_i$}
\label{alg:coordUpper}

\begin{footnotesize}

\nlset{1} \noindent {\bf Interfaces:}
$needDelicateReconf()$ returns $\sf True/False$ on whether $p_i$ is the coordinator and whether application criteria requires and is ready for a  a reconfiguration.
\label{coordUpper:interfaces}

%
%
%
\tcc{Replaces line~\ref{SSQR:gracefulQreconf} of Reconfiguration Management layer}
\vspace{.3em}

\nlset{17} \uElseIf{$needDelicateReconf()$\label{coordUpper:delicate}}{
}

\end{footnotesize}
\end{algorithm}
\endgroup

\paragraph{Coordinator-initiated reconfiguration.}
The coordinator can be given the authority to initiate delicate reconfigurations.
This implies that there is no need for the prediction functions of a majority of processors to support a reconfiguration before it can take place, but it only suffices for a coordinator to be in place, in order to take the decision for reconfiguration based on application-specific criteria and suspend changes to the state during reconfiguration. 
To this end, line~\ref{SSQR:gracefulQreconf} of Algorithm~\ref{alg:SSQR} that triggers the delicate reconfiguration in Algorithm~\ref{alg:disCongif} is replaced with line~\ref{coordUpper:delicate} of Algorithm~\ref{alg:coordUpper}.
The $needReconf$ flag is rendered unnecessary for this cause.

\begin{algorithm*}[t!]

   \caption{Self-stabilizing reconfigurable VS SMR; code \textbf{for participant} $p_i$}
\label{alg:rvs}
\begin{footnotesize}


\noindent {\bf Interfaces:}
$fetch()$ next multicast message, 
$apply(state, msg)$ applies the step $msg$ to $state$ (while producing side effects), 
$synchState(replica)$ returns a replica consolidated state, 
$synchMsgs(replica)$ returns a consolidated array of last delivered messages, 
$inc()$ returns a counter from the increment counter algorithm,
$getConfig()$ returns the latest configuration and $\noReconfig()$ returns $\sf True/False$ on whether a reconfiguration is \emph{not} taking place (Algorithm~\ref{alg:disCongif}),
$evalConf()$ is the reconfiguration prediction function returning $\sf True/False$.
\nllabel{VSln:inter} 

\noindent {\bf Variables:} \nllabel{VSln:var}
The following arrays consider both $p_i$'s own value (the $i$-th entry) and $p_j$'s most recently received value (the $j$-th entry).  
$\rep[\bullet]=\langle view$ $=$ $\langle ID$, $set \rangle$, 
$status$ $\in$ $\{{\sf Multicast}$, ${\sf Propose}$, ${\sf Install}\}$, 
$(multicast$ $round$ $number)$ $rnd$, 
$(replica)$ $state$, $(last$ $delivered$ $messages)$ $msg[n]$ $(to$ $the$ $state$ $machine)$, $(last$ $fetched)$ $input$ $(to$ $the$ $state$ $machine)$, 
$propV$ $=$ $\langle ID$, $set \rangle$, $(no$ $coordinator$ $alive)$ $noCrd$, 
$(message$ $delivery)$ $suspend \rangle$ : an array of state variables. 
$FD[j].crd$ returns the last reported identifier of $p_j$'s local coordinator which is continuously propagated using the token exchange (see Section~\ref{s:sys}). It is $\bot$ if $p_j$ has no coordinator. 

%
\textbf{Function} $needDelicateReconf() = $ $(reconfReady = {\sf True} \land valCrd = p_i \land evalConfig()=\sf True)$\; 
 \nllabel{VSln:go4reconf}

\noindent {\bf Do forever} \Begin{

\textbf{let} $curConf = getConfig()$; \tcp{Gets latest \config ~set}
{\bf let} $seemC‏rd$ $=$ $\{ p_\ell$ $=$ $\rep[\ell].propV.ID.wid$ $\in$ $FD.part \cap curConfig$ 
$:$ $(|\rep[\ell].propV.set|$  $>$ $\lfloor |curConf|/2\rfloor)$ 
$(|\rep[\ell].FD.part|$ $>$ $\lfloor config/2\rfloor)$ 
$\land$ 
$( p_\ell \in$ $\rep[\ell].propV.set)$ $\land$ 
$(p_k \in$ $\rep[\ell].propV.set$ $\leftrightarrow$ $p_\ell$ $\in$ $FD[k].part)$ 
$\land$ $((\rep[\ell].status$ $=$ ${\sf Multicast})$ $\rightarrow$ $(\rep[\ell].(view$ $=$ $propV) \land crd(\ell)=\ell))$  $\land$ $((\rep[\ell].status = {\sf Install})$ $\rightarrow$ $FD[\ell].crd$ $=\ell)\}$\; \nllabel{VSln:seemCrd}

{\bf let} $valCrd$ $=$  $\{ p_\ell$ $\in$ $seemC‏rd$ $:$ $(\forall p_k$ $\in$ $seemC‏rd$ $:$ $\rep[k].propV.ID$ $\preceq_{ct}$ $\rep[\ell].propV.ID) \}$\; \nllabel{VSln:valCrd}

$noCrd$ $\gets$ $(|valCrd|$ $\neq$ $1)$; $FD[i].crd \gets valCrd$\; \nllabel{VSln:noCrd}

\lIf{$(valCrd$ $=$ $\{p_i\} \land status={\sf Multicast} \land reconfReady = {\sf True})$\nllabel{VSln:updSuspMult}}{$suspend \gets (reconfReady \gets evalConfig())$\nllabel{VSln:noCrdSusp}} 
\lElseIf{$(valCrd \neq \{p_i\}) \land (valCrd=\{p_\ell\} \land state[\ell].status\in \{ \sf Propose, Install\})$}{$suspend \gets (reconfReady \gets {\sf False})$\nllabel{VSln:updSuspNoCrd}}
\lIf{$\noReconfig() = \sf False$}{$suspend \gets \sf True$ \nllabel{VSln:suspOnRecon}} 

\lIf{
$(|FD.part\cap curConf|>\lfloor |curConf|/2\rfloor)$ 
$\land$ 
$(((|valCrd|$ $\neq$ $1)$ $\land$ $(|\{ p_k \in FD.part $ $:$ $p_i$ $\in$ $FD[k].part$ 
$\land$ $\rep[k].noCrd \}|$ $>$ $\lfloor |curConf|/2\rfloor))$ 
$\lor$ $((valCrd = \{p_i\})$ $\land$ $(FD.part \neq propV.set)$ $\land $
$(|\{p_k\in FD.part:$ $\rep[k].propV = propV\}|$ $> \lfloor |curConf|/2 \rfloor)))$
$\land$ $\noReconfig()$
}
{$(status, propV)$ $\gets$ $({\sf Propose}$, $\langle inc()$, $FD.part\rangle)$} \nllabel{VSln:incrCntr}

\ElseIf{$(valCrd$ $=$ $\{p_i\})$ $\land$ $(\forall$ $p_j$ $\in$ $view.set$ $:$ $\rep[j].(view$, $status$, $rnd)$ $=$ $(view$, $status$, $rnd))$ $\lor$ $((status$ $\neq$ ${\sf Multicast})$ $\land$ $(\forall$ $p_j$ $\in$ $propV.set$ $:$ $\rep[j].(propV,status)=(propV,{\sf Propose}))$ \nllabel{ln:switch}}{
\If{$status={\sf Multicast} \land reconfReady = {\sf False}$\nllabel{VSln:mulC}}{
$apply(state, msg)$; 
$suspend \gets evalConf()$\; \nllabel{VSln:evalSuspend}
\If{$(reconfReady \gets$ $(\forall p_k \in view.set:rep[k].suspend = \sf True)) = \sf False) \lor (\noReconfig()=\sf True)$ \nllabel{VSln:setReconfReady}} 
{
$input\gets fetch()$\; \nllabel{VSln:fetchCrd}
\lForEach{$p_j \in P$}{{\bf if} $p_j \in view.set$~{\bf then} $msg[j]\gets \rep[j].input$ {\bf else} $msg[j]\gets \bot$\nllabel{ln:collect}}
$rnd \gets rnd+1$\; \nllabel{VSln:rndIncr}
}
}
\lElseIf{$status={\sf Propose}$\nllabel{VSln:proC}}{$(state,status, msg)\gets (synchState(\rep), {\sf Install}, synchMsgs(\rep))$}
\lElseIf{$status={\sf Install}$\nllabel{VSln:insC}}{$(view, status, rnd, suspend, reconfReady)\gets (propV,{\sf Multicast}$, 0, $\sf False, False$)}}

\ElseIf{$valCrd=\{p_\ell\} \land \ell\neq i \land ((\rep[\ell].rnd = 0 \lor rnd < \rep[\ell].rnd \lor \rep[\ell].(view\neq propV))$\nllabel{VSln:repF}}{
\If{$\rep[\ell].status={\sf Multicast} \land suspend = {\sf False}$\nllabel{VSln:optCond}}{
$\rep[i] \gets \rep[\ell]$\tcc*{also adopts suspend flag}\nllabel{VSln:replicate} 
$apply(state,\rep[\ell].msg)$\tcc*{for the sake of side-effects} \nllabel{VSln:applyF}
\lIf{$state[\ell].suspend = \sf False$}{$input \gets fetch()$ \nllabel{ln:fetchFol}}
}
\lElseIf{$\rep[\ell].status = {\sf Install} $\nllabel{ln:adoptRep}}{$\rep[i] \gets \rep[\ell]$}
\lElseIf{$\rep[\ell].status = {\sf Propose}$}{$(status, propV) \gets \rep[\ell].(status, propV)$}
\nllabel{VSln:adoptProp}
}



{\bf let} $sendSet$ $=$ $(seemC‏rd$ $\cup$ $\{ p_k$ $\in$ $propV.set$ $:$ $valCrd$ $=$ $\{ p_i \} \}$ $\cup$ $\{ p_k$ $\in$ $FD$ $:$ $noCrd$ $\lor$ $(status$ $=$ ${\sf Propose}) \})$ \nllabel{VSln:sendSet}

\lForEach{$p_j$ $\in$ $sendSet$}{$send(\rep[i])$} \nllabel{VSln:send}
} 

\noindent {\bf Upon message arrival} $m$ {\bf from} $p_j$ {\bf do} $\rep[j] \gets m$\; \nllabel{VSln:receive}

\end{footnotesize}
\end{algorithm*}

\paragraph{Before reconfiguration.}
In order to initiate a reconfiguration that will not result in loss of the state of the replicas, the coordinator must ensure that all view processors have the most recent state and that the exchange of messages is suspended.
We require that for the state to survive to the first view of the next configuration, at least one replica 
of the last view before reconfiguration must not crash. 

When the coordinator is informed by its prediction function that reconfiguration is required, it performs a multicast round to gather the most recent state and received messages, but also raises a $suspend$ flag that it propagates with the current state. 
It waits until all view members return their current state and messages and that they have suspended new messages.
Upon receiving this information it performs a final multicast round to ensure that the last state with the messages have been received and applied and then calls the reconfiguration through Algorithm~\ref{alg:SSQR}. 
Note that in case of coordinator crash, a new view coordinator needs to be established before the reconfiguration can take place, which adds an extra delay.


\paragraph{After reconfiguration.}
Once the labeling and increment counter algorithm have stabilized to a new max counter any member of the new configuration has access to a counter value in order to become the coordinator.
Processors propagate their state which can be either the last before reconfiguration or $\sharp$ in case they are newly joining processors.
The new coordinator will synchronize the states and messages and establish the new view so that the service can continue, and will also resume the application for new messages to be fetched.
This is in essence a mere view installation with nothing additional to the algorithm of~\cite{SSVS}.

\paragraph{Providing service.} 
Inside the view, the basic functionality of the algorithm of~\cite{SSVS}, i.e., reliable multicast and state replication, are not in any way obstructed by the underlying reconfiguration service when a reconfiguration is not taking place. 
Additionally, view changes are not affected when reconfiguration is not taking place.



\subsubsection{Correctness}
We establish the correctness of the algorithm, by extending the correctness results of~\cite{SSVS}. 
We first prove that the proposed service (Algorithm~\ref{alg:rvs}), when working in an established configuration, is an execution of the algorithm of~\cite{SSVS} and thus the correctness for that case follows.
We then consider the stabilization of the application after a reconfiguration, and conclude by establishing that state replication is unaffected as to its correctness when a delicate reconfiguration that was initiated by the coordinator takes place.

\begin{lemma}
\label{thVS:noStaleRecon}
Starting in an arbitrary state in an execution $R$, stale information can only cause a single reconfiguration throughout $R$, unless the prediction mechanism $evalConfig()$ allows it. 
\end{lemma}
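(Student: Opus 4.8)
The plan is to show that the \emph{only} avenue for a delicate reconfiguration in the replicated state machine is a call to $\configEstab()$ issued through the management layer's replaced branch (line~\ref{coordUpper:delicate}), whose guard $needDelicateReconf() = (reconfReady = {\sf True} \land valCrd = p_i \land evalConfig() = {\sf True})$ already forces $evalConfig() = {\sf True}$. Hence any reconfiguration that is \emph{not} sanctioned by the prediction mechanism can only arise from stale information that circumvents this guard, and I would enumerate its possible sources exactly as in the proofs of Lemmas~\ref{thQ:reachSteady} and~\ref{thQ:noAbruptConfigTriggered}: a corrupt program counter, a corrupt value of the state variable $reconfReady$, stale $suspend$ flags held locally or in transit inside the $\rep$ arrays, and corrupt state feeding the local re-derivation of $valCrd$. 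The goal is to argue that, jointly, these contribute at most one unwarranted call to $\configEstab()$, in contrast to the $O(N^2cap)$ bound of Lemma~\ref{thQ:noAbruptConfigTriggered}, the improvement coming from the coordinator-led structure.

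First I would dispose of the program-counter corruption. As in Case~1 of Lemma~\ref{thQ:reachSteady}, a counter that starts past the guard can force a single spurious trigger, but it is a one-time artifact of the initial state: after a processor's first pass through the do-forever loop the counter sits at the top of the loop. Such a trigger starts a reconfiguration during which $\noReconfig()$ returns ${\sf False}$ and blocks every further trigger (Remark~\ref{rem:reconfThroughInterface}), and by the closure guarantee of Algorithm~\ref{alg:disCongif} (Theorem~\ref{thm:closureThm}) concurrent triggers converge to one installed configuration. Because $needDelicateReconf()$ additionally demands $valCrd = p_i$, and $valCrd$ is recomputed from $seemCrd$ on \emph{every} iteration (lines~\ref{VSln:seemCrd}--\ref{VSln:valCrd}) rather than stored persistently, only the unique valid coordinator can issue the trigger once the $\rep$ states are recomputed.

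Next I would handle the $reconfReady$ and $suspend$ fields. The key observation is that $reconfReady$ is reset to ${\sf False}$ at every view installation (line~\ref{VSln:insC}) and is thereafter re-derived (lines~\ref{VSln:evalSuspend} and~\ref{VSln:setReconfReady}), so that $reconfReady = {\sf True}$ entails that the coordinator's own $suspend$, set by $evalConf()$ on each ${\sf Multicast}$ round, is ${\sf True}$ and has been echoed by all view members. Hence, once the first reconfiguration has cleared the initial corruption, $reconfReady = {\sf True}$ implies $evalConfig() = {\sf True}$. Stale $suspend = {\sf True}$ entries lingering in the $\rep$ arrays or in the bounded channels cannot by themselves force $reconfReady$ true, because line~\ref{VSln:setReconfReady} quantifies over all view members, including the coordinator's own entry, which is overwritten by $evalConf()$ on every ${\sf Multicast}$ round; thus such stale values drain out without producing an unwarranted trigger.

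Combining these, at most one reconfiguration is attributable to the arbitrary initial state, after which every delicate reconfiguration is gated by a genuine $evalConfig() = {\sf True}$, which is exactly the ``unless'' clause of the statement. The main obstacle I anticipate is reconciling asynchrony with the coordinator-uniqueness argument: a priori, program-counter corruptions at distinct processors taken at widely separated times could masquerade as a sequence of reconfigurations. Closing this gap relies on the counter/labeling layer stabilizing to a unique maximal counter (Lemma~\ref{thCNT:convLab} and Theorem~\ref{thCNT:finalApp}), so that $valCrd$ eventually singles out a single coordinator, together with the fact that every still-corrupt participant must take a participating, and hence state-cleaning, step during the first reconfiguration, ensuring that no residual corruption survives to spawn a second unsanctioned reconfiguration.
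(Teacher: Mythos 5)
Your proposal is correct in substance and rests on the same two pillars as the paper's proof --- the guard $needDelicateReconf() = (reconfReady = {\sf True} \land valCrd = p_i \land evalConfig() = {\sf True})$ already embeds the ``unless'' clause, and any unsanctioned trigger must therefore circumvent it via corruption that is flushed after at most one firing --- but you organize the argument differently. The paper decomposes by the \emph{role} of the triggering processor: a non-coordinator that does not believe itself coordinator (blocked by $valCrd \neq p_i$), a non-coordinator that falsely believes itself coordinator (can fire once, then must re-establish its alleged coordinatorship and in doing so resets $suspend$ via line~\ref{VSln:updSuspNoCrd}, while the underlying virtual-synchrony argument eventually strips it of that belief), and a genuine coordinator with corrupt $suspend$/$reconfReady$ (whose firing still requires $evalConfig()={\sf True}$ and is thus indistinguishable from a legitimate one). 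You instead decompose by the \emph{source} of corruption (program counter, $reconfReady$, $suspend$ copies in state and channels, and the inputs to $valCrd$), importing the program-counter analysis from Lemma~\ref{thQ:reachSteady}, which the paper's proof of this lemma does not revisit since it is handled at the $recMA$ layer. The two decompositions buy slightly different things: the paper's role-based cases make it immediate that only (alleged) coordinators are ever dangerous, whereas your source-based enumeration makes the one-shot draining of each corrupt field more explicit. Neither argument fully pins down whether several distinct falsely-believing processors could each contribute one spurious trigger (so that ``a single reconfiguration'' should really read ``a bounded number attributable to the initial state''); you at least name this obstacle and sketch closing it via counter stabilization and the state-cleaning forced by the first reconfiguration, which is no weaker than what the paper offers.
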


\begin{proof}
We note that stale information relating to reconfiguration consist of $suspend$ and the copies of $suspend$ exchanged with every other processor, and $reconfReady$, as well as $valCrd$ that may result from stale information as per lines~\ref{VSln:seemCrd} and~\ref{VSln:valCrd}. 
We pay attention to when the conditions are satisfied for the $needDelicateReconf()$ interface (line~\ref{VSln:go4reconf}). 
This is called by a processor that believes itself to be the coordinator ($valCrd = p_i$), in order to initiate a delicate reconfiguration in the modified Algorithm~\ref{alg:coordUpper}.
The three conditions are:
\emph{(i)} $evalConfig()$ returns $\sf True$,
\emph{(ii)} All processors of the view have $suspend = \sf True$,
\emph{(iii)} The caller believes itself to be the coordinator, i.e., $valCrd = p_i$.
We will refer to a coordinator as one who has established its view in the primary component of the $config$, and other $config$ members follow this processor in performing state replication.
We study the following three complementary cases.\\
{\bf Case 1 -- A non-coordinator $p_i$ that does not believe itself to be the coordinator.} (I.e., $valCrd \neq p_i$) In this case, $p_i$ cannot initiate a reconfiguration since condition (iii) will always fail.\\
{\bf Case 2 -- A non-coordinator $p_i$ that believes it is the coordinator.} (I.e., $valCrd = p_i$ locally for $p_i$, but $valCrd \neq p_i$ for a set of other processors greater than the majority of the $config$.) In~\cite{SSVS} processors such as $p_i$ are proved to eventually stop believing to be the coordinator due to propagation of information assumption.
Nevertheless, before information may be propagated, such a processor's stale information, can cause a reconfiguration if the conditions are satisfied. 
Note that after reconfiguration takes place, this processor needs to again establish its (alleged) coordinatorship before it reconfigures again and in the process it needs to reset its $suspend$ variables (line~\ref{VSln:updSuspNoCrd}.\\
{\bf Case 3 -- A coordinator $p_i$ with corrupt initial state}, cannot cause a reconfiguration in case $evalConfig() = \sf False$. 
It may cause a reconfiguration if also satisfies condition \emph{(ii)} due to corruption and $evalConfig() = \sf True$.
This cannot be distinguished from the non-transient case.
As already noted, after reconfiguration takes place, this coordinator needs to again establish its coordinatorship and in the process it needs to reset its $suspend$ variables (line~\ref{VSln:updSuspNoCrd}).
\end{proof}

\begin{lemma}
\label{thVS:reduction}
Consider an execution $R$ of Algorithm~\ref{alg:rvs} in which the following hold throughout: (i) the supportive majority assumption, (ii) no reconfiguration takes place (and therefore a valid configuration $config$ is in place) and (iii) $evalConfig()=\sf False$.
Then this execution is a reduction to the execution of the self-stabilizing virtually synchronous SMR algorithm 
of~\cite{SSVS}.
\end{lemma}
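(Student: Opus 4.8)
The plan is to prove the lemma as a simulation (reduction) argument: under hypotheses (i)--(iii), every part of Algorithm~\ref{alg:rvs} that pertains to reconfiguration is inert, so that what remains is, line for line, the self-stabilizing virtually synchronous SMR algorithm of~\cite{SSVS} executed over the fixed processor set obtained by identifying that set with the members of the single valid configuration $config$ guaranteed by~(ii). First I would argue that the reconfiguration-trigger interface $needDelicateReconf()$ (line~\ref{VSln:go4reconf}) returns $\sf False$ throughout $R$: its definition conjoins $evalConfig()=\sf True$, which is excluded by~(iii). Hence the replacement branch of Algorithm~\ref{alg:coordUpper} is never taken, consistent with~(ii), so no processor ever invokes $\configEstab()$ on behalf of the SMR layer and $curConf$, read once per iteration, stays fixed.

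The central technical step is to pin the two reconfiguration flags $suspend$ and $reconfReady$ to $\sf False$ after a bounded prefix, and then to verify that each guard mentioning them collapses to its~\cite{SSVS} counterpart. I would track the assignment sites: lines~\ref{VSln:noCrdSusp} and~\ref{VSln:evalSuspend} set these flags through $evalConfig()$, which is $\sf False$ by~(iii); line~\ref{VSln:suspOnRecon} forces $suspend \gets \sf True$ only when $\noReconfig()=\sf False$, which never happens by~(ii); and lines~\ref{VSln:updSuspMult} and~\ref{VSln:updSuspNoCrd} reset them to $\sf False$. A corrupted initial value cannot trigger a reconfiguration (previous paragraph) and is overwritten to $\sf False$ within one do-forever iteration of the coordinator (line~\ref{VSln:updSuspMult}) and, for followers, upon the next state adoption (line~\ref{VSln:replicate} copies the coordinator's cleared flags). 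Once $reconfReady=\sf False$ and $suspend=\sf False$ hold system-wide, the guard of line~\ref{VSln:mulC} reduces to $status={\sf Multicast}$, the disjunction guarding line~\ref{VSln:setReconfReady} is satisfied by its $\noReconfig()=\sf True$ clause so the fetch/collect/round-increment body (lines~\ref{VSln:fetchCrd}--\ref{VSln:rndIncr}) always executes, and the follower guard of line~\ref{VSln:optCond} reduces to $\rep[\ell].status={\sf Multicast}$ --- exactly the control flow of~\cite{SSVS}.

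It then remains to observe that coordinator election and view maintenance are unchanged. I would note that the computation of the candidate-coordinator set and $valCrd$ (lines~\ref{VSln:seemCrd}--\ref{VSln:valCrd}), the proposal/install machinery (lines~\ref{VSln:incrCntr}--\ref{VSln:insC}), the follower branch (lines~\ref{VSln:repF}--\ref{VSln:adoptProp}), and the send set (line~\ref{VSln:sendSet}) are verbatim the corresponding rules of~\cite{SSVS} once $curConf$ is fixed by~(ii); here the supportive-majority hypothesis~(i) over the members of $config$ plays precisely the role of the supportive-majority assumption over the fixed set in~\cite{SSVS}, guaranteeing that a unique $valCrd$ emerges and persists. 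The counter source $inc()$ used on line~\ref{VSln:incrCntr} supplies view identifiers exactly as the fixed-set counter of~\cite{SSVS} does, since by Theorem~\ref{thCNT:finalApp} it is monotone within $config$. Combining these, I would exhibit a step-by-step simulation mapping each state and step of $R$ to a state and step of the~\cite{SSVS} algorithm over the fixed set, establishing the reduction.

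The hardest part will be the flag-clearing argument of the second paragraph: I must rule out a corrupted $suspend=\sf True$ or $reconfReady=\sf True$ that, before being overwritten, diverts the coordinator into the Propose/Install path or suppresses a follower's message application in a way not mirrored by~\cite{SSVS}. The resolution is to show that such a diversion is itself a legal (if premature) step of the underlying algorithm followed by the flag reset, so it is absorbed into the transient prefix rather than breaking the simulation; combined with Lemma~\ref{thVS:noStaleRecon}, which already bounds stale-information-induced reconfigurations, this confines all reconfiguration-flag effects to a bounded prefix after which $R$ is indistinguishable from a~\cite{SSVS} execution.
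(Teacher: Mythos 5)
Your proposal is correct and follows essentially the same route as the paper's proof: identify the fixed processor set of~\cite{SSVS} with the members of $config$, observe that hypothesis~(i) then supplies exactly the supportive-majority assumption of the fixed-set setting, and argue that the only structural addition --- the suspend/reconfiguration machinery --- is never activated because $evalConfig()=\sf False$ and $\noReconfig()=\sf True$ throughout. Your second and fourth paragraphs are considerably more careful than the paper about clearing corrupted initial $suspend$/$reconfReady$ values and absorbing their effects into a transient prefix (a point the paper's proof passes over in one clause), and the only detail the paper mentions that you omit is that views may contain non-member participants, with safety still resting on the majority of $config$ members.
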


\begin{proof}
The mapping of the fixed set of processors of~\cite{SSVS} (that was named $P$ where $|P|=n$) to the set of $config$, creates the fixed set (throughout $R$) that provides the supportive majority (of $\lfloor |config|/2 \rfloor $) for at least one processor of the configuration to eventually become the coordinator. 
While the views can also include non-member processors (that are participants), safety is provided by the majority of $config$ members.
We note that the algorithm in its structure has not changed less the fact of replacing the old fixed set with $config$, and the addition of the suspend mechanism which is not activated at any point, since $evalConfig()=\sf False$ throughout $R$.
We thus deduce that the lemma is a reduction to the non-reconfigurable VS SMR and we deduce the following corollary.
\end{proof}

\begin{corollary}
\label{thVS:redCorollary}
Consider an execution $R$ of Algorithm~\ref{alg:rvs} in which the following hold throughout: (i) the supportive majority assumption, (ii) no reconfiguration takes place (and therefore a valid configuration $config$ is in place) and (iii) $evalConfig()=\sf False$. 
Then, by Lemma~\ref{thVS:reduction} and the correctness of~\cite{SSVS},  starting from an arbitrary state, Algorithm~\ref{alg:rvs} simulates state machine replication preserving the virtual synchrony property.
\end{corollary}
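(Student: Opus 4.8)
The plan is to treat Corollary~\ref{thVS:redCorollary} as a direct consequence of the reduction established in Lemma~\ref{thVS:reduction} together with the self-stabilization guarantees of the fixed-set virtually synchronous SMR algorithm of~\cite{SSVS}. First I would make explicit what the three hypotheses buy us: condition (ii) fixes the configuration $config$ throughout $R$, so the set of processors that run the view-management and coordinator-election logic is static; condition (i) (the supportive majority assumption, restated for $config$ members) guarantees that at least one non-crashed member is eventually trusted by a majority of $config$ and hence acquires the largest counter and becomes the unique coordinator; and condition (iii), $evalConfig()=\sf False$, ensures that the newly added reconfiguration machinery is inert. Concretely, I would point to the lines where $suspend$ and $reconfReady$ are set (lines~\ref{VSln:noCrdSusp}, \ref{VSln:evalSuspend}, and~\ref{VSln:setReconfReady}): since $evalConfig()$ always returns $\sf False$, the coordinator never raises $suspend$ through the prediction function, so $reconfReady$ settles to $\sf False$ and the $needDelicateReconf()$ interface (line~\ref{VSln:go4reconf}) never fires. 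Thus every guarded branch that differs from the algorithm of~\cite{SSVS} is disabled after at most one pass that clears any transiently corrupt $suspend$/$reconfReady$ values.

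With the added machinery shown to be inert, the second step is to invoke Lemma~\ref{thVS:reduction} verbatim: it maps the static set $config$ onto the fixed processor set of~\cite{SSVS}, identifies the supportive majority of $\lfloor |config|/2\rfloor$ members, and exhibits $R$ as an execution of the fixed-set SMR algorithm. The only structural differences the lemma must absorb are (a) that views may additionally contain non-member participants and (b) the presence of the (now inactive) suspend fields; safety is nonetheless carried by the majority of $config$ members, exactly as in the fixed-set case. I would then appeal to the self-stabilization correctness of~\cite{SSVS}: from an arbitrary starting state the fixed-set algorithm converges to a legal execution in which a coordinator installs views, synchronizes replica state and undelivered messages on each view change, and any two processors that are together in two consecutive views deliver the same sequence of messages and hold the same state --- i.e., it simulates a replicated state machine while preserving virtual synchrony.

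The final step is to translate this guarantee back through the reduction. Because the reduction of Lemma~\ref{thVS:reduction} preserves exactly the observable events that define virtual synchrony --- view installations (lines~\ref{VSln:insC} and~\ref{ln:adoptRep}), message application via $apply(\cdot)$ (lines~\ref{VSln:mulC} and~\ref{VSln:applyF}), and the per-round delivery bookkeeping --- the legal-execution suffix obtained for the reduced execution pulls back to a suffix of $R$ with the same virtual-synchrony and SMR semantics. I expect the only real subtlety, and hence the step I would treat most carefully, to be verifying that an arbitrary initial state cannot leave the reduction ill-defined: the inert reconfiguration variables ($suspend$, $reconfReady$, and any spurious $valCrd$ arising from corrupt $\rep[\cdot].propV$ through lines~\ref{VSln:seemCrd}--\ref{VSln:valCrd}) must be shown to be overwritten within a bounded prefix, so that the mapping to a~\cite{SSVS} execution holds on the suffix rather than from the very first state. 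This is precisely where I would lean on the self-stabilizing nature of both the underlying label/counter layer (Theorem~\ref{thCNT:finalApp}) and of~\cite{SSVS} itself, so that convergence of the reduced execution yields convergence of $R$.
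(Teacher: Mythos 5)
Your proposal is correct and follows essentially the same route as the paper, which derives the corollary directly from Lemma~\ref{thVS:reduction} (the mapping of $config$ onto the fixed processor set of~\cite{SSVS}, with the suspend/reconfiguration machinery inert under $evalConfig()=\sf False$) together with the self-stabilization correctness of~\cite{SSVS}. The paper gives no separate proof beyond this appeal, so your more detailed elaboration --- in particular the care about transiently corrupt $suspend$/$reconfReady$/$valCrd$ values being overwritten so the reduction holds on a suffix --- only makes explicit what the paper leaves implicit.
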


\begin{lemma}
\label{thVS:eventReconf} 
\sloppy Consider an infinite execution $R$ of Algorithm~\ref{alg:rvs} with an established coordinator $p_i$. 
If $evalConfig_i() = \sf True$ from some system state $c \in R$ onwards, then we reach another system state $c' \in R$ in which a reconfiguration eventually take place (and after which $evalConfig() = \sf False$). 
\end{lemma}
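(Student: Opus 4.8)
The plan is to show that a persistently-positive prediction signal drives the established coordinator $p_i$ to raise and propagate the $suspend$ flag over its entire view, which then makes $p_i$ set $reconfReady = {\sf True}$, so that $needDelicateReconf()$ (line~\ref{VSln:go4reconf}) holds and the modified Reconfiguration Management layer (Algorithm~\ref{alg:coordUpper}, replacing line~\ref{SSQR:gracefulQreconf}) invokes $\configEstab(FD[i].part)$. First I would observe that, since $p_i$ is the established coordinator ($valCrd=\{p_i\}$) and $evalConfig_i()={\sf True}$ from $c$ onwards, whenever $p_i$ executes its multicast branch (line~\ref{VSln:mulC}) it runs line~\ref{VSln:evalSuspend} and assigns $suspend \gets evalConf()={\sf True}$; this value is included in the state $\rep[i]$ that $p_i$ broadcasts to its view in line~\ref{VSln:send}.

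Next I would track the propagation of this flag. By the fair-communication assumption (Section~\ref{s:sys}), the coordinator's message is eventually received (line~\ref{VSln:receive}) by each view member $p_j$, which — while its own $suspend$ is still ${\sf False}$ and it sees $p_i$ in the ${\sf Multicast}$ status (the guard at line~\ref{VSln:optCond}) — adopts the coordinator's full state, and hence $suspend={\sf True}$, through line~\ref{VSln:replicate}. Since no follower resets $suspend$ to ${\sf False}$ while $p_i$ remains its coordinator and the view is unchanged, every view member eventually holds $suspend={\sf True}$; and by the symmetric fair-communication argument (each follower reports its state to $p_i$, which lies in its apparent-coordinator set, via line~\ref{VSln:send}), the coordinator eventually stores $\rep[k].suspend={\sf True}$ for every $p_k \in view.set$. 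At that point line~\ref{VSln:setReconfReady} assigns $reconfReady \gets {\sf True}$, and together with $valCrd = p_i$ and $evalConfig()={\sf True}$ this makes $needDelicateReconf()$ return ${\sf True}$. Consequently the recMA layer reaches the branch of Algorithm~\ref{alg:coordUpper} and calls $\configEstab(FD[i].part)$, triggering a delicate reconfiguration; by the closure and convergence guarantees of the $recSA$ layer (Theorem~\ref{thm:closureThm} together with Lemma~\ref{thm:virtuallyNotExplicit}), this proposal is selected and installed as the new configuration, giving the state $c'$ asserted by the lemma. Finally, since the installed set is $FD[i].part$, i.e.\ a fresh set reflecting the currently active participants, the application-based prediction function no longer signals a need for replacement, so $evalConfig()={\sf False}$ thereafter.

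The hard part will be establishing that $p_i$ remains the unique established coordinator with a stable view throughout the suspend-propagation phase, so that the follower update of line~\ref{VSln:replicate} actually fires and no competing view installation (via lines~\ref{ln:switch}--\ref{VSln:insC} or~\ref{VSln:repF}) interrupts the convergence of $reconfReady$. I would handle this by appealing to the reduction of Lemma~\ref{thVS:reduction} and the correctness of~\cite{SSVS}: under the supportive-majority assumption the coordinator is stable, and while $suspend={\sf True}$ propagates no new input is fetched (the guard at line~\ref{ln:fetchFol} fails) so the view membership does not change, whence the premises of line~\ref{ln:switch} and of the follower branch at line~\ref{VSln:repF} keep holding until the reconfiguration fires.

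A final point I would make precise is the reconciliation of the hypothesis ``$evalConfig_i()={\sf True}$ from $c$ onwards'' with the conclusion ``$evalConfig()={\sf False}$ afterwards'': the hypothesis is to be read as holding while the \emph{current} configuration persists, and the delicate reconfiguration triggered above is precisely the event that changes the configuration and thereby resolves the prediction, so there is no contradiction between the two.
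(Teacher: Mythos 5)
Your argument follows the paper's proof almost step for step: the coordinator's persistent $evalConfig()={\sf True}$ forces $suspend={\sf True}$ at line~\ref{VSln:evalSuspend}, the flag propagates to the view members via lines~\ref{VSln:send}, \ref{VSln:receive} and~\ref{VSln:replicate}, the followers echo it back, line~\ref{VSln:setReconfReady} then sets $reconfReady={\sf True}$, and $needDelicateReconf()$ fires in Algorithm~\ref{alg:coordUpper}. That is exactly the paper's chain of reasoning, and your additional appeal to the $recSA$ guarantees for the reconfiguration actually completing, and your reading of the hypothesis/conclusion tension around $evalConfig()$, are reasonable supplements the paper leaves implicit.

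The one step that does not hold as you state it is the view-stability argument. You claim that ``while $suspend={\sf True}$ propagates no new input is fetched \ldots so the view membership does not change.'' View changes in Algorithm~\ref{alg:rvs} are driven by the failure detector, not by message fetching: if $FD[i].part$ drifts from $propV.set$ (a crash, or a newly admitted participant) the condition of line~\ref{VSln:incrCntr} lets the coordinator propose a new view, and the subsequent installation (line~\ref{VSln:insC}) resets $suspend$ to ${\sf False}$, undoing the propagation you rely on. So suppressing $fetch()$ does not freeze the view. The paper handles this differently and more robustly: it does not assert that the view is stable during the suspend phase, but instead observes that if a view change (or coordinator loss) occurs, the suspend flags are reset and the whole propagation procedure simply restarts under the (possibly new) coordinator, eventually completing once the failure detectors stop changing. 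Your overall conclusion survives, but you should replace the ``no fetch $\Rightarrow$ no view change'' inference with either that restart argument or an explicit assumption that the failure detectors are stable from $c$ onwards.
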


\begin{proof}
From $c$ onwards, the established coordinator $p_i$ receives $\sf True$ whenever it calls $evalConfig()$.
This implies that because of line~\ref{VSln:evalSuspend} $state_i[i].suspend = {\sf True}$ always.
This value is sent in every iteration of the Algorithm by $p_i$ (lines~\ref{VSln:sendSet}--\ref{VSln:send}) to every processor in the view, and adopted by every processor in the view through lines~\ref{VSln:receive} and~\ref{VSln:replicate}.
Every view member $p_j$ adopting $p_i$'s state in $state_j[i]$ (along with $suspend = \sf True$) propagates $state_j[j].suspend = \sf True$ back to $p_i$.
Due to assumption that a message sent infinitely often is received infinitely often, eventually the coordinator learns that every processor in its view has adopted $suspend = \sf True$.

If during this process the coordinator needs to change the view (due to failure detector changes), then this should take place and the above procedure starts again after the view installation (because installation by line~\ref{VSln:proC} forces $suspend = \sf False$.
The same takes place if the coordinator is lost, in which case by line~\ref{VSln:noCrdSusp} and the previous remark about installation falsify $suspend$ flags, and the procedure for reconfiguring should be taken up by the next coordinator.

If every processor in the view has $suspend = \sf True$ and $p_i$ has been informed of this, then line~\ref{VSln:setReconfReady} sets $reconfReady$ to $\sf True$ for the coordinator, and line~\ref{VSln:updSuspMult} retains this value since $evalConfig() = \sf True$ by assumption.
Hence, whenever the coordinator uses the reconfiguration manager (Algorithm~\ref{alg:coordUpper}) to run line~\ref{coordUpper:delicate}, it will find that all conditions of $needDelicateReconf()$ are satisfied, thus the coordinator can move on to initiate the reconfiguration using the $configEstb()$ interface of the reconfiguration scheme, and hence the result.~\end{proof}

\begin{lemma} 
\label{thVS:stabAfterReconf}
Consider an infinite execution $R$ of Algorithm~\ref{alg:rvs}, starting from a system state $c'$ as in Lemma~\ref{thVS:eventReconf}. 
We eventually reach a new state $c'' \in R$ in which a new configuration is installed and a new valid coordinator is established in the primary component of $config$.
\end{lemma}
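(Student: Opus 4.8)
The plan is to chain together the convergence results already established for the three lower layers---the reconfiguration scheme (\emph{recSA}/\emph{recMA}), the labeling/counter scheme, and the virtually synchronous core of~\cite{SSVS}---in the order in which their guarantees become available after the coordinator triggers reconfiguration in state $c'$. First I would argue that a new configuration is actually installed. In $c'$ the coordinator has invoked $\configEstab(\cdot)$ via line~\ref{coordUpper:delicate} of Algorithm~\ref{alg:coordUpper}, so a notification is present at the \emph{recSA} layer. Part~(3) of the Closure theorem (Theorem~\ref{thm:closureThm}) then guarantees that one such notification replaces the system configuration eventually, while Part~(1) keeps the system conflict-free throughout; by Theorem~\ref{thQ:corrUpperApp} the \emph{recMA} layer does not inject any further spurious triggerings, so the system reaches a new steady $config$ state, call it $config'$, after which $\noReconfig()$ returns $\sf True$ for every active processor.

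Next I would show that a usable counter becomes available on $config'$. Once $config'$ is installed, $confChange()$ fires inside Algorithms~\ref{alg:configLabeling} and~\ref{alg:configCounting}, which rebuild their arrays, empty all queues, and discard labels/counters created by non-members; joiners enter with empty structures and snap-stabilized links, so no stale label or counter can be reintroduced (the analogue of Lemma~\ref{thL:staleInfo}). Invoking Theorem~\ref{thL:uniqueLabel} (at most $O(N^2)$ label creations after a reconfiguration) and then Theorem~\ref{thCNT:finalApp}, I would conclude that the members of $config'$ converge to a single maximal label and a monotonically increasing, totally ordered counter, so every member eventually obtains well-defined unique values through $inc()$.

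With counters stabilized and the supportive-majority assumption holding on $config'$, I would invoke the coordinator-election machinery of lines~\ref{VSln:seemCrd}--\ref{VSln:valCrd}. A member with supporting majority draws a counter at line~\ref{VSln:incrCntr} and proposes a view; since counters are totally ordered (ties broken by $wid$) and unique, the $\preceq_{ct}$ maximization in line~\ref{VSln:valCrd} eventually collapses $valCrd$ to a singleton $\{p_c\}$. Here I would lean on the reduction established in Lemma~\ref{thVS:reduction} and Corollary~\ref{thVS:redCorollary}: once $\noReconfig()=\sf True$ holds on $config'$, Algorithm~\ref{alg:rvs} behaves as an instance of the fixed-set virtually synchronous SMR of~\cite{SSVS} with $P$ identified with $config'$, whose correctness guarantees that the majority-supported, non-crashed member $p_c$ becomes the unique coordinator in the primary component, installs a view reflecting its failure detector, and synchronizes state through $synchState$/$synchMsgs$ at line~\ref{VSln:proC}. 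I would take $c''$ to be the first state in which this view installation completes.

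The hard part will be ruling out pathological behavior during the transient window in which $config'$ is freshly installed but the counter has not yet stabilized. In that window asynchrony and carried-over flags may let several members transiently satisfy the $seemCrd$ predicate with stale or mutually incomparable counters, and the $suspend$/$reconfReady$ values inherited from the pre-reconfiguration view could threaten to trigger a second, spurious reconfiguration. I would neutralize this by appealing to Lemma~\ref{thVS:noStaleRecon}, which bounds stale-information-induced reconfigurations to at most one, together with the observation that the $\mathsf{Install}$ transition (line~\ref{VSln:insC}) resets $suspend$ and $reconfReady$ to $\sf False$ and that line~\ref{VSln:updSuspNoCrd} clears these flags for any processor that loses (alleged) coordinatorship. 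Once the counter converges by the second paragraph, the $\preceq_{ct}$ comparison forces $|valCrd|=1$, and the supportive-majority assumption guarantees this singleton is a live, majority-backed member; combining these eliminates the ambiguity and yields the desired $c''$.
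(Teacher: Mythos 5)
Your proof is correct and follows essentially the same route as the paper's: the reconfiguration completes by the correctness of the \emph{recSA} layer, and a coordinator then emerges on the new configuration via the supportive-majority assumption and the reduction to the fixed-set algorithm of~\cite{SSVS} (Corollary~\ref{thVS:redCorollary}). You are somewhat more explicit than the paper in invoking the label/counter convergence on $config'$ and in neutralizing the transient window, whereas the paper instead spells out the two disjuncts of the view-proposal condition of line~\ref{VSln:incrCntr} (the surviving coordinator with a majority still following its $propV$, versus a fresh proposer backed by a majority reporting $noCrd$); these are presentational differences rather than a different argument.
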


\begin{proof}
Starting from system state $c'$, a coordinator $p_i$ has already initiated a reconfiguration.
During reconfiguration, $p_i$ cannot proceed to increment the round number since line~\ref{VSln:setReconfReady} allows no round increments if reconfiguration is taking place.
Note that since it is the coordinator that triggers the reconfiguration, this processor is immediately informed about a reconfiguration through $\noReconfig()$.
By the correctness of the reconfiguration and stability assurance layer, we are guaranteed that the reconfiguration will complete and so we move from a configuration $config$ to a new one $config'$.

When the reconfiguration completes, every processor in the new configuration $config'$ should try to establish that there is a coordinator installed.
Even if $p_i$ survives to $config'$, it will need to change the view because of the change in configuration, although it will still consider itself as the coordinator.
By the second set of conditions that allow view proposal, if $((valCrd = \{p_i\})$ $\land$ $(FD_i[i].part \neq propV.set)$ $\land $
$(|\{p_k\in FD_i[i].part:$ $\rep_i[k].propV = propV_i\}|$ $> \lfloor |config'|/2 \rfloor)))$, namely if there is a majority of active maembers that follow the previously proposed view of $p_i$, then $p_i$ is eligible to create a new view.
Otherwise, any processor $p_\ell$ may trigger, given that the following conditions are satisfied. 
(i) $|FD_\ell[\ell].part|\cap config'| > \lfloor |config'|/2\rfloor $ ($p_j$ can trust a majority of the configuration).
(ii) The set of processors in $FD_\ell[\ell].part$ that $p_\ell$ trusts and $p_\ell$ knows to have their trust must state that they have no coordinator.
Note that by our majority supportive assumption and the eventual reception of messages, some processor will reach a state in which it will be able to propose.

The correctness arguments of~\cite{SSVS} complete the proof (by Corollary~\ref{thVS:redCorollary}), and so some processor manages to become the coordinator.
\end{proof}

\begin{lemma}
Consider an infinite execution $R$ of Algorithm~\ref{alg:rvs} that contains a system state $c'$ as defined in Lemma~\ref{thVS:eventReconf} and a following state $c''$ as guaranteed by Lemma~\ref{thVS:stabAfterReconf}. 
The replica state is preserved from $c'$ to $c''$.
\end{lemma}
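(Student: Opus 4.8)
The plan is to show that the replica state present at $c'$ is neither mutated during the reconfiguration nor lost when the new view is installed at $c''$, so that it reappears as the state of the first view of the new configuration $config'$. First I would fix the common replica state $\sigma$ held at $c'$: by Lemma~\ref{thVS:eventReconf} the coordinator $p_i$ reached $c'$ only after $reconfReady = \sf True$, which (line~\ref{VSln:setReconfReady}) requires that every view member reported $suspend = \sf True$; combined with the ${\sf Multicast}$-phase guard of line~\ref{VSln:mulC} and the coordinator's final gathering round, this means all active members of the last view agree on $\sigma$ and on the last delivered messages, and message fetching is suspended. This is exactly the synchronized, up-to-date state that Corollary~\ref{thVS:redCorollary} guarantees for a view operating under the reduction to~\cite{SSVS}.

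Second I would argue that $\sigma$ is \emph{frozen} for the whole duration of the reconfiguration. While $\noReconfig() = \sf False$, line~\ref{VSln:suspOnRecon} forces $suspend = \sf True$, and line~\ref{VSln:setReconfReady} forbids any round increment while a reconfiguration is in progress; since the coordinator neither fetches new input (line~\ref{VSln:fetchCrd}) nor advances the round number, no member executes $apply(state,\cdot)$ on a fresh message, so no member's replica state changes from $\sigma$. Hence every processor of the last view that does not crash still holds $\sigma$ when $config'$ is installed.

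Third, I would use the standing assumption that at least one replica $p_s$ of the last view survives into $config'$ and is trusted by the new coordinator. By Lemma~\ref{thVS:stabAfterReconf} a valid coordinator $p_c$ is eventually established in the primary component of $config'$. When $p_c$ drives its view through the ${\sf Propose}\to{\sf Install}$ transition (lines~\ref{VSln:proC}--\ref{VSln:insC}) it sets $state \gets synchState(\rep)$ and $msg \gets synchMsgs(\rep)$ over the states its view members report. Newly joined processors report the default (or $\sharp$) state, whereas $p_s$ reports $\sigma$ tagged with the counter serving as the identifier of the last view; because the consolidation semantics inherited from~\cite{SSVS} deterministically retain the most-recently-tagged non-default replica, $synchState(\rep)$ returns $\sigma$. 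Therefore the state installed in the first view of $config'$, read at $c''$, equals $\sigma$.

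The hard part will be the third step, namely proving that $synchState()$ genuinely prefers the surviving replica's $\sigma$ over the default states contributed by joining processors and over any stale replica left by a crashed-and-replaced member. This requires (i) that $p_s$ lie in $p_c$'s failure detector, so that it is included in the proposed view set and its report is actually consolidated, and (ii) that the state tagging by view counters --- produced by the now-stabilized labeling and counter algorithms (Lemma~\ref{thCNT:convLab}, Theorem~\ref{thCNT:finalApp}) --- totally order the reported states so that the consolidation is unambiguous. I would discharge (i) by reading the survival assumption as ``a surviving replica in $p_c$'s failure detector'', the natural analogue of the supportive-majority condition used throughout, and (ii) by invoking the correctness of $synchState()$ from~\cite{SSVS}, under which a single freshest state is selected. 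Closing with Corollary~\ref{thVS:redCorollary}, the first view of $config'$ then operates as a~\cite{SSVS} view seeded with $\sigma$, which completes the preservation claim.
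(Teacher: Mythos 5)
Your proof is correct and follows essentially the same route as the paper's: a three-stage argument showing the state is synchronized and fetching suspended before the reconfiguration (via the $suspend$/$reconfReady$ mechanism of lines~\ref{VSln:mulC}--\ref{VSln:setReconfReady}), frozen during it (no round increments, no fetches, no new proposals), and then recovered after it because the surviving replica carries the unique last state into the new coordinator's $synchState()$ consolidation while joiners contribute only defaults. Your extra care in step three about why $synchState()$ prefers the survivor's state is a welcome elaboration of a point the paper's proof states only briefly, but it does not change the argument.
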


\begin{proof}
We examine how the state at the time of reconfiguration triggering, during and after reconfiguration remains unchanged until a coordinator is in place and initiates multicasting.\\
\textbf{Step 1 --} We establish that the state of the replicas \textbf{before} reconfiguration, is preserved.
By Lemma~\ref{thVS:eventReconf}, processor are led to turn their flag $suspend$ to $\sf False$, when the coordinator $p_i$ suggests that a reconfiguration must take place.
In the next multicast round when every non-coordinator processor receives $p_i$'s $suspend = \sf True$ it adopts the last state (line~\ref{VSln:replicate}), applies changes to the state~\ref{VSln:applyF} and fetches multicast messages \emph{only if the suspend flag is $\sf False$}.
Similarly, the coordinator applies the changes to the state~\ref{VSln:applyF} after all the view members have adopted its state, and renews its suspend flag (which should always return $\sf True$, otherwise the suspension process stops). 
If all the participants have suspended (otherwise condition $(\forall$ $p_j$ $\in$ $view.set$ $:$ $\rep_i[j].(view$, $status$, $rnd)$ $=$ $\rep_i[i].(view$, $status$, $rnd))$  of line~\ref{VSln:incrCntr} would not hold), then line~\ref{VSln:setReconfReady} must return $\sf True$ so $reconfReady = \sf True$ and the coordinator does not fetch new messages (line~\ref{VSln:fetchCrd}) and stops incrementing round numbers. \\
\textbf{Step 2 -- } \textbf{During} reconfiguration, no new multicasts and message $fetch()$ may take place, since non-coordinators of $config$ cannot access line~\ref{VSln:optCond} due to the conditions of \ref{VSln:repF}.
They also cannot propose a new view during reconfiguration, by the last condition of line~\ref{VSln:incrCntr}.
So their $suspend$ flags are preserved to $sf True$ throughout the reconfiguration (also by line~\ref{VSln:suspOnRecon}).
The coordinator's flag is also maintained to $\sf True$ if reconfiguration is taking place. \\
%
\textbf{Step 3 --} \textbf{After} the new configuration $config'$ is in place, new processors are assumed to access the computation with default values in their state, and cannot therefore introduce new messages or replica state.
Processors coming from $config$, carry the last state and for liveness we assume a single of these processors is alive and included in the proposed view set of the next coordinator. 
That a new proposal takes place, it is suggested by Lemma~\ref{thVS:stabAfterReconf}.
So the new coordinator (possibly the same as the one before reconfiguration) will gather all available states and messages and create a synchronized state (which is only for stabilization purposes here, since no new messages were allowe to be fetched and there should be only one version of last state).
Note that the $suspend$ flags are set to $\sf False$ for every processor when reconfiguration finishes and when a valid view proposal is found (line~\ref{VSln:updSuspNoCrd}).
\end{proof}\\

\noindent The above lemmas lead to the following theorem.

\begin{theorem}
\label{thVS:finalApp}
Starting in an arbitrary state in an execution $R$ of Algorithm~\ref{alg:rvs}, the algorithm simulates state machine replication preserving the virtual synchrony property, even in the case of reconfiguration when this is delicate, i.e., when initiated by $needDelicateReconf()$.
\end{theorem}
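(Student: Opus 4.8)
The plan is to assemble the final theorem directly from the chain of lemmas established above, treating the result as a combination of a \emph{convergence} argument (recovery from the arbitrary initial state) and a \emph{closure} argument (correct behaviour thereafter, both within a steady configuration and across a delicate reconfiguration). First I would invoke Lemma~\ref{thVS:noStaleRecon} to bound the damage caused by the arbitrary starting state: stale values in the $suspend$, $reconfReady$, and $valCrd$ fields (and their in-transit copies) can force at most a single spurious reconfiguration before the corresponding variables are overwritten by propagated legal values. Composing this with the stabilization of the underlying layers --- the reconfiguration scheme (Theorem~\ref{thQ:corrUpperApp}) and the labeling and counter algorithms (Theorems~\ref{thL:finalApp} and~\ref{thCNT:finalApp}) --- I would conclude that $R$ has a suffix in which no corruption-induced reconfiguration occurs and a valid configuration $config$ with a stabilized counter is in place.

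Within such a suffix, and during any maximal interval in which $evalConfig()=\sf False$ and no reconfiguration is active, I would appeal to Corollary~\ref{thVS:redCorollary}: the execution restricted to $config$ reduces to the fixed-set self-stabilizing virtually synchronous SMR algorithm of~\cite{SSVS}, so Algorithm~\ref{alg:rvs} correctly simulates state machine replication and preserves virtual synchrony inside the configuration, including ordinary intra-configuration view changes. This discharges the \textbf{steady} part of the claim.

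For the reconfiguration case I would chain the three remaining lemmas. If $evalConfig()$ eventually returns $\sf True$ at an established coordinator, Lemma~\ref{thVS:eventReconf} guarantees that the coordinator drives all view members to suspend delivery and then initiates a delicate reconfiguration via $needDelicateReconf()$; Lemma~\ref{thVS:stabAfterReconf} then guarantees that a new configuration $config'$ is installed and a fresh valid coordinator is established in its primary component; and the final state-preservation lemma guarantees that the replica state carried by the surviving replica is unchanged from the triggering state $c'$ to the post-reconfiguration state $c''$. Putting these together shows that the last view of $config$ and the first view of $config'$ behave as two consecutive views: every processor present in both delivers exactly the messages applied before suspension and resumes from the identical preserved state, which is precisely the virtual synchrony requirement across the configuration boundary.

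The main obstacle I expect is precisely this boundary argument, i.e., showing that virtual synchrony --- normally stated for consecutive views \emph{within} one membership --- is preserved for the pair of views that straddle the delicate reconfiguration. The delicate point is that the $suspend$/$reconfReady$ handshake must have quiesced message fetching and delivery \emph{before} $configEstab()$ is called, so that no message is fetched after the final synchronizing multicast round (lines~\ref{VSln:setReconfReady}, \ref{VSln:fetchCrd}, \ref{VSln:rndIncr} and~\ref{VSln:suspOnRecon} must jointly block new $fetch()$ calls once $reconfReady=\sf True$ or a reconfiguration is active), and the liveness hypothesis that at least one replica of the last pre-reconfiguration view survives into $config'$ must be used to transport that single, final state to the new coordinator's $synchState$/$synchMsgs$ consolidation step. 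I would argue that because every surviving view member holds the identical suspended state and the new coordinator consolidates states and messages exactly as in a normal view installation of~\cite{SSVS}, no message delivered in the last view of $config$ is re-delivered or omitted in the first view of $config'$, completing the preservation of virtual synchrony and hence the theorem.
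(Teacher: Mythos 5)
Your proposal is correct and follows essentially the same route as the paper: the paper gives no standalone proof for this theorem, stating only that ``the above lemmas lead to'' it, and your composition --- Lemma~\ref{thVS:noStaleRecon} plus the stabilization of the underlying layers for convergence, Corollary~\ref{thVS:redCorollary} for the steady case, and the chain of Lemma~\ref{thVS:eventReconf}, Lemma~\ref{thVS:stabAfterReconf}, and the final state-preservation lemma for the delicate-reconfiguration case --- is exactly the intended assembly. Your added discussion of the cross-configuration view boundary (quiescence before $configEstab()$ and the single-surviving-replica assumption feeding $synchState$/$synchMsgs$) simply makes explicit what the paper's unlabeled state-preservation lemma already carries.
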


\paragraph{Self-stabilizing reconfigurable emulation of shared memory.} Birman et al.~\cite{birmanMR2010} show how a virtual synchrony solution can lead to a reconfigurable emulation of shared memory. 
Following this approach, and using our self-stabilizing reconfigurable SMR solution discussed in this section, and our increment counter scheme (Section~\ref{sec:counter}), we can obtain a self-stabilizing reconfigurable emulation of shared memory.
Given a conflict-free configuration, a typical two-phase read and write protocol can be used for the shared memory emulation.
In the event of a delicate reconfiguration, the coordinator (of the virtual synchrony algorithm) suspends reads and writes on the register and once a new configuration is established, the emulation continues.
Virtual synchrony ensures that the state of the system, in this case the state of the object, is preserved (c.f., Theorem~\ref{thVS:finalApp}).
In the event of a brute force reconfiguration (e.g. due to transient faults or violation of the churn rate), the system will automatically recover and eventually reach a legal execution (in this case the state of the system may be lost).

We note that our proposed self-stablizing reconfigurable SMR and shared memory emulation solutions are suspending, in the sense that 
they do not provide service during a reconfiguration. With some extra care and under certain conditions we believe that they
can be modified to provide continuous service, but in general, it remains an interesting open question whether a \emph{self-stabilizing} 
service, such as reconfigurable SMR or distributed shared memory that does not suspend, is possible. In~\cite{birmanMR2010}, Birman et al. discuss the tradeoffs of suspending and non suspending reconfiguration (such as the ones provided in \cite{RAMBO} and \cite{DynaStore}). It is argued, that suspending services provide some simpler solutions, and may be enhanced for more efficient reconfiguration decisions so that the time for reconfiguration and state transfer before reconfiguration can be reduced.

\section{Conclusion}
We presented the first self-stabilizing reconfiguration scheme that recovers automatically from transient faults, such as temporary violation of the predefined churn rate or the unexpected activities of processors and communication channels, using a bounded amount of local storage and message size. We showed how this scheme can be used for the implementation of several dynamic distributed services, such as a self-stabilizing reconfigurable virtual synchrony, which in turn can be used for developing self-stabilizing reconfigurable SMR and shared memory emulation solutions. We use a number of bootstrapping techniques for allowing the system to always recover from arbitrary transient faults, for example, when the current configuration includes no active processors. 
We believe that the presented techniques provide a generic blueprint for different solutions that
are needed in the area of self-stabilizing high-level communication and synchronization primitives, which need to deal with processor joins and leaves as well as transient faults.

\clearpage




\end{document}